\documentclass{article}[11pt]
\usepackage[utf8]{inputenc}

\usepackage{xcolor}
\definecolor{blueviolet}{rgb}{0.2, 0.2, 0.6}
\definecolor{webgreen}{rgb}{0,.5,0}
\definecolor{webbrown}{rgb}{.6,0,0}
\usepackage[pdftex,
  bookmarks=false,
  colorlinks=true, 
  urlcolor=webbrown,
  linkcolor=blueviolet, 
  citecolor=webgreen,
  pdfstartpage=1,
  pdfstartview={FitH},  
  bookmarksopen=false,
  pagebackref
  ]{hyperref}
\renewcommand*{\backref}[1]{}
\renewcommand*{\backrefalt}[4]{%
    \ifcase #1%
          \or Cited on page~#2.%
          \else Cited on pages~#2.%
    \fi%
    }
\usepackage{amsmath}
\usepackage{amssymb}
\usepackage{amsfonts}
\usepackage{amsthm}
\usepackage{zref-clever}
\zcsetup{cap = true}
\newcommand{\Cref}{\zcref}
\allowdisplaybreaks 

\usepackage{enumerate}
\usepackage{enumitem}
\usepackage{comment}
\usepackage{xpatch}
\usepackage{graphicx}
\usepackage{tabularx}
\usepackage{braket}
\usepackage{amsthm}
\usepackage{tikz}
\usepackage{commath}
\usepackage{mathtools}
\usepackage{qcircuit}
\usepackage{algorithm}
\usepackage{algpseudocodex}[indLines = true,italicComments = false, ]
\usepackage{tabto}
\usepackage{pgf-umlsd}
\usepackage{bm}
\usepackage{bbm}
\usepackage{multirow, multicol}
\usepackage{float}
\usepackage{pdfpages}
\usepackage{caption}
\usepackage{subcaption}
\usepackage{makecell}
\usepackage{cite}
\usepackage{tablefootnote}

\let\originalleft\left
\let\originalright\right
\renewcommand{\left}{\mathopen{}\mathclose\bgroup\originalleft}
\renewcommand{\right}{\aftergroup\egroup\originalright}

\newcommand{\mc}[1]{\mathcal{#1}}
\newcommand{\mbb}[1]{\mathbb{#1}}

\newcommand{\mrm}[1]{\mathrm{#1}}

\newcommand{\msf}[1]{\mathsf{#1}}

\newcommand{\divchi}{\mathrm{d}_{\chi^2}}

\newcommand{\bigo}{\mathcal{O}}

\newcommand{\mmstate}{\rho_{\mathrm{mm}}}

\newcommand{\diag}{\mathrm{diag}}

\newcommand{\stail}{S_{\mathrm{tail}}}

\newcommand{\bftheta}{\boldsymbol{\theta}}

\newcommand{\eps}{\epsilon}

\newcommand{\tr}{\mathrm{tr}}

\newcommand{\symsum}{\sum_{\mathrm{sym}}}
\newcommand{\swap}{\mathrm{SWAP}}
\newcommand{\purify}{\Phi_{\mathrm{purify}}}
\newcommand{\bfrho}{\boldsymbol{\rho}}
\newcommand{\bfz}{\boldsymbol{z}}
\newcommand{\op}{\mathrm{op}}
\newcommand{\Lower}{\mathrm{Lower}}
\newcommand{\rank}{\mathrm{rank}}

\newtheorem{theorem}{Theorem}[section]
\newtheorem{proposition}[theorem]{Proposition}
\newtheorem{lemma}[theorem]{Lemma}
\newtheorem{corollary}[theorem]{Corollary}

\newtheorem{fact}[theorem]{Fact}

\AddToHook{env/theorem/begin}{%
\zcsetup{countertype={theorem=theorem}}}
\AddToHook{env/proposition/begin}{%
\zcsetup{countertype={theorem=proposition}}}
\AddToHook{env/lemma/begin}{%
\zcsetup{countertype={theorem=lemma}}}
\AddToHook{env/corollary/begin}{%
\zcsetup{countertype={theorem=corollary}}}
\AddToHook{env/claim/begin}{%
\zcsetup{countertype={theorem=claim}}}
\AddToHook{env/problem/begin}{%
\zcsetup{countertype={theorem=problem}}}
\AddToHook{env/conjecture/begin}{%
\zcsetup{countertype={theorem=conjecture}}}
\AddToHook{env/fact/begin}{%
\zcsetup{countertype={theorem=fact}}}
\zcRefTypeSetup{fact}{
Name-sg = Fact ,
name-sg = fact ,
Name-pl = Facts ,
name-pl = facts ,
}
\AddToHook{env/question/begin}{%
\zcsetup{countertype={theorem=question}}}

\theoremstyle{definition}
\newtheorem{definition}[theorem]{Definition}

\newtheorem{remark}[theorem]{Remark}
\AddToHook{env/definition/begin}{%
\zcsetup{countertype={theorem=definition}}}
\AddToHook{env/notation/begin}{%
\zcsetup{countertype={theorem=notation}}}
\AddToHook{env/example/begin}{%
\zcsetup{countertype={theorem=example}}}
\AddToHook{env/assumption/begin}{%
\zcsetup{countertype={theorem=assumption}}}
\AddToHook{env/remark/begin}{%
\zcsetup{countertype={theorem=remark}}}

\numberwithin{equation}{section}

\definecolor{chiragcolor}{RGB}{66,175,210}
\definecolor{sitancolor}{rgb}{1,0,0} 

\makeatletter
\newif\if@comments
\newcommand{\chirag}[1]{\if@comments\textcolor{chiragcolor}{[CW: #1]}\fi}
\newcommand{\sitan}[1]{\if@comments\textcolor{sitancolor}{[SC: #1]}\fi}

\newcommand\enablecomments{\@commentstrue}
\newcommand\disablecomments{\@commentsfalse}
\enablecomments

\makeatother
\usepackage[a4paper, left=1in, right=1in, top=1in, bottom=1in]{geometry}

\title{Optimal Quantum State Testing Even with Limited Entanglement}

\author{
Chirag Wadhwa\thanks{University of Edinburgh. Email: \href{mailto:chirag.wadhwa@ed.ac.uk}{chirag.wadhwa@ed.ac.uk}.}
\and 
Sitan Chen\thanks{Harvard University. Email: \href{mailto:sitan@seas.harvard.edu}{sitan@seas.harvard.edu}.}
}
\date{}

\makeatletter
\renewcommand\paragraph{\@startsection{paragraph}{4}{\z@}%
                                    {1.8ex \@plus1ex \@minus.2ex}%
                                    {-1em}%
                                    {\normalfont\normalsize\bfseries}}
\makeatother

\begin{document}

\maketitle

\disablecomments
\begin{abstract}

In this work, we consider the fundamental task of quantum state certification: given copies of an unknown quantum state $\rho$, test whether it matches some hypothesis state $\sigma$ or is $\epsilon$-far from it. For certifying $d$-dimensional states, $\Theta(d/\eps^2)$ copies of $\rho$ are known to be necessary and sufficient \cite{o2015quantum,buadescu2019quantum}. However, the algorithm achieving this copy complexity makes fully entangled measurements over all $\bigo(d/\eps^2)$ copies of $\rho$. Often, one is interested in certifying states to a high precision; this makes performing such joint measurements intractable even for certifying low-dimensional states. Thus, we aim to study whether one can obtain optimal rates for quantum state certification and related testing problems while only performing measurements on $t$ copies at once, for some $1 < t \ll d/\eps^2$. While it is well-understood how to use intermediate entanglement to achieve optimal quantum state \emph{learning}, to date, the only protocol known to achieve the optimal rate for testing is the one using fully entangled measurements.

Our main result is a near-optimal protocol for state certification that only performs measurements on $t = d^2$ copies at once. In the high-precision regime, i.e., for $\eps < \frac{1}{\sqrt{d}}$, this is a strict improvement over the algorithm of \cite{buadescu2019quantum}, which performs a joint measurement on all $\bigo(d/\eps^2)$ copies. More generally, our algorithm achieves a smooth copy complexity tradeoff as a function of $t$, which interpolates between the single-copy rate at $t = 1$ and the optimal rate at $t = d^2$. We also extend our techniques to develop new algorithms for the related tasks of mixedness testing and purity estimation, and show tradeoffs achieving the optimal rates for these problems at $t = d^2$ as well. Our algorithms are based on novel reductions from testing to learning and leverage recent advances in quantum state tomography in a non-black-box fashion. Typically, testing is a strictly easier problem than learning, so it is somewhat surprising that we achieve optimal rates for testing using algorithms for learning quantum states. We complement our upper bounds with smooth lower bounds that imply joint measurements on $t \geq d^{\Omega(1)}$ copies at once are necessary to achieve optimal rates for certification in the high-precision regime.

\end{abstract}

\newpage

\tableofcontents

\newpage

\section{Introduction}\label{sec:intro}

A fundamental task within quantum information is to determine whether the quantum data one has access to is coming from the correct reference state. This is an essential primitive in quantum device characterization and calibration, and more broadly in testing hypotheses about quantum systems through experiments. In this work, we consider the following well-studied formalization of this problem of \emph{quantum state certification}. 
Given the description of a fixed $d$-dimensional hypothesis state $\sigma$ and $n$ copies of an unknown state $\rho$, one aims to determine whether the two states are identical or $\eps$-far in trace distance.

In the most general setting, the copy complexity of certifying $d$-dimensional states is $n = \Theta(d/\eps^2)$ \cite{o2015quantum,buadescu2019quantum}.
While this guarantee is compelling because it shows that certification can be done with quadratically fewer copies than are needed to \emph{learn} the full state $\rho$, unfortunately the algorithms used to achieve this copy complexity perform fully entangled measurements on all $\bigo(d/\eps^2)$ copies at once.
This presents significant hurdles for practical deployment.
For instance, the hypothesis state might be a specific resource state in the context of high-fidelity magic-state distillation or quantum metrology, or a thermal state that a quantum simulator purports to prepare which must be verified to high accuracy.
In these applications, it is imperative that state certification be done to sufficiently high precision, but even for $\eps = 0.001$, this requires operating on $\ge 10^6 d$ copies of $\rho$ at once, rendering the $\bigo(d/\epsilon^2)$ guarantee impractical \emph{even for low-dimensional quantum states}.
This motivates our central question: 
\begin{center}
    \emph{Can we achieve the optimal rate for state certification without fully entangled measurements?}
\end{center}

In the restricted setting of \emph{single-copy} measurements, the optimal rate is known to be unachievable~\cite{bubeck2020entanglement,chen2022toward,chen2022tightStateCertification,liu2024role}. In fact, any protocol which performs a fixed sequence of single-copy measurements requires $\Theta(d^2 / \eps^2)$ copies. While this rules out the possibility of achieving the $\bigo(d/\eps^2)$ copy complexity with single-copy measurements, in principle, there could be an intermediate $1 < t \ll d/\eps^2$, such that the desired copy complexity is achievable using only \emph{$t$-copy measurements}. In this setting, the tester receives $n$ copies of the state $\rho^{\otimes t}$, and can only operate on one copy of $\rho^{\otimes t}$ at a time. More generally, for the $i$th measurement, the tester may choose to measure only $t_i$ copies at once, for any $1 \leq t_i \leq t$, and the overall number of copies used would be the sum of these $t_i$'s.

Unfortunately, the landscape of state certification with such $t$-copy measurements, and even that of generic testing problems more broadly, is largely uncharacterized. The only known bounds are due to Chen, Cotler, Huang, and Li \cite{chen2021hierarchy}, who only showed \emph{lower} bounds for this problem. Their results hold  for $t$ not too large, and for measurements chosen either adaptively or only non-adaptively; see \Cref{sec:related-work} for a detailed discussion of their results. On the achievability side, we are only aware of a folklore algorithm that succeeds using $\bigo\bigl(\max\bigl\{\frac{d^2}{t\eps^4}, \frac{d}{\eps^2}\bigr\}\bigr)$ copies by batching the \cite{buadescu2019quantum} tester\footnote{As we haven't seen this tester appear in print, we present the algorithm and prove this upper bound in \Cref{sec:batching-BOW-tester}.}. While this bound does present a smooth tradeoff with $t$, it only achieves the optimal rate when $t = \Theta(d/\eps^2)$, providing no resource savings at all. Further, for smaller $t$, this upper bound has an expensive $1/\eps^4$-dependence, and it does not even recover the optimal $\bigo(d^2/\eps^2)$ rate for $t = 1$. Prior to our work, no other non-trivial upper bounds were known for any intermediate $t > 1$. Thus, it was completely unclear whether one could achieve the optimal rate for state certification with limited entanglement. 

The dearth of results for quantum state certification in the limited entanglement regime stands in stark contrast to the situation for quantum state \emph{tomography}, for which essentially optimal tradeoffs have been established. Here, the goal is to
learn the full description of a state to $\eps$ trace distance. In the general setting, $\Theta(d^2/\eps^2)$ copies are known to be optimal for tomography \cite{o2016efficient,haah2016sample}. With unentangled measurements, the optimal complexity is $\Theta(d^3/\eps^2)$ \cite{kueng2017low,chen2023does}. The complexity of tomography with $t$-copy measurements is also well-understood, and recent work \cite{chen2024optimalTradeoffsTomography,pelecanos2025debiased,pelecanos2025mixed} has shown upper bounds that interpolate smoothly between these extremal rates, achieving the optimal rate at $t = d^2$. This provides a concrete resource advantage over the algorithms of \cite{o2016efficient,haah2016sample} that measure all $\bigo(d^2/\eps^2)$ copies coherently. Crucially, this entanglement threshold, $t = d^2$, is $\eps$-\emph{independent}, i.e., the entanglement cost of optimal tomography does not grow with higher precision.

While we discuss prior results on $t$-copy tomography in more detail in \Cref{sec:related-work}, we would like to note that the work of Pelecanos, Tang, Spilecki, and Wright \cite{pelecanos2025mixed} uses a recent \emph{random purification channel} \cite{tang2025conjugate} to achieve their upper bounds. Essentially, this reduces mixed-state tomography to pure-state tomography, allowing one to design intuitive algorithms possessing many desirable properties. It is unclear whether random purification can also be used to similarly simplify testing; we discuss the barriers to such an approach in \Cref{sec:discussion}. Nevertheless, we leverage the state tomography algorithm of \cite{pelecanos2025mixed} as a key component of our testing algorithms; we do this not for any direct simplification offered by random purification, but because their estimator has a succinct second-moment expression.

\subsection{Our results} \label{sec:results}
We will now present our main results. For a formal definition of the problem of state certification, see \Cref{def:state-certification}. We will state all our results with respect to constant success probability, say, at least $.99$; this can be boosted via standard arguments to $1-\delta$ for an arbitrary $\delta > 0$ using $\bigo(\log(1/\delta))$ repetitions. Now, let us state our main upper bound for state certification. We highlight that the algorithm we use to achieve this bound only performs \emph{fixed} $t$-copy measurements.
\begin{theorem}
    \label{thm:intro-state-certification-upper}
    Let $d \geq 2, t \geq 1, \eps > 0$. There exists an algorithm for $\eps$-trace norm certification of $d$-dimensional states using only fixed $t$-copy measurements with copy complexity at most\footnote{Throughout, we will use $\Tilde{\bigo},\Tilde{\Omega}$ to suppress polylogarithmic factors in $d, \eps$.}
    \begin{equation}
        \tilde{\bigo}\Bigl(\max\Bigl\{\frac{d^2}{\sqrt{t}\epsilon^2}, \frac{d}{\eps^2}\Bigr\}\Bigr). 
    \end{equation}
\end{theorem}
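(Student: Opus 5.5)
Our plan is a reduction from testing to learning. We use the $t$-copy tomography estimator of \cite{pelecanos2025mixed} as an unbiased (or nearly unbiased) estimator $\hat\rho$ of $\rho$, and we estimate $\|\rho-\sigma\|_F^2$ through a U-statistic built from independent estimates.

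\textbf{Step 1: reduction to Frobenius distance.} First we reduce trace-distance certification to Frobenius certification. Following \cite{buadescu2019quantum}, we apply the standard instance-optimal reduction: split $\sigma$'s eigenbasis into buckets of comparable eigenvalues, and in the ``light'' part use that trace distance $\eps$ forces Frobenius distance roughly $\eps/\sqrt{d}$, or more precisely a distance relative to $\sigma$. This loses only polylogarithmic factors. After this step it suffices to distinguish $\rho=\sigma$ from $\|\rho-\sigma\|_F\ge \eps/\sqrt d$ with about $\tilde\bigo(\max\{d^2/(\sqrt t\eps^2), d/\eps^2\})$ copies. The $d/\eps^2$ term handles the regime $t\ge d^2$, where we simply cap $t$ at $d^2$.

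\textbf{Step 2: the statistic.} Run the fixed $t$-copy measurement $m$ times to obtain i.i.d.\ estimates $\hat\rho_1,\dots,\hat\rho_m$, each computed from one $t$-copy block. Set
\[
Z=\binom{m}{2}^{-1}\sum_{i<j}\tr\bigl((\hat\rho_i-\sigma)(\hat\rho_j-\sigma)\bigr).
\]
By independence and unbiasedness, $\mathbb E Z=\|\rho-\sigma\|_F^2$. The variance decomposes as
\[
\bigo\bigl(m^{-2}\,\mathbb E\|\hat\rho-\rho\|_F^4 + m^{-1}\,(\rho-\sigma)^\top\Sigma(\rho-\sigma)\bigr),
\]
where $\Sigma$ is the covariance of $\hat\rho$. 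The key input is the succinct second-moment formula for the estimator of \cite{pelecanos2025mixed}. For a single $t$-copy block it should give
\[
\mathbb E\,\tr\bigl((\hat\rho-\rho)X\bigr)^2 \lesssim \frac{\tr(X^2)+\tr(\rho X^2)\,d}{t}\cdot\frac{1}{\mathrm{something}},
\]
and roughly $\mathbb E\|\hat\rho-\rho\|_F^2\approx d^2/t^2+d/t$, so the per-block Frobenius error behaves like $\max\{d^2/t^2, 1/t\}\cdot d$.

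\textbf{Step 3: Chebyshev.} With threshold $\eps^2/(2d)$, we need
\[
\frac{\sqrt{\mathbb E\|\hat\rho-\rho\|_F^4}}{m}\lesssim \frac{\eps^2}{d},
\]
together with the cross term condition, which is handled by AM-GM against $\|\rho-\sigma\|_F^2$ in the usual way. Plugging in a per-block error of order $d^2/t$ (for $t\le d^2$) gives $m\approx d^3/(t\eps^2)$ blocks. That is $n=mt\approx d^3/\eps^2$, which is too many. So the naive second-moment bound is insufficient, and we must exploit that the collision-type statistic only pays for the \emph{fourth-moment / Hilbert–Schmidt norm of the covariance}, not its trace. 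Concretely, the $m^{-2}$ term is really $\|\Sigma\|_{F}$ (the Frobenius norm of the covariance operator), which for the \cite{pelecanos2025mixed} estimator should be about $d/t\cdot\sqrt{\cdot}$ rather than $d^2/t$. Requiring $\|\Sigma\|_F/m\lesssim\eps^2/d$ with $\|\Sigma\|_F\approx d^{3/2}/t^{1/2}\cdot(1/\sqrt t)$-type scaling should yield $m\approx d^2/(t^{3/2}\eps^2)$, hence $n\approx d^2/(\sqrt t\eps^2)$.

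\textbf{Main obstacle.} The hard step is precisely this computation. We must express $\mathbb E\,\tr\bigl((\hat\rho_1-\rho)(\hat\rho_2-\rho)\bigr)^2=\tr(\Sigma^2)$ using the explicit Schur–Weyl second-moment formula of the estimator, and show that it is $\tilde\bigo(d^3/t^3+d^2/t^2)$ in the bucketed setting. The goal is to get the $\sqrt t$ gain rather than a $t$ gain on $\tr\Sigma$. We would first attempt this for the maximally mixed $\sigma$ and then handle the $\sigma$-dependence via bucketing.
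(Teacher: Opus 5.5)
Your skeleton is the paper's: unbiased $t$-copy estimates from \cite{pelecanos2025mixed}, a pairwise U-statistic whose variance is governed by the estimator's second moment, and bucketing for general $\sigma$. Centering at the known $\sigma$ is harmless; the paper instead uses $\hat\rho_i-\hat\sigma_i$ with $\hat\sigma_i$ simulated from $\sigma$, but that choice does not matter here.

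The gap is that the step you call the main obstacle is the heart of the proof, and you leave it open with targets that do not work. Write $C=\mbb{E}[(\hat\rho-\rho)^{\otimes 2}]$. For $t\le d^2$ after bucketing, what is needed is $\tr(C^2)=\bigo(d^2/t^3)$. Your stated goal $\tilde\bigo(d^3/t^3+d^2/t^2)$ would only give $n\approx d^{5/2}/(\sqrt t\eps^2)$ (resp.\ $d^2/\eps^2$). Your intermediate scalings ($\mbb{E}\|\hat\rho-\rho\|_2^2\approx d^2/t^2+d/t$, $\|\Sigma\|_F\approx d^{3/2}/t$) are also inconsistent with the $m\approx d^2/(t^{3/2}\eps^2)$ you then assert, which is read off from the theorem rather than derived. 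In fact $\mbb{E}\|\hat\rho-\rho\|_2^2\le 2d/t+d^2\,\mbb{E}[\ell(\lambda)]/t^2$. The factor $\mbb{E}[\ell(\lambda)]$ that you drop is exactly where the $\sqrt t$ comes from.

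No exact Schur--Weyl computation is needed to close this. By \Cref{thm:quasi-purification-truncated-second-moment}, $C=A-B$ with $A=\frac1t(\rho\otimes I+I\otimes\rho)\swap+\frac{\mbb{E}[\ell(\lambda)]}{t^2}\swap$ and $B=\frac1t\rho^{\otimes 2}+\Lower_\rho\in\mathrm{SoS}(d)$. Note that $C\in\mathrm{SoS}(d)$ as well. For Hermitian $X,Y$, $X\otimes X$ has nonnegative inner product with $Y\otimes Y$, $\swap$, and $(\rho\otimes I)\swap$. Hence $\tr(C^2)\le\langle C,A\rangle\le\tr(A^2)$, and
\[
\tr(A^2)=\frac{2d\tr(\rho^2)+2}{t^2}+\frac{4d\,\mbb{E}[\ell(\lambda)]}{t^3}+\frac{d^2\,\mbb{E}[\ell(\lambda)]^2}{t^4}.
\]
The bound $\mbb{E}[\ell(\lambda)]\le\min\{2\sqrt t,d\}$ (\Cref{lem:partition-length-upper-bound}) turns the last term into $\bigo(d^2\min\{d^2,t\}/t^4)$. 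The same positivity argument bounds the $m^{-1}$ coefficient by $\frac2t\tr(\rho\Delta^2)+\frac{\mbb{E}[\ell(\lambda)]}{t^2}\tr(\Delta^2)=\bigo(\tr(\Delta^2)/t)$.

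This computation also shows what bucketing is for, which your Step~1 misdescribes. The term $d\tr(\rho^2)/t^2$ (the paper's $d\tr(\rho\sigma)/t^2$) is $\Omega(d/t^2)$ when $\tr(\sigma^2)=\Omega(1)$. A single Hilbert--Schmidt test at accuracy $\eps/\sqrt d$ then forces $n\gtrsim d^{3/2}/\eps^2$ for every $t$. The reduction of \cite{chen2022toward,o2025instance} (not \cite{buadescu2019quantum}) is therefore not one test at $\eps/\sqrt d$. It is a family of Hilbert--Schmidt tests on projected, renormalized blocks with $\|\hat\sigma_j\|_\infty\le 2/d_j$. For off-diagonal pairs one only has $\le 2/d_{j'}$, which requires the $\tr(\rho\sigma)$-dependent version of the test. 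The costs are then summed via $\|\sigma^*\|_{1/2}$ and $\|\sigma^*\|_{1/3}$ to reach the worst-case bound.
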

Let us pause here to remark on some implications of this theorem. First, let us note that for $t = 1$, our algorithm achieves a copy complexity of $\Tilde{\bigo}(d^2/\eps^2)$, which is optimal (up to log factors) for certification with fixed single-copy measurements. In fact, we also apply the approach used in \Cref{thm:intro-state-certification-upper} to directly provide a new tester that recovers the tight $\bigo(d^2/\eps^2)$ bound of \cite{liu2024role} in this setting (see \Cref{sec:closeness-testing-uniform-povm}).

Next, note that for $t = d^2$, our algorithm succeeds using $\Tilde{\bigo}(d/\eps^2)$ copies, recovering the optimal rate up to log factors. For $\eps = o\left(\frac{1}{\sqrt{d}}\right)$, we can thus obtain near-optimal rates for certification using $o(d/\eps^2)$-copy measurements, answering our central question in the affirmative. We also note that, as in the case of tomography, the amount of entanglement required to achieve this rate is independent of $\eps$, showing that the entanglement cost of optimal state certification also does not grow with higher precision.

Let us also compare \Cref{thm:intro-state-certification-upper} to the folklore upper bound mentioned previously. Note that $\frac{d^2}{\sqrt{t}\eps^2}$ and $\frac{d^2}{t\eps^4}$ are incomparable, and trade-off at $t = 1/\eps^4$. Thus, \Cref{thm:intro-state-certification-upper} and the folklore upper bound imply that the complexity of state certification with fixed $t$-copy measurements is at most 
\begin{equation}
    \Tilde{\bigo}\Bigl(\max\Bigl\{\frac{d}{\eps^2}, \min\Bigl\{\frac{d^2}{t\eps^4}, \frac{d^2}{\sqrt{t}\eps^2}\Bigr\}\Bigr\}\Bigr).
\end{equation}

Finally, while we have stated \Cref{thm:intro-state-certification-upper} in the worst-case, we actually show a more general instance-dependent version of this theorem (see \Cref{thm:instance-dependent-tcopy-certification}), i.e., we provide a copy complexity upper bound that depends on the hypothesis state $\sigma$. The worst-case \Cref{thm:intro-state-certification-upper} is then obtained by setting $\sigma$ to be the maximally mixed state, $\mmstate \triangleq \frac{I}{d}$, which maximizes all $\sigma$-dependent quantities in \Cref{thm:instance-dependent-tcopy-certification}.

In fact, we show that for this special case of \emph{mixedness testing}, i.e., certifying the maximally mixed state, we can do away with the log factors of \Cref{thm:intro-state-certification-upper}:

\begin{theorem}
    \label{thm:intro-mixedness-upper}
     Let $d \geq 2, t \geq 1, \eps > 0$. There exists an algorithm for $\eps$-trace norm mixedness testing of $d$-dimensional states using only fixed $t$-copy measurements with copy complexity at most
    \begin{equation}
        \bigo\Bigl(\max\Bigl\{\frac{d^2}{\sqrt{t}\epsilon^2}, \frac{d}{\eps^2}\Bigr\}\Bigr).
    \end{equation}
\end{theorem}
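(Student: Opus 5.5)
We reduce testing to learning: run the $t$-copy tomography estimator of \cite{pelecanos2025mixed} to get independent unbiased estimates of $\rho$, and feed them into a U-statistic that estimates $\|\rho-\mmstate\|_F^2$. The point is not to learn $\rho$ to trace distance $\eps$. We only need a statistic whose mean is $\|\rho-\mmstate\|_F^2$ and whose variance is controlled by the succinct second moment of this estimator.

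\textbf{The statistic.} Split the $n$ copies into $m = n/t$ batches of $t$ copies each. On each batch we apply the fixed $t$-copy measurement underlying the random-purification estimator and obtain $\hat\rho_i$ with $\mathbb{E}[\hat\rho_i]=\rho$. Since all copies of $\rho$ are i.i.d., the $\hat\rho_i$ are i.i.d. Define
\begin{equation}
Z = \binom{m}{2}^{-1}\sum_{i<j}\tr\bigl((\hat\rho_i-\mmstate)(\hat\rho_j-\mmstate)\bigr),
\end{equation}
so that $\mathbb{E}[Z]=\|\rho-\mmstate\|_F^2 \triangleq \Delta$. In the far case, Cauchy--Schwarz gives $\Delta\ge \eps^2/d$; in the null case $\Delta=0$. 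We accept iff $Z<\eps^2/(2d)$, and Chebyshev shows it suffices to have $\mathrm{Var}(Z)\ll \max(\Delta,\eps^2/d)^2$.

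\textbf{Variance.} The standard U-statistic decomposition gives
\begin{equation}
\mathrm{Var}(Z)\lesssim \frac{1}{m}\,\mathrm{Var}\bigl(\tr(\hat\rho_1 (\rho-\mmstate))\bigr)+\frac{1}{m^2}\,\mathbb{E}\bigl[\tr(\bar\rho_1\bar\rho_2)^2\bigr],
\end{equation}
where $\bar\rho_i=\hat\rho_i-\rho$. Both terms are governed by the second-moment operator $\mathbb{E}[\bar\rho_1^{\otimes 2}]$. The key input is the closed form of $\mathbb{E}[\hat\rho\otimes\hat\rho]$ from \cite{pelecanos2025mixed}. I expect it to have the shape $\rho\otimes\rho + \frac{c}{t}(\text{SWAP-type terms in }\rho)+\frac{c'}{t^2}(\ldots)$, giving covariance roughly $\frac1t(\rho\otimes I+I\otimes\rho)\swap$ plus lower-order terms, scaled so that one batch carries information comparable to $t$ copies with effective dimension $d$.

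Plugging this in should give:
\begin{itemize}
\item the linear term is at most $\frac{1}{mt}\,O(\|\rho-\mmstate\|_{\op}\Delta\cdot\mathrm{poly})\lesssim \frac{1}{n}(\Delta^{3/2}+\Delta/d)$;
\item the quadratic term is $\frac{1}{m^2}\tr\bigl(\mathrm{Cov}^2\bigr)\approx \frac{1}{m^2}\bigl(\frac{d^2}{t^2}\|\rho\|_F^2\cdot d^{\,2}/d^{\,2}\ldots\bigr)$, which should work out to $\frac{d^2}{m^2 t^2}\bigl(1+t\|\rho\|_F^2\bigr)^2/d^2$.
\end{itemize}
Requiring the quadratic term to be at most $\eps^4/d^2$ gives $m t\gtrsim d/\eps^2\cdot\max(1, d/\sqrt{t})$. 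This is exactly $n\gtrsim \max(d^2/(\sqrt t\eps^2),\, d/\eps^2)$: the $\sqrt t$ arises because each estimator's variance $\approx d^2/t$ enters squared in the pairwise term, and the $\sqrt{\cdot}$ comes from Chebyshev on a U-statistic. The linear term needs $n\gtrsim d/\eps^2$, which is already dominated.

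\textbf{Main obstacle.} The hard part is the exact fourth-order computation $\mathbb{E}[\tr(\bar\rho_1\bar\rho_2)^2]=\tr\bigl(\mathrm{Cov}\cdot\mathrm{Cov}^{\top_2}\bigr)$-type contractions using the explicit Weingarten/Schur--Weyl form of the estimator's second moment. The difficulty is showing that every term is bounded by $O(d^2/t^2+\|\rho\|_F^2 d/t+\ldots)$, with no stray $d^3$ factors, and doing so uniformly in $\rho$ so that the far case, where $\|\rho\|_F^2$ may be as large as $1$, is absorbed by $\Delta^2$. Because mixedness testing fixes $\sigma=\mmstate$, no spectral bucketing of $\sigma$ is needed, and this is why no log factors appear, in contrast to \Cref{thm:intro-state-certification-upper}. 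I would first compute with $\rho=\mmstate$ to pin down the null variance, then handle $\rho=\mmstate+\Delta'$ by expanding to first and second order in $\Delta'$.
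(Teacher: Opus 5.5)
Your statistic and the framework around it are sound. Since $\tr\hat\rho_i=1$ almost surely for the \cite{pelecanos2025mixed} estimator, $Z$ is exactly the paper's purity estimator (\Cref{alg:collision-estimation}) shifted by $-1/d$. The mean, the Cauchy--Schwarz threshold $\eps^2/(2d)$ and the Hoeffding split are all correct.

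\textbf{The gap: the variance analysis.} This is the real content of the theorem, and you leave it as an expectation. Your plan is an \emph{exact} Weingarten/Schur--Weyl computation. That is not available: the second moment is known only in the truncated form of \Cref{thm:quasi-purification-truncated-second-moment}, with an unspecified $\mathrm{Lower}_\rho\in\mathrm{SoS}(d)$. Its trace is $\frac1t+\frac{d\,\mathbb{E}[\ell(\lambda)]}{t^2}$, since $\mathbb{E}[(\tr\hat\rho)^2]=1$, so it is not negligible. This is exactly the paper's obstacle. Dropping $\mathrm{Lower}_\rho$ when bounding $\tr(\mathbb{E}[\hat\rho^{\otimes 2}]\rho^{\otimes 2})$ leaves null-case terms $\frac{1}{td^2}+\frac{\mathbb{E}[\ell(\lambda)]}{t^2 d}$ and hence a $1/\eps^4$ rate. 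The paper gets around this either by conditioning on $\lambda$ (\Cref{lem:quasi-purification-conditioned-second-moment}) or by the randomized closeness tester of \Cref{cor:balanced-l2-closeness-testing}.

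\textbf{The missing idea: a positivity argument.} Your centered decomposition makes it available. Write $C=\mathbb{E}[\bar\rho^{\otimes 2}]=P-N$, where
\[
P=\tfrac1t(\rho\otimes I+I\otimes\rho)\swap+\tfrac{\mathbb{E}[\ell(\lambda)]}{t^2}\swap,\qquad N=\tfrac1t\rho^{\otimes 2}+\mathrm{Lower}_\rho\in\mathrm{SoS}(d).
\]
For Hermitian $X$ you have $\tr(CX^{\otimes 2})=\mathbb{E}[\tr(\bar\rho X)^2]\ge 0$ and $\tr(PX^{\otimes 2})=\frac2t\tr(\rho X^2)+\frac{\mathbb{E}[\ell(\lambda)]}{t^2}\tr(X^2)\ge 0$. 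Now let $\Delta=\rho-\mmstate$ denote the matrix; you used $\Delta$ for its squared norm. Then
\begin{itemize}
\item $\mathrm{Var}(\tr(\hat\rho\Delta))=\tr(C\Delta^{\otimes 2})\le\tr(P\Delta^{\otimes 2})$;
\item $\tr(C^2)\le\tr(CP)\le\tr(P^2)$.
\end{itemize}
Your linear term is tested against $\Delta^{\otimes 2}$ rather than $\rho^{\otimes 2}$. So this step reproduces \Cref{lem:quasi-purification-variance-case-2} without any conditioning on $\lambda$.

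\textbf{Your numerical heuristics are wrong.} They do not yield the claimed rate. Take your quadratic-term guess $\frac{(1+t\|\rho\|_F^2)^2}{m^2t^2}$ at $\rho=\mmstate$ and compare it with $\eps^4/d^2$: you get $n\gtrsim(d+t)/\eps^2$. A per-batch variance of $d^2/t$ would give $d^2/\eps^2$, with no $\sqrt t$ gain at all. The correct computation is
\[
\tr(P^2)=\frac{2d\tr(\rho^2)+2}{t^2}+\frac{4d\,\mathbb{E}[\ell(\lambda)]}{t^3}+\frac{d^2\,\mathbb{E}[\ell(\lambda)]^2}{t^4}.
\]
The $\sqrt t$ comes from the last term together with $\mathbb{E}[\ell(\lambda)]\le\min\{2\sqrt t,d\}$ (\Cref{lem:partition-length-upper-bound}), an ingredient your proposal never uses. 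At the null this gives $\mathrm{Var}(Z)\lesssim\frac{d^2}{m^2t^3}$ for $t\le d^2$ and $\mathrm{Var}(Z)\lesssim\frac{1}{m^2t^2}$ for $t\ge d^2$. Hence $n\gtrsim\max\{\frac{d^2}{\sqrt t\eps^2},\frac{d}{\eps^2}\}$, and for $t\ge d^2$ this holds directly, without the batch-capping of \Cref{rmk:fewer-copies-suffice}. In the far case, the $\Delta$-dependent terms are absorbed by $\tr(\Delta^2)^2$.

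Two smaller points. Your linear term omits $\frac{\mathbb{E}[\ell(\lambda)]}{t^2}\tr(\Delta^2)$, which is harmless. Also, $\mathbb{E}[\tr(\bar\rho_1\bar\rho_2)^2]$ is simply $\tr(C^2)$; no partial transposes are involved.
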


Mixedness testing is the hardest instance of state certification, and thus also has complexity $\Theta(d/\eps^2)$ in the general setting \cite{o2015quantum}. Thus, our algorithm again achieves the optimal rate for this problem at $t = d^2$. Using the fact that the maximally mixed state has the lowest purity among all quantum states, our algorithm estimates the purity of the unknown state and compares it to a suitable threshold. Our analysis of this estimator immediately yields an upper bound for multiplicative-error purity estimation:

\begin{theorem}
    \label{thm:intro-purity-upper}
     Let $d \geq 2, t \geq 1, \eps > 0$. Given copies of an unknown state $\rho$, there exists an algorithm to estimate $\tr(\rho^2)$ to multiplicative error at most $\eps$ using fixed $t$-copy measurements with copy complexity
    \begin{equation}
        \bigo\Bigl(\max\Bigl\{
        \frac{d^2}{\sqrt{t}\epsilon}, \frac{d}{\sqrt{t}\eps^2}, \frac{d}{\eps}, \frac{\sqrt{d}}{\eps^2}\Bigr\}\Bigr).
    \end{equation}
\end{theorem}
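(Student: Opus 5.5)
We estimate $\tr(\rho^2)$ by turning a tomography estimator into an unbiased purity estimator, as the discussion before \Cref{thm:intro-mixedness-upper} suggests. Split the $n$ copies into $m = n/t$ batches of $t$ copies. On each batch apply the fixed $t$-copy tomography measurement of \cite{pelecanos2025mixed}, which returns an unbiased estimator $\hat\rho_i$ with $\mathbb{E}[\hat\rho_i]=\rho$. We use that estimator because its second moment $\mathbb{E}[\hat\rho_i\otimes\hat\rho_i]$ has a closed form, roughly a combination of $\rho\otimes\rho$, $(\rho\otimes\rho)\,\swap$, and lower-order terms in $\rho\otimes I$, $I\otimes\rho$, $I$, $\swap$, with coefficients depending on $t$ and $d$.

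The estimator is the U-statistic
\begin{equation}
  \hat P = \binom{m}{2}^{-1}\sum_{i<j}\tr(\hat\rho_i\hat\rho_j).
\end{equation}
By independence of batches, $\mathbb{E}[\hat P]=\tr(\rho^2)$. By Chebyshev, it then suffices to show $\mathrm{Var}(\hat P)\le c\,\eps^2\tr(\rho^2)^2$ for a small constant $c$; median-of-means boosting is optional.

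The standard U-statistic decomposition gives
\begin{equation}
  \mathrm{Var}(\hat P)\lesssim \frac{1}{m}\,\mathrm{Var}\bigl(\tr(\hat\rho_1\rho)\bigr)+\frac{1}{m^2}\,\mathbb{E}\bigl[\tr(\hat\rho_1\hat\rho_2)^2\bigr].
\end{equation}
Both terms come from the second-moment formula:
\begin{equation}
  \mathbb{E}\bigl[\tr(\hat\rho_1\rho)^2\bigr]=\tr\bigl(\mathbb{E}[\hat\rho_1^{\otimes 2}](\rho\otimes\rho)\bigr),\qquad \mathbb{E}\bigl[\tr(\hat\rho_1\hat\rho_2)^2\bigr]=\tr\bigl(\mathbb{E}[\hat\rho_1^{\otimes2}]\,\mathbb{E}[\hat\rho_2^{\otimes2}]\bigr),
\end{equation}
where the second identity uses $\tr(AB)^2=\tr\bigl((A\otimes A)(B\otimes B)\bigr)$.

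The heuristic is that a single $t$-copy estimator has per-entry fluctuation of order $1/t$ in a $d^2$-dimensional space. Then
\begin{itemize}[label={}]
  \item the linear term behaves like $\frac{1}{mt}\bigl(\tr\rho^3+\tfrac{d}{t}\,\tr(\rho^2)\cdots\bigr)$;
  \item the quadratic term behaves like $\frac{1}{m^2}\bigl(\tfrac{d^2}{t^2}+\tfrac{d\,\tr(\rho^2)}{t}+\tr(\rho^2)^2\bigr)$.
\end{itemize}
Substituting $n=mt$ and comparing each term to $\eps^2\tr(\rho^2)^2$, with $\tr(\rho^2)\ge 1/d$ in the worst case, should give the four regimes:
\begin{itemize}[label={}]
  \item $\frac{d^2}{\sqrt t\,\eps}$ from the pure-noise quadratic term, giving $m\gtrsim d^2/(t\eps)$ under the $1/d$ normalization;
  \item $\frac{d}{\sqrt t\,\eps^2}$ from the linear term's $d/t$ piece;
  \item $\frac{d}{\eps}$ from the cross quadratic term;
  \item $\frac{\sqrt d}{\eps^2}$ from $\tr\rho^3\le \tr(\rho^2)^{3/2}$ in the linear term.
\end{itemize}
These exponents are what I expect the bookkeeping to produce; I would verify them against the exact coefficients.

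The main obstacle is handling the exact second-moment expression of the \cite{pelecanos2025mixed} estimator. We must bound traces of products of permutation operators against $\rho^{\otimes 2}$, including $\tr(\mathbb{E}[\hat\rho^{\otimes2}]^2)$, where terms like $\swap\cdot\swap$ and $\tr(\swap)=d$ give the dimension factors. We also need the $t$-dependence of the coefficients to come out exactly right; for instance, the $1/\sqrt t$ rather than $1/t$ savings must come from $\mathbb{E}\|\hat\rho-\rho\|_F^2\approx d^2/t^2$ in the quadratic term. I would first compute $\mathbb{E}\bigl[\tr(\hat\rho_1\hat\rho_2)^2\bigr]$ for $\rho=\mmstate$ to check that the scaling is right, then handle general $\rho$ by monotonicity bounds such as $\tr\rho^3\le\tr(\rho^2)^{3/2}$ and $\tr(\rho^2)\le 1$.
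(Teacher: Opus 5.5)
Your skeleton matches the paper's. You run \Cref{alg:collision-estimation} on the outputs of \Cref{alg:quasi-purification-estimation}, use the U-statistic variance of \Cref{lem:collision-variance-generic} and Chebyshev, and finish with $\tr(\rho^3)\le\tr(\rho^2)^{3/2}$ and $\tr(\rho^2)\ge 1/d$. The gap is in the second-moment bookkeeping, which is what actually proves the theorem: the key ingredient is missing and your heuristics give the wrong rate.

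First, there is no closed form. \Cref{thm:quasi-purification-truncated-second-moment} gives $\mathbb{E}[\hat\rho^{\otimes 2}]=A-\Lower_\rho$ with $A=\frac{t-1}{t}\rho^{\otimes 2}+\frac1t(\rho\otimes I+I\otimes\rho)\swap+\frac{\mathbb{E}[\ell(\lambda)]}{t^2}\swap$ and an unknown $\Lower_\rho\in\mathrm{SoS}(d)$. You need the fact that $\tr((X\otimes X)Y)\ge 0$ for Hermitian $X$ and each $Y\in\{\rho^{\otimes 2},(\rho\otimes I+I\otimes\rho)\swap,\swap,\hat\rho^{\otimes 2}\}$. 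This lets you drop the remainder: $\tr(\mathbb{E}[\hat\rho^{\otimes 2}]\rho^{\otimes 2})\le\tr(A\rho^{\otimes 2})$ and $\tr(\mathbb{E}[\hat\rho^{\otimes 2}]^2)\le\langle\mathbb{E}[\hat\rho^{\otimes 2}],A\rangle\le\tr(A^2)$.

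Second, the $\sqrt t$ gain comes from the $\swap$ coefficient, via $\mathbb{E}[\ell(\lambda)]\le\min\{2\sqrt t,d\}$ (\Cref{lem:partition-length-upper-bound}). So $\mathbb{E}\|\hat\rho-\rho\|_2^2\lesssim d/t+d^2/t^{3/2}$, not $d^2/t^2$. Your heuristics do not reproduce the target:
\begin{itemize}
\item your pure-noise term $d^2/(m^2t^2)$ against $\eps^2\tr(\rho^2)^2\ge\eps^2/d^2$ gives $m\gtrsim d^2/(t\eps)$, i.e.\ $n=mt\gtrsim d^2/\eps$, with no gain in $t$ at all;
\item a $\frac{d}{t}\tr(\rho^2)$ piece in the linear term gives $n\gtrsim d^2/(t\eps^2)$;
\item your cross term $d\tr(\rho^2)/t$ gives $n\gtrsim d\sqrt t/\eps$.
\end{itemize}
The correct bounds are
\begin{align*}
\tr(A\rho^{\otimes 2})-\tr(\rho^2)^2&\le\frac{2}{t}\tr(\rho^3)+\frac{\mathbb{E}[\ell(\lambda)]}{t^2}\tr(\rho^2),\\
\tr(A^2)-\tr(\rho^2)^2&\le\frac{4}{t}\tr(\rho^3)+\frac{2\,\mathbb{E}[\ell(\lambda)]}{t^2}\tr(\rho^2)+\frac{2d\tr(\rho^2)+2}{t^2}+\frac{4d\,\mathbb{E}[\ell(\lambda)]}{t^3}+\frac{d^2\,\mathbb{E}[\ell(\lambda)]^2}{t^4}.
\end{align*}
For $t\le d^2$ these yield the four terms $\sqrt d/\eps^2$, $d/(\sqrt t\eps^2)$, $d/\eps$ and $d^2/(\sqrt t\eps)$. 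The last comes from $d^2\mathbb{E}[\ell(\lambda)]^2/t^4\le 4d^2/t^3$, and the $4d\,\mathbb{E}[\ell(\lambda)]/t^3$ term is dominated for $t\le d^2$. For $t>d^2$, cap the batch size at $d^2$ as in \Cref{rmk:fewer-copies-suffice}.

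Your variance decomposition is also too lossy. You bound the pair term by the uncentered $\frac{1}{m^2}\mathbb{E}[\tr(\hat\rho_1\hat\rho_2)^2]$ and list a $\tr(\rho^2)^2/m^2$ contribution. Against $\eps^2\tr(\rho^2)^2$ this forces $m\gtrsim 1/\eps$ batches, i.e.\ $n\gtrsim t/\eps$, which the target does not dominate. For example, at $t=d^{3/2}$ and $\eps=d^{-0.3}$ the target is $d^{1.55}$, but $t/\eps=d^{1.8}$. Use the centered $\mathrm{Var}(\tr(\hat\rho_1\hat\rho_2))=\tr(\mathbb{E}[\hat\rho^{\otimes 2}]^2)-\tr(\rho^2)^2$, as in \Cref{lem:collision-variance-generic}, so that the leading $\frac{(t-1)^2}{t^2}\tr(\rho^2)^2$ in $\tr(A^2)$ cancels.

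With these repairs your route is slightly lighter than the paper's. The paper bounds the linear term through the refined conditional moment (\Cref{lem:quasi-purification-conditioned-second-moment}, \Cref{lem:quasi-purification-variance-case-2}). That refinement is essential for additive-error mixedness testing, where the truncated moment leaves a spurious $\Delta$-independent $\frac{1}{d^2t}$ term that would cost $1/\eps^4$. Under multiplicative error that term is harmless because $\tr(\rho^2)^2\ge 1/d^2$, so the truncated moment suffices.
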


Note that with fully entangled measurements, $\Theta(\max\{\frac{d}{\eps},\frac{\sqrt{d}}{\eps^2}\})$ copies are known to be necessary and sufficient for this problem \cite{acharya2020estimating}. At the other extreme, with unentangled measurements, the best known complexity for multiplicative-error purity estimation is $\bigo\left(\frac{d^2}{\eps},\frac{d}{\eps^2}\right)$ \cite{pelecanos2025beating}. Our upper bound interpolates precisely between these regimes, recovering the original bounds at $t \geq d^2$ and $t = 1$.

While we have provided upper bounds for state certification and mixedness testing that achieve the optimal rates at $t = o(d/\eps^2)$ copies in the high-precision regime, they only achieve this for $t \geq d^2$, which can still make them hard to implement in practice. Unfortunately, we expect that it is necessary to have $t \geq d^{\Omega(1)}$ to achieve the optimal rate with fixed $t$-copy measurements. We support this claim with the following lower bound for state certification that holds in the high-precision regime:

\begin{theorem}
\label{thm:intro-lower-private}
    For $d$ larger than some absolute constant, $t \geq 1, 0 < \eps \leq \bigo\left(\frac{1}{d^{2}t}\right)$, the copy complexity of $\eps$-trace norm certification of $d$-dimensional states with $t$-copy measurements drawn using only private randomness is at least $\Omega\bigl(\max\bigl\{\frac{d^{2}}{t\eps^2}, \frac{d}{\eps^2}\bigr\}\bigr)$. 
\end{theorem}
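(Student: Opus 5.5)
We will prove the bound by a Le Cam two-point argument against a Paninski-style perturbed family. We then control the $\chi^2$-divergence of the $n/t$ fixed $t$-copy measurement outcomes via a low-degree/second-moment calculation. The $\Omega(d/\eps^2)$ term is inherited from the fully entangled lower bound of \cite{o2015quantum}, since $t$-copy protocols are a special case. So it suffices to show $\Omega(d^2/(t\eps^2))$, and only in the regime $t \le d$ where it dominates.

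\textbf{Construction.} Take the null hypothesis $\sigma=\mmstate$. The alternative is $\rho_U = \frac{I}{d} + \frac{\eps}{d}\, U D U^\dagger$, where $D=\diag(\pm1)$ is traceless with balanced signs and $U$ is Haar random; then $\|\rho_U-\mmstate\|_1=\eps$. A protocol with private randomness is a distribution over fixed sequences of $t$-copy POVMs $\{M^{(i)}_x\}$. By convexity it suffices to lower bound the copy complexity for every fixed sequence. For $m=n/t$ rounds, the likelihood ratio is $L(x)=\mathbb{E}_U\prod_i \tr(M^{(i)}_{x_i}\rho_U^{\otimes t})/\tr(M^{(i)}_{x_i}\mmstate^{\otimes t})$. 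The standard bound is $\divchi \le \mathbb{E}_{U,V}\prod_{i=1}^m (1+Z_i(U,V))-1$, where
\[
Z_i(U,V)=\sum_x \frac{\tr(M_x(\rho_U^{\otimes t}-\mmstate^{\otimes t}))\,\tr(M_x(\rho_V^{\otimes t}-\mmstate^{\otimes t}))}{\tr(M_x \mmstate^{\otimes t})}.
\]

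\textbf{Expansion in $\eps$.} Write $\rho_U^{\otimes t}-\mmstate^{\otimes t} = \sum_{k\ge1} (\eps/d)^k d^{-(t-k)}\sum_{|S|=k}\Delta_U^{\otimes S}\otimes I^{\otimes \bar S}$ with $\Delta_U=UDU^\dagger$. The small-$\eps$ assumption $\eps\le \bigo(1/(d^2 t))$ is what makes the $k=1$ term dominate. The $k$-th term carries a factor roughly $(t\eps)^k$, and operator-norm bounds $\|\Delta_U\|_{\op}\le1$ together with $t\eps d^2\ll1$ let the $k\ge2$ contributions be absorbed into a geometric series. For the linear term we define $\Phi_i(A)=\sum_x \frac{\tr(M_x \tilde A)\,M_x\text{-marginal}}{\tr(M_x)}$, where $\tilde A=\sum_{j}A_j\otimes I$ is the symmetrized single-site insertion. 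This gives $Z_i\approx \frac{t^2\eps^2}{d^2}\langle \bar\Delta_U, \mathcal{M}_i \bar\Delta_V\rangle$ for a PSD superoperator $\mathcal{M}_i$ on traceless $d\times d$ matrices. The key claim is $\tr\mathcal{M}_i\lesssim d^2$ and $\|\mathcal{M}_i\|_{\op}\lesssim d/t$ after accounting for the $1/t$ from averaging over the $t$ symmetric positions. The intuition is that a $t$-copy measurement on nearly maximally mixed input can only extract information about $t$ "directions" of the single-copy perturbation, in the sense of bounded Fisher information $\lesssim t$.

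\textbf{Concluding.} Apply $1+z\le e^z$ and Haar concentration. The quantity $\langle \Delta_U,\mathcal{M}\Delta_V\rangle$ with $\mathcal{M}=\sum_i\mathcal{M}_i$ is a Lipschitz function of $U$ (and $V$) with mean zero. Its variance is $\approx\|\mathcal{M}\|_F^2/d^4\cdot d^2$, and it has subgaussian tails from Haar concentration (Lévy's lemma). We then bound $\mathbb{E}\exp(\frac{t^2\eps^2}{d^2}\langle\Delta_U,\mathcal{M}\Delta_V\rangle)-1$ by $\bigo(\frac{t^4\eps^4}{d^4}\|\mathcal{M}\|_F^2/d^2)$. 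With $\|\mathcal{M}\|_F^2\le m\cdot\|\mathcal{M}_i\|_{\op}\tr\mathcal{M}\lesssim m^2 d^3/t$, this is small unless $m\gtrsim d^2/(t^2\eps^2)$, i.e. $n\gtrsim d^2/(t\eps^2)$.

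\textbf{Main obstacle.} The hardest step is the spectral bound on $\mathcal{M}_i$ for arbitrary (possibly entangled across the $t$ copies) POVMs, specifically obtaining $\|\mathcal{M}_i\|_{\op}\lesssim d/t$ rather than $d$. I would first attempt it via the operator inequality $\sum_x \frac{|M_x\rangle\langle M_x|}{\tr M_x}\preceq \mathrm{id}$ on the $d^t$-dimensional space, restricted to the span of the single-site insertions. Controlling the higher-order terms uniformly will use the $\eps\le\bigo(1/(d^2t))$ hypothesis.
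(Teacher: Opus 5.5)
\textbf{There is a genuine gap: your hard instance is fixed before the measurements, and such an instance cannot give $d^2/\eps^2$.} Your ensemble $\rho_U=\mmstate+\frac{\eps}{d}UDU^\dagger$ does not depend on the measurements. Your step ``by convexity it suffices to consider every fixed sequence'' is the shared-randomness reduction. So anything you prove against this fixed prior also lower-bounds non-adaptive protocols with \emph{shared} randomness. Those protocols beat $d^2/\eps^2$: at $t=1$, $\bigo(d^{3/2}/\eps^2)$ copies suffice \cite{chen2022toward}.

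For your ensemble this is already visible with one fixed basis. The diagonal entries of $UDU^\dagger$ have size about $1/\sqrt d$. So measuring every copy in the computational basis gives a distribution at $\ell_2$-distance about $\eps/d$ from uniform. Collision testing detects this with $\bigo(d^{3/2}/\eps^2)$ samples, for every $\eps$, including $\eps\le 1/d^2$. Hence $\divchi$ for that fixed schedule is $\Omega(1)$ already at $n\approx d^{3/2}/\eps^2$, and no analysis of this ensemble can reach $d^2/\eps^2$.

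Your last step also has an arithmetic slip that hides this. Inserting your own bound $\|\mathcal{M}\|_F^2\lesssim m^2d^3/t$ into $\frac{t^4\eps^4}{d^6}\|\mathcal{M}\|_F^2$ gives $t^3\eps^4m^2/d^3$. This only forces $m\gtrsim d^{3/2}/(t^{3/2}\eps^2)$, i.e.\ $n\gtrsim d^{3/2}/(\sqrt t\,\eps^2)$, which is a non-adaptive-type bound.

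The spectral bound you flagged as the main obstacle is not the problem; it holds in one line. For traceless Hermitian $A$, the single-site insertions in $L_A=\sum_j A_j\otimes\mmstate^{\otimes[t]\setminus\{j\}}$ are pairwise orthogonal, so $\|L_A\|_2^2=t\|A\|_2^2/d^{t-1}$. Also $\langle L_A,\mc{H}(L_A)\rangle=\frac{t^2}{d^{t-1}}\langle A,\tilde{\mc{H}}(A)\rangle$, and $\mc{H}$ has operator norm at most $1$. Together these give $\langle A,\tilde{\mc{H}}(A)\rangle\le\|A\|_2^2/t$.

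\textbf{The missing idea is to use that the randomness is private.} A privately randomized round is equivalent to a single fixed POVM: record the random choice together with the outcome. So the schedule is effectively deterministic, and the perturbation ensemble may be chosen \emph{after} it. This is the max--min form in \Cref{fact:copies-divergence}.

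The paper proceeds as follows.
\begin{itemize}
\item It takes the Rademacher ensemble $\Delta_{\bfz}\propto\frac{\eps}{\sqrt{d\ell}}\sum_{i\le\ell}z_iV_i$ of \cite{liu2024role} and linearizes essentially as you do.
\item It chooses $V_1,\dots,V_\ell$ to be the $\ell=d^2/2$ eigenvectors of the averaged reduced channel $\tilde{\mc{H}}$ with the \emph{smallest} eigenvalues. Since $\tr(\tilde{\mc{H}})\le d$ (\Cref{lem:reduced-channel-properties}), each of these eigenvalues is at most $d/(d^2-\ell)=2/d$. Hence $\|\mc{V}^\dagger S_{\tilde{\mc{H}}}\mc{V}\|_2\le\sqrt2$.
\item An isotropic perturbation like yours instead pays the full Frobenius norm of $\tilde{\mc{H}}$ on traceless matrices, e.g.\ $\sqrt{d-1}$ for a single-copy basis measurement. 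That is exactly the $\sqrt d$ factor you lose.
\item The bilinear Rademacher mgf bound then gives, with $m$ rounds, $\divchi\le\exp\bigl(\bigo(m^2t^4\eps^4/\ell^2+mt^3\eps^3)\bigr)-1$ up to an $\bigo(e^{-d})$ term. So $n=mt\gtrsim\min\{d^2/(t\eps^2),\,1/(t^2\eps^3)\}$, and the hypothesis $\eps\le\bigo(1/(d^2t))$ makes the first term the minimum.
\end{itemize}
To repair your argument, replace the Haar ensemble with random-sign perturbations supported on the least-informative eigendirections of $\tilde{\mc{H}}$, chosen after the measurements are fixed.
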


Indeed, if the above lower bound were shown to hold for general $\eps$, it would imply that the optimal rate can only be achieved for $t \geq d$, i.e., a dimension-dependent amount of entanglement is necessary. Note that this lower bound holds not only for fixed measurements, but also in the setting where each measurement is drawn using private randomness. As shown by Chen, Li, and O'Donnell \cite{chen2022toward}, $\bigo(d^{3/2}/\eps^2)$ copies suffice for state certification with \emph{non-adaptive} single-copy measurements; our lower bound also implies that even to beat this single-copy non-adaptive rate, any algorithm using measurements drawn with private randomness must measure $\omega(\sqrt{d})$ copies at once. 

While our focus in this work is on testing with fixed measurements, our lower bound techniques easily extend to the more general case of non-adaptive measurements. In the latter setting, we recover the lower bound of \cite{chen2021hierarchy} in the high-precision regime:

\begin{theorem} 
\label{thm:intro-lower-shared}
    For $d$ larger than some absolute constant, $t \geq 1, 0 < \eps \leq \bigo\left(\frac{1}{d^{3/2}t}\right)$, the copy complexity of $\eps$-trace norm certification of $d$-dimensional states with non-adaptive $t$-copy measurements is at least $\Omega\bigl(\max\bigl\{\frac{d^{3/2}}{t\eps^2}, \frac{d}{\eps^2}\bigr\}\bigr)$.
\end{theorem}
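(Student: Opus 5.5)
We will prove the bound by a Le Cam two-point argument in the standard Paninski-style construction, now run against non-adaptive $t$-copy measurements with shared randomness. The $\frac{d}{\eps^2}$ term is inherited directly from the fully entangled lower bound \cite{o2015quantum}. It therefore suffices to show $\Omega\bigl(\frac{d^{3/2}}{t\eps^2}\bigr)$.

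\textbf{Hard instance.} Take the null hypothesis $\rho = \mmstate$. For the alternative, take $\rho_U = \frac{I}{d} + \frac{\eps}{d}\, U \Lambda U^\dagger$, where $\Lambda$ is a diagonal $\pm1$ matrix with trace zero and $U$ is Haar random. Then $\|\rho_U - \mmstate\|_1 = \eps$. Non-adaptive means the sequence of POVMs $\mathcal{M}_1,\dots,\mathcal{M}_m$, each acting on $(\mathbb{C}^d)^{\otimes t}$, is fixed before any data is seen. By convexity we may condition on the shared randomness and assume the sequence is deterministic.

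\textbf{Main estimate.} Write the likelihood ratio of outcome $x_i$ as $1 + \Delta_i(U)$. Expanding $\rho_U^{\otimes t} = (\mmstate + \frac{\eps}{d}U\Lambda U^\dagger)^{\otimes t}$, the first-order term is
\[
\Delta_i(U) \approx \frac{\eps}{d}\,\frac{\tr\bigl(M_{x_i}\sum_{j=1}^t (U\Lambda U^\dagger)_j\bigr)}{\tr(M_{x_i})/d^t}.
\]
The higher-order terms are controlled because $\eps t \ll 1$. We then bound the $\chi^2$-divergence between the mixture over $U$ and the null by $\mathbb{E}_{U,V}\prod_i\bigl(1+\mathbb{E}_{x_i}[\Delta_i(U)\Delta_i(V)]\bigr) - 1$. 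This is at most $\mathbb{E}_{U,V}\exp\bigl(\sum_i \Phi_i(U,V)\bigr) - 1$, where $\Phi_i(U,V) = \mathbb{E}_{x_i}[\Delta_i(U)\Delta_i(V)]$. Each $\Phi_i$ is a quadratic form in the traceless operator $A_U = U\Lambda U^\dagger$, acting on the reduced one-copy operators $\tr_{\neg j}(M_x)$. Following the single-copy non-adaptive analysis of \cite{chen2022toward}, we bound it by $\frac{\eps^2 t^2}{d^2}\,\tr\bigl(A_U\,\mathcal{K}_i(A_V)\bigr)$ for a positive superoperator $\mathcal{K}_i$ of operator norm $O(d)$; the factor $t^2$ comes from Cauchy--Schwarz over the $t$ positions. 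Summing over the $m$ measurements gives a random quadratic form $Z = \frac{\eps^2 t^2}{d^2}\tr\bigl(A_U\,\mathcal{K}(A_V)\bigr)$, with $\mathcal{K} = \sum_i \mathcal{K}_i$.

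\textbf{Concentration (main obstacle).} We need $\mathbb{E}_{U,V} e^{Z} = 1 + o(1)$ whenever $m t \ll d^{3/2}/(t\eps^2)$. The mean of $Z$ is zero because $\Lambda$ is traceless. Its variance is governed by the Hilbert--Schmidt norm of $\mathcal{K}$, which is at most $\sqrt{d^2}\cdot\|\mathcal{K}\|_{\op} \le m\,d^2$. Hence
\[
\mathrm{Var}(Z) \lesssim \frac{\eps^4 t^4 m^2 d^4}{d^4\cdot d^{3}} .
\]
This quantity is small exactly when $mt \ll d^{3/2}/(t\eps^2)$, which matches the target bound up to the $t$-bookkeeping. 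To upgrade the variance bound to control of the exponential moment, we would use Haar concentration (Lipschitz concentration on the unitary group, as in \cite{chen2022toward}) to get subexponential tails for $Z$ at scale $\|\mathcal{K}\|_{\op}\eps^2t^2/d^2$. The assumption $\eps \le O(1/(d^{3/2}t))$ is exactly what makes this tail scale $O(1)$-negligible, and it also truncates the higher-order terms in the expansion of $\rho_U^{\otimes t}$.

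This concentration step is where we expect the difficulty. Getting the sharp $\frac{1}{t}$ dependence rather than $\frac{1}{t^2}$ requires more than the crude Cauchy--Schwarz over positions: we would need to show that the $t$ positions contribute incoherently in Hilbert--Schmidt norm. The plan there is to use permutation symmetry to symmetrize $M_x$, and then bound $\sum_j \tr_{\neg j}$ via Schur--Weyl-based operator-norm estimates.

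Finally, a bounded $\chi^2$ implies total variation distance $o(1)$, so no non-adaptive tester can succeed with the stated number of copies. Combining this with the $\frac{d}{\eps^2}$ term gives the theorem.
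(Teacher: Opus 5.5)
Your skeleton matches the paper's: an Ingster--Suslina $\chi^2$ bound, linearization of $\rho^{\otimes t}$ around $\mmstate^{\otimes t}$, rewriting the linear part through an averaged single-register induced map, and an mgf bound for a bilinear form in two independent draws of the randomness. The one difference is that you use Haar $U\Lambda U^\dagger$ where the paper uses Rademacher coefficients in an orthonormal basis.

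\textbf{Main gap: the step that yields $d^{3/2}$ rather than $d$ is missing.} You bound $\|\mathcal{K}\|_{\mathrm{HS}}\le\sqrt{d^2}\,\|\mathcal{K}\|_{\op}\le md^2$. Since $\mathbb{E}[\mathrm{vec}(A_U)\mathrm{vec}(A_U)^\dagger]=\frac{d}{d^2-1}\Pi_0$, with $\Pi_0$ the projector onto traceless matrices, one has $\mathrm{Var}(Z)\approx\frac{\eps^4t^4}{d^6}\|\mathcal{K}\|_{\mathrm{HS}}^2$. Your bound therefore only gives $\mathrm{Var}(Z)\lesssim\frac{\eps^4t^4m^2}{d^2}$, i.e.\ $mt\gtrsim\frac{d}{t\eps^2}$. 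That does not even beat the $\frac{d}{\eps^2}$ term you import. Your displayed $\frac{\eps^4t^4m^2d^4}{d^4\cdot d^3}$ contains an unexplained extra factor $1/d$.

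What is needed is a \emph{trace} bound on the induced map. Set $\tilde{\mathcal{H}}_{j,k}(M)=\tr_{[t]\setminus\{j\}}\mathcal{H}(M_k\otimes\mmstate^{\otimes[t]\setminus\{k\}})$ and $\tilde{\mathcal{H}}=\frac{1}{t^2}\sum_{j,k}\tilde{\mathcal{H}}_{j,k}$, where $\mathcal{H}$ is the L\"uders channel of the $t$-copy POVM. The paper shows $\tilde{\mathcal{H}}$ is self-adjoint and unital with spectrum in $[0,1]$, and $\tr(\tilde{\mathcal{H}})\le d$. The trace bound comes from $\tr(\tilde{\mathcal{H}}_{j,k})=d^{-(t-1)}\sum_x\tr(\rho_x^{(k)}\rho_x^{(j)})\tr(M_x)\le d$, where $\rho_x^{(k)}$ are the normalized single-register marginals of the rank-one POVM elements. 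Hence $\|\tilde{\mathcal{H}}\|_{\mathrm{HS}}\le\sqrt{\|\tilde{\mathcal{H}}\|_{\op}\tr(\tilde{\mathcal{H}})}\le\sqrt{d}$. In your normalization ($\mathcal{K}_i=d\,\tilde{\mathcal{H}}_i$) this reads $\|\mathcal{K}\|_{\mathrm{HS}}\le md^{3/2}$, which is exactly what gives $\frac{\eps^4t^4m^2}{d^3}$. This is the multi-copy analogue of $\tr(\mathcal{H}_{\mathcal{M}})=d$ in the single-copy argument; without it the proof does not reach $d^{3/2}$.

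\textbf{The factor $t^2$ is not a loss.} It is an exact identity, not a Cauchy--Schwarz bound: in your normalization $d^t\tr(L_U\mathcal{H}(L_V))=\frac{\eps^2t^2}{d}\tr(A_U\tilde{\mathcal{H}}(A_V))$, with $L_U$ the first-order part of $\rho_U^{\otimes t}-\mmstate^{\otimes t}$. A Cauchy--Schwarz upper bound on this sign-indefinite form could not be used inside the mgf anyway. This $t^2$ is precisely what gives $mt\gtrsim\frac{d^{3/2}}{t\eps^2}$. So the ``incoherent positions'' improvement you flag as the main obstacle is unnecessary. It is also false in general: it would give a $t$-independent $\Omega(d^{3/2}/\eps^2)$ for your (mixedness) instance. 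That contradicts the paper's log-free $O(d/\eps^2)$ mixedness upper bound with fixed $d^2$-copy measurements.

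\textbf{The role of $\eps\le O(1/(d^{3/2}t))$ is misplaced.} Write $A_U$ as $\sqrt{d}$ times a unit vector in the $(d^2-1)$-dimensional traceless space. Then the subexponential scale of $Z$ is about $\frac{\eps^2t^2}{d^3}\|\mathcal{K}\|_{\op}\le\frac{m\eps^2t^2}{d^2}$, not $\frac{\eps^2t^2}{d^2}\|\mathcal{K}\|_{\op}$.
\begin{itemize}
\item Your scale equals $\sqrt{d}$ at the threshold $m\approx\frac{d^{3/2}}{t^2\eps^2}$ for every $\eps$, so the concentration step as written fails.
\item The correct scale is $d^{-1/2}$ there, again independent of $\eps$. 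This matches the paper's non-binding condition $n\lesssim\ell/(t^2\eps^2\|\mathcal{V}^\dagger S_{\tilde{\mathcal{H}}}\mathcal{V}\|_\infty)$.
\item The restriction on $\eps$ is actually needed for the nonlinear terms, and ``$\eps t\ll 1$'' does not suffice. They contribute $O(t^3\eps^3)$ to each $\Phi_i$, via $d^t\|L\|_2\|H\|_2$ with $\|L\|_2=O(t\eps d^{-t/2})$ and $\|H\|_2=O(t^2\eps^2 d^{-t/2})$. Summed over measurements this is $O(mt^3\eps^3)$ in the exponent.
\item That accumulation yields the paper's $\Omega(\min\{\frac{d^{3/2}}{t\eps^2},\frac{1}{t^2\eps^3}\})$. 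Comparing the two terms gives the hypothesis on $\eps$.
\end{itemize}

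With these repairs the Haar ensemble can be made to work, for instance by conditioning on $V$ and applying Lipschitz concentration on the unitary group twice. The paper's Rademacher ensemble is simpler: it makes the linear term literally $\frac{c^2nt^2\eps^2}{\ell}\bfz^\top\mathcal{V}^\dagger S_{\tilde{\mathcal{H}}}\mathcal{V}\bfz'$, so a standard Rademacher-chaos mgf bound suffices.
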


\subsection{Technical overview} \label{sec:tech-overview}

We will now present an overview of the techniques used in this work. However, before moving on to our techniques for state certification, let us first recall the analogous classical problem of distribution \emph{identity testing}. Here, one is given the description of some distribution $q$ over $\{1,\dots,d\}$, and samples from an unknown distribution $p$, and aims to test whether $p = q$ or whether $p,q$ are $\eps$-far according to some distance. By setting $q = u_d$, the uniform distribution over $\{1,\dots,d\}$, one recovers the problem of \emph{uniformity testing}. State certification and mixedness testing are the quantum analogues of these problems, and often, results for these quantum problems can be obtained by suitably generalizing techniques used to solve the classical ones, as will also be the case for some of our results.

In this overview, we start by presenting the ideas used to prove our mixedness testing upper bound. In fact, we present two distinct algorithms for mixedness testing. Both our protocols are based on a reduction from testing to learning: they repeatedly apply a tomography algorithm to copies of the unknown state and then use the outputs to estimate certain quantities of interest. 

The first of these estimators performs deterministic computations on the outputs obtained above. Interestingly, while we use an existing protocol for tomography, known bounds on the variance achieved by the tomography protocol do not suffice for us to control the variance of our final estimator. Instead, as we explain, we will need to develop refined estimates in order to achieve our claimed rate for mixedness testing. We overview our algorithm and analysis for mixedness testing and their application to purity estimation in \Cref{sec:upper-bound-mixedness-overview}. 

Our second tester actually solves the problem of state certification, and the result for mixedness testing follows as a corollary. Interestingly, unlike the previous tester, the estimator used here applies tomography not just to the unknown state, but also to the known \emph{hypothesis state} $\sigma$. Note that as $\sigma$ is known, one can also simulate these random measurement outcomes without direct access to copies of $\sigma$, resulting in a \emph{randomized} estimator that only consumes copies of $\rho$. While all this might appear to introduce unnecessary additional fluctuations to our estimator, surprisingly, it turns out that this randomization makes the variance of this estimator easier to bound than that of the previous one, but using it to get \Cref{thm:intro-state-certification-upper} still requires additional work. We discuss this algorithm for state certification in detail in \Cref{sec:upper-bound-certification-overview}. 

Lastly, we provide an overview of our lower bound techniques in \Cref{sec:lower-bound-overview}.

\subsubsection{Mixedness testing via purity estimation} \label{sec:upper-bound-mixedness-overview}

While results for both distribution testing and quantum state testing are typically stated with respect to the $\ell_1$ and the trace norms, it is often easier to prove upper bounds for testing with respect to the $\ell_2$ and the Hilbert--Schmidt norms. Fortunately, standard norm inequalities imply a straightforward reduction from the former case to the latter (see \Cref{fact:hs-to-trace-norm-testing}). Thus, going forward, we will focus almost entirely on testing with respect to the Hilbert--Schmidt norm. Now, before discussing our algorithm for mixedness testing, let us recall the collision-based uniformity tester.

\paragraph{Uniformity testing via collision estimation:} The canonical algorithm for uniformity testing makes use of the observation that the uniform distribution has the lowest collision probability of all distributions over $[d]$. Recall that the collision probability of a distribution is given by $\sum_{i \in [d]} p_i^2$. In fact, this quantity can be represented conveniently as a function of how far the distribution is from being uniform:
\begin{equation}
    \sum_i p_i^2 = \frac{1}{d} + \|p - u_d\|_2^2.
\end{equation}
Thus, the uniform distribution has collision probability $\frac1d$, and for any distribution $\eps$-far from it in $\ell_2$ distance, this is at least $\frac1d + \eps^2$. Consequently, one can test for uniformity by estimating the collision probability and thresholding. For a full analysis of the uniformity testing upper bound using these arguments, we refer to \cite[Section 2.1.2]{canonne2022topics}.

\paragraph{Purity estimation and mixedness testing:} For a quantum state $\rho$, the quantity analogous to the collision probability is its purity, i.e., $\tr(\rho^2)$. Indeed, this quantity satisfies a similar property:
\begin{equation}
    \tr(\rho^2) = \frac1d + \|\rho - \mmstate\|_2^2,
\end{equation}
where $\|\cdot\|_2$ denotes the Schatten $2$-norm, i.e., the Hilbert--Schmidt norm. Thus, one can solve the mixedness testing problem by estimating the purity of a state to a suitable precision and then thresholding. 

Now, for the task of purity estimation, if one could perform $2$-copy measurements, an obvious idea would be to measure the $\swap$ observable on the state $\rho^{\otimes 2}$, as the expectation value of this observable is $\tr(\swap \cdot \rho^{\otimes 2}) = \tr(\rho^2)$, i.e., one obtains an unbiased purity estimator. When one can perform fully entangled measurements, this can be generalized to measuring the \emph{symmetrized swap observable}, i.e. 
\begin{equation}
    \swap^{(n)} \triangleq \frac{1}{\binom{n}{2}}\sum_{i,j \in [n], i < j} \swap_{i,j}.
\end{equation} The expectation value of this observable on $\rho^{\otimes n}$ is indeed $\tr(\rho^2)$, and it was shown by B\u adescu, O'Donnell, and Wright \cite{buadescu2019quantum} that this is the minimum variance observable for purity estimation given $n$ copies of $\rho$. They further showed that one can achieve the optimal rate for mixedness testing by estimating the purity in this manner (see their Proposition 5.2). Then, one approach to get an upper bound for $t$-copy mixedness testing would be to interpolate between these two algorithms. Such an algorithm would involve measuring $\swap^{(t)}$ on $t$ copies at once, and averaging over the outcomes obtained. This is essentially the idea behind the aforementioned folklore tester, and unfortunately only leads to an $\bigo\bigl(\max\bigl\{\frac{d^2}{t\eps^4}, \frac{d}{\eps^2}\bigr\}\bigr)$ upper bound (see \Cref{sec:batching-BOW-tester}). The reason for this $1/\eps^4$-dependence is the presence of an undesirable term in the variance of the above purity estimator. To improve on this rate, we will consider an entirely different way of estimating the purity which achieves a smaller variance. 

\paragraph{Purity estimation via state estimation:}  Our algorithm for purity estimation will make use of state estimation algorithms. For ease of exposition, we will assume that these estimators are unbiased, but our arguments can be modified, without much additional effort, to work with biased estimators as well. Now, suppose we have two independent unbiased estimators $\hat{\rho}_1, \hat{\rho}_2$ for an unknown state $\rho$. Simply looking at the expectation of their inner product, we have
\begin{equation}
    \mathbb{E}[\tr(\hat{\rho}_1 \hat{\rho}_2)] = \tr(\mbb{E}[\hat{\rho}_1] \mbb{E}[\hat{\rho}_2]) = \tr(\rho^2),
\end{equation}
as $\hat{\rho}_1, \hat{\rho}_2$ are independent and unbiased. Thus, one can take any unbiased state estimator, apply it repeatedly, and compute pairwise inner products; this procedure gives us an unbiased estimator for the purity. More concretely, we will repeatedly apply an estimator $\mc{A}$ to $\rho^{\otimes t}$, and obtain estimates $\hat{\rho}_1, \dots, \hat{\rho}_n$. Then, to minimize the variance, we will compute inner products of \emph{all} distinct pairs and average over them:
\begin{equation}
    \Bar{X} = \frac{1}{\binom{n}{2}}\sum_{i, j \in [n], i < j} \tr(\hat{\rho}_i \hat{\rho}_j). 
\end{equation}
Clearly, we have $\mathbb{E}[\Bar{X}] = \tr(\rho^2)$, and this algorithm only performs fixed $t$-copy measurements. To show that this estimator is sufficiently precise with high probability, we will bound the variance of this estimator. As we will show in \Cref{lem:collision-variance-generic}, this can be written in terms of the first two moments of the underlying state estimator $\mc{A}$. In general, when $\mc{A}$ is unbiased, we will show that
\begin{equation}
    \mbb{V}[\Bar{X}] = \bigo(n^{-1}) \cdot (\tr(\mbb{E}[\hat{\rho}^{\otimes 2}] \cdot \rho^{\otimes 2}) - \tr(\rho^2)^2) + \bigo(n^{-2}) \cdot (\tr(\mbb{E}[\hat{\rho}^{\otimes 2}]^2) - \tr(\rho^2)^2),
    \label{eq:overview-generic-variance}
\end{equation}
where the expectations above are over the random output $\hat{\rho} \gets \mc{A}(\rho^{\otimes t})$.

While we have outlined the above arguments for unbiased estimators, we note again that this is not necessary. One can also make our tester work with \emph{biased} estimators. This can be done by appropriately modifying the testing threshold to correct for the bias, and using the more general form of the variance in \Cref{lem:collision-variance-generic}. We also note that our approach for purity estimation is similar to that of \cite{pelecanos2025beating}, who gave similar algorithms for estimating $\tr(\rho^k)$ with \emph{single-copy} measurements, for integer $k \geq 2$. We believe one can also extend our approach to estimate generic moments $\tr(\rho^k)$ with $t$-copy measurements.

\paragraph{Estimators with second-moment bounds:} Now, to solve mixedness testing, the above discussion directs us towards state estimators that perform $t$-copy measurements and have succinct second-moment expressions. Two such estimators have been developed very recently by Pelecanos, Tang, Spilecki, and Wright~\cite{pelecanos2025debiased,pelecanos2025mixed}. When applied to $\rho^{\otimes t}$, the outputs $\hat{\rho}$ of both estimators satisfy
\begin{equation}
    \mathbb{E}[\hat{\rho}^{\otimes 2}] = \frac{t-1}{t}\cdot \rho^{\otimes 2} + \frac1t(\rho \otimes I + I \otimes \rho)\cdot \swap + \frac{\mathbb{E}[\ell(\lambda)]}{t^2} \swap - \Lower_\rho. \label{eq:overview-second-moment}
\end{equation}
While we discuss this expression in more detail in \Cref{sec:PTSW-moments}, let us note for now that $\Lower_\rho$ is an unspecified lower-order term that can be ignored for almost all known applications of these estimators. Unfortunately, our mixedness tester is the only exception we are aware of to the above norm. It turns out that neglecting the lower-order term and computing the variance via our \Cref{eq:overview-generic-variance}, one obtains an additional undesirable term yielding a $1/\eps^4$-dependence in the final copy complexity for mixedness testing. In particular, this extra term arises only in the calculation of $\tr(\mbb{E}[\hat{\rho}^{\otimes 2}] \cdot \rho^{\otimes 2})$ in \Cref{eq:overview-generic-variance}.

\paragraph{The \cite{pelecanos2025mixed} estimator:} To get rid of this undesirable term, we need a refined version of \Cref{eq:overview-second-moment} that explicitly incorporates the lower-order term. Of the two estimators, we will choose to refine the second-moment bound of the estimator of \cite{pelecanos2025mixed}, as this estimator and its analysis are significantly easier to understand than that of \cite{pelecanos2025debiased}. First, let us provide a brief description of the \cite{pelecanos2025mixed} estimator. We refer the interested reader to \Cref{sec:state-estimation} for a detailed description of all state estimators considered in this work, along with their first two moments.  

Now, let us discuss the \cite{pelecanos2025mixed} estimator. By Schur-Weyl duality, one can apply the unitary Schur transform to block-diagonalize the state $\rho^{\otimes t}$ into blocks indexed by partitions $\lambda \vdash t$. Applying a projective measurement on this block-diagonal state collapses it into one of the blocks $\lambda$ at random. Conditioned on outcome $\lambda$, the resulting state $\rho_{|\lambda}$ can be treated as if it has rank at most $\ell(\lambda)$, where $\ell(\lambda)$ is the length of the resulting partition. The \cite{pelecanos2025mixed} estimator then applies a rank-$\ell(\lambda)$ random purification channel to $\rho_{|\lambda}$. At this stage, one can essentially treat the resulting state as $t$-copies of 
a rank-$\ell(\lambda)$ purification of $\rho_{|\lambda}$. Then, the estimator applies a standard pure-state tomography algorithm to this purified state. Finally, the \cite{pelecanos2025mixed} estimator takes the state obtained from this tomography algorithm and traces out the $\ell(\lambda)$-dimensional purification register to obtain the final output.

\paragraph{Refining the second moment:} To analyze the second moment of the above estimator, Pelecanos, Tang, Spilecki, and Wright first compute the second moment conditioned on an intermediate random outcome $\lambda$ and then average over $\lambda$.  They already truncate the lower-order terms in the former conditional second moment expression, which leads to the final truncated form in \Cref{eq:overview-second-moment}. As we will show in \Cref{sec:PTSW-moments}, their proof techniques can be extended without much additional effort to \emph{exactly} compute the conditional second moment without  this truncation. In particular, letting $\hat{\rho}_\lambda$ be the conditional outcome of the estimator, we will provide an exact expression for $\mathbb{E}[\hat{\rho}_\lambda^{\otimes 2} | \lambda]$ in \Cref{lem:quasi-purification-conditioned-second-moment}. However, looking at this expression, it is entirely unclear how one can average over $\lambda$ to get an exact second-moment bound.

Fortunately, this does not pose a major obstacle for our variance bounds. Recall that we aimed to refine our bound on $\tr(\mbb{E}[\hat{\rho}^{\otimes 2}] \cdot \rho^{\otimes 2})$ to obtain a tighter upper bound on the variance. We will instead compute the \emph{conditional inner product}, and then average over $\lambda$ to bound this quantity. In particular, we write
\begin{equation}
    \tr(\mbb{E}[\hat{\rho}^{\otimes 2}] \cdot \rho^{\otimes 2}) = \mbb{E}_\lambda[\tr(\mbb{E}[\hat{\rho}_\lambda^{\otimes 2} | \lambda] \cdot \rho^{\otimes 2})];
\end{equation}
we will thus exactly compute the inner term as a function of $\lambda$, and then average over $\lambda$. This average ends up being much easier to calculate, and allows us to remove the undesirable term mentioned previously; see the proof of \Cref{lem:quasi-purification-variance-case-2} for the detailed calculation.

The above refinement ends up being enough to get a strong upper bound on the variance of our purity estimator $\Bar{X}$, allowing us to finally obtain \Cref{thm:intro-mixedness-upper}, our mixedness testing upper bound. Further, we can immediately use this variance bound to prove \Cref{thm:intro-purity-upper}, our upper bound for multiplicative-error purity estimation with $t$-copy measurements. To conclude, we emphasize again that the reason we used the estimator of \cite{pelecanos2025mixed} was due to the ease of understanding (and refining) its second moment, not because of its unbiasedness or any other advantages offered by the use of random purification. 

\subsubsection{Upper bounds for state certification} \label{sec:upper-bound-certification-overview}

We now extend techniques from the previous section to obtain our algorithm for state certification.

\paragraph{Hilbert--Schmidt squared estimation:} We will again start with the case of testing for the Hilbert--Schmidt distance. We want to distinguish between the cases $\rho = \sigma$ and $\|\rho - \sigma\|_2 \geq \epsilon$. Letting $\Delta = \rho - \sigma$, we can write $\|\rho - \sigma\|_2^2 = \tr(\Delta^2)$. Thus, if we could construct an estimator for $\tr(\Delta^2)$, we could simply compare our estimate to an appropriate threshold to distinguish between the two cases. A naive idea for this would be to use state estimates $\{\hat{\rho}_i\}$, and compute quantities of the form $\tr((\hat{\rho}_i - \sigma)(\hat{\rho}_j - \sigma))$. However, setting $\sigma = \mmstate$, this precisely recovers the purity estimator used in the previous section (up to a deterministic shift), and analyzing its variance required refining prior tomography results.

Instead, we will construct a randomized estimator whose variance can be bounded using only the original second moment bound of \cite{pelecanos2025mixed}. In particular, we will obtain estimates $\hat{\rho}$, $\hat{\sigma}$ by individually applying a state estimator $\mc{A}$ to $\rho^{\otimes t}, \sigma^{\otimes t}$. We then use these to get an estimate $\hat{\Delta} = \hat{\rho} - \hat{\sigma}$. Clearly, $
    \mbb{E}[\hat{\Delta}] = \mbb{E}[\hat{\rho}] - \mbb{E}[\hat{\sigma}] = \rho - \sigma = \Delta.$
Moreover, we again do not require the state estimator to be unbiased; by subtracting the two estimates, the bias term gets cancelled out, and we always have $\mbb{E}[\hat{\Delta}] = c \cdot \Delta$, for some scalar $c$. Now, as in the previous section, we can use $n$ estimators $\{\hat{\Delta}_i\}_{i = 1}^n$ to estimate $\tr(\Delta^2)$:
\begin{equation}
    \Bar{X} = \frac{1}{\binom{n}{2}} \sum_{i,j \in [n], i < j} \tr(\hat{\Delta}_i \hat{\Delta}_j).
\end{equation}
Our construction of $\Bar{X}$ only used fixed, $t$-copy measurements, and we have $\mbb{E}[\Bar{X}] = \tr(\Delta^2)$. We can also state the variance of $\Bar{X}$ in terms of the first two moments of the underlying unbiased state estimators: 
\begin{equation}
    \mbb{V}[\bar{X}] = \bigo(n^{-1}) \cdot (\tr(\mbb{E}[\hat{\Delta}^{\otimes 2}] \cdot \Delta^{\otimes 2}) - \tr(\Delta^2)^2) + \bigo(n^{-2}) \cdot (\tr(\mbb{E}[\hat{\Delta}^{\otimes 2}]^2) - \tr(\Delta^2)^2);
    \label{eq:overview-variance-generic-closeness}
\end{equation}
see \Cref{lem:hs-estimation-variance-generic} for the general statement when $\mc{A}$ is biased. Note that in the above equation,
\begin{equation}
    \mbb{E}[\hat{\Delta}^{\otimes 2}] = \mbb{E}[(\hat{\rho} - \hat{\sigma})^{\otimes 2}] = \mbb{E}[\hat{\rho}^{\otimes 2}] + \mbb{E}[\hat{\sigma}^{\otimes 2}] - \rho \otimes \sigma - \sigma \otimes \rho.
\end{equation}
To demonstrate that biased estimators suffice, we instantiate the above estimator with the uniform POVM and recover the tight $\bigo(d^2/\eps^2)$ upper bound of \cite{liu2024role} for certification with fixed single-copy measurements in \Cref{sec:closeness-testing-uniform-povm}. Now, for our $t$-copy upper bound, we again use the \cite{pelecanos2025mixed} estimator. As we show in \Cref{sec:closeness-HS-upper-proof}, the variance of our randomized estimator can already be bounded using their truncated second-moment bound (\Cref{eq:overview-second-moment}), and we do not need to use our refined bound.

However, the variance obtained here does pose another problem: it includes a term that scales with $\tr(\rho \sigma)$, i.e., it depends on the unknown state. In case this inner product is $\approx \frac1d$, the corresponding term is small enough to be neglected. Indeed, this is the case when $\sigma = \mmstate$, yielding an alternate proof of our \Cref{thm:intro-mixedness-upper}. However, in the worst case, the inner product can be $\Omega(1)$, resulting in an $\bigo\bigl(\max\bigr\{\frac{d^2}{\sqrt{t}\eps^2}, \frac{d^{3/2}}{\eps^2}\bigr\}\bigr)$ upper bound for certification with $t$-copy measurements and preventing us from ever reaching the optimal $\bigo\bigl(\frac{d}{\eps^2}\bigr)$ rate for any $t$. Thus, we must ensure that we only apply this tester when $\tr(\rho\sigma)$ is close to $\frac1d$. 

\paragraph{Bucketing for instance-dependent state certification:} Note that if we could ensure that one of the states used in the above test is nearly maximally mixed, i.e., has operator norm at most $\frac{c}{d}$, for some constant $c > 0$, then we would have $\tr(\rho\sigma) \leq \frac{c}{d}$. We achieve this by using the bucketing-based reduction of \cite{chen2022toward,o2025instance} for instance-optimal state certification. In particular, they show that trace-norm state certification of any hypothesis state reduces to performing a series of Hilbert--Schmidt tests on low-dimensional states with small operator norm.

More concretely, given $\sigma$'s description, we can block-diagonalize it as $\sigma = \bigoplus_{j} \sigma_j$, where $j$ indexes over each block. Here, we choose the blocks such that for all $j$, $\lambda_{\min}(\sigma_j)$ and $\lambda_{\max}(\sigma_j)$ are within constant factors. Given such a decomposition, one can also write the associated diagonal blocks of the unknown state $\rho$ as $\{\rho_j\}_j$. Note that $\rho$ itself need not be block-diagonal in the same basis, but it is possible to test for such cases too. Now, as we summarize in \Cref{lem:bucketing-reduction}, it essentially suffices to perform tests of the form $\rho_j = \sigma_j$ or $\|\rho_j - \sigma_j\|_2 \geq \eps_j$, for some appropriate $\eps_j$. To perform such a test, we can project the states $\rho, \sigma$ into the $j$th block to get copies of states $\hat{\rho}_j, \hat{\sigma}_j$, and then apply our Hilbert--Schmidt testing algorithm to these copies.  Let $d_j$ be the dimension of the $j$th block. As we required $\sigma_j$'s eigenvalues to be within constant factors of each other, we will necessarily have $\|\hat{\sigma}_j\|_{\infty} \leq \frac{c}{d_j}$. Thus, we will have $\tr(\hat{\rho}_j \hat{\sigma}_j) \leq \frac{c}{d_j}$. Consequently, the undesirable `$\tr(\rho\sigma)$ term' can be neglected in the complexity of each Hilbert--Schmidt sub-test. Summing over all sub-tests, we get an instance-dependent upper bound for trace-norm certification in \Cref{thm:instance-dependent-tcopy-certification}, without any contribution from the undesirable terms, as desired.

The instance-dependent complexity we obtain from the above methods depends on Schatten quasinorms of the form $\|\sigma\|_p$, for $0 < p < 1$. It is not hard to see that all such quasinorms are maximized by the maximally mixed state. In such cases, we have $\|\mmstate\|_p = d^{\frac1p-1}$. Plugging this back into \Cref{thm:instance-dependent-tcopy-certification}, we get our upper bound of $\tilde{\bigo}\bigl(\max\bigl\{\frac{d^2}{\sqrt{t}\eps^2}, \frac{d}{\eps^2}\bigr\}\bigr)$, proving \Cref{thm:intro-state-certification-upper}. The reason we obtain polylog factors in our worst-case upper bound is our use of bucketing. When we group eigenvalues by orders of magnitude, we can end up with $m = \bigo(\log(d/\eps))$ buckets, and the precise upper bound we obtain includes $\mathrm{poly}(m)$ factors. To remove these polylog factors from our main upper bound, one would have to get around the undesirable term discussed previously without using bucketing.

\subsubsection{Lower bounds} \label{sec:lower-bound-overview}
We will now provide a brief overview of our lower bound proofs and refer to \Cref{sec:lower-bound-prelims} for a detailed technical background. Recall that our main lower bound, \Cref{thm:intro-lower-private}, is against testers that can perform $t$-copy measurements, each drawn using private randomness. Let us first recall the technical arguments of prior work for state certification in such settings when only \emph{single}-copy measurements are allowed.

\paragraph{Prior lower bounds for single-copy measurements:}

In the single-copy setting, one can assume that the tester sequentially applies POVMs $\mc{M}_1, \dots, \mc{M}_n$ to individual copies of $\rho$. In the general non-adaptive setting, we assume that all measurements are selected before the copies of the unknown state are received and that these measurements can be drawn using shared randomness. For instance, the tight non-adaptive algorithm of \cite{chen2022toward} samples a random basis and then measures each copy in this same basis, achieving an $\bigo(d^{3/2}/\eps^2)$ upper bound for state certification. In the more restricted setting of measurements drawn with private randomness, one loses out on the ability to draw measurements from a common distribution, and $\Omega(d^2/\eps^2)$ copies end up being necessary in this setting \cite{liu2024role}. 
Note that there is no such distinction between private and public randomness in the setting of \emph{fully entangled} measurements, as here, one only performs a single measurement on all copies at once.

To prove lower bounds for state certification, one typically considers the hardness of distinguishing between the maximally mixed state $\mmstate$ and a random state $\rho \sim \mc{D}$, where $\mc{D}$ is a mixture of states that are $\eps$-far from $\mmstate$. Clearly, if an algorithm could certify the maximally mixed state, it could also solve the above point-versus-mixture task. Then, to show the hardness of this latter task, one typically shows that the distributions over measurement outcomes in the two cases are statistically indistinguishable. In particular, let $p_\rho$ be the distribution over outcomes from applying the POVMs $\mc{M}_1, \dots, \mc{M}_n$ to individual copies of $\rho$. Then, one aims to show that $p_{\mmstate}$ and $\mbb{E}_{\rho \sim \mc{D}}[p_{\rho}]$ are close in statistical distance.

In the settings of non-adaptive and private measurements, it is not hard to see that the outcome distributions $p_\rho$ can be written as product distributions. In particular, we can write $p_\rho = p_\rho^{(1)} \otimes \dots \otimes p_\rho^{(n)}$, where $p_\rho^{(i)}$ is the distribution over outcomes when $\mc{M}_i$ is applied to $\rho$. Now, to show that a product distribution and a mixture of product distributions are statistically close, one can conveniently upper bound the $\chi^2$-divergence between the two as a function of $n$ using the Ingster--Suslina method (see \Cref{lem:ingster-suslina}). A lower bound for the copy complexity can then be obtained by noting that $n$ must be large enough for the $\chi^2$-divergence to exceed some absolute constant. Such arguments are already sufficient to prove the optimal $\Omega(d^{3/2}/\eps^2)$ non-adaptive lower bound \cite{bubeck2020entanglement}. 

However, in the setting of private randomness, the stronger $\Omega(d^2/\eps^2)$ lower bound of \cite{liu2024role} makes use of an additional observation: this setting is equivalent to one where the mixture $\mc{D}$ is chosen adversarially \emph{after} the measurements $\mc{M}_1, \dots, \mc{M}_n$ have been chosen. This adversarial freedom allows one to show much tighter bounds on the $\chi^2$-divergence, enabling stronger lower bounds in this setting.

Let us now recall the mixture of alternatives considered by \cite{liu2024role}, as this will also be the hard instance used in our work. Let $\ell \approx d^2$ be an integer. Given some orthonormal basis $V_1, \dots, V_{d^2}$ of $\mbb{C}^{d \times d}$, and uniformly random $\bfz \sim \{-1,+1\}^\ell$, the induced mixture of alternatives consists of states of the form
\begin{equation}
    \rho_{\bfz} = \mmstate + \Delta_{\bfz}, \quad \text{ where} \quad \Delta_{\bfz} \propto \frac{\eps}{\sqrt{d\ell}} \sum_{i = 1}^\ell V_i \bfz_i. \label{eq:overview-hard-instance}
\end{equation}

Then, using the Ingster--Suslina method, one can upper bound the $\chi^2$-divergence between outcome distributions as a function of the chosen POVMs and the perturbation basis $V_1, \dots, V_{d^2}$. In the non-adaptive case, one can only prove an upper bound on this quantity for a worst-case choice of $V_1, \dots, V_{d^2}$. However, in the case of private measurements, Liu and Acharya show an adversarial choice of this basis that allows for a much tighter bound on the $\chi^2$-divergence, yielding their $\Omega(d^2/\eps^2)$ lower bound.

\paragraph{Extending to $t$-copy measurements:}

In our setting of $t$-copy measurements, we instead imagine that the tester performs $n$ measurements $\mc{M}_1, \dots, \mc{M}_n$ on $t$ copies of $\rho$ at a time, and then aim to show that the outcome distributions are statistically indistinguishable for small $n$. When these measurements are chosen non-adaptively or with private randomness, the resulting outcome distributions can still be written as product distributions. Thus, we will also pass to the $\chi^2$-divergence and upper bound it using the Ingster--Suslina method. Using the \cite{liu2024role} mixture of alternatives, let us first write the likelihood-ratio deviation for some measurement outcome $x$ associated with a POVM operator $M_x$:
\begin{equation}
    \delta_x(\bfz) = \frac{\tr(M_x \rho_{\bfz}^{\otimes t})}{\tr(M_x \mmstate^{\otimes t})} - 1 = d^t \frac{\langle M_x, (\mmstate + \Delta_{\bfz})^{\otimes t} - \mmstate^{\otimes t} \rangle}{\tr(M_x)}.
\end{equation}
Given the form of $\Delta_{\bfz}$ in \Cref{eq:overview-hard-instance}, it is not hard to see that the likelihood-ratio deviation above is a symmetric degree-$t$ polynomial in the entries of the random string $\bfz$. Then, using the Ingster--Suslina method, one can show that upper-bounding the $\chi^2$-divergence boils down to bounding the moment generating function of a structured degree-$2t$ polynomial in Rademacher random variables $\bfz_1, \dots, \bfz_\ell, \bfz_1^\prime, \dots, \bfz_\ell^\prime$. Unfortunately, bounding the mgf of generic polynomials of Rademacher random variables appears intractable, and it is unclear how one can exploit the structure of our polynomials to simplify this task.

\paragraph{High-precision lower bounds via linearization:} To simplify the polynomials appearing above, we will \emph{linearize} the $t$-fold tensor product of our hard instance, i.e., write its first order expansion around the maximally mixed state. In particular, we will write
\begin{equation}
    \rho_{\bfz}^{\otimes t} = (\mmstate + \Delta_{\bfz})^{\otimes t} \approx \mmstate^{\otimes t} + \sum_{i \in [t]} \Delta_{\bfz}^{(i)} \otimes \mmstate^{\otimes [t] \setminus \{i\}}.
\end{equation}
The linear approximation used here is accurate in the high-precision regime that we focus on in this work. This idea has also appeared previously in the high-precision lower bound of \cite{chen2024optimalTradeoffsTomography} for tomography with $t$-copy measurements.

Now, when the above approximation is sufficiently accurate, the likelihood ratio deviation can be written as a \emph{linear} polynomial in the entries of the string $\bfz$. Consequently, upper-bounding the $\chi^2$-divergence reduces to bounding the mgf of a \emph{quadratic} polynomial in some Rademacher random variables. MGF bounds for such quadratic polynomials are well-understood, allowing us to upper bound the $\chi^2$-divergence as a function of the POVMs $\mc{M}_1, \dots, \mc{M}_n$ and the perturbation basis $V_1, \dots, V_{d^2}$. 

In the non-adaptive setting, we prove a worst-case upper bound on the above function, allowing us to recover the lower bound of \cite{chen2021hierarchy} in the high-precision regime (\Cref{thm:intro-lower-shared}). Moreover, in the private randomness setting, we can adversarially select the perturbation basis to minimize the $\chi^2$-divergence, giving us our main lower bound in \Cref{thm:intro-lower-private}. We emphasize that the adversarial perturbation basis in this setting is distinct from the one in \cite{liu2024role}, and we require some additional work to bound the quantities arising in our setting.

\subsection{Discussion} \label{sec:discussion}

We have shown that the copy complexity of state certification with fixed $t$-copy measurements is at most $\Tilde{\bigo}\bigl(\max\bigl\{\frac{d}{\eps^2}, \min\bigl\{\frac{d^2}{\sqrt{t}\eps^2}, \frac{d^2}{t\eps^4}\bigr\}\bigr\}\bigr)$. Our main upper bound shows that one can achieve a near-optimal rate for certification when $t = d^2$, which is strictly better than performing fully entangled measurements whenever $\eps = o(1/\sqrt{d})$. We have also shown that in the high-precision regime, $\Omega\bigl(\max\bigl\{\frac{d^2}{t\eps^2},\frac{d}{\eps^2}\bigr\}\bigr)$ copies are necessary for certification with fixed $t$-copy measurements. Despite these results, our work raises several important open questions and directions for future work:

\paragraph{Optimal tradeoffs for $t$-copy state certification?} The most obvious direction for future work is to prove matching upper and lower bounds for state certification with $t$-copy measurements. We believe that improving the upper bound (beyond log factors) would require entirely novel approaches, perhaps including a new algorithm for certification with single-copy measurements. 

\paragraph{Improved tradeoffs with shared randomness?} Our algorithm using fixed measurements requires $t \geq d$ to even reach the $\bigo(d^{3/2}/\eps^2)$ non-adaptive single-copy rate. Can one design a $t$-copy tester using shared randomness that starts at $\bigo(d^{3/2}/\eps^2)$ for $t = 1$? It is unclear how to exploit shared randomness in the $t$-copy setting; all generalizations of the single-copy tester we attempted led to undesirable rates.

\paragraph{Beyond linearization for $t$-copy lower bounds?} Our lower bound proof splits up the $\chi^2$-divergence into the contributions due to linear perturbations and higher-order terms separately. We are then able to show that the higher-order terms are uniformly bounded for small $\eps$, leading to lower bounds in the high-precision regime. Similarly, \cite{chen2024optimalTradeoffsTomography} also prove high-precision lower bounds for tomography via linearization. The $t$-copy shadow tomography lower bounds of \cite[Theorems 5 and 6]{chen2024optimalTradeoffsShadowTomography} have a tradeoff between contributions arising from linear and non-linear terms in the $\chi^2$-divergence; this also leads to a main term that only dominates the lower bound in the high-precision setting. Is this recurrence of high-precision lower bounds in the $t$-copy setting an artifact of the linearization-based approach, or is there an intrinsic property of this setting that prevents us from ever proving these bounds across all regimes of $\eps$? It seems that proving such lower bounds across all $\eps$ would require developing new lower bound techniques. The techniques of \cite{chen2021hierarchy} do not seem suited for this task, as their lower bounds do not hold for the full range of $t$. We also note that the symmetrized swap observable mentioned in \Cref{sec:upper-bound-mixedness-overview} cannot distinguish between $\mmstate^{\otimes t}$ and a linearized approximation of $\rho^{\otimes t}$. In other words, the optimal algorithm for mixedness testing no longer works when linear approximations hold well; this phenomenon may be indicative of the high-precision tradeoff being inherent to the $t$-copy setting.

\paragraph{From state certification to closeness testing?}
Prior worst-case upper bounds for certification in the fully entangled setting \cite{buadescu2019quantum} and the single-copy measurement settings \cite{chen2022toward,liu2024role} also hold for the problem of \emph{closeness testing}, i.e., when we are only given access to copies of $\sigma$ instead of its description. As our upper bound uses the bucketing-based reduction for instance-optimal state certification from \cite{chen2022toward,o2025instance}, and then specializes to the case of the maximally mixed state, our algorithm does not succeed at this harder problem. Concretely, the main roadblock is that bucketing itself requires the description of $\sigma$. If one could obtain the necessary information to carry out bucketing from copies of $\sigma$ alone, this could indeed be overcome, lifting our upper bounds to closeness testing. Going further, can one also show instance-optimal bounds for quantum closeness testing? We note that such bounds have already been shown for the classical problem of distribution closeness testing \cite{diakonikolas2016new}. 

\paragraph{Are random purification and state certification compatible?} 
Pelecanos, Tang, Spilecki, and Wright \cite{pelecanos2025mixed} use the random purification channel to map their estimator's copies $\rho^{\otimes n}$ to copies of a uniformly random purification $\ket{\bfrho}$, i.e., to $\ket{\bfrho}\bra{\bfrho}^{\otimes n}$. This allows them to map mixed-state tomography to pure-state tomography, which ends up being a (conceptually) simpler problem. Similarly, can one map mixed-state certification to a simpler problem for pure states? It is unclear if this approach can help, as under random purification, even the purification register of the hypothesis state becomes unknown.

More simply, can the above approach even work for mixedness testing? Under the random purification channel, one can transform mixedness testing of $d$-dimensional states to a testing problem of the entanglement spectrum of bipartite $d^2$-dimensional pure states. However, it is unclear whether there are any non-trivial intuitive tests for this property of the entanglement spectrum. Nevertheless, we cannot rule out the possibility of an alternate approach via random purification leading to new algorithms for state certification. A useful starting point in this direction might be to recast the symmetrized swap observable of \cite{buadescu2019quantum} in the framework of the random purification channel.

\subsection{Related work} \label{sec:related-work}

\paragraph{Quantum state certification:}
Unless specified otherwise, all copy complexities stated here are for certification with respect to the trace norm. As mentioned previously, the worst-case complexity of quantum state certification with fully entangled measurements is $\Theta(d/\eps^2)$ \cite{o2015quantum,buadescu2019quantum}. With single-copy measurements, this complexity was shown to be $\Theta(d^{3/2}/\eps^2)$\cite{bubeck2020entanglement,chen2022toward,chen2022tightStateCertification}. Further, for certification with single-copy measurements, \cite{chen2022toward,chen2022tightStateCertification} showed \emph{instance-optimal} bounds, i.e., bounds with an optimal dependence on the state $\sigma$ to be certified. Their bounds essentially show that the complexity of certification depends on the fidelity between $\mmstate$ and the hypothesis state $\sigma$. Such instance-optimal bounds were also shown recently in the setting of fully entangled measurements \cite{o2025instance}.

Returning to the single-copy setting, when measurements are restricted to be drawn using private randomness, the single-copy complexity of certification is $\Theta(d^2/\eps^2)$ \cite{liu2024role}. A recent single-copy tomography algorithm \cite{grewal2026pauli} develops a new Hilbert--Schmidt distance estimation protocol as a subroutine. This algorithm uses $\bigo(d/\eps^2)$ copies to estimate the Hilbert--Schmidt distance between two states to accuracy $\bigo(\eps)$, implying an $\bigo(d^2/\eps^2)$-copy algorithm for trace-norm state certification. However, their algorithm also employs \emph{shared randomness} across measurements. While this is not an optimal algorithm for general non-adaptive measurements, the algorithm only performs Pauli measurements, and it could be the case that this is the optimal complexity of state certification with such restricted measurements.

State certification has also been studied in various other resource-constrained settings. The most relevant of these to our work are the lower bounds of Chen, Cotler, Huang, and Li~\cite{chen2021hierarchy} for mixedness testing with $t$-copy measurements. In particular, \cite{chen2021hierarchy} showed that for $t \leq \bigo(\min\{d^{2/3},\sqrt{d}/\eps, 1/\eps^2\})$, $\Omega\left(\frac{d^{4/3}}{t\eps^2}\right)$ copies are necessary using adaptively chosen $t$-copy measurements, and $\Omega\left(\frac{d^{3/2}}{t\eps^2}\right)$ copies are necessary with non-adaptive $t$-copy measurements\footnote{Looking at the proofs of \cite[Theorems 3 and 15]{chen2021hierarchy}, they proved that for $t \leq \bigo(\min\{d^{2/3},\sqrt{d}/\eps\})$, the copy complexities are at least $\Omega\left(\frac{d^{4/3}}{t\eps^2(1+\eps^2)^{(t-1)}}\right)$ and $\Omega\left(\frac{d^{3/2}}{t\eps^2(1+\eps^2)^{(t-1)}}\right)$ respectively. This yields the bounds we have stated here. However, these results were inaccurately reported in their work, with the $t$ dependence omitted in the denominator. }. Other restrictions include single-copy measurements with few outcomes \cite{liu2024quantum}, adversarial access \cite{aliakbarpour2025adversarially}, limited communication in a distributed setting \cite{doosti2026distributed}, and non-iid copies \cite{de2025non}. The complexity of certifying \emph{pure} states using only \emph{single-qubit} \cite{huang2024certifying,gupta2025few} and \emph{few-qubit} measurements \cite{coladangelo2026power} has also been studied recently.

\paragraph{Quantum state tomography:}
For the problem of tomography with respect to the trace norm, the optimal copy complexity is $\bigo(d^2/\eps^2)$ when one can perform fully entangled measurements \cite{o2016efficient,haah2016sample}. With unentangled measurements, the complexity of this problem becomes $\Theta(d^3/\eps^2)$ \cite{kueng2017low,haah2016sample,chen2023does}. We note that the optimal rate for tomography with single-copy measurements can be achieved by repeatedly applying the same measurement, implying that, unlike the case of certification, there is no separation between non-adaptive and deterministic measurements for tomography. For $t$-copy measurements, the first upper and lower bounds were shown by Chen, Li, and Liu~\cite{chen2024optimalTradeoffsTomography}. Recently, Pelecanos, Tang, Spilecki, and Wright \cite{pelecanos2025debiased,pelecanos2025mixed} developed unbiased estimators for quantum states that achieve the optimal rate in the fully entangled setting. Further, by applying their estimator to batches of $t$ copies, they improve on the upper bounds of Chen, Li, and Liu for $t$-copy tomography, showing that $\bigo\left(\max\bigl\{\frac{d^3}{\sqrt{t}\eps^2}, \frac{d^2}{\eps^2}\bigr\}\right)$ copies are sufficient. As noted previously, these algorithms achieve the optimal rate for tomography at $t = d^2$. We also note that Chen, Li, and Liu \cite{chen2024optimalTradeoffsTomography} showed lower bounds matching this complexity in the low-entanglement regime, i.e., for $t \leq 1/\eps^{0.1}$. The complexity of tomography with other distance measures is also well-understood, e.g., fidelity \cite{haah2016sample, o2017efficient, chen2023does} and quantum $\chi^2$-divergence \cite{flammia2024quantum}. Just as with state certification, tomography under other restrictions has also been studied recently, including Pauli measurements for mixed state tomography \cite{acharya2025pauliSingleQubit,acharya2025pauliSingleCopy} and pure state tomography \cite{grewal2026pauli}, as well as in adversarial settings \cite{aliakbarpour2025adversarially}.

\paragraph{Random purification:}
As discussed previously, the estimator of \cite{pelecanos2025mixed} uses a recently introduced \emph{random purification channel} \cite{tang2025conjugate}. This channel allows one to map $\rho^{\otimes n}$ to $\mbb{E}_{\ket{\bfrho}}\ket{\bfrho}\bra{\bfrho}^{\otimes n}$, where the expectation is over all random purifications $\ket{\bfrho}$ of $\rho$. A simpler implementation of this channel was shown in \cite{girardi2025randomPurificationSimplified}. There has been a flurry of recent work exploring such operations, such as for fermionic and bosonic states \cite{mele2025randomPurificationGaussianBosons,walter2025randomPurificationGeneric} and quantum channels \cite{yoshida2025randomDilation,girardi2025randomStinespring}, as well as applications of such operations for other problems, like channel tomography \cite{mele2025optimal,girardi2025randomStinespring}.

\paragraph{Other quantum inference problems:}
There is a rich literature studying the effects of restricted resources on the complexity of various quantum learning and testing problems, including for shadow tomography \cite{aaronson2018shadow,huang2021information,chen2022exponential,chen2024optimalTradeoffsShadowTomography,king2025triply,chen2024optimalHighPrecisionShadow}, spectrum estimation \cite{o2015quantum,pelecanos2025beating}, inner product estimation \cite{anshu2022distributed,gong2024sample,arunachalam2025generalized}, and moment estimation~\cite{acharya2020estimating,pelecanos2025beating}. In particular, we note that Chen, Gong, and Ye \cite[Theorem 5]{chen2024optimalTradeoffsShadowTomography} show lower bounds for shadow tomography with $t$-copy measurements which have a similar inverse-linear in $t$ scaling to our lower bound in the high-precision regime. Additionally, they also consider the effect of some qubits of persistent memory across measurements. Recent works \cite{ye2025exponential,noller2025infinite} have also shown exponential separations between algorithms restricted to $t$-copy and $t+1$-copy measurements. We refer to the excellent surveys \cite{montanaro2016survey,anshu2024survey} for a more thorough overview of quantum state learning and testing.

\subsection{Organization}

We start by presenting technical preliminaries and notation in \Cref{sec:prelim}. We provide a technical exposition of relevant state estimators in \Cref{sec:state-estimation}. We prove our upper bounds for state certification in \Cref{sec:main-upper-bound}, and also obtain the mixedness testing upper bound (\Cref{thm:intro-mixedness-upper}) along the way. We provide an alternate proof of \Cref{thm:intro-mixedness-upper} and also prove our purity estimation upper bound in \Cref{sec:mixedness-and-purity-upper}.  Finally, we prove our lower bounds in \Cref{sec:lower}. Additionally, we describe and prove the folklore upper bound in \Cref{sec:batching-BOW-tester}.

\subsection*{Acknowledgments}
CW was supported by the UK EPSRC through the Quantum Advantage Pathfinder project with grant reference EP/X026167/1 and thanks Angelos Pelecanos and Jack Spilecki for insightful discussions on state tomography. SC was supported by NSF CCF-2430375 and acknowledges helpful past discussions with Jerry Li, Ryan O'Donnell, and John Wright on mixedness testing with intermediate entanglement.

\section{Preliminaries}
\label{sec:prelim}

A $d$-dimensional quantum state is represented by a density operator in $\mbb{C}^{d \times d}$, i.e., a positive semidefinite operator with unit trace. The maximally mixed state of dimension $d$ is given by $\mmstate = I/d$, where $I \in \mathbb{C}^{d \times d}$ is the identity operator. A pure state is a quantum state with rank $1$, and can instead be represented by its sole eigenvector with eigenvalue $1$. For a bipartite quantum state $\rho$ over registers $\mathsf{A},\mathsf{B}$, we will denote its reduced state on each register by $(\rho)_{\mathsf{A}}, (\rho)_{\mathsf{B}}$ respectively, and also extend this notation to more general multipartite systems. In such multipartite systems, suppose we have $k$ registers numbered $1, \dots, k$. For an operator $M \in \mbb{C}^{d \times d}$, we will denote $M_i$ to be the operator $M$ on the $i$th register. More generally, for a subset of registers $S \subseteq \{1, \dots, k\}$, we denote $M^{\otimes S} = \bigotimes_{i \in S} M_i$. Further, for any operator $N \in (\mbb{C}^{d \times d})^{\otimes k},$ we define $N_S$ to be $N$ across the registers in $S$.
Finally, for $1 \leq t \leq k$, and any operators $M,N \in \mbb{C}^{d \times d}$, we define the symmetric sum:
\begin{equation}
    \symsum M^{\otimes t} \otimes N^{\otimes k - t} = \sum_{S \subseteq [k], |S| = t} M^{\otimes S} \otimes N^{\otimes [k] \setminus S}.
\end{equation}

One can act on quantum states by performing positive-operator valued measurements (POVMs). A POVM consists of a set of measurement operators $\{M_x\}_x$, where each operator belongs to $\mbb{C}^{d \times d}$ and is positive semidefinite, and $\sum_x M_x = I$. Applying this POVM to a state $\rho \in \mbb{C}^{d \times d}$ yields outcome $x$ with probability $\tr(M_x \rho)$. 
More generally, we may consider applying POVMs to multiple copies of a state at once. Then for, say, $t$ copies, the measurement operators belong to $\mbb{C}^{d^t \times d^t}$, and are applied to the joint state $\rho^{\otimes t}$. One can also map quantum states to other quantum states by applying quantum channels, which are completely positive and trace-preserving (CPTP) linear maps.

It will also be helpful to represent quantum states and channels via vectorization. More generally, for any operator $M \in \mbb{C}^{d_A \times d_B}$ with $M = \sum_{i,j} M_{i,j} \ket{i}\bra{j}$, we define its vectorization, $\mathrm{vec}(M) \triangleq \sum_{i,j} M_{i,j} \ket{i} \otimes \ket{j}$. The action of quantum channels in the vectorized picture can be understood using the Liouville matrix representation:

\begin{definition}
    Let $\Phi : \mbb{C}^{d_A \times d_A} \mapsto \mbb{C}^{d_B \times d_B}$ be a quantum channel. Then, the associated \emph{Liouville matrix} $S_\Phi \in \mbb{C}^{d_B^2 \times d_A^2}$ is the unique matrix satisfying 
    \begin{equation}
        \mathrm{vec}(\Phi(X)) = S_\Phi \mathrm{vec}(X),
    \end{equation}
    for all $X \in \mbb{C}^{d_A \times d_A}$.
\end{definition}

Let us now talk about the kind of problems we will consider in this work. We will primarily be interested in three testing problems for quantum states. Let us first define the problem of quantum state certification.
\begin{definition}[Quantum State Certification]
\label{def:state-certification}
    Let $\rho,\sigma$ be two $d$-dimensional states, where one receives the complete description of $\sigma$ and multiple copies of the unknown state $\rho$. The task of \emph{$\eps$-trace norm certification} of $\sigma$ refers to the problem of determining whether $\rho = \sigma$ or $\|\rho - \sigma\|_1 \geq \eps$ with probability at least $.99$. 
\end{definition}
We also similarly define \emph{$\eps$-Hilbert--Schmidt certification}, where we instead test whether $\rho = \sigma$ or $\|\rho - \sigma\|_2 \geq \eps$. A special case of state certification with key relevance is \emph{mixedness testing}, where $\sigma = \mmstate$. Lastly, we will also extend the above definition to the more general problem of \emph{closeness testing}; here, the tester is given copies of two unknown states $\rho,\sigma$, and tasked with determining whether they are equal or far. While we don't focus on closeness testing in this work, some of our protocols for state certification are based on solving this more challenging task.

As mentioned above, we will consider testing quantum states with respect to the Hilbert--Schmidt norm (i.e., the Schatten $2$-norm), and the trace norm (i.e., the Schatten $1$-norm). In general, for any $p > 0$, we will use $\|\cdot\|_p$ to denote the Schatten $p$-norm. For $p < 1$, these are actually quasinorms, but will still be relevant in our analysis. We will primarily aim to prove results for testing with respect to the trace norm. However, it is often easier to prove upper bounds for the Hilbert--Schmidt norm and then extend them to the trace norm. The following elementary fact explains how such a conversion can be carried out in general:

\begin{fact}
\label{fact:hs-to-trace-norm-testing}
    Let $d \in \mbb{N}, d > 1, 0 < \eps < 1$. Let $\rho,\sigma$ be $d$-dimensional quantum states. Suppose that the copy complexity of testing whether $\rho = \sigma$ or $\|\rho - \sigma\|_2 \geq \eps$ is at most $f(d,\eps)$. Then, the complexity of testing whether $\rho = \sigma$ or $\|\rho - \sigma\|_1 \geq \eps$ is at most $f\bigl(d,\frac{\eps}{\sqrt{d}}\bigr)$.
\end{fact}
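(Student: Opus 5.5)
The plan is to reduce directly: run the Hilbert--Schmidt tester at precision $\eps/\sqrt{d}$, and check that its completeness and soundness guarantees carry over to the trace-norm problem. The only ingredient is the standard comparison between Schatten norms, $\|X\|_1 \leq \sqrt{\rank(X)}\,\|X\|_2 \leq \sqrt{d}\,\|X\|_2$, valid for any $X \in \mbb{C}^{d\times d}$. It follows from Cauchy--Schwarz applied to the vector of singular values of $X$, which has at most $d$ nonzero entries: $\sum_i s_i \leq \sqrt{d}\,(\sum_i s_i^2)^{1/2}$.

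Concretely, I would define the trace-norm tester to be the given Hilbert--Schmidt tester $\mc{T}$ invoked with distance parameter $\eps' = \eps/\sqrt{d}$, outputting whatever $\mc{T}$ outputs. By hypothesis, $\mc{T}$ uses at most $f(d,\eps')$ copies. With probability at least $.99$, it accepts when $\rho = \sigma$ and rejects when $\|\rho-\sigma\|_2 \geq \eps'$.

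For the two cases of the trace-norm problem, I would argue as follows.

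\textbf{Case $\rho = \sigma$.} This hypothesis is literally the same in both problems, so the tester accepts with probability at least $.99$.

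\textbf{Case $\|\rho-\sigma\|_1 \geq \eps$.} Taking $X = \rho - \sigma$ in the norm comparison gives $\|\rho-\sigma\|_2 \geq \|\rho-\sigma\|_1/\sqrt{d} \geq \eps/\sqrt{d} = \eps'$. So the instance is a valid far instance for $\mc{T}$, and the tester rejects with probability at least $.99$.

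The same argument works unchanged whether $\sigma$ is given by description or by copies. It also preserves the measurement model, since we simply run $\mc{T}$: fixed $t$-copy measurements remain fixed $t$-copy measurements. There is no real obstacle here. The only point to note is that the hypothesis $0<\eps<1$ ensures $\eps' \in (0,1)$, so $\mc{T}$ is invoked within the parameter range on which $f$ is assumed to apply.
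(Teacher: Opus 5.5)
Your proposal is correct and matches what the paper intends. The paper gives no written proof of this fact and simply appeals to the standard norm inequality $\|X\|_1 \leq \sqrt{d}\,\|X\|_2$, which you justify via Cauchy--Schwarz on the singular values; you then use it, as the paper does, to show that every trace-norm far instance is a Hilbert--Schmidt far instance at precision $\eps/\sqrt{d}$, while the equal case carries over unchanged.
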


\subsection{Symmetric subspace and random states}

In this section, we recall some standard results about the symmetric subspace, which will help us compute moments of random states in \Cref{sec:state-estimation}. Let the symmetric group over $n$ elements be denoted by $\mc{S}_n$. For any permutation $\pi \in \mc{S}_n$, its representation on $(\mbb{C}^{d})^{\otimes n}$ is denoted by the unitary $V_d(\pi)$ satisfying 
\begin{equation}
    V_d(\pi) \ket{i_1, \dots, i_n} = \ket{i_{\pi^{-1}(1)}, \dots, i_{\pi^{-1}(n)}}.    
\end{equation}
When clear from context, we may use $\pi$ to denote $V_d(\pi)$. We will use $e \in \mc{S}_n$ to denote the identity permutation, and $\swap_{i,j} \in \mc{S}_n$ to denote the permutation that swaps the $i$th and $j$th elements, where $i,j \in [n]$ and $i \neq j$. Now, we will define the symmetric subspace and discuss some of its useful properties. We refer to \cite{harrow2013church} for more details.

\begin{definition}[Symmetric subspace and symmetric projector]
    The symmetric subspace of $(\mbb{C}^{d})^{\otimes n}$, denoted by $\vee^n \mbb{C}^d$, is defined as the set of $\mc{S}_n$-invariant pure states. 
    \begin{equation}
        \vee^n\mbb{C}^d = \{\ket{\psi} \in (\mbb{C}^{d})^{\otimes n} : V_d(\pi) \ket{\psi} = \ket{\psi} \forall \pi \in \mc{S}_n\}.
    \end{equation}
   The dimension of $\vee^n\mbb{C}^{d}$ is $d[n] = \binom{d+n-1}{n}$. The symmetric projector, denoted by $\Pi_{\mrm{sym}}^n$, is the orthogonal projector onto $\vee^n\mbb{C}^d$, and satisfies
    \begin{equation}
        \Pi_{\mrm{sym}}^n = \frac{1}{n!} \sum_{\pi \in \mc{S}_n} V_d(\pi).
    \end{equation}
\end{definition}

Next, we state a standard lemma relating moments of random pure states to the symmetric projector.

\begin{lemma}[$k$-th moment of Haar-random pure states]
\label{lem:kth-moment-pure-states}
    Let $k$ be a positive integer. Then,
    \begin{equation}
        \mbb{E}_{\ket{\psi}} [\ket{\psi}\bra{\psi}^{\otimes k}] = \frac{1}{d[k]} \Pi_{\mrm{sym}}^k,
    \end{equation}
    where the expectation is taken over uniformly random pure states in $\mbb{C}^d$.
\end{lemma}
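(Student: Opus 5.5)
The plan is to show that the operator $M \triangleq \mbb{E}_{\ket{\psi}}[\ket{\psi}\bra{\psi}^{\otimes k}]$ is supported on the symmetric subspace and commutes with every $U^{\otimes k}$. Schur's lemma will then force it to be a scalar multiple of $\Pi_{\mrm{sym}}^k$, and the scalar is fixed by taking traces.

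First, the support. For any $\ket{\psi}$ and any $\pi \in \mc{S}_n$ (with $n = k$), we have $V_d(\pi)\ket{\psi}^{\otimes k} = \ket{\psi}^{\otimes k}$, so $\ket{\psi}^{\otimes k} \in \vee^k\mbb{C}^d$. Hence $\Pi_{\mrm{sym}}^k \ket{\psi}\bra{\psi}^{\otimes k} \Pi_{\mrm{sym}}^k = \ket{\psi}\bra{\psi}^{\otimes k}$, and by linearity of expectation $M = \Pi_{\mrm{sym}}^k M \Pi_{\mrm{sym}}^k$. So $M$ is an operator on $\vee^k\mbb{C}^d$.

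Second, the invariance. The uniform (Haar) measure on pure states is invariant under $\ket{\psi} \mapsto U\ket{\psi}$ for every unitary $U$. Therefore $U^{\otimes k} M (U^{\dagger})^{\otimes k} = M$, i.e.\ $M$ commutes with the action $U \mapsto U^{\otimes k}|_{\vee^k\mbb{C}^d}$.

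The key step, and the only nontrivial input, is that this representation of $U(d)$ on $\vee^k\mbb{C}^d$ is irreducible. Given that, Schur's lemma gives $M = c\,\Pi_{\mrm{sym}}^k$. I would justify irreducibility by a standard route: the operators $U^{\otimes k}$ span the algebra generated by $\{A^{\otimes k} : A \in \mbb{C}^{d\times d}\}$, since $U(d)$ is Zariski-dense in $GL_d$ and the span of $A^{\otimes k}$ is closed under polarization. That algebra, restricted to $\vee^k\mbb{C}^d$, is the full operator algebra there, because the span of $\{\ket{v}^{\otimes k}\}$ is all of $\vee^k\mbb{C}^d$ and $A^{\otimes k}$ maps $\ket{v}^{\otimes k}$ to $\ket{Av}^{\otimes k}$. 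Alternatively, I can simply cite \cite{harrow2013church}, which the paper already references for these facts.

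Finally, taking traces gives $1 = \tr(M) = c \cdot \tr(\Pi_{\mrm{sym}}^k) = c \cdot d[k]$, so $c = 1/d[k]$, which yields the claimed identity.
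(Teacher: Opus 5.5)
Your argument is correct, and it is the canonical proof of this fact. The paper gives no proof of its own: it states the lemma as standard and points to \cite{harrow2013church}, and your route (invariance under $U^{\otimes k}$, support on $\vee^k\mbb{C}^d$, Schur's lemma, trace normalization) is the standard argument for it. The one step worth tightening is the end of your irreducibility argument, which is cleanest with rank-one $A=\ket{w}\bra{v}$: then $A^{\otimes k}=\ket{w}^{\otimes k}\bra{v}^{\otimes k}$, and since the vectors $\ket{v}^{\otimes k}$ span $\vee^k\mbb{C}^d$, these operators span all linear maps on that subspace.
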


Alternatively, it will also be convenient to use the following inductive form of the moments, from \cite{pelecanos2025mixed}:
\begin{lemma}[{\cite[Lemma 5.4]{pelecanos2025mixed}}]
    \label{lem:kth-moment-alternate}
    Let $n,k$ be positive integers. Then,
    \begin{equation}
        \mbb{E}[\ket{\psi}\bra{\psi}^{\otimes n + k}] = \frac{1}{d[n] (d+n)^{\uparrow k}} (e + X_{n+k}) \dots (e + X_{n+1}) \cdot (\Pi_{\mrm{sym}}^n \otimes I^{\otimes k}),
    \end{equation}
    where $a^{\uparrow b} = a(a+1)\dots(a+b-1)$ is the rising factorial and for $i \in [n+k]$, $X_i \triangleq \sum_{j = 1}^{i-1} \swap_{j,i}$ are the Jucys-Murphys elements of the symmetric group algebra.
\end{lemma}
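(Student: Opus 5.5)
The plan is to combine \Cref{lem:kth-moment-pure-states} with a recursive factorization of the symmetric projector, proved from a coset decomposition of $\mc{S}_{m+1}$.

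\textbf{Step 1 (one-step recursion).} Embed $\mc{S}_m \subseteq \mc{S}_{m+1}$ as the permutations fixing $m+1$. Every $\pi \in \mc{S}_{m+1}$ can be written uniquely as $\pi = \tau\sigma$, where $\sigma \in \mc{S}_m$ and $\tau \in \{e\} \cup \{\swap_{j,m+1} : j \in [m]\}$. Since $\sigma$ fixes $m+1$, we have $\pi(m+1) = \tau(m+1)$, so $\tau$ is forced to be the transposition $(\pi(m+1),\, m+1)$, or $e$ when $\pi(m+1)=m+1$; then $\sigma = \tau^{-1}\pi$ fixes $m+1$.

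Because $V_d$ is a representation, this decomposition gives
\begin{equation}
\sum_{\pi \in \mc{S}_{m+1}} V_d(\pi) = (e + X_{m+1}) \Bigl(\sum_{\sigma \in \mc{S}_m} V_d(\sigma)\Bigr) \otimes I .
\end{equation}
Dividing by $(m+1)!$ yields
\begin{equation}
\Pi^{m+1}_{\mrm{sym}} = \frac{1}{m+1}\,(e + X_{m+1})\,(\Pi^m_{\mrm{sym}} \otimes I).
\end{equation}

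\textbf{Step 2 (iterate).} Apply Step 1 with $m = n+k-1$, and then substitute the recursion for $\Pi^{n+k-1}_{\mrm{sym}}$ inside the tensor. Each inner factor $(e+X_{i})$ for $i \le n+k-1$ acts trivially on the last registers, so tensoring with identity commutes with it. By induction on $k$ this gives
\begin{equation}
\Pi^{n+k}_{\mrm{sym}} = \frac{1}{(n+1)^{\uparrow k}}\,(e + X_{n+k})\cdots(e+X_{n+1})\,(\Pi^n_{\mrm{sym}} \otimes I^{\otimes k}),
\end{equation}
with the factors appearing in exactly the order stated in the lemma.

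\textbf{Step 3 (normalization).} By \Cref{lem:kth-moment-pure-states}, $\mbb{E}[\ket{\psi}\bra{\psi}^{\otimes n+k}] = \Pi^{n+k}_{\mrm{sym}}/d[n+k]$. It remains to check that $d[n+k]\,(n+1)^{\uparrow k} = d[n]\,(d+n)^{\uparrow k}$. This follows from the ratio
\begin{equation}
\frac{\binom{d+n+k-1}{n+k}}{\binom{d+n-1}{n}} = \frac{(d+n)(d+n+1)\cdots(d+n+k-1)}{(n+1)(n+2)\cdots(n+k)} = \frac{(d+n)^{\uparrow k}}{(n+1)^{\uparrow k}} .
\end{equation}

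The only step requiring care is Step 1: we need uniqueness of the coset decomposition, and we need the correct side of multiplication so that the transposition factor sits to the left of $\Pi^m_{\mrm{sym}} \otimes I$. The rest is bookkeeping.
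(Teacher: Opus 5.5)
Your proof is correct: the left-coset decomposition $\mathcal{S}_{m+1} = \bigsqcup_{\tau}\tau\,\mathcal{S}_m$ gives $\Pi^{m+1}_{\mrm{sym}} = \frac{1}{m+1}(e+X_{m+1})(\Pi^m_{\mrm{sym}}\otimes I)$. Iterating this produces the factor $1/(n+1)^{\uparrow k}$ with the Jucys--Murphy factors in the stated order, and the identity $d[n+k]\,(n+1)^{\uparrow k} = d[n]\,(d+n)^{\uparrow k}$ together with \Cref{lem:kth-moment-pure-states} then gives exactly the claimed normalization.

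The paper does not prove this lemma itself; it imports it from \cite{pelecanos2025mixed}. Your argument is a complete, self-contained derivation of it. Incidentally, your side-of-multiplication concern is moot: $e+X_{m+1}$ commutes with every $V_d(\sigma)$ for $\sigma\in\mathcal{S}_m$, so using right cosets would give the same result.
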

\section{Estimators for quantum states}
\label{sec:state-estimation}

Most of our testing algorithms use state estimation algorithms as key building blocks. To analyze our testers, we will need closed-form expressions for the first two moments of such state estimators. In \Cref{sec:uniform-povm-moments}, we present the first two moments of the uniform POVM, which will be used in our unentangled testing upper bounds. Next, in \Cref{sec:hayashi-povm-moments}, we present the first two moments of Hayashi's POVM \cite{hayashi1998asymptotic} and its debiased version from \cite{grier2024sample}. Finally, in \Cref{sec:unbiased-state-estimation-moments}, we state the second moment of a recent unbiased estimator for mixed states 
from \cite{pelecanos2025mixed}. For our mixedness testing upper bound (\Cref{thm:intro-mixedness-upper}), we need a refined version of the second moment bound from \cite{pelecanos2025mixed} conditioned on the outcome of an intermediate step in their algorithm, which we show in \Cref{lem:quasi-purification-conditioned-second-moment}. 
\subsection{Uniform POVM}
\label{sec:uniform-povm-moments}
First, let us define the uniform POVM.

\begin{definition}
    The uniform POVM over states in $\mbb{C}^{d \times d}$ is the continuous POVM with measurement operators $\{d \ket{\psi}\bra{\psi} \mrm{d}\psi\}$, where $\mrm{d}\psi$ is the uniform measure over pure states in $\mbb{C}^d$.
\end{definition}

We start by stating the first two moments of the uniform POVM, omitting the proofs as these are standard.

\begin{lemma}
\label{lem:uniform-povm-first-moment}
    Let $\psi$ be the outcome obtained when applying the uniform POVM to some state $\rho \in \mbb{C}^{d \times d}$. Then,
    \begin{equation}
        \mbb{E}[\ket{\psi}\bra{\psi}] = \frac{\rho + I}{d+1}\,.
    \end{equation}
\end{lemma}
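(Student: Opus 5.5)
The plan is to write the expectation directly as an integral against the POVM density and reduce it to a second moment of Haar-random pure states, which \Cref{lem:kth-moment-pure-states} supplies. The uniform POVM has operators $d\ket{\psi}\bra{\psi}\,\mrm{d}\psi$, so outcome $\psi$ occurs with density $d\bra{\psi}\rho\ket{\psi}$ relative to the uniform measure. Hence
\begin{equation}
    \mbb{E}[\ket{\psi}\bra{\psi}] = d\,\mbb{E}_{\ket{\psi}}\bigl[\bra{\psi}\rho\ket{\psi}\,\ket{\psi}\bra{\psi}\bigr] = d\,\tr_1\Bigl[(\rho \otimes I)\,\mbb{E}_{\ket{\psi}}[\ket{\psi}\bra{\psi}^{\otimes 2}]\Bigr],
\end{equation}
where the expectation on the right is over Haar-random pure states and $\tr_1$ denotes the partial trace over the first register.

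Next, I apply \Cref{lem:kth-moment-pure-states} with $k=2$. Since $d[2]=\binom{d+1}{2}=d(d+1)/2$ and $\Pi^2_{\mrm{sym}}=(I+\swap)/2$, this gives
\begin{equation}
    \mbb{E}[\ket{\psi}\bra{\psi}^{\otimes 2}] = \frac{I+\swap}{d(d+1)}.
\end{equation}
Substituting, the target becomes $\frac{1}{d+1}\tr_1[(\rho\otimes I)(I+\swap)]$.

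The last step is the two partial-trace identities
\begin{equation}
    \tr_1[(\rho\otimes I)I]=\tr(\rho)I = I, \qquad \tr_1[(\rho\otimes I)\swap]=\rho.
\end{equation}
The second identity can be checked on basis elements $\rho=\ket{a}\bra{b}$. Together with $\tr(\rho)=1$, these yield $\frac{\rho+I}{d+1}$.

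There is no real obstacle here. The only point needing care is the swap partial-trace identity, specifically that tracing out the first register rather than the second still returns $\rho$ and not its transpose. A direct computation in the computational basis settles this.
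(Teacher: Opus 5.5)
Your proof is correct, including the check that $\tr_1[(\rho\otimes I)\swap]=\rho$ rather than $\rho^{\top}$. The paper omits this proof as standard, but your argument is exactly the $n=1$ case of its proof of \Cref{lem:hayashi-moments}: the uniform POVM is Hayashi's POVM with $n=1$, since $d[1]=d$. In both, you write the outcome density, reduce to the second Haar moment $\frac{I+\swap}{d(d+1)}$, and take the partial trace.
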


\begin{lemma}
\label{lem:uniform-povm-second-moment}
    Let $\psi$ be the outcome obtained when applying the uniform POVM to some state $\rho \in \mbb{C}^{d \times d}$. Then,
    \begin{equation}
        \mbb{E}[\ket{\psi}\bra{\psi}^{\otimes 2}] = \frac{1}{(d+1)(d+2)}\left(I^{\otimes 2} + \swap + (\rho \otimes I + I \otimes \rho) + (\rho \otimes I + I \otimes \rho)\cdot \swap\right).
    \end{equation}
\end{lemma}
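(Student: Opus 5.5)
The plan is to compute the second moment directly from the definition of the uniform POVM. Applying it to $\rho$, the outcome $\psi$ has density $d\,\tr(\rho\ket{\psi}\bra{\psi})\,\mrm{d}\psi$. Hence
\begin{equation}
    \mbb{E}[\ket{\psi}\bra{\psi}^{\otimes 2}] = d\,\mbb{E}_{\ket{\psi}}\bigl[\tr(\rho\ket{\psi}\bra{\psi})\,\ket{\psi}\bra{\psi}^{\otimes 2}\bigr] = d\,\tr_{3}\bigl[(I\otimes I\otimes \rho)\,\mbb{E}_{\ket{\psi}}[\ket{\psi}\bra{\psi}^{\otimes 3}]\bigr],
\end{equation}
where the inner expectation is over Haar-random pure states and $\tr_3$ is the partial trace over the third register. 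By \Cref{lem:kth-moment-pure-states} with $k=3$, this inner moment equals $\frac{1}{d[3]}\Pi_{\mrm{sym}}^3 = \frac{1}{d(d+1)(d+2)}\sum_{\pi\in\mc{S}_3}V_d(\pi)$, using $d[3]=\binom{d+2}{3}=\frac{d(d+1)(d+2)}{6}$. The prefactor therefore becomes $\frac{1}{(d+1)(d+2)}$, which matches the claim, and it remains to evaluate $\tr_3[(I\otimes I\otimes\rho)V_d(\pi)]$ for each of the six permutations.

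I would handle the permutations by cycle structure. The identity gives $\tr(\rho)\,I^{\otimes 2}=I^{\otimes 2}$. $\swap_{1,2}$ gives $\swap$. $\swap_{1,3}$ and $\swap_{2,3}$ give $\rho\otimes I$ and $I\otimes\rho$, in some order. The two $3$-cycles give $(\rho\otimes I)\swap$ and $(I\otimes\rho)\swap$, again up to matching which cycle yields which term, and up to whether $\rho$ multiplies $\swap$ from the left or the right. The key identity to check is the standard one: for the cycle $\pi=(1\,2\,3)$ in the convention of $V_d$, $\tr_3[(I\otimes I\otimes X)V_d(\pi)]$ equals $\swap\,(X\otimes I)$ or $(X\otimes I)\swap$, and the other cycle gives the mirrored term. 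I would verify this concretely on basis vectors $\ket{i_1 i_2 i_3}$.

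Summing the six contributions gives $I^{\otimes 2}+\swap+(\rho\otimes I+I\otimes\rho)+(\rho\otimes I+I\otimes\rho)\swap$. The last group is insensitive to the left/right ambiguity, since $(\rho\otimes I)\swap=\swap(I\otimes\rho)$, so the sum of both orderings is the same.

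The only delicate step is this bookkeeping of the $3$-cycles under the $V_d(\pi)$ convention. Because the final expression is symmetric, any labeling mistake is harmless. As a sanity check, taking the trace should recover $1$, and tracing out one register should recover \Cref{lem:uniform-povm-first-moment}: $\frac{1}{(d+1)(d+2)}\bigl(dI+I+(d+1)\rho+\rho+I\bigr)=\frac{\rho+I}{d+1}$.
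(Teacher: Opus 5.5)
Your proof is correct. The overall prefactor $\frac{d}{6\,d[3]}=\frac{1}{(d+1)(d+2)}$ is right, and the six terms $\tr_3[(I\otimes I\otimes\rho)V_d(\pi)]$ are as you describe: the cycle $\ket{i_1i_2i_3}\mapsto\ket{i_3i_1i_2}$ gives $(\rho\otimes I)\swap$ and its inverse gives $(I\otimes\rho)\swap$. The paper omits this proof as standard, but your computation is exactly the $n=1$ case of its proof of \Cref{lem:hayashi-moments}, where the product $(e+X_3)(e+X_2)$ from \Cref{lem:kth-moment-alternate} expands into the same six permutations of $\mc{S}_3$, so your route is essentially the paper's.
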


\subsection{Entangled measurements over the symmetric subspace}
\label{sec:hayashi-povm-moments}
In this section, we present moments of Hayashi's POVM and its debiased version. Typically, Hayashi's POVM is applied to $n$ copies of some pure state $\ket{\psi}$, i.e., $\ket{\psi}\bra{\psi}^{\otimes n}$, but we will consider its general application to any mixed state in the symmetric subspace. The exposition here follows that of \cite{pelecanos2025mixed}; however, unlike \cite{pelecanos2025mixed}, we state exact second-moment expressions instead of truncating the lower-order terms. This will be important when we prove our refined bound in \Cref{sec:unbiased-state-estimation-moments}.

\begin{definition}[Hayashi's POVM \cite{hayashi1998asymptotic}]
\label{def:hayashi-povm}
    Let $d[n] = \binom{d+n-1}{n}$ be the dimension of $\vee^n \mbb{C}^d$, the symmetric subspace of $(\mbb{C}^d)^{\otimes n}$. Then, \emph{Hayashi's POVM} has elements $\{d[n] \cdot \ket{\psi}\bra{\psi}^{\otimes n} \cdot \mrm{d}\psi\}$.
\end{definition}

\begin{lemma}[Moments of Hayashi's POVM]
 \label{lem:hayashi-moments}
     Let $\psi_{\mrm{sym}}$ be a mixed state on $\vee^n\mbb{C}^d$. Consider a random outcome $\ket{\psi}$ obtained by applying Hayashi's POVM to $\psi_{\mrm{sym}}$. Then,
     \begin{equation}
         \mbb{E}[\ket{\psi}\bra{\psi}] = \frac{I}{d+n} + \frac{n}{d+n} (\psi_{\mrm{sym}})_1,
     \end{equation}
     where $(\psi_{\mrm{sym}})_1$ is the reduced density matrix of $\psi_{\mrm{sym}}$ on the first subsystem. Further,
     \begin{equation}
         \mbb{E}[\ket{\psi}\bra{\psi}^{\otimes 2}] = \frac{I^{\otimes 2} +  \swap + n((\psi_{\mrm{sym}})_1\otimes I + I \otimes (\psi_{\mrm{sym}})_1) (I^{\otimes 2} + \swap) + n(n-1)(\psi_{\mrm{sym}})_{1,2}}{(d+n)(d+n+1)}.
     \end{equation}
 \end{lemma}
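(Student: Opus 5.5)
The plan is to compute both moments directly from the definition of the POVM. Applying Hayashi's POVM to $\psi_{\mrm{sym}}$ yields a density over outcomes $\ket{\psi}$ equal to $d[n]\,\tr(\ket{\psi}\bra{\psi}^{\otimes n}\psi_{\mrm{sym}})\,\mrm{d}\psi$. It is a valid POVM on the symmetric subspace by \Cref{lem:kth-moment-pure-states}, because $d[n]\,\mbb{E}_\psi[\ket{\psi}\bra{\psi}^{\otimes n}] = \Pi^n_{\mrm{sym}}$. For $k\in\{1,2\}$ the $k$-th moment can then be written as a partial trace:
\begin{equation}
\mbb{E}[\ket{\psi}\bra{\psi}^{\otimes k}] = d[n]\,\tr_{1,\dots,n}\Bigl(\mbb{E}_\psi\bigl[\ket{\psi}\bra{\psi}^{\otimes n+k}\bigr]\,(\psi_{\mrm{sym}}\otimes I^{\otimes k})\Bigr),
\end{equation}
where the first $n$ registers carry $\psi_{\mrm{sym}}$ and the last $k$ registers are the output.

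Next I would invoke \Cref{lem:kth-moment-alternate} to expand the $(n+k)$-th moment. Since $\psi_{\mrm{sym}}$ is supported on $\vee^n\mbb{C}^d$, the factor $\Pi^n_{\mrm{sym}}\otimes I^{\otimes k}$ is absorbed: $(\Pi^n_{\mrm{sym}}\otimes I^{\otimes k})(\psi_{\mrm{sym}}\otimes I^{\otimes k}) = \psi_{\mrm{sym}}\otimes I^{\otimes k}$. The prefactor becomes $d[n]/(d[n](d+n)^{\uparrow k})$, which equals $1/(d+n)$ for $k=1$ and $1/((d+n)(d+n+1))$ for $k=2$. This matches the stated denominators.

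For $k=1$ I need $\tr_{1..n}\bigl((e+X_{n+1})(\psi_{\mrm{sym}}\otimes I)\bigr)$. The identity term gives $\tr(\psi_{\mrm{sym}})I = I$. Each term $\swap_{j,n+1}$ contributes the reduced state $(\psi_{\mrm{sym}})_j$, using the fact that $\tr_A(\swap_{A,B}(\omega_A\otimes I_B)) = \omega_B$. By permutation invariance all $n$ of these reduced states equal $(\psi_{\mrm{sym}})_1$, which gives the first claim.

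For $k=2$ I expand $(e+X_{n+2})(e+X_{n+1})$ into four groups, with output registers $a=n+1$ and $b=n+2$:
\begin{itemize}
\item $e$ gives $I^{\otimes 2}$.
\item $X_{n+1}$ gives $n\,(\psi_{\mrm{sym}})_1\otimes I$.
\item $X_{n+2}=\sum_{j\le n}\swap_{j,b}+\swap_{a,b}$ gives $n\,I\otimes(\psi_{\mrm{sym}})_1 + \swap$.
\item The product $X_{n+2}X_{n+1}$ needs the most care.
\end{itemize}
Within the product, the terms $\swap_{j,b}\swap_{i,a}$ with $i\ne j$ give $n(n-1)(\psi_{\mrm{sym}})_{1,2}$, using the two-register marginal and permutation invariance. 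The terms with $i=j$ are 3-cycles; contracting $\swap_{j,b}\swap_{j,a}$ against $\psi_{\mrm{sym}}$ on register $j$ yields $n\,(\psi_{\mrm{sym}})_1$ on one output register times $\swap_{a,b}$. The terms $\swap_{a,b}\swap_{i,a}$ similarly yield $n\,\swap\cdot$ (reduced state on the other output register).

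The main obstacle is bookkeeping in these 3-cycle contractions: I must get the correct side and register for each, so that the two families combine to $n((\psi_{\mrm{sym}})_1\otimes I + I\otimes(\psi_{\mrm{sym}})_1)\swap$. I would verify this by writing the 3-cycle action on a basis vector and using the identity $\swap(A\otimes I)=(I\otimes A)\swap$. Summing all the groups then gives the claimed expression.
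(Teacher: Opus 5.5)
Your proposal is correct and follows essentially the same route as the paper. Both write the moment as a partial trace against the $(n+k)$-th Haar moment, apply the Jucys--Murphy form of that moment with the symmetric projector absorbed into $\psi_{\mrm{sym}}$, and expand $(e+X_{n+2})(e+X_{n+1})$ term by term. The only difference is that the paper imports the $X_{n+2}X_{n+1}$ contribution from the proof of Lemma 5.8 in \cite{pelecanos2025mixed}, whereas you derive it via the 3-cycle contractions, which do give $n\bigl((\psi_{\mrm{sym}})_1\otimes I + I\otimes(\psi_{\mrm{sym}})_1\bigr)\swap$ (for instance via $\swap_{j,b}\swap_{j,a}=\swap_{j,a}\swap_{a,b}$).
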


 \begin{proof}
     Note that outcome $\psi$ has associated density $d[n]\tr(\ket{\psi}\bra{\psi}^{\otimes n} \psi_{\mrm{sym}}) \mrm{d}\psi$. Thus, for any $k \geq 1$,
     \begin{align}
         \mbb{E}[\ket{\psi}\bra{\psi}^{\otimes k}] &= d[n] \int \ket{\psi}\bra{\psi}^{\otimes k} \tr(\ket{\psi}\bra{\psi}^{\otimes n} \psi_{\mrm{sym}}) \mrm{d}\psi
         \\&= d[n] \int \tr_{1,\dots,n}(\ket{\psi}\bra{\psi}^{\otimes n+k} \cdot \psi_{\mrm{sym}} \otimes I^{\otimes k}) \mrm{d}\psi
         \\&= \frac{1}{(d+n)\dots(d+n+k-1)} \tr_{1,\dots,n}((e + X_{n+k}) \dots (e+X_{n+1}) \cdot \Pi_{\mrm{sym}}^n \otimes I^{\otimes k} \cdot \psi_{\mrm{sym}} \otimes I^{\otimes k})
         \\&= \frac{1}{(d+n) \dots (d+n+k-1)} \tr_{1, \dots, n}((e + X_{n+k}) \dots (e + X_{n+1}) \cdot \psi_{\mrm{sym}} \otimes I^{\otimes k}), 
     \end{align}
     where we used \Cref{lem:kth-moment-alternate} in the third line, and in the last line we used that $\psi_{\mrm{sym}}$ is a mixed state on $\vee^n\mbb{C}^d$.

     We first consider the case of $k = 1$.
     \begin{align}
         \mbb{E}[\ket{\psi}\bra{\psi}] &= \frac{1}{d+n} \tr_{1,\dots,n}((e+X_{n+1})\cdot \psi_{\mrm{sym}} \otimes I)
         \\&= \frac{1}{d+n} \tr_{1,\dots,n}(\psi_{\mrm{sym}} \otimes I) + \frac{1}{d+n}\sum_{i = 1}^n \tr_{1, \dots, n}(\swap_{i,n+1} \cdot \psi_{\mrm{sym}} \otimes I)
         \\&= \frac{I}{d+n} + \frac{1}{d+n}\sum_{i = 1}^n (\psi_{\mrm{sym}})_i
         \\&= \frac{I}{d+n} + \frac{n}{d+n} (\psi_{\mrm{sym}})_1,
     \end{align}
     where the last line used the permutation invariance of $\psi_{\mrm{sym}}$.

     Now, for $k = 2$, we have
     \begin{align}
         \mbb{E}[\ket{\psi}\bra{\psi}^{\otimes 2}] &= \frac{1}{(d+n)(d+n+1)}\tr_{1,\dots,n}((e+X_{n+2})(e+X_{n+1}) \cdot \psi_{\mrm{sym}} \otimes I^{\otimes 2})
         \\&= \frac{1}{(d+n)(d+n+1)} \tr_{1,\dots,n}((e + X_{n+1} + X_{n+2} + X_{n+2}X_{n+1}) \cdot \psi_{\mrm{sym}} \otimes I^{\otimes 2}).
         \label{eq:hayashi-second-moment-1}
     \end{align}
     The first term can be obtained by noting $\tr_{1, \dots, n}(e\cdot \psi_{\mrm{sym}} \otimes I^{\otimes 2}) = I^{\otimes 2}$. We simplify the second term next:
     \begin{align}
         \tr_{1, \dots, n}(X_{n+1} \cdot \psi_{\mrm{sym}} \otimes I^{\otimes 2}) &= \sum_{i = 1}^n \tr_{1, \dots, n}(\swap_{i,n+1} \cdot \psi_{\mrm{sym}} \otimes I^{\otimes 2})
         \\&= \sum_{i = 1}^n (\psi_{\mrm{sym}})_i \otimes I
         \\&= n\cdot(\psi_{\mrm{sym}})_1 \otimes I,
     \end{align}
     where we again used the permutation invariance of $\psi_{\mrm{sym}}$. Similarly, the third term can be computed as follows:
     \begin{align}
         \tr_{1, \dots, n}(X_{n+2} \cdot \psi_{\mrm{sym}} \otimes I^{\otimes 2}) &= \sum_{i = 1}^{n+1} \tr_{1, \dots, n}(\swap_{i,n+2} \cdot \psi_{\mrm{sym}} \otimes I^{\otimes 2})
         \\&= \tr_{1, \dots, n}(\swap_{n+1,n+2} \cdot \psi_{\mrm{sym}} \otimes I^{\otimes 2}) + \sum_{i = 1}^n  I \otimes (\psi_{\mrm{sym}})_i 
         \\&= \swap + n\cdot I \otimes (\psi_{\mrm{sym}})_1.
     \end{align}
     Finally, from the proof of Lemma 5.8 in \cite{pelecanos2025mixed}, we obtain the fourth term:
     \begin{equation}
         \tr_{1, \dots, n}(X_{n+2}X_{n+1} \cdot \psi_{\mrm{sym}} \otimes I^{\otimes 2}) = n(n-1) \cdot (\psi_{\mrm{sym}})_{1,2} + n\cdot((\psi_{\mrm{sym}})_{1} \otimes I + I \otimes (\psi_{\mrm{sym}})_{1}) \cdot \swap.
     \end{equation}
     The proof of the lemma is completed by substituting the four terms back into \Cref{eq:hayashi-second-moment-1}.
 \end{proof}

We will now consider the debiased version of the above estimator.

\begin{definition}[Grier-Pashayan-Schaeffer's algorithm \cite{grier2024sample}]
    $\mc{A}_{GPS}$ is the \emph{debiased} version of Hayashi's POVM, i.e., given outcome $\psi$ from Hayashi's POVM, the output of $\mc{A}_{GPS}$ is $\hat{\sigma}_{\psi} = \frac{d+n}{n}\ket{\psi}\bra{\psi} - \frac{1}{n}I$.
\end{definition}

By \Cref{lem:hayashi-moments}, $\mc{A}_{GPS}$ is an unbiased estimator, i.e., $\mbb{E}[\hat{\sigma}_\psi] = (\psi_{\mrm{sym}})_1$. In the following lemma, \cite{pelecanos2025mixed} compute the second moment of $\mc{A}_{GPS}$ up to a truncated lower-order term. 

\begin{lemma}[Truncated second moment of $\mc{A}_{GPS}$ {\cite[Lemma 5.8]{pelecanos2025mixed}}]
    Let $\psi_{\mrm{sym}}$ be a mixed state on $\vee^n\mbb{C}^d$. If $\hat{\sigma}$ is the output of $\mc{A}_{GPS}$ when applied to $\psi_{\mrm{sym}}$, then
    \begin{align}
        \mbb{E}[\hat{\sigma} \otimes \hat{\sigma}] &= \frac{n-1}{n}\cdot (\psi_{\mrm{sym}})_{1,2} + \frac{1}{n}((\psi_{\mrm{sym}})_{1} \otimes I + I \otimes (\psi_{\mrm{sym}})_{1})\cdot \swap + \frac{1}{n^2}\cdot\swap - \mathrm{Lower}_{\psi_{\mrm{sym}}},
    \end{align}
    where $\mathrm{Lower}_{\psi_{\mrm{sym}}} \in \mathrm{SoS}(d)$.
\end{lemma}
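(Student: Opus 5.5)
We expand $\hat{\sigma}\otimes\hat{\sigma}$ directly from the definition of $\mc{A}_{GPS}$ and plug in the exact moments from \Cref{lem:hayashi-moments}. Write $\hat{\sigma} = \frac{d+n}{n}\ket{\psi}\bra{\psi} - \frac1n I$ and $\rho_1 \triangleq (\psi_{\mrm{sym}})_1$, $\rho_{12}\triangleq(\psi_{\mrm{sym}})_{1,2}$. Then
\begin{equation}
\mbb{E}[\hat{\sigma}^{\otimes 2}] = \frac{(d+n)^2}{n^2}\mbb{E}[\ket{\psi}\bra{\psi}^{\otimes 2}] - \frac{d+n}{n^2}\bigl(\mbb{E}[\ket{\psi}\bra{\psi}]\otimes I + I\otimes \mbb{E}[\ket{\psi}\bra{\psi}]\bigr) + \frac{1}{n^2}I^{\otimes 2}.
\end{equation}
Substituting $\mbb{E}[\ket{\psi}\bra{\psi}] = \frac{I + n\rho_1}{d+n}$, the middle term becomes $-\frac{1}{n^2}(2I^{\otimes 2} + n(\rho_1\otimes I + I\otimes\rho_1))$. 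For the first term we use the exact second moment with prefactor $\frac{(d+n)^2}{n^2(d+n)(d+n+1)} = \frac{1}{n^2}\cdot\frac{d+n}{d+n+1}$.

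The next step is to collect the terms, writing $\frac{d+n}{d+n+1} = 1 - \frac{1}{d+n+1}$. The "$1$" part gives
\begin{equation}
\frac{1}{n^2}\bigl(I^{\otimes2}+\swap + n(\rho_1\otimes I + I\otimes\rho_1)(I^{\otimes 2}+\swap) + n(n-1)\rho_{12}\bigr) - \frac{1}{n^2}\bigl(I^{\otimes 2} + n(\rho_1\otimes I + I\otimes \rho_1)\bigr),
\end{equation}
after combining with the middle and last terms. The identity and the un-swapped $\rho_1$ terms cancel exactly, leaving precisely the claimed main part
\begin{equation}
\frac{n-1}{n}\rho_{12} + \frac1n(\rho_1\otimes I + I\otimes\rho_1)\swap + \frac{1}{n^2}\swap .
\end{equation}
The "$-\frac{1}{d+n+1}$" part gives
\begin{equation}
\mathrm{Lower} = \frac{1}{n^2(d+n+1)}\Bigl(I^{\otimes 2}+\swap + n(\rho_1\otimes I + I\otimes\rho_1)(I^{\otimes2}+\swap) + n(n-1)\rho_{12}\Bigr),
\end{equation}
which is exactly $\frac{d+n}{n^2}\mbb{E}[\ket{\psi}\bra{\psi}^{\otimes 2}]$ up to a positive scalar.

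The remaining step, and the only one with any content, is verifying $\mathrm{Lower}\in\mathrm{SoS}(d)$. Assuming $\mathrm{SoS}(d)$ denotes the cone generated by operators of the form $A\otimes A^\dagger$ (equivalently, sums of $\mbb{E}[X\otimes X^\dagger]$), the cleanest route avoids term-by-term checks. Since $\mathrm{Lower}$ is a positive multiple of $\mbb{E}[\ket{\psi}\bra{\psi}\otimes\ket{\psi}\bra{\psi}]$, and each $\ket{\psi}\bra{\psi}$ is Hermitian, it is an average of $A\otimes A^\dagger$ with $A=\ket{\psi}\bra{\psi}$, hence lies in the cone.

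If instead the paper's definition of $\mathrm{SoS}(d)$ differs, for example a PSD condition after partial transpose or realignment, I would use the same structural fact. I would verify that $\mbb{E}[\ket{\psi}\bra{\psi}^{\otimes 2}]$ satisfies that definition by realigning to $\mbb{E}[\mathrm{vec}(\ket\psi\bra\psi)\mathrm{vec}(\ket\psi\bra\psi)^\dagger]\succeq 0$. I expect matching the exact definition of $\mathrm{SoS}(d)$ to be the main obstacle; the algebra itself is routine.
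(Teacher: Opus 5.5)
Your proof is correct. The algebra is the same expansion the paper uses for its exact version, \Cref{lem:GPS-second-moment}: write $\hat{\sigma}^{\otimes 2}$ in terms of $\ket{\psi}\bra{\psi}$, substitute \Cref{lem:hayashi-moments}, and collect terms. Your leftover operator is exactly the difference between the claimed main part and the exact expression there.

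Where you differ is the $\mathrm{SoS}(d)$ claim. The paper proves the truncated statement only by citing \cite{pelecanos2025mixed}, and never checks that this leftover lies in $\mathrm{SoS}(d)$. Your identity $\mathrm{Lower}_{\psi_{\mrm{sym}}} = \frac{d+n}{n^2}\,\mbb{E}[\ket{\psi}\bra{\psi}^{\otimes 2}]$ supplies that step in one line.

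It is also the right way to do it, because a term-by-term check cannot work. Every element of $\mathrm{SoS}(d)$ has non-negative inner product with $Y\otimes Y$ for Hermitian $Y$, since $\tr((X\otimes X)(Y\otimes Y)) = \tr(XY)^2 \geq 0$. Yet take $n=d=2$ and $\psi_{\mrm{sym}} = \ket{\phi}\bra{\phi}$ with $\ket{\phi} = (\ket{01}+\ket{10})/\sqrt{2}$. The individual term $(\psi_{\mrm{sym}})_{1,2} = \ket{\phi}\bra{\phi}$ satisfies $\tr(\ket{\phi}\bra{\phi}\, Z\otimes Z) = -1$ for the Pauli matrix $Z$. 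Only the full combination is in the cone.

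Two minor remarks. First, the paper defines $\mathrm{SoS}(d)$ as non-negative combinations of $X\otimes X$ with $X$ \emph{Hermitian}. The cone of $A\otimes A^\dagger$ with arbitrary $A$ that you assumed is strictly larger; it contains non-Hermitian operators. So it matters that your witnesses $X=\ket{\psi}\bra{\psi}$ are Hermitian. They are, so your main argument applies verbatim and the realignment fallback is unnecessary.

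Second, $\mbb{E}[\ket{\psi}\bra{\psi}^{\otimes 2}]$ is an integral against a continuous outcome density, not a finite sum. It lies in the closed convex hull of the compact set $\{\ket{\psi}\bra{\psi}^{\otimes 2}\}$ over unit vectors $\ket{\psi}$. In finite dimensions this equals the convex hull by Carath\'eodory's theorem, so the integral is indeed a finite non-negative combination of the required form.
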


In the above, $\mathrm{SoS}(d) \subseteq \mbb{C}^{d^2 \times d^2}$ denotes the set of matrices that can be represented as non-negative linear combinations of matrices of the form $X \otimes X$, where each $X \in \mbb{C}^{d \times d}$ is Hermitian. Now, we state an exact version of the above lemma by explicitly incorporating the lower-order term.

\begin{lemma}[Second moment of $\mc{A}_{GPS}$; refined version of {\cite[Lemma 5.8]{pelecanos2025mixed}}]
\label{lem:GPS-second-moment}
    Let $\psi_{\mrm{sym}}$ be a mixed state on $\vee^n\mbb{C}^d$. If $\hat{\sigma}$ is the output of $\mc{A}_{GPS}$ when applied to $\psi_{\mrm{sym}}$, then
    \begin{align}
        \mbb{E}[\hat{\sigma} \otimes \hat{\sigma}] &= \frac{n-1}{n}\cdot\frac{d+n}{d+n+1}\cdot (\psi_{\mrm{sym}})_{1,2} + \frac{1}{n}\cdot\frac{d+n}{d+n+1}((\psi_{\mrm{sym}})_{1}\otimes I + I \otimes (\psi_{\mrm{sym}})_{1})\cdot \swap \nonumber\\& -\frac{1}{n(d+n+1)}((\psi_{\mrm{sym}})_{1}\otimes I + I \otimes (\psi_{\mrm{sym}})_{1}) + \frac{1}{n^2}\cdot \frac{d+n}{d+n+1}\cdot \swap - \frac{I^{\otimes 2}}{n^2(d+n+1)}.
    \end{align}
\end{lemma}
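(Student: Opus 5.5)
The plan is a direct computation. Write $\hat{\sigma} = aP - bI$ with $P = \ket{\psi}\bra{\psi}$, $a = \frac{d+n}{n}$, $b = \frac1n$, and expand
\begin{equation}
    \hat{\sigma}\otimes\hat{\sigma} = a^2 P^{\otimes 2} - ab\,(P\otimes I + I \otimes P) + b^2 I^{\otimes 2}.
\end{equation}
Taking expectations, every term is given exactly by \Cref{lem:hayashi-moments}. The first moment supplies $\mbb{E}[P]$, and the exact second-moment formula supplies $\mbb{E}[P^{\otimes 2}]$. Write $\rho_1 := (\psi_{\mrm{sym}})_1$ and $\rho_{12} := (\psi_{\mrm{sym}})_{1,2}$.

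\textbf{Step 1: the quadratic term.} Multiplying the exact Hayashi second moment by $a^2 = (d+n)^2/n^2$ gives
\begin{equation}
    \frac{d+n}{n^2(d+n+1)}\Bigl(I^{\otimes 2} + \swap + n(\rho_1\otimes I + I\otimes\rho_1)(I^{\otimes 2}+\swap) + n(n-1)\rho_{12}\Bigr).
\end{equation}
This already yields the $\rho_{12}$ coefficient $\frac{n-1}{n}\cdot\frac{d+n}{d+n+1}$. It also yields the $(\rho_1\otimes I + I\otimes \rho_1)\swap$ coefficient $\frac1n\cdot\frac{d+n}{d+n+1}$ and the $\swap$ coefficient $\frac{1}{n^2}\cdot\frac{d+n}{d+n+1}$, all matching the statement.

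\textbf{Step 2: the cross term.} The first moment gives $a\,\mbb{E}[P] = \frac{I}{n} + \rho_1$. Hence
\begin{equation}
    -ab\,\mbb{E}[P\otimes I + I\otimes P] = -\frac{1}{n}(\rho_1\otimes I + I\otimes\rho_1) - \frac{2}{n^2}I^{\otimes 2}.
\end{equation}

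\textbf{Step 3: collect the remaining terms.} For the coefficient of $\rho_1\otimes I + I\otimes\rho_1$ (without $\swap$), combine $\frac{d+n}{n(d+n+1)}$ from Step 1 with $-\frac1n$ from Step 2. This gives $-\frac{1}{n(d+n+1)}$. For the $I^{\otimes 2}$ coefficient, combine $\frac{d+n}{n^2(d+n+1)}$ from Step 1, $-\frac{2}{n^2}$ from Step 2, and $+\frac{1}{n^2}$ from the $b^2$ term. The sum is
\begin{equation}
    \frac{1}{n^2}\Bigl(\frac{d+n}{d+n+1} - 1\Bigr) = -\frac{1}{n^2(d+n+1)},
\end{equation}
which matches the statement.

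There is no real obstacle, since \Cref{lem:hayashi-moments} has already done the representation-theoretic work. The only care needed is the bookkeeping in Step 3: the $I^{\otimes 2}$ and the non-$\swap$ $\rho_1$ terms nearly cancel, leaving exactly the $\frac{1}{d+n+1}$-suppressed lower-order corrections. Finally, as a consistency check, these leftover corrections are precisely what is discarded as $\mathrm{Lower}_{\psi_{\mrm{sym}}}$ in the truncated version, up to the $\frac{d+n}{d+n+1}$ factors.
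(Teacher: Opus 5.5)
Your proposal is correct and follows the same route as the paper's proof. Both expand $\hat{\sigma}^{\otimes 2}$ as $\frac{(d+n)^2}{n^2}\mbb{E}[\ket{\psi}\bra{\psi}^{\otimes 2}] - \frac{d+n}{n^2}\bigl(\mbb{E}[\ket{\psi}\bra{\psi}]\otimes I + I\otimes \mbb{E}[\ket{\psi}\bra{\psi}]\bigr) + \frac{I^{\otimes 2}}{n^2}$, substitute the exact moments from \Cref{lem:hayashi-moments}, and collect coefficients; your bookkeeping, including the near-cancellations that leave $-\frac{1}{n(d+n+1)}$ and $-\frac{1}{n^2(d+n+1)}$, checks out.
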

\begin{proof}
    Recall that for an outcome $\psi$ from Hayashi's POVM, 
    \begin{equation}
        \hat{\sigma}_\psi = \frac{d+n}{n}\ket{\psi}\bra{\psi} - \frac{1}{n} I.
    \end{equation}
    Thus, 
    \begin{align}
        \mbb{E}[\hat{\sigma} \otimes \hat{\sigma}] &= \mbb{E}_{\psi}[\hat{\sigma}_\psi^{\otimes 2}] = \frac{(d+n)^2}{n^2} \mbb{E}[\ket{\psi}\bra{\psi}^{\otimes 2}] - \frac{d+n}{n^2} (\mbb{E}[\ket{\psi}\bra{\psi}] \otimes I + I \otimes \mbb{E}[\ket{\psi}\bra{\psi}]) + \frac{I^{\otimes 2}}{n^2}.
    \end{align}
    Using \Cref{lem:hayashi-moments}, the statement follows from direct calculation, but we state the full proof for the sake of completeness. 
    \begin{align}
        \mbb{E}[\hat{\sigma^{\otimes 2}}] &= \frac{d+n}{n^2(d+n+1)}(I^{\otimes 2} + \swap) + \frac{d+n}{n(d+n+1)}((\psi_{\mrm{sym}})_1 \otimes I + I \otimes (\psi_{\mrm{sym}})_1)(I^{\otimes 2} + \swap) 
        \nonumber\\&\hspace{1em} + \frac{n-1}{n}\cdot\frac{d+n}{d+n+1} \cdot (\psi_{\mrm{sym}})_{1,2} - \frac{1}{n^2}(2I^{\otimes 2} + n(\psi_{\mrm{sym}})_1 \otimes I + n I \otimes (\psi_{\mrm{sym}})_1) + \frac{I^{\otimes 2}}{n^2}
        \\&= \frac{n-1}{n}\cdot\frac{d+n}{d+n+1} \cdot (\psi_{\mrm{sym}})_{1,2} + \frac{1}{n} \cdot \frac{d+n}{d+n+1} ((\psi_{\mrm{sym}})_1 \otimes I + I \otimes (\psi_{\mrm{sym}})_1) \cdot \swap
        \nonumber\\&\hspace{1em} + ((\psi_{\mrm{sym}})_1 \otimes I + I \otimes (\psi_{\mrm{sym}})_1) \cdot \frac{1}{n}\left(\frac{d+n}{d+n+1} - 1\right) + \frac{1}{n^2} \cdot \frac{d+n}{d+n+1} \cdot \swap + \frac{I^{\otimes 2}}{n^2} \left(\frac{d+n}{d+n+1} - 1\right)
        \\&= \frac{n-1}{n}\cdot\frac{d+n}{d+n+1} \cdot (\psi_{\mrm{sym}})_{1,2} + \frac{1}{n} \cdot \frac{d+n}{d+n+1} ((\psi_{\mrm{sym}})_1 \otimes I + I \otimes (\psi_{\mrm{sym}})_1) \cdot \swap
        \nonumber\\&\hspace{1em} - \frac{1}{n(d+n+1)}((\psi_{\mrm{sym}})_1 \otimes I + I \otimes (\psi_{\mrm{sym}})_1) + \frac{1}{n^2} \cdot \frac{d+n}{d+n+1} \cdot \swap -\frac{I^{\otimes 2}}{n^2(d+n+1)}. \qedhere
    \end{align}
\end{proof}

\subsection{Unbiased mixed state estimation with entangled measurements}
\label{sec:PTSW-moments}

Before presenting the unbiased state estimator of \cite{pelecanos2025mixed}, let us state some necessary definitions. We will make use of the Schur transform and weak Schur sampling, and provide only the minimal details necessary to understand their use in this work. For a detailed background on these operations and their use in quantum learning and testing problems, we refer to \cite{wright2016learn,o2023learning}.

\begin{definition}[Schur transform and weak Schur sampling]
    The \emph{Schur transform} $U_{\mathrm{schur}}^{d,n}$ is a $d^n \times d^n$ unitary that block diagonalizes $\rho^{\otimes n}$, for all $d$-dimensional quantum states $\rho$, with each block indexed by a partition $\lambda \vdash n$. Equivalently, we can write
    \begin{equation}
        U_{\mathrm{schur}}^{d,n} \rho^{\otimes n} (U_{\mathrm{schur}}^{d,n})^\dag = \bigoplus_{\lambda \vdash n} M_\lambda(\rho),
    \end{equation}
    for some positive semi-definite operators $M_\lambda(\rho)$. Let $\Pi_\lambda$ be the projector onto the block associated with $\lambda$. Then, weak Schur sampling on $\rho^{\otimes n}$ is the application of the POVM consisting of operators $\{(U_{\mathrm{schur}}^{d,n})^\dag \Pi_\lambda U_{\mathrm{schur}}^{d,n}\}_{\lambda \vdash n}$ to $\rho^{\otimes n}$.
\end{definition}

For any partition $\lambda$, we define its length $\ell(\lambda)$ to be the number of its non-zero parts. Next, we define the random purification channel.

\begin{definition}[Random purification channel \cite{tang2025conjugate,pelecanos2025mixed}]
\label{def:random-purification-channel}
    There exists a channel $\purify^{d,r}$ that takes as input $n$ copies of any rank-$r$ state $\rho$ and outputs $\mbb{E}_{\ket{\bfrho}}[\ket{\bfrho}\bra{\bfrho}^{\otimes n}]$, where the expectation is taken over all random purifications $\ket{\bfrho} \in \mbb{C}^d \otimes \mbb{C}^r$ of $\rho$.
\end{definition}

 Now, we present the main estimation algorithm of \cite{pelecanos2025mixed}.

\label{sec:unbiased-state-estimation-moments}
\begin{algorithm}[H]
    \begin{algorithmic}[1]
    \caption{Unbiased state estimation via random purification; \cite[Figure 11]{pelecanos2025mixed}}
    \label{alg:quasi-purification-estimation}
    \State \textbf{Input:} $\rho^{\otimes t}$.
    \State Apply weak Schur sampling to $\rho^{\otimes t}$. Let the outcome be $\lambda \vdash t$, with the state collapsing to $\rho|_\lambda$.
    \State Apply the purification channel $\Phi_{\mrm{purify}}^{d,\ell(\lambda)}$ and then the inverse Schur transform $(U_{\mrm{Schur}}^{d,t} \otimes U_{\mrm{Schur}}^{\ell(\lambda),t})^\dag$. Let the resulting state be $\tau_\lambda$.
    \State Apply the Grier-Pashayan-Schaeffer algorithm to learn an estimate $\hat{\sigma}_\lambda$.
    \State Set $\hat{\rho}_\lambda = \tr_2(\hat{\sigma}_\lambda)$. Return $\hat{\rho}_\lambda$.
    \end{algorithmic}
\end{algorithm}

As mentioned previously, this estimator is unbiased, i.e., $\mbb{E}[\hat{\rho}_\lambda] = \rho$. We now state the second moment of this estimator:

\begin{theorem}[Second moment of \Cref{alg:quasi-purification-estimation}; {\cite[Theorem 6.4]{pelecanos2025mixed}}]
\label{thm:quasi-purification-truncated-second-moment}
    Let $\hat{\rho}_\lambda$ be the output of \Cref{alg:quasi-purification-estimation}. Then,
    \begin{equation}
        \mathbb{E}[\hat{\rho}_\lambda \otimes \hat{\rho}_\lambda] = \frac{t-1}{t} \cdot \rho^{\otimes 2} + \frac{1}{t} (\rho \otimes I + I \otimes \rho) \cdot \swap + \frac{\mbb{E}[\ell(\lambda)]}{t^2}\cdot\swap - \mathrm{Lower}_{\rho},
    \end{equation}
    where $\mathrm{Lower}_{\rho} \in \mrm{SoS}(d)$.
\end{theorem}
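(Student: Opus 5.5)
The proof conditions on the Schur-sampling outcome $\lambda$, computes the conditional second moment by applying \Cref{lem:GPS-second-moment} to the purified state, pushes the result through the partial trace $\tr_2$, and finally averages over $\lambda$.

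\textbf{Conditioning on $\lambda$.} Fix $\lambda$ and write $r = \ell(\lambda)$. After the purification channel, $\tau_\lambda$ is a mixed state on the symmetric subspace $\vee^t(\mbb{C}^d\otimes\mbb{C}^r)$. Its reductions are
\begin{equation}
    \tau_\lambda = \mbb{E}_{\ket{\bfrho}}\bigl[\ket{\bfrho}\bra{\bfrho}^{\otimes t}\bigr]
    \quad\text{(conditional on $\lambda$)},
\end{equation}
where $\ket{\bfrho}$ ranges over random purifications of a rank-$r$ effective state $\rho_\lambda$. Hence $(\tau_\lambda)_1 = \mbb{E}[\ket{\bfrho}\bra{\bfrho}]$ and $(\tau_\lambda)_{1,2} = \mbb{E}[\ket{\bfrho}\bra{\bfrho}^{\otimes 2}]$.

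\textbf{Applying the GPS second moment.} I apply \Cref{lem:GPS-second-moment} with ambient dimension $D = dr$ and $n = t$. Then I apply $\tr_2\otimes\tr_2$, tracing out both purification registers. The four types of terms transform as follows.
\begin{itemize}
    \item $\tr_{\mathrm{pur}}\bigl[\mbb{E}\ket{\bfrho}\bra{\bfrho}^{\otimes 2}\bigr]$ becomes the two-copy reduced moment.
    \item $\tr_{\mathrm{pur}}\bigl[(A\otimes I_{dr} + I_{dr}\otimes A)\,\swap_{dr}\bigr]$ becomes $(\rho\otimes I + I\otimes \rho)\swap_d$. This uses that $\swap_{dr} = \swap_d\otimes\swap_r$ and that tracing $\swap_r$ against identity on the purification factor gives a factor of $1$ there, since the purification register of $\ket{\bfrho}$ carries the $r$-dimensional part.
    \item $\tr_{\mathrm{pur}}(\swap_{dr}) = r\,\swap_d$. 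This is the source of the $\ell(\lambda)$ factor in the $\swap$ coefficient.
    \item The non-$\swap$ terms $A\otimes I$ and $I\otimes I$ pick up factors of $r$ and $r^2$, and are PSD multiples of sums of $X\otimes X$-type operators.
\end{itemize}
Collecting coefficients with $D = dr$, the ratio $\frac{D+t}{D+t+1}$ multiplies the leading terms. I rewrite it as $1 - \frac{1}{D+t+1}$, and move every correction with a negative sign into $\mathrm{Lower}$.

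\textbf{Averaging over $\lambda$.} It remains to average the conditional two-copy moment over $\lambda$ and identify the first-order terms $\frac{t-1}{t}\rho^{\otimes 2}$ and $\frac1t(\rho\otimes I+I\otimes\rho)\swap$. This is the main obstacle. The exact conditional two-copy reduction is $\mbb{E}[\rho|_\lambda$-marginals$]$, and its average over $\lambda$ is the two-site marginal of $\rho^{\otimes t}$, namely $\rho^{\otimes 2}$, because Schur sampling followed by purification and partial trace is consistent with the marginals. Unbiasedness of the estimator gives the one-body marginal $\rho$ in the same way.

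\textbf{Verifying $\mathrm{Lower}_\rho\in\mathrm{SoS}(d)$.} I would check the leftover terms one at a time:
\begin{equation}
    \tfrac{1}{D+t+1}\,\rho^{\otimes 2},\qquad
    \tfrac{1}{t(D+t+1)}(\rho\otimes I+I\otimes\rho),\qquad
    \tfrac{1}{t^2(D+t+1)}\,I\otimes I,
\end{equation}
and the $\swap$-correction terms. Identity and $\rho^{\otimes 2}$ are clearly SoS. For $\rho\otimes I + I\otimes\rho$, I use
\begin{equation}
    (\rho+I)^{\otimes 2} - \rho^{\otimes 2} - I^{\otimes 2} = \rho\otimes I + I\otimes\rho,
\end{equation}
which needs care since subtraction does not preserve SoS. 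As a fallback I would write $\rho = \sum_i p_i \ket{v_i}\bra{v_i}$ and exhibit $\ket{v}\bra{v}\otimes I + I\otimes\ket{v}\bra{v}$ as a sum of $X\otimes X$ terms, using a Pauli-type decomposition $I\otimes I = \frac1d\sum_P P\otimes P^\dagger$ adapted to Hermitian operators. If that fails, I would defer to \cite[Theorem 6.4]{pelecanos2025mixed}, since the statement is cited from there.
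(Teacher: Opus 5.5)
Your handling of the leading terms is sound. Splitting $\frac{D+t}{D+t+1}=1-\frac{1}{D+t+1}$ in the exact conditional moment of \Cref{lem:quasi-purification-conditioned-second-moment} leaves $\frac{t-1}{t}(\tau_\lambda)_{\msf{A}_1,\msf{A}_2}+\frac1t\bigl((\tau_\lambda)_{\msf{A}_1}\otimes I+I\otimes(\tau_\lambda)_{\msf{A}_1}\bigr)\swap+\frac{\ell(\lambda)}{t^2}\swap$. These coefficients do not involve $D$, and \Cref{lem:partial-trace-tau-lambda} averages them to the stated main terms.

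The gap is the claim $\Lower_\rho\in\mathrm{SoS}(d)$, and your term-by-term plan for it cannot succeed. Every element of $\mathrm{SoS}(d)$ pairs nonnegatively with every $Y\otimes Y$, $Y$ Hermitian; the individual lower-order terms fail this test.

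\textbf{First problem.} $\tau_\lambda$ is not $\mbb{E}[\ket{\bfrho}\bra{\bfrho}^{\otimes t}]$ for a fixed rank-$\ell(\lambda)$ state. Its $\msf{A}$-marginal is the post-measurement state $\rho|_\lambda$, which is not a product state. For $\rho=\mmstate$ and $\lambda=(1^t)$ with $2\le t\le d$, it is the normalized antisymmetric projector, so $(\tau_\lambda)_{\msf{A}_1,\msf{A}_2}=(I-\swap)/(d(d-1))$. Hence the leftover you list as ``$\frac{1}{D+t+1}\rho^{\otimes 2}$, clearly SoS'' is really $\mbb{E}_\lambda\bigl[\frac{t-1}{t(d\ell(\lambda)+t+1)}(\tau_\lambda)_{\msf{A}_1,\msf{A}_2}\bigr]$. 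The $\lambda$-dependent weight stops it from averaging to a multiple of $\rho^{\otimes 2}$; this is exactly why the paper calls averaging the exact expression intractable. Moreover, its summands are not SoS: $\tr((I-\swap)\,Y\otimes Y)=-\tr(Y^2)<0$ for traceless $Y$.

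\textbf{Second problem.} Your fallback for $\rho\otimes I+I\otimes\rho$ is impossible. Take $\rho=\ket{v}\bra{v}$ and $Y=\ket{v}\bra{v}-cI$ with $1/d<c<1$. Then $\tr\bigl((\rho\otimes I+I\otimes\rho)\,Y\otimes Y\bigr)=2(1-c)(1-cd)<0$, so this operator is not in $\mathrm{SoS}(d)$ for any $d\ge 2$. The same test shows $A\otimes I+I\otimes A$ is not SoS upstairs unless $A$ is maximally mixed, contrary to your last bullet. Only the full lower-order combination can be SoS, with the $\swap$ and $I^{\otimes 2}$ pieces compensating. Deferring to \cite[Theorem 6.4]{pelecanos2025mixed} is not an argument.

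\textbf{The missing idea.} Certify SoS on the purified system for the whole lower-order combination at once, then push it down.
\begin{enumerate}
\item Apply the truncated GPS moment (\cite[Lemma 5.8]{pelecanos2025mixed}, stated in \Cref{sec:hayashi-povm-moments}) to $\tau_\lambda$ on $\vee^t\mbb{C}^{D}$. Conditioned on $\lambda$, $\mbb{E}[\hat\sigma_\lambda^{\otimes 2}\mid\lambda]=\frac{t-1}{t}(\tau_\lambda)_{1,2}+\frac1t\bigl((\tau_\lambda)_1\otimes I+I\otimes(\tau_\lambda)_1\bigr)\swap+\frac{1}{t^2}\swap-L_\lambda$ with $L_\lambda\in\mathrm{SoS}(D)$.
\item Since $\tr_{\msf{B}_1,\msf{B}_2}(X\otimes X)=\tr_{\msf{B}}(X)\otimes\tr_{\msf{B}}(X)$ with $\tr_{\msf{B}}(X)$ Hermitian, the partial trace maps $\mathrm{SoS}(D)$ into $\mathrm{SoS}(d)$.
\item Your bullet computations turn the three main terms into the $D$-free expression above.
\item Average over $\lambda$ via \Cref{lem:partial-trace-tau-lambda}. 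Since $\mathrm{SoS}(d)$ is a convex cone, this gives the theorem with $\Lower_\rho=\mbb{E}_\lambda[\tr_{\msf{B}_1,\msf{B}_2}L_\lambda]$.
\end{enumerate}
This is the route the paper attributes to PTSW: truncate conditionally, then average. Note that $\tr_{\msf{B}_1,\msf{B}_2}L_\lambda$ is exactly the conditional leftover you obtained. What your proposal lacks is not the formula but the reason it is SoS.
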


While explicitly computing the lower-order term in the above theorem seems intractable, we are able to do so \emph{conditioned} on measuring a particular partition $\lambda$ in \Cref{alg:quasi-purification-estimation}.

\begin{lemma}
\label{lem:quasi-purification-conditioned-second-moment}
    Let $\hat{\rho}_\lambda,\tau_\lambda$ be as in \Cref{alg:quasi-purification-estimation}. Denote the register corresponding to $\rho|_\lambda$ by $\mathsf{A}$, and let the purifying register be $\mathsf{B}$. Let $D \triangleq d \cdot \ell(\lambda)$. Then, 
    \begin{align}
        \mbb{E}[\hat{\rho}_\lambda \otimes \hat{\rho}_\lambda | \lambda] &= \frac{t-1}{t}\cdot \frac{D+t}{D+t+1}\cdot(\tau_\lambda)_{\msf{A}_1,\msf{A}_2} + \frac{1}{t}\cdot \frac{D+t}{D+t+1}((\tau_\lambda)_{\msf{A}_1} \otimes I + I \otimes (\tau_\lambda)_{\msf{A}_1}) \cdot \swap
        \nonumber\\& - \frac{\ell(\lambda)}{t(D+t+1)}((\tau_\lambda)_{\msf{A}_1} \otimes I + I \otimes (\tau_\lambda)_{\msf{A}_1}) + \frac{\ell(\lambda)}{t^2}\cdot\frac{D+t}{D+t+1}\swap - \frac{\ell(\lambda)^2}{t^2(D+t+1)}\cdot I^{\otimes 2}.
    \end{align}
\end{lemma}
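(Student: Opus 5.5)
Conditioned on $\lambda$, the state $\tau_\lambda$ handed to the Grier--Pashayan--Schaeffer algorithm is a mixed state on the symmetric subspace $\vee^t(\mbb{C}^d\otimes\mbb{C}^{\ell(\lambda)})$: it is a mixture of states $\ket{\bfrho}\bra{\bfrho}^{\otimes t}$. So we can apply \Cref{lem:GPS-second-moment} directly with $n=t$ and local dimension $D=d\,\ell(\lambda)$. Writing $\hat{\sigma}_\lambda$ for the GPS output on the joint register $\msf{A}\msf{B}$, this gives
\begin{align}
\mbb{E}[\hat{\sigma}_\lambda^{\otimes 2}\mid\lambda] &= \frac{t-1}{t}\cdot\frac{D+t}{D+t+1}(\tau_\lambda)_{1,2} + \frac{1}{t}\cdot\frac{D+t}{D+t+1}\bigl((\tau_\lambda)_1\otimes I_D + I_D\otimes(\tau_\lambda)_1\bigr)\swap_{D} \nonumber\\
&\quad -\frac{1}{t(D+t+1)}\bigl((\tau_\lambda)_1\otimes I_D + I_D\otimes(\tau_\lambda)_1\bigr) + \frac{1}{t^2}\cdot\frac{D+t}{D+t+1}\swap_D - \frac{I_D^{\otimes 2}}{t^2(D+t+1)}.
\end{align}
Here $\swap_D$ swaps the two copies of $\msf{A}\msf{B}$, so $\swap_D=\swap_{\msf{A}}\otimes\swap_{\msf{B}}$, and $I_D=I_{\msf{A}}\otimes I_{\msf{B}}$.

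Since $\hat{\rho}_\lambda=\tr_{\msf{B}}(\hat{\sigma}_\lambda)$, we have $\mbb{E}[\hat{\rho}_\lambda^{\otimes 2}\mid\lambda]=\tr_{\msf{B}_1\msf{B}_2}\mbb{E}[\hat{\sigma}_\lambda^{\otimes 2}\mid\lambda]$ by linearity of the partial trace. So the proof amounts to partially tracing each of the five terms over the two purifying registers, using $\tr(\swap_{\msf B})=\ell(\lambda)$, $\tr(I_{\msf B})=\ell(\lambda)$ and $\tr(I_{\msf B}^{\otimes 2})=\ell(\lambda)^2$.

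\begin{enumerate}
\item The first term becomes $(\tau_\lambda)_{\msf{A}_1\msf{A}_2}$ directly.
\item For the second term, $\tr_{\msf B_1\msf B_2}[((\tau_\lambda)_1\otimes I_D)(\swap_{\msf A}\otimes\swap_{\msf B})]$ is the standard partial swap trick. Tracing out $\msf B_2$ against the swap $\swap_{\msf B}$ contracts the $\msf B$ index of $(\tau_\lambda)_1$ with $I$, which yields $((\tau_\lambda)_{\msf A_1}\otimes I)\swap$. The symmetric term works the same way, and there is no $\ell(\lambda)$ factor.
\item The third term picks up a factor $\ell(\lambda)$ from tracing the idle $I_{\msf B}$, giving $-\frac{\ell(\lambda)}{t(D+t+1)}((\tau_\lambda)_{\msf A_1}\otimes I+I\otimes(\tau_\lambda)_{\msf A_1})$.
\item The fourth term gives $\frac{\ell(\lambda)}{t^2}\cdot\frac{D+t}{D+t+1}\swap$.
\item The fifth term gives $-\frac{\ell(\lambda)^2}{t^2(D+t+1)}I^{\otimes 2}$.
\end{enumerate}

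The coefficients in items 1--5 match the statement exactly.

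The only step needing care is the second term: one must check that partially tracing a product of a one-site operator with $\swap_{\msf B}$ produces no extra dimension factor. To verify this I would write $(\tau_\lambda)_1=\sum_k A_k\otimes B_k$, so that the $\msf B$ part becomes $\tr_{\msf B_1\msf B_2}[(B_k\otimes I)\swap_{\msf B}]=\tr(B_k)\,I$. Recombining gives $(\tau_\lambda)_{\msf A_1}$, as claimed.

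I would also justify that $\tau_\lambda$ is supported on the symmetric subspace, which is what licenses using \Cref{lem:GPS-second-moment} (the same usage as in \cite{pelecanos2025mixed}). I would also need to confirm that after the inverse Schur transform the state is exactly a mixture of $t$-fold tensor powers of purifications.
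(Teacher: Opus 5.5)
Your proposal is correct and is essentially the paper's proof: apply \Cref{lem:GPS-second-moment} with $n=t$ and local dimension $D=d\,\ell(\lambda)$, then trace out $\mathsf{B}_1\mathsf{B}_2$ term by term. Your $\sum_k A_k\otimes B_k$ computation plays the role of the paper's appeal to \cite[Proposition 2.5]{pelecanos2025mixed} for the swap terms.

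One small correction. \Cref{lem:GPS-second-moment} only needs $\tau_\lambda$ to be supported on $\vee^t(\mathbb{C}^d\otimes\mathbb{C}^{\ell(\lambda)})$, which holds. Your side claim that $\tau_\lambda$ is a mixture of $\ket{\bfrho}\bra{\bfrho}^{\otimes t}$ is false for $\lambda\neq(t)$: any such mixture puts positive weight on the $(t)$ block, while $\tau_\lambda$ lives entirely in the $\lambda$ block. So drop that justification rather than try to confirm it.
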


\begin{proof}
     We will associate $\hat{\sigma}_\lambda^{\otimes 2}$ with registers $\mathsf{A}_1, \mathsf{A}_2, \mathsf{B}_1, \mathsf{B}_2$. As $\tau_\lambda$ is in the symmetric subspace, we use \Cref{lem:GPS-second-moment} and obtain
    \begin{align}
         \mbb{E}[\hat{\sigma}_\lambda^{\otimes 2} | \lambda] &= \frac{t-1}{t}\cdot\frac{D+t}{D+t+1}\cdot (\tau_\lambda)_{1,2} + \frac{1}{t}\cdot\frac{D+t}{D+t+1}(\tau_\lambda)_{1}\otimes I + I\otimes (\tau_\lambda)_{1})\cdot \swap \nonumber\\& -\frac{1}{t(D+t+1)}((\tau_\lambda)_{1}\otimes I + I \otimes (\tau_\lambda)_{1}) + \frac{1}{t^2}\cdot \frac{D+t}{D+t+1}\cdot \swap - \frac{I^{\otimes 2}}{t^2(D+t+1)}.
        \label{eq:quasi-purification-conditioned-second-moment-1}    
    \end{align}
    Further, note that
    \begin{equation}
        \mbb{E}[\hat{\rho}_\lambda^{\otimes 2} | \lambda] =  \mbb{E}[\tr_{\mathsf{B}_1,\mathsf{B}_2}(\hat{\sigma}_\lambda^{\otimes 2}) | \lambda] = \tr_{\mathsf{B}_1,\mathsf{B}_2}(\mbb{E}[\hat{\sigma}_\lambda^{\otimes 2} | \lambda]).
    \end{equation}
    So, we only need to trace out the $\mathsf{B}$ registers from each term in \Cref{eq:quasi-purification-conditioned-second-moment-1}. The partial trace of the first term can be stated trivially. For the second term, using \cite[Proposition 2.5]{pelecanos2025mixed},
    \begin{equation}
        \tr_{\mathsf{B}}((\tau_\lambda)_{\mathsf{A}_1,\mathsf{B}_1} \otimes I_{\mathsf{A}_2, \mathsf{B}_2} \cdot \swap_{\mathsf{A,B}}) = (\tau_\lambda)_{\mathsf{A}_1} \otimes I_{\mathsf{A}_2} \cdot \swap_{\mathsf{A}_1,\mathsf{A}_2},
    \end{equation}
    and similarly for the $(I \otimes (\tau_\lambda)_1)\cdot \swap$ term. For the third term, we have
    \begin{equation}
        \tr_{\mathsf{B}}((\tau_{\lambda})_{\mathsf{A}_1, \mathsf{B}_1} \otimes I_{\mathsf{A}_2, \mathsf{B}_2}) = \ell(\lambda) \cdot (\tau_{\lambda})_{\mathsf{A}_1} \otimes I_{\mathsf{A}_2},
    \end{equation}
    as we trace out the identity on one purifying register. The $(I \otimes (\tau_\lambda)_1)$ term is handled similarly. For the swap term, we write
    \begin{equation}
        \tr_\mathsf{B}(\swap_{\mathsf{A,B}}) = \tr_\mathsf{B}(\swap_{\mathsf{A}} \otimes \swap_{\mathsf{B}}) = \ell(\lambda) \cdot \swap_{\mathsf{A}}.
    \end{equation}
    Finally, the $I^{\otimes 2}$ term gains a $\ell(\lambda)^2$ factor as we trace out two purifying registers.
\end{proof}

We will also use the following helper lemma from \cite{pelecanos2025mixed} when averaging over expressions involving $\tau_\lambda$:
\begin{lemma}[{\cite[Lemma 6.2]{pelecanos2025mixed}}]
\label{lem:partial-trace-tau-lambda}
        For $1 \leq k \leq n$, 
        \begin{equation}
            \mbb{E}_\lambda[(\tau_\lambda)_{\mathsf{A}_1, \dots, \mathsf{A}_k}] = \rho^{\otimes k}. 
        \end{equation}
\end{lemma}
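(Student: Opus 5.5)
The plan is to reduce the statement to two facts. The first is that, conditioned on $\lambda$, the purification step followed by the inverse Schur transform does not disturb the $\mathsf{A}$ registers. The second is that weak Schur sampling, averaged over its outcome, leaves $\rho^{\otimes t}$ unchanged. Throughout, $n=t$ is the number of copies fed into \Cref{alg:quasi-purification-estimation}.

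\textbf{Step 1: averaging out the Schur sampling.} Write $P_\lambda = (U_{\mathrm{schur}}^{d,t})^\dagger \Pi_\lambda U_{\mathrm{schur}}^{d,t}$. Outcome $\lambda$ occurs with probability $p_\lambda = \tr(P_\lambda\rho^{\otimes t})$, and the post-measurement state is $\rho|_\lambda = P_\lambda \rho^{\otimes t} P_\lambda / p_\lambda$. Since $U_{\mathrm{schur}}^{d,t}$ block-diagonalizes $\rho^{\otimes t}$, the operator $\rho^{\otimes t}$ commutes with every $P_\lambda$. Together with $\sum_\lambda P_\lambda = I$, this gives
\begin{equation}
\mbb{E}_\lambda[\rho|_\lambda] = \sum_\lambda P_\lambda\rho^{\otimes t}P_\lambda = \rho^{\otimes t}.
\end{equation}

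\textbf{Step 2: the purification step preserves the $\mathsf{A}$ marginal.} For each fixed $\lambda$, I would show that
\begin{equation}
\tr_{\mathsf{B}_1,\dots,\mathsf{B}_t}(\tau_\lambda) = \rho|_\lambda.
\end{equation}
Two ingredients are needed. First, the inverse Schur transform factors as $(U_{\mathrm{schur}}^{d,t})^\dagger\otimes(U_{\mathrm{schur}}^{\ell(\lambda),t})^\dagger$. The $\mathsf{B}$ factor is a unitary acting only on registers that get traced out, so it drops out. The $\mathsf{A}$ factor exactly undoes the Schur transform that was implicitly applied to $\rho|_\lambda$. Second, within the $\lambda$ block, the random purification channel of \cite{tang2025conjugate,pelecanos2025mixed} acts as the identity on the $\mathsf{A}$-side Schur component (the $U(d)$ irrep together with the multiplicity space) and only prepares the $\mathsf{B}$-side irrep. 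Hence its composition with $\tr_{\mathsf{B}}$ is the identity on operators supported on the $\lambda$ block.

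A sanity check supports this. When the input is genuinely $\sigma^{\otimes t}$ for a rank-$\ell(\lambda)$ state $\sigma$, the output is $\mbb{E}\,\ket{\bfrho}\bra{\bfrho}^{\otimes t}$, and its $\mathsf{A}$ marginal is $\sigma^{\otimes t}$ because every $\ket{\bfrho}$ purifies $\sigma$. To pass from this special case to the actual operator $\rho|_\lambda$, I would use linearity of $\tr_{\mathsf{B}}\circ\Phi$ together with the fact that operators of the form $P_\lambda\sigma^{\otimes t}P_\lambda$ span the relevant block algebra.

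\textbf{Step 3: combine.} By linearity, Steps 1 and 2 give $\mbb{E}_\lambda[(\tau_\lambda)_{\mathsf{A}_1,\dots,\mathsf{A}_t}] = \mbb{E}_\lambda[\rho|_\lambda] = \rho^{\otimes t}$. Taking the partial trace over $\mathsf{A}_{k+1},\dots,\mathsf{A}_t$ then yields $\rho^{\otimes k}$ for every $1\le k\le t$.

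\textbf{Main obstacle.} I expect the hard part to be Step 2: making precise that the channel is $\mathsf{A}$-preserving on the $\lambda$ block, and not merely on exact tensor powers of rank-$\ell(\lambda)$ states. This depends on how $\purify^{d,\ell(\lambda)}$ is actually defined. My first attempt would be to appeal to its Schur-basis implementation in \cite{pelecanos2025mixed,girardi2025randomPurificationSimplified}: there it tensors the $\mathsf{A}$-side $\lambda$ irrep with a maximally entangled state between the $U(d)$ multiplicity space and a $U(\ell(\lambda))$ irrep. From that form, the $\mathsf{A}$ marginal is visibly unchanged.
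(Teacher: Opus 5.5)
Your Steps 1 and 3 are correct. Reducing the lemma to the per-outcome identity $\tr_{\mathsf{B}}(\tau_\lambda)=\rho|_\lambda$ is the natural route; the paper itself gives no argument here and simply imports the statement from \cite{pelecanos2025mixed}.

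The problem is your justification of that identity in Step 2, which is false as stated. $\purify^{d,\ell(\lambda)}$ does not act as the identity on the $\mathsf{A}$-side multiplicity space, and $\tr_{\mathsf{B}}\circ\purify^{d,\ell(\lambda)}$ is not the identity on all operators supported on the $\lambda$ block. To see this, note that $\tau_\lambda$ is supported on $\vee^t(\mathbb{C}^d\otimes\mathbb{C}^{\ell(\lambda)})$; this is exactly what lets the paper apply the GPS second-moment formula. So $\tau_\lambda$ is invariant under conjugation by $V_d(\pi)\otimes V_{\ell(\lambda)}(\pi)$ for every $\pi\in\mathcal{S}_t$. Tracing out $\mathsf{B}$ then shows that $\tr_{\mathsf{B}}(\tau_\lambda)$ commutes with every $V_d(\pi)$. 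In the Schur basis it therefore has the form $Y\otimes I$ on $V^d_\lambda\otimes\mathrm{Sp}_\lambda$, where $\mathrm{Sp}_\lambda$ is the multiplicity (Specht) space. Consequently, whenever $\dim\mathrm{Sp}_\lambda>1$, an input on the $\lambda$ block with a nontrivial component on $\mathrm{Sp}_\lambda$ cannot be returned unchanged. In the Schur-basis picture, the $\mathsf{A}$-side multiplicity register is not preserved. It ends up maximally entangled with the $\mathsf{B}$-side multiplicity register, while the $\mathsf{B}$-side $U(\ell(\lambda))$ irrep is maximally mixed. Your description in the last paragraph, a maximally entangled state between the $U(d)$ multiplicity space and a $U(\ell(\lambda))$ irrep, does not type-check, since those spaces generally have different dimensions. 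It also contradicts your own claim that the multiplicity space is left untouched.

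The observation that rescues the argument is that $\rho|_\lambda=P_\lambda\rho^{\otimes t}P_\lambda/p_\lambda$ commutes with all permutations. It therefore lies in $\mathrm{End}(V^d_\lambda)\otimes I_{\mathrm{Sp}_\lambda}$, and on this subspace $\tr_{\mathsf{B}}\circ\purify^{d,\ell(\lambda)}$ does act as the identity, as one reads off from the Schur-basis form of the channel.

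Your sanity-check route can be repaired to prove exactly this, but not as written. You verified the identity on $\sigma^{\otimes t}$, not on $P_\lambda\sigma^{\otimes t}P_\lambda$, and passing from one to the other requires knowing that the channel does not mix different blocks. Instead, show directly that $\rho|_\lambda\in\mathrm{span}\{\sigma^{\otimes t}:\sigma\succeq0,\ \rank(\sigma)\le\ell(\lambda)\}$. This span equals $\bigoplus_{\mu:\ell(\mu)\le\ell(\lambda)}\mathrm{End}(V^d_\mu)\otimes I_{\mathrm{Sp}_\mu}$. To prove it, write $\sigma=AA^\dagger$ with $A\in\mathbb{C}^{d\times\ell(\lambda)}$ and polarize to $(AB^\dagger)^{\otimes t}$. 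Then use that the summands $\mathrm{End}(V^d_\mu)$ are pairwise inequivalent irreducibles under $GL_d\times GL_d$. This span is only the permutation-invariant part, not the full algebra of the $\lambda$ block, which is consistent with the correction above. Linearity and the defining property of $\purify^{d,\ell(\lambda)}$ then give $\tr_{\mathsf{B}}(\tau_\lambda)=\rho|_\lambda$, and your Steps 1 and 3 complete the proof.
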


Lastly, we will make use of an upper bound on the expected length of a partition obtained from weak Schur sampling.

\begin{lemma}[see e.g., {\cite[Lemma 6.2]{pelecanos2025debiased}}]
    \label{lem:partition-length-upper-bound}
    Let $\rho \in \mathbb{C}^{d \times d}$ be a quantum state. Let $\lambda \vdash t$ be the partition obtained from weak Schur sampling applied to $\rho^{\otimes t}$. Then,
    \begin{equation}
        \mathbb{E}[\ell(\lambda)] \leq \min\{2\sqrt{t},d\}.
    \end{equation}
\end{lemma}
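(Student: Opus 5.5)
The plan is to bound $\ell(\lambda)$ pointwise by $d$ and, separately, to dominate it by the longest decreasing subsequence of a uniformly random permutation of $[t]$. The bound $\mathbb{E}[\ell(\lambda)] \leq 2\sqrt{t}$ is then a classical permutation statistic.

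\emph{The $d$ bound.} By Schur--Weyl duality, the block of $(\mbb{C}^d)^{\otimes t}$ labelled by $\lambda \vdash t$ is nonzero only if $\ell(\lambda) \leq d$. Weak Schur sampling therefore outputs only partitions with at most $d$ rows, so $\ell(\lambda) \leq d$ almost surely.

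\emph{Reduction to a classical word.} Let $p = (p_1, \dots, p_d)$ be the spectrum of $\rho$. I would invoke the standard fact (see e.g.\ \cite{wright2016learn,o2023learning}) that the weak Schur sampling distribution on $\rho^{\otimes t}$ equals the law of the RSK shape of a random word $w = (w_1, \dots, w_t)$ with i.i.d.\ letters $w_i \sim p$. By Schensted's theorem, $\ell(\mathrm{sh}(\mathrm{RSK}(w)))$ is the length of the longest \emph{strictly} decreasing subsequence of $w$.

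\emph{Tie-breaking.} Next I would couple $w$ with a uniformly random permutation $\pi \in \mc{S}_t$. Draw i.i.d.\ uniform $U_i \in [0,1]$ and order the positions by $(w_i, U_i)$ lexicographically. The resulting relative order is uniform on $\mc{S}_t$, because the pairs $(w_i, U_i)$ are i.i.d.\ with a continuous second coordinate, so the order is exchangeable and almost surely has no ties. A strictly decreasing subsequence of $w$ remains decreasing in $\pi$. Hence
\begin{equation}
    \ell(\lambda) \leq \mathrm{LDS}(\pi) \quad \text{pointwise under the coupling.}
\end{equation}

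\emph{Bounding $\mathbb{E}[\mathrm{LDS}(\pi)]$.} The main step is the classical bound $\mathbb{E}[\mathrm{LDS}(\pi)] \leq 2\sqrt{t}$, due to Vershik--Kerov and, in nonasymptotic form, following from the Plancherel-measure analysis. I would cite it rather than reprove it.

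If a self-contained argument were required, the first attempt would be the first-moment count
\begin{equation}
    \Pr[\mathrm{LDS}(\pi) \geq k] \leq \binom{t}{k}\frac{1}{k!} \leq \frac{t^k}{(k!)^2}.
\end{equation}
Summing $\sum_k \min\{1, t^k/(k!)^2\}$ with Stirling's formula gives a bound of $e\sqrt{t} + O(1)$. That already suffices for every application in the paper up to constants, but it falls short of the constant $2$. Obtaining exactly $2\sqrt{t}$ needs the sharper Plancherel argument: bound $\mathbb{E}[\lambda_1]$ via $\mathbb{E}[\lambda_1]^2 \leq \mathbb{E}[\lambda_1^2]$ together with a hook-length or Cauchy--Schwarz estimate. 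This constant is the only delicate point, and I expect it to be the main obstacle; I would defer to the cited reference for it. Combining the two bounds gives $\mathbb{E}[\ell(\lambda)] \leq \min\{2\sqrt{t}, d\}$.
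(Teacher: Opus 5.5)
The paper gives no argument of its own for this lemma; it imports the bound from \cite[Lemma 6.2]{pelecanos2025debiased}. Your proof is a correct, self-contained alternative, and each step checks out:
\begin{itemize}
\item $\ell(\lambda)\le d$ follows from Schur--Weyl duality.
\item Weak Schur sampling is realized as the RSK shape of an i.i.d.\ word with letters drawn from the spectrum of $\rho$.
\item Schensted's theorem identifies $\ell(\lambda)$ with the longest strictly decreasing subsequence.
\item The lexicographic tie-breaking coupling is sound.
\end{itemize}
Together these give the pointwise domination $\ell(\lambda)\le\mathrm{LDS}(\pi)$. So for every $\rho$, the length of the weak Schur sampling outcome is stochastically dominated by the Plancherel column length. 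This is slightly stronger than the expectation bound, and it confines all $\rho$-dependence to a single coupling step. You are also right that the constant does not matter for this paper. Every use of the lemma (e.g.\ $L\le 4\sqrt t$ or $L\le\min\{2d,4\sqrt t\}$) sits inside an $\bigo(\cdot)$, so your union-bound estimate $e\sqrt t+\bigo(1)$ already suffices.

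The one inaccuracy, which is not a gap since you defer to the literature, is your sketch of where the constant $2$ comes from. Jensen's inequality $\mathbb{E}[\lambda_1]^2\le\mathbb{E}[\lambda_1^2]$ does not lead anywhere by itself. The standard argument instead applies Cauchy--Schwarz to the one-step increments of the Plancherel growth process, and it is short enough to include. Let $f^\lambda$ be the number of standard Young tableaux of shape $\lambda$, and let $\lambda^{(k)}$ be the RSK shape of the first $k$ entries of $\pi$. RSK sends the relative order of the first $k+1$ entries to a uniform pair $(P,Q)$ of standard tableaux of a common shape $\mu$, and $\lambda^{(k)}$ is the shape of the entries $\le k$ of $Q$. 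Hence $\Pr[\lambda^{(k)}=\lambda,\lambda^{(k+1)}=\mu]=f^\lambda f^\mu/(k+1)!$ for $\mu$ covering $\lambda$. Write $\lambda^{+}$ for $\lambda$ with one box added at the bottom of its first column; the map $\lambda\mapsto\lambda^{+}$ is injective. Then
\[
\mathbb{E}\bigl[\ell(\lambda^{(k+1)})-\ell(\lambda^{(k)})\bigr]
=\sum_{\lambda\vdash k}\frac{f^{\lambda}f^{\lambda^{+}}}{(k+1)!}
\le\frac{\bigl(\sum_{\lambda\vdash k}(f^{\lambda})^{2}\bigr)^{1/2}\bigl(\sum_{\mu\vdash k+1}(f^{\mu})^{2}\bigr)^{1/2}}{(k+1)!}
=\frac{\sqrt{k!\,(k+1)!}}{(k+1)!}=\frac{1}{\sqrt{k+1}}.
\]
Summing over $k=0,\dots,t-1$ gives $\mathbb{E}[\mathrm{LDS}(\pi)]\le\sum_{j=1}^{t}j^{-1/2}\le 2\sqrt t-1$, which closes your argument with the stated constant.
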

\section{State certification upper bounds}
\label{sec:main-upper-bound}
In this section, we prove our main upper bound for state certification, which we restate below:
\begin{theorem}[\Cref{thm:intro-state-certification-upper} restated]
\label{thm:tcopy-certification-upper}
    Let $d \geq 2, t \geq 1, \eps > 0$. There exists an algorithm for $\eps$-trace norm certification of $d$-dimensional states with confidence $1-\delta$ using fixed, $t$-copy measurements that has copy complexity $\Tilde{\bigo}\left(\max\left\{\frac{d^2}{\sqrt{t}\eps^2}, \frac{d}{\eps^2}\right\} \cdot \log(1/\delta)\right)$.
\end{theorem}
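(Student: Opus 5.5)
The plan follows the technical overview: build a Hilbert--Schmidt closeness tester from the \cite{pelecanos2025mixed} estimator, bound its variance, and lift it to trace norm via bucketing.

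\textbf{Hilbert--Schmidt sub-test.} Fix a pair of states $\rho,\sigma$ of dimension $d'$. For $i\in[n]$, apply \Cref{alg:quasi-purification-estimation} to a fresh $\rho^{\otimes t}$ to get $\hat\rho_i$. Obtain $\hat\sigma_i$ either from copies of $\sigma$ or by classically simulating the same measurement on $\sigma$, which is possible since $\sigma$ is known, so no copies are consumed. Set $\hat\Delta_i=\hat\rho_i-\hat\sigma_i$ and output the U-statistic $\bar X=\binom n2^{-1}\sum_{i<j}\tr(\hat\Delta_i\hat\Delta_j)$. This estimator is unbiased for $\tr(\Delta^2)$, and its variance has the form in \Cref{eq:overview-variance-generic-closeness}, by the usual expansion of a U-statistic into terms with one shared index and terms with two shared indices.

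Next I plug in \Cref{thm:quasi-purification-truncated-second-moment}. Since $\mathrm{Lower}\in\mathrm{SoS}(d)$ contributes nonpositively to $\tr(\cdot\,\Delta^{\otimes 2})$, I can drop it in the $n^{-1}$ term. In the $n^{-2}$ term it can be controlled because the full matrix is PSD-like and bounded by the non-Lower part. This gives
\[
\tr(\mathbb E[\hat\Delta^{\otimes2}]\Delta^{\otimes 2})-\tr(\Delta^2)^2\lesssim \tfrac1t\tr((\rho+\sigma)\Delta^2)+\tfrac{\mathbb E\ell}{t^2}\tr(\Delta^2)
\]
and
\[
\tr(\mathbb E[\hat\Delta^{\otimes2}]^2)\lesssim \tfrac{d'^2}{t^2}\bigl(\tr(\rho^2)+\tr(\sigma^2)+\tr(\rho\sigma)\bigr)+\tfrac{(\mathbb E\ell)^2 d'^2}{t^4}+\text{lower terms}.
\]
With $\mathbb E\ell\le 2\sqrt t$ (\Cref{lem:partition-length-upper-bound}), and assuming $\|\sigma\|_\infty\le c/d'$, the second bound becomes $O(d'/t^2+d'^2/t^3)$ together with $\tr(\rho^2)d'^2/t^2$.

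Chebyshev then shows that $n$ pairs suffice to separate $0$ from $\eps_2^2$ once $n\gtrsim \sqrt{\mathrm{Var}_2}/\eps_2^2+\mathrm{Var}_1/\eps_2^4$. The $n^{-1}$ term is handled by $\tr((\rho+\sigma)\Delta^2)\le\|\rho+\sigma\|_\infty\tr(\Delta^2)$, using $\|\rho\|_\infty\lesssim\|\sigma\|_\infty+\|\Delta\|_2$. The $\tr(\rho^2)$ term is handled by $\tr(\rho^2)\le 2\tr(\sigma^2)+2\tr(\Delta^2)$, with the $\tr(\Delta^2)$ pieces absorbed in the far case. The result should be $n t=O\bigl(\frac{d'}{\sqrt t\,\eps_2^2}+\frac{1}{\eps_2^2}\cdot\frac{\sqrt{d'}}{\sqrt{d'}}\bigr)$, i.e. copies $O\bigl(\max\{\frac{d'\sqrt t}{t\eps_2^2}\cdot\ldots\}\bigr)$. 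Concretely, I aim for $O\bigl(\max\{\frac{d'}{\sqrt t\,\eps_2^2},\frac{1}{d'^{0}\eps_2^2}\cdot \|\sigma\|_\infty\}\bigr)$ copies. With $\eps_2=\eps/\sqrt{d'}$ and $\sigma=\mmstate$, this gives $\max\{d^2/(\sqrt t\eps^2),d/\eps^2\}$. That special case already yields \Cref{thm:intro-mixedness-upper}.

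\textbf{Bucketing.} For general $\sigma$ I invoke the reduction of \Cref{lem:bucketing-reduction} from \cite{chen2022toward,o2025instance}. Decompose $\sigma=\bigoplus_j\sigma_j$ into $m=O(\log(d/\eps))$ buckets whose eigenvalues lie within a factor $2$ of one another. Trace-norm certification then reduces to a test on the block-mass distribution (a classical identity test) plus Hilbert--Schmidt tests $\hat\rho_j$ vs.\ $\hat\sigma_j$ on the normalized projected states, with accuracies $\eps_j$. To obtain $t$ copies of $\hat\rho_j$, I measure the projective block POVM on each copy and batch the copies landing in block $j$. This is the point that needs care: it costs a factor of roughly $1/\tr(\Pi_j\rho)$ in copies, and the number of copies available per batch is random. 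I would handle it via the fact that block weights under $\rho$ are within constant factors of those under $\sigma$, which can be checked by a preliminary classical test.

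\textbf{Conclusion and main obstacle.} Each sub-test uses the bound above with $d'=d_j$ and $\|\hat\sigma_j\|_\infty\le 2/d_j$, so the problematic $\tr(\rho\sigma)$ term is $O(1/d_j)$. Summing the costs with the $\eps_j$ prescribed by the reduction gives a Schatten-quasinorm expression, which is maximized at $\mmstate$; this yields $\tilde O(\max\{d^2/(\sqrt t\eps^2),d/\eps^2\})$, and union bounds over the $m$ buckets account for the polylog factors. Finally, the median trick gives the $\log(1/\delta)$ dependence. I expect the main obstacle to be the variance computation: tracking the cross terms $\rho\otimes\sigma$ and the SoS lower-order term in $\tr(\mathbb E[\hat\Delta^{\otimes2}]^2)$, and then matching the exponents correctly after bucketing.
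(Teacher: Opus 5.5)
Your overall plan matches the paper's: a pairwise-inner-product U-statistic over PTSW estimates of $\rho-\sigma$, dropping the SoS and $-\rho^{\otimes 2}/t$ terms, then bucketing, then maximizing the resulting quasinorms at $\mmstate$. However, two steps fail as written.

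\textbf{The $n^{-2}$ variance term is miscounted.} Set $A=\rho+\sigma$. The square of the $\frac1t(A\otimes I+I\otimes A)\swap$ term has trace $\frac{1}{t^2}\tr\bigl((A\otimes I+I\otimes A)^2\bigr)=\frac{2d'}{t^2}\tr(A^2)+\frac{8}{t^2}$. So the prefactor is $d'$, not $d'^2$.
\begin{itemize}
\item Correct version: write $\tr(A^2)=\tr(\Delta^2)+4\tr(\rho\sigma)$ and use $\tr(\rho\sigma)\le\|\sigma\|_\infty\le c/d'$. This gives $\mathbb{V}[\bar X]=\bigo\bigl(\frac{\tr(\Delta^2)}{nt}+\frac{1}{n^2}\bigl(\frac{1}{t^2}+\frac{d'\tr(\Delta^2)}{t^2}+\frac{d'^2\min\{d'^2,t\}}{t^4}\bigr)\bigr)$, hence $nt=\bigo(\max\{\eps_2^{-2}, d'/(\sqrt t\eps_2^2), \sqrt{d'}/\eps_2\})$.
\item That $\bigo(1/t^2)$, rather than $\bigo(d'/t^2)$, is exactly what produces the $d/\eps^2$ term.
\item With your $d'^2$ prefactor, the $\tr(\sigma^2)$ and $\tr(\rho\sigma)$ pieces leave a $d'/t^2$ term. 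That forces $nt\gtrsim\sqrt{d'}/\eps_2^2$, i.e.\ $d^{3/2}/\eps^2$ in trace norm for $\sigma=\mmstate$, so the claimed rate at $t=d^2$ does not follow.
\item Your final target $\|\sigma\|_\infty/\eps_2^2$ is also inconsistent: with $\sigma=\mmstate$ and $\eps_2=\eps/\sqrt d$ it gives $1/\eps^2$, not $d/\eps^2$.
\end{itemize}

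\textbf{The reduction you describe is unsound.} Checking block masses plus $\hat\rho_j$ against $\hat\sigma_j$ misses errors, because $\rho$ need not be block-diagonal in $\sigma$'s eigenbasis.
\begin{itemize}
\item Counterexample: take two buckets of size $d/2$ with eigenvalues $3/(2d)$ and $1/(2d)$. Add a coherence of $1/(2d)$ between paired eigenvectors across the two buckets. Then $\rho\succeq 0$, every $\Pi_j\rho\Pi_j$ and every block mass is unchanged, yet $\|\rho-\sigma\|_1=1/2$.
\item This is why \Cref{lem:bucketing-reduction} also tests $\hat\rho_{j,j'}$ against $\hat\sigma_{j,j'}$ for every pair of buckets.
\item On those pairs your flatness hypothesis fails. $\hat\sigma_{j,j'}$ has dimension $d_j+d_{j'}$ but only satisfies $\|\hat\sigma_{j,j'}\|_\infty\le 2/d_{j'}$ (with $d_{j'}\le d_j$). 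So the $\sqrt{d\,\tr(\rho\sigma)}/\eps^2$ term of \Cref{thm:closeness-l2-estimation} does not collapse.
\item The paper applies that general bound with $\tr(\hat\rho_{j,j'}\hat\sigma_{j,j'})\le 2/d_{j'}$ and tolerance $\tilde\Omega\bigl(\eps/(\sqrt{d_{j'}}\,p_{j,j'})\bigr)$, where $p_{j,j'}=d_j2^{-j}+d_{j'}2^{-j'}$.
\item Summing over pairs adds the terms $\rank(\sigma^*)\|\sigma^*\|_{1/2}/(\sqrt t\eps^2)$ and $\rank(\sigma^*)/\eps^2$. These are still $\tilde{\bigo}(d^2/(\sqrt t\eps^2)+d/\eps^2)$ in the worst case, but this accounting is a necessary part of the proof and is missing from your proposal.
\end{itemize}
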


In fact, we will directly prove a more general instance-dependent bound:

\begin{theorem}
    \label{thm:instance-dependent-tcopy-certification}
    Let $d \geq 2, t \geq 1, \eps > 0$, and let $\sigma$ be a $d$-dimensional quantum state. Let $\sigma^*$ be the operator given by removing as many of $\sigma$'s smallest eigenvalues as possible until at most $\eps^2/20$-mass is removed. Then, there exists an algorithm to $\eps$-certify $\sigma$ with respect to the trace distance with confidence $1-\delta$ using fixed $t$-copy measurements that has copy complexity
    \begin{equation}
        \Tilde{\bigo}\left(\left(1 + \frac{\mathrm{rank}(\sigma^*)}{\sqrt{t}}\right) \cdot \frac{\|\sigma^{*}\|_{1/2}}{\eps^2} + \frac{\|\sigma^*\|_{1/3}}{\sqrt{t}\eps^2} + \frac{\rank(\sigma^*)}{\eps^2}\right) \cdot \log(1/\delta).
    \end{equation}
\end{theorem}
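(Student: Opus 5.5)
The proof combines three pieces: the bucketing reduction (\Cref{lem:bucketing-reduction}, from \cite{chen2022toward,o2025instance}), a Hilbert--Schmidt closeness tester built from the \cite{pelecanos2025mixed} estimator (\Cref{alg:quasi-purification-estimation}), and a final sum over buckets.

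\textbf{Step 1: bucketing.} Diagonalize $\sigma$ and discard its smallest eigenvalues, totalling at most $\eps^2/20$ mass; this leaves $\sigma^*$. Partition the surviving eigenvalues into $m=\bigo(\log(d/\eps))$ buckets $B_j$. Within each bucket the eigenvalues lie in a dyadic range $(2^{-j-1},2^{-j}]$. Let $d_j=|B_j|$. By \Cref{lem:bucketing-reduction}, trace-norm certification reduces, up to $\mathrm{poly}(m)$ factors, to two kinds of test:
\begin{itemize}
\item a mass test on the discarded part and on off-diagonal blocks, which costs $\bigo(\mathrm{rank}(\sigma^*)/\eps^2)$ or less;
\item for each $j$, a Hilbert--Schmidt test of $\rho_j=\sigma_j$ versus $\|\rho_j-\sigma_j\|_2\ge \eps_j$.
\end{itemize}
I would take $\eps_j \approx \eps\, \sqrt{2^{-j}}\,/(\sqrt{m}\,\|\sigma^*\|_{1/2}^{1/2})$. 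This is the standard choice: by Cauchy--Schwarz, violating trace distance $\eps$ forces some bucket to violate its $\eps_j$. Each test is run on the normalized projected states $\hat\rho_j=\Pi_j\rho\Pi_j/\tr(\Pi_j\rho)$, which costs a factor $1/\tr(\Pi_j\rho)\approx 1/(d_j2^{-j})$ in copies by postselection. A $t$-copy batch of $\rho$ yields about $t\,d_j2^{-j}$ copies of $\hat\rho_j$. When that number is below $t$ we simply use smaller batches, which is allowed.

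\textbf{Step 2: Hilbert--Schmidt tester for near-maximally-mixed $\sigma$.} For each bucket, form $\hat\Delta_i=\hat\rho_i-\hat\sigma_i$. Here $\hat\sigma_i$ is obtained by classically simulating \Cref{alg:quasi-purification-estimation} on $\hat\sigma_j^{\otimes t}$. Use the U-statistic $\bar X=\binom{n}{2}^{-1}\sum_{i<j}\tr(\hat\Delta_i\hat\Delta_j)$, which is unbiased for $\tr(\Delta^2)$. Bound its variance through \Cref{lem:hs-estimation-variance-generic}, substituting \Cref{thm:quasi-purification-truncated-second-moment}. Since $\mathrm{Lower}\in\mathrm{SoS}$, dropping it only helps in the needed direction, and $\mbb{E}[\ell(\lambda)]\le\min\{2\sqrt t,d\}$ by \Cref{lem:partition-length-upper-bound}. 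The main terms should come out as
\begin{itemize}
\item $n^{-2}$ terms: $\tr(\mbb{E}[\hat\Delta^{\otimes2}]^2)\lesssim d^2/t + d\sqrt t\cdot d/t^2+\dots$, dominated by $\tr(\swap^2)\,\mbb{E}[\ell]^2/t^4=d^2\mbb{E}[\ell]^2/t^4$ and the cross terms $\frac1t\tr((\rho\otimes I)\swap\cdots)$. Roughly this gives $\frac{d}{t}+\frac{d^2}{t^{3}}$ after using $\tr(\rho\sigma),\tr(\rho^2),\tr(\sigma^2)\lesssim 1/d$ plus $\|\Delta\|_2^2$ corrections.
\item $n^{-1}$ terms: $\tr(\mbb{E}[\hat\Delta^{\otimes2}]\Delta^{\otimes2})-\tr(\Delta^2)^2\lesssim \frac1t\|\Delta\|_2^2\|\rho+\sigma\|_\infty+\frac{\sqrt t}{t^2}\|\Delta\|_2^2$.
\end{itemize}
Requiring the standard deviation to be at most $\eps_j^2$ in the null case and $\tr(\Delta^2)/4$ in the far case gives a number of batches $n\approx \frac{\sqrt{d_j}}{\sqrt t\,\eps_j^2}(\cdots)$, i.e.\ a copy count of order $\frac{d_j}{\eps_j^2}\max\{1,\frac{d_j}{\sqrt t}\}$ modulo the third-order term. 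This step is the main obstacle. The $\tr(\rho\sigma)$-type term must be controlled by $\|\hat\sigma_j\|_\infty\le 2/d_j$, but it involves the unknown $\hat\rho_j$. I would handle it by Cauchy--Schwarz, $\tr(\hat\rho_j\hat\sigma_j)\le\|\hat\sigma_j\|_\infty$, and expand $\tr(\hat\rho_j^2)=\tr(\hat\sigma_j^2)+2\tr(\hat\sigma_j\Delta)+\|\Delta\|_2^2$. The $\|\Delta\|_2^2$ pieces are then absorbed in the far case, as in the classical collision-tester analysis.

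\textbf{Step 3: summing over buckets.} Multiply each bucket's cost by the postselection factor $1/(d_j2^{-j})$ and substitute $\eps_j$. A term $\frac{d_j}{\eps_j^2}\cdot\frac{1}{d_j2^{-j}}$ becomes $\frac{m\|\sigma^*\|_{1/2}}{\eps^2}\cdot 2^{2j}\cdot 2^{-j}$, which needs care. More precisely, I would choose $\eps_j$ by Lagrangian optimization of $\sum_j \mathrm{cost}_j$ subject to $\sum_j \sqrt{d_j}\,\eps_j\gtrsim\eps$. This yields $\sum_j d_j2^{-j/2}$-type sums, which are exactly $\|\sigma^*\|_{1/2}^{1/2}$-type and $\|\sigma^*\|_{1/3}^{1/3}$-type quantities. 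The result should be:
\begin{itemize}
\item the $\frac{d_j}{\eps_j^2}$ part gives $\|\sigma^*\|_{1/2}/\eps^2$;
\item the $\frac{d_j^2}{\sqrt t\eps_j^2}$ part gives $\mathrm{rank}(\sigma^*)\|\sigma^*\|_{1/2}/(\sqrt t\eps^2)$ after bounding $d_j\le\mathrm{rank}(\sigma^*)$;
\item the bucket-size-independent $\mbb{E}[\ell]/t^2$ term gives $\|\sigma^*\|_{1/3}/(\sqrt t\eps^2)$;
\item the mass tests give $\mathrm{rank}(\sigma^*)/\eps^2$.
\end{itemize}
Finally, run each test with confidence $\delta/m$ by median amplification, which accounts for the $\log(1/\delta)$ factor and the polylogarithmic factors absorbed into $\tilde\bigo$.
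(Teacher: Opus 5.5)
Your Step 1 misreads the bucketing reduction, and this is a genuine gap. The cross-bucket blocks $\Pi_j\rho\Pi_{j'}$ ($j\neq j'$) cannot be handled by a mass test, because bucket masses and diagonal blocks are blind to coherences between buckets. For example, take a matching $a_k\in S_j$, $b_k\in S_{j'}$ and set $\rho=\sigma+\eta\sum_k(|a_k\rangle\langle b_k|+|b_k\rangle\langle a_k|)$ with $\eta^2\le 2^{-j-1}2^{-j'-1}$. This is a state with $\Pi_j\rho\Pi_j=\sigma_j$ for every $j$ and all masses unchanged, yet for suitable $\sigma$ it is $\Omega(1)$-far from $\sigma$ in trace distance. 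So your Cauchy--Schwarz claim, that an $\eps$-far $\rho$ must violate some diagonal $\eps_j$, is false, and your tester would accept such a $\rho$.

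This is why \Cref{lem:bucketing-reduction} contains, for every pair $j,j'$ with $d_j\ge d_{j'}$, a Hilbert--Schmidt test of $\hat\rho_{j,j'}$ against $\hat\sigma_{j,j'}$ at scale $\tilde\Omega\bigl(\eps/(\sqrt{d_{j'}}(d_j2^{-j}+d_{j'}2^{-j'}))\bigr)$. These pair tests are where the obstacle you flag actually lives. On a diagonal block $\|\hat\sigma_j\|_\infty\le 2/d_j$, so \Cref{cor:balanced-l2-closeness-testing} applies directly and there is nothing to handle. By contrast, $\hat\sigma_{j,j'}$ has dimension up to $2d_j$ but only $\|\hat\sigma_{j,j'}\|_\infty\le 2/d_{j'}$. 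The paper therefore invokes the general \Cref{thm:closeness-l2-estimation} with $\tr(\hat\rho_{j,j'}\hat\sigma_{j,j'})\le 2/d_{j'}$ and pays its $\sqrt{d\,\tr(\rho\sigma)}/\eps^2$ term. Summing these pair costs over $j,j'$ is exactly what produces the $\rank(\sigma^*)\|\sigma^*\|_{1/2}/(\sqrt t\eps^2)$ and $\rank(\sigma^*)/\eps^2$ terms of the statement. Neither term comes from diagonal tests or mass tests.

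Step 3 also does not establish the bound. Your per-bucket cost $\frac{d_j}{\eps_j^2}\max\{1,d_j/\sqrt t\}$ is the trace-norm form of the tester, but you pair it with a Hilbert--Schmidt threshold $\eps_j$. It is therefore off by a factor $d_j$: \Cref{cor:balanced-l2-closeness-testing} needs only $\max\{1/\eps_j^2,\,d_j/(\sqrt t\eps_j^2),\,\sqrt{d_j}/\eps_j\}$ copies of $\hat\rho_j$. Your choice of $\eps_j$ then produces a $2^{2j}$ blow-up, which you defer to an unexecuted Lagrangian optimization. The $\|\sigma^*\|_{1/3}$ term also does not come from a bucket-size-independent $\mbb{E}[\ell]/t^2$ term. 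It comes from the $d/(\sqrt t\eps^2)$ term of the tester, i.e.\ the $d^2\mbb{E}[\ell]^2/t^4$ variance term you yourself identified, which scales with the bucket dimension. Your $n^{-2}$ estimates ($d^2/t$, $d/t$) are wrong as well and would give costs growing with $t$; you should simply cite \Cref{lem:tcopy-closeness-testing-variance}.

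The paper needs no optimization. It takes the reduction's normalized thresholds $\tilde\Omega(\eps/(d_j^{3/2}2^{-j}))$, so after the postselection factor $1/(d_j2^{-j})$ the diagonal costs are $d_j^22^{-j}/\eps^2$, $d_j^32^{-j}/(\sqrt t\eps^2)$ and $d_j/\eps$. The diagonal part then follows immediately from $\sum_j d_j^22^{-j}\le 2\|\sigma^*\|_{1/2}$ and $\sum_j d_j^32^{-j}\le 2\|\sigma^*\|_{1/3}$.
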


Using the above instance-dependent bound, we can now prove \Cref{thm:tcopy-certification-upper}.

\begin{proof}[Proof of \Cref{thm:tcopy-certification-upper}]
    To prove the desired worst-case bound, we simply need to provide worst-case upper bounds for each $\sigma$-dependent quantity in the copy complexity of \Cref{thm:instance-dependent-tcopy-certification}. It is not hard to see that each quantity above is maximized for $\sigma = \mmstate$, giving
    \begin{equation}
        \mathrm{rank}(\sigma^*) \leq d, \quad \|\sigma^*\|_{1/2} \leq \|\mmstate\|_{1/2} = d, \quad \text{and} \quad \|\sigma^*\|_{1/3} \leq \|\mmstate\|_{1/3} = d^2.
    \end{equation}
    Thus the overall copy complexity is at most 
    \begin{equation}
        \Tilde{\bigo}\left( \frac{d}{\eps^2} + \frac{d^2}{\sqrt{t}\eps^2} \right) \cdot \log(1/\delta),
    \end{equation}
    as desired.
\end{proof}

The rest of this section is devoted to proving \Cref{thm:instance-dependent-tcopy-certification}. In \Cref{sec:upper-HS-closeness}, we prove \Cref{thm:closeness-l2-estimation}, a worst-case upper bound for state certification with respect to the Hilbert--Schmidt norm. In \Cref{sec:upper-instance-dependent}, we state a standard reduction from instance-dependent trace-distance certification to worst-case Hilbert--Schmidt certification. Then, we combine this reduction with \Cref{thm:closeness-l2-estimation} to obtain \Cref{thm:instance-dependent-tcopy-certification}.

\subsection{Hilbert--Schmidt state certification}
\label{sec:upper-HS-closeness}

The main result of this section is a worst-case upper bound for Hilbert--Schmidt state certification with fixed $t$-copy measurements. However, we note that this result holds more generally for the problem of closeness testing, and we state and prove this general version below:

\begin{theorem}
\label{thm:closeness-l2-estimation}
    Let $d \geq 2, t \geq 1, \eps > 0$. Given $m = nt$ copies each of $d$-dimensional states $\rho,\sigma$, there exists an algorithm to test whether $\rho = \sigma$ or $\|\rho - \sigma\|_2 \geq \eps$ with confidence $1-\delta$ using fixed $t$-copy measurements with copy complexity 
    \begin{equation}
        m \leq \bigo\left(\max\left\{ 
        \frac{1}{\eps^2}, \frac{d}{\sqrt{t}\eps^2}, \frac{\sqrt{d \cdot \tr(\rho\sigma)}}{\eps^2}, \frac{\sqrt{d}}{\eps}
        \right\} \cdot \log(1/\delta)\right) .
    \end{equation}
\end{theorem}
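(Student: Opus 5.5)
We use the randomized U-statistic estimator from the technical overview. Split the $m = nt$ copies of each state into $n$ batches of $t$. On the $i$th batch of $\rho$ and of $\sigma$ we run \Cref{alg:quasi-purification-estimation}, obtaining independent unbiased estimates $\hat{\rho}_i,\hat{\sigma}_i$, and set $\hat{\Delta}_i = \hat{\rho}_i - \hat{\sigma}_i$. We output ``far'' iff
\begin{equation}
    \bar{X} = \binom{n}{2}^{-1}\sum_{i<j}\tr(\hat{\Delta}_i\hat{\Delta}_j) \geq \eps^2/2 .
\end{equation}
Independence and unbiasedness give $\mbb{E}[\bar X] = \tr(\Delta^2)$ with $\Delta = \rho-\sigma$. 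By Chebyshev, constant success follows once $\mbb{V}[\bar X] \le c\,\max\{\eps^2,\tr(\Delta^2)\}^2$. The $\log(1/\delta)$ factor then comes from a median of $O(\log(1/\delta))$ repetitions.

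\textbf{Variance decomposition.} The standard U-statistic (Hoeffding) decomposition gives
\begin{equation}
    \mbb{V}[\bar X] \lesssim \frac{1}{n}\Bigl(\tr(M\,\Delta^{\otimes 2}) - \tr(\Delta^2)^2\Bigr) + \frac{1}{n^2}\Bigl(\tr(M^2)-\tr(\Delta^2)^2\Bigr),
\end{equation}
where $M = \mbb{E}[\hat\Delta^{\otimes 2}] = \mbb{E}[\hat\rho^{\otimes2}] + \mbb{E}[\hat\sigma^{\otimes2}] - \rho\otimes\sigma - \sigma\otimes\rho$. This uses $\tr(\mbb{E}[\hat\Delta_i\otimes\hat\Delta_i]\,\mbb{E}[\hat\Delta_j]^{\otimes 2})$ for the linear term and $\tr(\mbb{E}[\hat\Delta^{\otimes2}]^2)$ for the quadratic term. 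We then plug in \Cref{thm:quasi-purification-truncated-second-moment}.

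\textbf{The $1/n$ term.} Here $-\mathrm{Lower}$ lies in $-\mathrm{SoS}(d)$, so its contribution $-\tr(\mathrm{Lower}\cdot\Delta^{\otimes2}) = -\sum_k c_k\tr(X_k\Delta)^2 \le 0$ can be dropped. The $\frac{t-1}{t}$ parts combine into $(\rho-\sigma)^{\otimes 2}$ and cancel against $\tr(\Delta^2)^2$ up to a non-positive remainder. What remains is
\begin{equation}
    \frac{1}{t}\tr\bigl((\rho+\sigma)\Delta^2\bigr)\cdot 2 + \frac{\mbb{E}\ell_\rho+\mbb{E}\ell_\sigma}{t^2}\tr(\Delta^2).
\end{equation}
Using $\tr((\rho+\sigma)\Delta^2) \le \|\Delta\|_\infty\,\|\Delta\|_2\,\|\rho+\sigma\|_2 \le 2\tr(\Delta^2)^{1/2}\cdot$ (at most $2$) and $\mbb{E}\ell\le 2\sqrt t$, this term is $\lesssim \frac{1}{nt}\tr(\Delta^2)$, up to lower-order factors. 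It is therefore within budget when $nt \gtrsim 1/\eps^2$, which gives the $1/\eps^2$ term.

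\textbf{The $1/n^2$ term (main obstacle).} We must bound $\tr(M^2)$, and here the SoS term cannot simply be dropped. Instead we use $0 \preceq \mbb{E}[\hat\rho^{\otimes 2}] - \mbb{E}\hat\rho^{\otimes 2}$ in the appropriate sense, together with $\|A - L\|_2 \le \|A\|_2$ for the relevant PSD-type pieces. Concretely, we will bound $\tr(M^2)$ by $O(1)$ times the sum of squared Hilbert--Schmidt norms of the explicit terms, plus cross terms, and we expect the lower-order contributions to enter only with favourable sign or at the same order. The resulting pieces are as follows.
\begin{itemize}
    \item $\frac{\mbb{E}[\ell]^2}{t^4}\tr(\swap^2) = \frac{\mbb{E}[\ell]^2 d^2}{t^4} \lesssim \frac{d^2}{t^3}$.
    \item $\frac{1}{t^2}\bigl\|(\rho\otimes I + I\otimes\rho)\swap\bigr\|_2^2 \lesssim \frac{d\,\tr(\rho^2)}{t^2}$.
    \item Cross terms with $\swap$ of order $\frac{\mbb{E}\ell}{t^3}\, d$.
    \item The mixed cross terms $\frac{1}{t}\tr\bigl((\rho\otimes I)\swap\,(\sigma\otimes\sigma)\bigr) = \frac1t\tr(\rho\sigma^2)$ and the like, which is where $\tr(\rho\sigma)$ arises. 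The dominant such term is $\frac{d}{t^2}\tr(\rho\sigma)$ from the product of the two $\frac{1}{t}(\cdot\otimes I)\swap$ pieces of $\hat\rho$ and $\hat\sigma$, alongside the $(\rho-\sigma)^{\otimes2}$ part, which cancels $\tr(\Delta^2)^2$.
\end{itemize}
The first piece requires $n^2t^3\gtrsim d^2/\eps^4$, i.e.\ $m \gtrsim d/(\sqrt t\,\eps^2)$. The $\frac{d\,\tr(\rho\sigma)}{n^2t^2}$ piece requires $m\gtrsim \sqrt{d\tr(\rho\sigma)}/\eps^2$. The purity terms $d\tr(\rho^2),\,d\tr(\sigma^2)$ are handled via $\tr(\rho^2)+\tr(\sigma^2) = 2\tr(\rho\sigma)+\tr(\Delta^2)$: the $\tr(\Delta^2)$ part requires $m\gtrsim \sqrt{d\tr(\Delta^2)}/\max(\eps^2,\tr\Delta^2) \le \sqrt d/\eps$. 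Taking the maximum of these requirements yields the claimed bound.

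The delicate part is carefully verifying that the unspecified $\mathrm{Lower}$ terms do not blow up $\tr(M^2)$. If the sign argument fails, we would instead substitute the exact conditional expression of \Cref{lem:quasi-purification-conditioned-second-moment} and bound it with \Cref{lem:partial-trace-tau-lambda} and \Cref{lem:partition-length-upper-bound}, accepting an $\mbb{E}[\ell^2]\le 4t$ bound where needed.
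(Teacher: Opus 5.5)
Your setup is the paper's: the same randomized U-statistic built from \Cref{alg:quasi-purification-estimation}, the same variance decomposition (\Cref{lem:hs-estimation-variance-generic}), and the same treatment of the $1/n$ term by discarding the SoS pieces. The surviving terms in $\tr(M^2)$ are also essentially the same, with $\tr(\rho\sigma)$ entering through $\tr(\rho^2)+\tr(\sigma^2)=\tr(\Delta^2)+2\tr(\rho\sigma)$. (A small slip: your H\"older chain for $\tr((\rho+\sigma)\Delta^2)$, as written, gives only $O(\tr(\Delta^2)^{1/2})$. What you need is just $\tr((\rho+\sigma)\Delta^2)\le 2\tr(\Delta^2)$, from $\rho+\sigma\preceq 2I$.)

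The genuine gap is the step you flag yourself. You never bound $\tr(M^2)$ in the presence of the negative terms $-\frac1t(\rho^{\otimes 2}+\sigma^{\otimes 2})-\Lower_\rho-\Lower_\sigma$. Saying you ``expect'' them to enter with a favourable sign is not an argument. The principle you cite, $\|A-L\|_2\le\|A\|_2$ for PSD-type $A,L$, is false in general (take $L=3A$). Since $\Lower_\rho,\Lower_\sigma$ are unspecified, you have no control of the $1/n^2$ term, and that is exactly where the $d/(\sqrt{t}\eps^2)$ and $\sqrt{d\,\tr(\rho\sigma)}/\eps^2$ terms come from.

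The missing idea is a two-sided positivity argument. Let $L=\mbb{E}_\rho[\ell(\lambda)]+\mbb{E}_\sigma[\ell(\lambda)]$ and write $M=P-N$, where $P=\Delta^{\otimes 2}+\frac1t(\rho\otimes I+I\otimes\rho+\sigma\otimes I+I\otimes\sigma)\swap+\frac{L}{t^2}\swap$ and $N=\frac1t(\rho^{\otimes 2}+\sigma^{\otimes 2})+\Lower_\rho+\Lower_\sigma\in\mathrm{SoS}(d)$.
\begin{itemize}
\item Since $M=\mbb{E}[\hat\Delta^{\otimes 2}]$ with $\hat\Delta$ Hermitian, $\langle M,X\otimes X\rangle=\mbb{E}[\tr(\hat\Delta X)^2]\ge 0$ for every Hermitian $X$. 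Hence $\langle M,N\rangle\ge 0$, and $\tr(M^2)\le\langle M,M+N\rangle=\langle M,P\rangle$.
\item Every SoS element pairs nonnegatively with each piece of $P$: $\langle X\otimes X,\Delta^{\otimes 2}\rangle=\tr(X\Delta)^2$, $\langle X\otimes X,(\varrho\otimes I+I\otimes\varrho)\swap\rangle=2\tr(\varrho X^2)$, and $\langle X\otimes X,\swap\rangle=\tr(X^2)$. Hence $\langle M,P\rangle=\tr(P^2)-\langle N,P\rangle\le\tr(P^2)$.
\end{itemize}
Together these give $\tr(M^2)\le\tr(P^2)$, which is fully explicit and produces your list of terms.

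It is the first fact that makes your ``$\|A-L\|_2\le\|A\|_2$'' valid here: the second moment of $\hat\Delta$ itself is nonnegative on the SoS cone. A covariance statement about $\mbb{E}[\hat\rho^{\otimes 2}]-\rho^{\otimes 2}$ does not give this.

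Your fallback via \Cref{lem:quasi-purification-conditioned-second-moment} is not a ready substitute. $\tr(M^2)$ is quadratic in the average over $\lambda$, whereas the conditional formula can be averaged directly only for quantities linear in the second moment. That is the only way the paper uses it, in the mixedness-testing analysis.
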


While it's unclear whether one can remove the term depending on $\tr(\rho\sigma)$ in general, this term can be neglected when one of the states has small operator norm, leading us to the following bound:
\begin{corollary}
\label{cor:balanced-l2-closeness-testing}
    Let $d \geq 2, t \geq 1, \eps > 0$. Let $\rho,\sigma$ be $d$-dimensional states with $\min\{\|\rho\|_{\infty},\|\sigma\|_{\infty}\} \leq \frac{c}{d}$, for some absolute constant $c \geq 1.$ Then, given $m = nt$ copies of $\rho,\sigma$ each, there exists an algorithm to test whether $\rho = \sigma$ or whether $\|\rho - \sigma\|_2 \geq \eps$ with copy complexity
    \begin{equation}
        m \leq \bigo\left(\max\left\{
        \frac{1}{\eps^2}, \frac{d}{\sqrt{t}\eps^2},  \frac{\sqrt{d}}{\eps}
        \right\}\right).
    \end{equation}
\end{corollary}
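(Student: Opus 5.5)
The corollary should follow directly from \Cref{thm:closeness-l2-estimation}, with $\delta$ fixed to a constant (say $\delta = 0.01$) so that the $\log(1/\delta)$ factor is absorbed into the constant. The only work is to show that the term $\frac{\sqrt{d\cdot\tr(\rho\sigma)}}{\eps^2}$ in that theorem is dominated by the other three terms whenever one of the states has operator norm at most $c/d$.

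To bound the inner product, assume without loss of generality that $\|\sigma\|_\infty \le c/d$; the expression $\tr(\rho\sigma)$ is symmetric in $\rho$ and $\sigma$, so the other case is identical. Since $\rho \succeq 0$, Hölder's inequality for Schatten norms gives
\begin{equation}
\tr(\rho\sigma) \le \|\rho\|_1 \|\sigma\|_\infty = \|\sigma\|_\infty \le \frac{c}{d}.
\end{equation}
Therefore $\sqrt{d\cdot\tr(\rho\sigma)} \le \sqrt{c}$, and the offending term is at most $\sqrt{c}/\eps^2 = \bigo(1/\eps^2)$ because $c$ is an absolute constant. This is already absorbed by the first term $\frac{1}{\eps^2}$ of the maximum.

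Substituting this bound into the maximum in \Cref{thm:closeness-l2-estimation} leaves
\begin{equation}
\bigo\left(\max\left\{\frac{1}{\eps^2},\frac{d}{\sqrt{t}\eps^2},\frac{\sqrt{d}}{\eps}\right\}\right),
\end{equation}
as claimed. The algorithm is unchanged: it is exactly the closeness tester of \Cref{thm:closeness-l2-estimation}, which needs only copies of both states and never uses knowledge of which one has small operator norm, so the hypothesis enters the analysis only.

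No step here should be a real obstacle. The one point to check is that the variance bound behind \Cref{thm:closeness-l2-estimation} depends on the states only through $\tr(\rho\sigma)$, and not on, say, $\tr(\rho^2)$ separately. Since the corollary is stated as a consequence of that theorem's bound as written, this needs no further verification here.
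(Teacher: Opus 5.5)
Your proposal is correct and matches the paper's argument, which is given only as the one-line remark before the corollary. The paper obtains it from \Cref{thm:closeness-l2-estimation} by noting that $\tr(\rho\sigma)\le \min\{\|\rho\|_\infty,\|\sigma\|_\infty\}\le c/d$, so the $\sqrt{d\cdot\tr(\rho\sigma)}/\eps^2$ term becomes $\bigo(1/\eps^2)$ and is absorbed, with $\delta$ fixed to a constant.
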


Note that we can apply the above corollary to the task of mixedness testing by setting $\sigma = \mmstate$, which has operator norm $\frac1d$. In this case, by \Cref{fact:hs-to-trace-norm-testing}, the complexity of mixedness testing with respect to the trace norm would be 
\begin{equation}
    \bigo\Bigl(\max\Bigl\{\frac{d}{\eps^2},\frac{d^2}{\sqrt{t}\eps^2},\frac{d}{\eps}\Bigr\}\Bigr) = \bigo\Bigl(\max\Bigl\{\frac{d}{\eps^2},\frac{d^2}{\sqrt{t}\eps^2}\Bigr\}\Bigr).
\end{equation}
Thus, \Cref{cor:balanced-l2-closeness-testing} allows us to recover \Cref{thm:intro-mixedness-upper}. 

Before moving on to the proof of \Cref{thm:closeness-l2-estimation}, we will first provide a generic reduction from Hilbert--Schmidt squared estimation to the task of state tomography. As a warmup, in \Cref{sec:closeness-testing-uniform-povm}, we will instantiate the above estimator with the uniform POVM to provide an alternate algorithm achieving the tight $\bigo(d^2/\eps^2)$ upper bound for trace-distance closeness testing, first shown by Liu and Acharya~\cite{liu2024role}. Finally, we will instantiate the Hilbert--Schmidt squared estimator with \Cref{alg:quasi-purification-estimation} to prove \Cref{thm:closeness-l2-estimation} in \Cref{sec:closeness-HS-upper-proof}.

\subsubsection{A generic Hilbert--Schmidt estimator}

In this section, we present our algorithm for estimating the Hilbert--Schmidt squared distance using any state estimator, and analyze its variance in a similar manner to our previous algorithm for purity estimation (\Cref{alg:collision-estimation}).

\begin{algorithm}[H]
    \begin{algorithmic}[1]
    \caption{HilbertSchmidt($\mc{A}, \sigma, \rho$), given an algorithm $\mathcal{A}$ that takes as input $\rho^{\otimes t}$ and returns a state $\hat{\rho} \in \mbb{C}^{d \times d}$.}
    \label{alg:closeness-l2-estimation}
    \State \textbf{Input:} $n \cdot t$ copies each of unknown states $\rho, \sigma$.
    \For{$i \in [n]$}
        \State $\hat{\rho}_i \gets \mc{A}(\rho^{\otimes t})$, $\hat{\sigma}_i \gets \mc{A}(\sigma^{\otimes t})$.
        \State $\hat{\Delta}_i \gets \hat{\rho}_i - \hat{\sigma}_i$.
    \EndFor
    \For{$i,j \in [n]$ with $i < j$}
        \State $X_{i,j} \gets \tr(\hat{\Delta}_i\hat{\Delta}_j)$.
    \EndFor
    \Return $\bar{X} \gets \frac{1}{\binom{n}{2}} \sum_{i < j} X_{i,j}$
    \end{algorithmic}
\end{algorithm}

In general, the variance of such algorithms depends on the second moment of the underlying state estimator. We state this dependence explicitly in the following lemma:

\begin{lemma}
\label{lem:hs-estimation-variance-generic}
    Let $\bar{X}$ be the output of $\mathrm{HilbertSchmidt}(\mc{A}, \rho, \sigma)$ (\Cref{alg:closeness-l2-estimation}). Let $\Delta = \rho - \sigma$. Then, $\mbb{E}[\bar{X}] = \tr(\mbb{E}[\hat{\Delta}]^2)$ and
    \begin{equation}
    \label{eq:closeness-variance-generic-1}
          \mbb{V}[\bar{X}] = \bigo(n^{-1}) \cdot (\tr(\mbb{E}[\hat{\Delta}^{\otimes 2}] \cdot \mbb{E}[\hat{\Delta}]^{\otimes 2}) - \mbb{E}[\bar{X}]^2) + \bigo(n^{-2}) \cdot (\tr(\mbb{E}[\hat{\Delta}^{\otimes 2}]^2) - \mbb{E}[\bar{X}]^2).
    \end{equation}
    In particular, when $\mc{A}$ is unbiased, i.e., $\mbb{E}[\hat{\rho}] = \rho$, we have $\mbb{E}[\bar{X}] = \tr(\Delta^2) = \|\rho - \sigma\|_2^2$ and
    \begin{equation}
    \label{eq:closeness-variance-generic-2}
        \mbb{V}[\bar{X}] = \bigo(n^{-1}) \cdot (\tr(\mbb{E}[\hat{\Delta}^{\otimes 2}] \cdot \Delta^{\otimes 2}) - \tr(\Delta^2)^2) + \bigo(n^{-2}) \cdot (\tr(\mbb{E}[\hat{\Delta}^{\otimes 2}]^2) - \tr(\Delta^2)^2).
    \end{equation}    
\end{lemma}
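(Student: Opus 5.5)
This is a U-statistic of order two in the i.i.d.\ matrix-valued samples $\hat{\Delta}_1,\dots,\hat{\Delta}_n$. The plan is to expand its variance over pairs of pairs and sort the terms by how many indices they share.

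\textbf{Mean.} For $i \neq j$, the estimates $\hat{\Delta}_i$ and $\hat{\Delta}_j$ are built from disjoint batches of copies with independent internal randomness. Hence $\mbb{E}[X_{i,j}] = \tr(\mbb{E}[\hat{\Delta}_i]\mbb{E}[\hat{\Delta}_j]) = \tr(\mbb{E}[\hat{\Delta}]^2)$, and averaging gives $\mbb{E}[\bar{X}] = \tr(\mbb{E}[\hat{\Delta}]^2)$. When $\mc{A}$ is unbiased, $\mbb{E}[\hat{\Delta}] = \rho - \sigma = \Delta$, which gives $\mbb{E}[\bar X] = \|\rho-\sigma\|_2^2$.

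\textbf{Covariance expansion.} Write
\[
\mbb{V}[\bar{X}] = \binom{n}{2}^{-2}\sum_{i<j}\sum_{k<l}\mathrm{Cov}(X_{i,j},X_{k,l}).
\]
There are three cases for a pair of pairs.
\begin{itemize}
\item If $\{i,j\}\cap\{k,l\} = \emptyset$, the covariance vanishes by independence.
\item If the pairs share exactly one index, say $j=l$ and $i\neq k$, there are $O(n^3)$ such ordered configurations. Using $\tr(AB)\tr(CD) = \tr((A\otimes C)(B\otimes D))$,
\[
\mbb{E}[X_{i,j}X_{k,j}] = \mbb{E}[\tr((\hat{\Delta}_i\otimes\hat{\Delta}_k)(\hat{\Delta}_j\otimes\hat{\Delta}_j))] = \tr(\mbb{E}[\hat{\Delta}]^{\otimes 2}\,\mbb{E}[\hat{\Delta}^{\otimes 2}]),
\]
where independence of $i,k,j$ lets us take expectations factorwise. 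So each such covariance equals $\tr(\mbb{E}[\hat{\Delta}^{\otimes 2}]\,\mbb{E}[\hat{\Delta}]^{\otimes 2}) - \mbb{E}[\bar X]^2$.
\item If $\{i,j\} = \{k,l\}$, there are $O(n^2)$ configurations. Here
\[
\mbb{E}[X_{i,j}^2] = \tr(\mbb{E}[\hat{\Delta}_i^{\otimes 2}]\,\mbb{E}[\hat{\Delta}_j^{\otimes 2}]) = \tr(\mbb{E}[\hat{\Delta}^{\otimes 2}]^2),
\]
so the covariance is $\tr(\mbb{E}[\hat{\Delta}^{\otimes 2}]^2) - \mbb{E}[\bar X]^2$.
\end{itemize}

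\textbf{Normalization.} Dividing by $\binom{n}{2}^2 = \Theta(n^4)$ gives coefficients $O(n^{-1})$ and $O(n^{-2})$ respectively, which is \eqref{eq:closeness-variance-generic-1}. Substituting $\mbb{E}[\hat{\Delta}] = \Delta$ in the unbiased case gives \eqref{eq:closeness-variance-generic-2}.

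The only step needing care is the trace identity. Each $\hat{\Delta}$ is Hermitian (a difference of Hermitian estimates), so $\tr(AB)$ is real and the products of traces combine into a single trace of tensor products without conjugation issues. The middle case can also be arranged in either order because $\tr$ is cyclic and $\swap$-invariant. No step is a genuine obstacle.
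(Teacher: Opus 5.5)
Your proposal is correct and follows the paper's proof: the same three-way case split over pairs of index pairs (disjoint, one shared index, identical), with the same factorwise-expectation computation giving $\tr(\mbb{E}[\hat{\Delta}^{\otimes 2}]\,\mbb{E}[\hat{\Delta}]^{\otimes 2})$ and $\tr(\mbb{E}[\hat{\Delta}^{\otimes 2}]^2)$. The only cosmetic difference is that you sum covariances, so the disjoint terms vanish immediately, whereas the paper expands $\mbb{E}[\bar{X}^2]$ and uses $\binom{n}{2}^2 = 6\binom{n}{4} + 6\binom{n}{3} + \binom{n}{2}$ to subtract $\mbb{E}[\bar{X}]^2$.
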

\begin{proof}
    Note that for each pair $i,j$, $\mbb{E}[X_{i,j}] = \tr(\mbb{E}[\hat{\Delta}_i]\mbb{E}[\hat{\Delta}_j]) = \tr(\mbb{E}[\hat{\Delta}]^2)$, as $\hat{\Delta}_i,\hat{\Delta}_j$ are independent. By linearity, we also have $\mbb{E}[\bar{X}] = \tr(\mbb{E}[\hat{\Delta}]^2)$. We will now prove the variance bound, \Cref{eq:closeness-variance-generic-1}.

    In order to bound the second moment of $\bar{X}$, we write 
    \begin{equation}
      \binom{n}{2}^2 \bar{X}^2 = (\sum_{i < j} X_{i,j})^2 = \sum_{i < j, k < l} X_{i,j}X_{k,l}.  
    \end{equation}
    Note that for each term in the above summation, $X_{i,j}$ and $X_{k,l}$ are correlated random variables whenever they have an index or two in common. To bound the expectation of their products, we thus split the above sum into 3 different cases. Our casework mirrors that of the analysis of the classical collision-based uniformity tester in \cite[Section 2.1.2]{canonne2022topics}.

    \paragraph{Case 1:} First, we consider the terms where $(i,j)$ and $(k,l)$ have no indices in common. There are $\binom{n}{2}\binom{n-2}{2} = 6 \binom{n}{4}$ such terms, and each is a product of independent random variables. Thus, we have
    \begin{equation} 
        \mbb{E}[X_{i,j}X_{k,l}] = \mbb{E}[X_{i,j}] \mbb{E}[X_{k,l}] = \mbb{E}[\bar{X}]^2,  \label{eq:generic-variance-case1}.
    \end{equation}
    
    \paragraph{Case 2:} Next, consider the case when there is one common index; without loss of generality, say we wish to bound $\mbb{E}[X_{i,j}X_{i,k}]$ for distinct $i,j,k$. Further, note that there are $6 \binom{n}{3}$ such terms in the expansion of $\bar{X}^2$.
    Now, we have
    \begin{align}
        \mbb{E}[X_{i,j}X_{i,k}] &= \mbb{E}[\tr(\hat{\Delta}_i \hat{\Delta}_j) \tr(\hat{\Delta}_i \hat{\Delta}_k)]
        \\&= \mbb{E}[\tr(\hat{\Delta}_i \otimes \hat{\Delta}_i \cdot \hat{\Delta}_j \otimes \hat{\Delta}_k)]
        \\&= \tr(\mbb{E}[\hat{\Delta}_i \otimes \hat{\Delta}_i] \cdot \mbb{E}[\hat{\Delta}_j] \otimes \mbb{E}[\hat{\Delta}_k])
        \\&= \tr(\mbb{E}[\hat{\Delta} \otimes \hat{\Delta}] \cdot \mbb{E}[\hat{\Delta}] \otimes \mbb{E}[\hat{\Delta}]).
        \label{eq:generic-variance-case2}
    \end{align}

    \paragraph{Case 3:} Finally, we consider the case when the indices are identical, i.e., we wish to bound terms of the form $\mbb{E}[X_{i,j}^2]$ for distinct $i,j$. Note that there are $\binom{n}{2}$ such terms.
    \begin{align}
        \mbb{E}[X_{i,j}^2] &= \mbb{E}[\tr(\hat{\Delta}_i \hat{\Delta}_j) \tr(\hat{\Delta}_i \hat{\Delta}_j)]
        \\&= \mbb{E}[\tr(\hat{\Delta}_i \otimes \hat{\Delta}_i \cdot \hat{\Delta}_j \otimes \hat{\Delta}_j)]
        \\&= \tr(\mbb{E}[\hat{\Delta}_i \otimes \hat{\Delta}_i] \cdot \mbb{E}[\hat{\Delta}_j \otimes \hat{\Delta}_j])
        \\&= \tr(\mbb{E}[\hat{\Delta} \otimes \hat{\Delta}]^2).
        \label{eq:generic-variance-case3}
    \end{align}

    \paragraph{Putting things together:} We can finally bound the overall variance of $\bar{X}$ using \Cref{eq:generic-variance-case1,eq:generic-variance-case2,eq:generic-variance-case3}.
    \begin{align}
        \mbb{V}[\bar{X}] &= \mbb{E}[\bar{X}^2] - \mbb{E}[\bar{X}]^2
        \\&= \frac1{\binom{n}{2}^2} \left(6 \binom{n}{4} \cdot \mbb{E}[\bar{X}]^2 + 6\binom{n}{3} \cdot \tr(\mbb{E}[\hat{\Delta} \otimes \hat{\Delta}] \cdot \mbb{E}[\hat{\Delta}] \otimes \mbb{E}[\hat{\Delta}]) + \binom{n}{2} \cdot \tr(\mbb{E}[\hat{\Delta} \otimes \hat{\Delta}]^2)\right) - \mbb{E}[\bar{X}]^2
        \\&= \frac{6 \binom{n}{3}}{\binom{n}{2}^2} \cdot (\tr(\mbb{E}[\hat{\Delta} \otimes \hat{\Delta}] \cdot \mbb{E}[\hat{\Delta}] \otimes \mbb{E}[\hat{\Delta}]) - \mbb{E}[\bar{X}]^2) + \frac{1}{\binom{n}{2}}(\tr(\mbb{E}[\hat{\Delta} \otimes \hat{\Delta}]^2) - \mbb{E}[\bar{X}]^2)
    \end{align}
    where we used $\binom{n}{2}^2 = 6\binom{n}{3} + 6\binom{n}{4} + \binom{n}{2}$ in the final step, proving \Cref{eq:closeness-variance-generic-1}. \Cref{eq:closeness-variance-generic-2} can be proven in the unbiased case by subsituting $\mbb{E}[\hat{\Delta}] = \Delta$.
\end{proof}

\subsubsection{Warmup: closeness testing with the uniform POVM}
\label{sec:closeness-testing-uniform-povm}

In this section, we will instantiate the framework of \Cref{alg:closeness-l2-estimation} with the uniform POVM, and recover the $\bigo(d^2/\eps^2)$ upper bound for quantum closeness testing with fixed single-copy measurements. Using \Cref{fact:hs-to-trace-norm-testing}, it suffices to show the following theorem:

\begin{theorem}
\label{thm:closeness-testing-uniform-povm}
    Let $d \geq 2, \eps > 0$. Let $\rho,\sigma \in \mbb{C}^{d \times d}$ be quantum states. Then, one can test whether $\rho = \sigma$ or $\|\rho - \sigma\|_2 \geq \eps$ using fixed, unentangled measurements on $\bigo(d/\eps^2)$ copies of $\rho,\sigma$ each. 
\end{theorem}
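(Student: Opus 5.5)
We instantiate \Cref{alg:closeness-l2-estimation} with $t = 1$ and $\mc{A}$ the uniform POVM: each copy of $\rho$ (resp.\ $\sigma$) is measured with the uniform POVM, producing $\ket{\psi}$, and we set $\hat{\rho} = \ket{\psi}\bra{\psi}$ (resp.\ $\hat{\sigma}$). These measurements are fixed and unentangled. By \Cref{lem:uniform-povm-first-moment}, $\mbb{E}[\hat{\Delta}] = \frac{\rho - \sigma}{d+1} = \frac{\Delta}{d+1}$, since the $I$ terms cancel. So by \Cref{lem:hs-estimation-variance-generic}, $\mbb{E}[\bar{X}] = \tr(\Delta^2)/(d+1)^2$. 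The test outputs ``far'' iff $\bar{X} \geq \frac{\eps^2}{2(d+1)^2}$. By Chebyshev it then suffices to show $\mbb{V}[\bar{X}] \ll (\eps^2 + \tr(\Delta^2))^2/(d+1)^4$ once $n = C d/\eps^2$. Confidence $1-\delta$ then follows from a median of $\bigo(\log(1/\delta))$ repetitions.

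\textbf{Second moment.} Write $\mbb{E}[\hat{\Delta}^{\otimes 2}] = \mbb{E}[\hat{\rho}^{\otimes 2}] + \mbb{E}[\hat{\sigma}^{\otimes 2}] - \mbb{E}[\hat{\rho}]\otimes\mbb{E}[\hat{\sigma}] - \mbb{E}[\hat{\sigma}]\otimes\mbb{E}[\hat{\rho}]$. Substituting \Cref{lem:uniform-povm-second-moment,lem:uniform-povm-first-moment}, the state-independent pieces $I^{\otimes 2}$ and the $\swap$ terms add (they do not cancel). Setting $c = \frac{1}{(d+1)(d+2)}$ and $c' = \frac{1}{(d+1)^2}$, we get
\begin{equation}
\mbb{E}[\hat{\Delta}^{\otimes 2}] = c\bigl(2I^{\otimes 2} + 2\swap + (S\otimes I + I\otimes S)(I^{\otimes 2}+\swap)\bigr) - c'\bigl((\rho+I)\otimes(\sigma+I) + (\sigma+I)\otimes(\rho+I)\bigr),
\end{equation}
where $S = \rho+\sigma$. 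The $I^{\otimes 2}$ and $S\otimes I$ coefficients combine into differences $2c - 2c' = -\Theta(d^{-3})$, which are small.

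\textbf{Case-2 term.} We bound $\tr(\mbb{E}[\hat{\Delta}^{\otimes 2}]\,\Delta^{\otimes 2})/(d+1)^2$. Any term containing $I$ on one tensor factor pairs with $\tr(\Delta) = 0$ and vanishes. The surviving terms are $\swap$, giving $2c\tr(\Delta^2)$; $(S\otimes I)\swap$, giving $c\tr(S\Delta^2) \leq 2c\,\tr(\Delta^2)$; and $\rho\otimes\sigma$-type terms, giving $c'\tr(\rho\Delta)\tr(\sigma\Delta) \lesssim c'\tr(\Delta^2)$. Hence this contribution is $\bigo(n^{-1}d^{-4}\tr(\Delta^2))$. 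Against the signal squared $\tr(\Delta^2)^2/d^4$, it is negligible once $n \gg 1/\tr(\Delta^2) \geq$ (anything) $\cdot\, 1/\eps^2$, which is fine.

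\textbf{Case-3 term (main obstacle).} The dominant contribution is $\tr(\mbb{E}[\hat{\Delta}^{\otimes 2}]^2)/n^2$. The naive estimate $\tr((cI^{\otimes 2})^2) \sim d^2/d^4$ is harmless, but the constants must be handled carefully because they do not cancel. We expand the square term by term using $\tr(\swap^2) = d^2$, $\tr(\swap) = d$, $\tr((A\otimes I)\swap) = \tr(A)$ and $\tr((A\otimes I)\swap(B\otimes I)\swap) = \tr(A)\tr(B)$, together with $\|S\|_2 \leq 2$ and $\tr(S) = 2$. The leading contributions are $c^2 \cdot \bigo(d^2)$ from the $I^{\otimes2}$ and $\swap$ terms, and $c^2 \cdot \bigo(d)$ from the $S$ terms. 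The $c'$ pieces contribute at most $c'^2\|(\rho+I)\otimes(\sigma+I)\|_2^2 = \bigo(d^2/d^4)$. Altogether $\tr(\mbb{E}[\hat{\Delta}^{\otimes 2}]^2) = \bigo(d^{-2})$, so the Case-3 variance is $\bigo(1/(n^2d^2))$. Requiring $1/(n^2 d^2) \ll \eps^4/d^4$ gives $n \gtrsim d/\eps^2$, as claimed. The step I expect to need the most care is verifying that no cross term produces a larger power of $d$, for instance $\tr(I^{\otimes2}\cdot I^{\otimes2}) = d^2$ multiplied by coefficient differences. Because every coefficient is $\bigo(d^{-2})$, each cross term is $\bigo(d^2\cdot d^{-4})$, which settles this.
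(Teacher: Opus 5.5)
Your proposal is correct and is essentially the paper's proof. It uses the same uniform-POVM instantiation of the HilbertSchmidt estimator, the same threshold $\frac{\eps^2}{2(d+1)^2}$, and the same variance bound $\bigo\bigl(\frac{\tr(\Delta^2)}{nd^4} + \frac{1}{n^2d^2}\bigr)$ fed into Chebyshev.

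The only deviation is in bounding $\tr(\mbb{E}[\hat{\Delta}^{\otimes 2}]^2)$. You use a crude Hilbert--Schmidt triangle inequality (every coefficient is $\bigo(d^{-2})$ and every operator has $\|\cdot\|_2 = \bigo(d)$) and simply drop the $-\mbb{E}[\bar{X}]^2$ subtraction, where the paper uses positivity to discard negative terms. This is valid and somewhat simpler, since $\tr(\Delta^2) \le 2$ makes the $\mbb{E}[\bar{X}]^2$ contributions harmless. One small fix: your Case-2 remark should read $1/\tr(\Delta^2) \le 1/\eps^2$, not $\geq$.
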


Our proof will use the following variance bound:

\begin{lemma}[Variance of $\mathrm{HilbertSchmidt}(\mc{A}_{\mrm{unif}}, \sigma, \rho)$]
\label{lem:closeness-variance-uniform-povm}
Let $\bar{X}$ be the output of $\mathrm{HilbertSchmidt}(\mc{A}_{\mrm{unif}}, \sigma, \rho)$. Let $\Delta \triangleq \rho - \sigma$. Then, 
\begin{equation}
    \mbb{V}[\bar{X}] = \bigo\left(\frac{\tr(\Delta^2)}{nd^4} + \frac{1}{n^2d^2}\right).
\end{equation}
\end{lemma}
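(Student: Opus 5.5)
We plan to apply \Cref{lem:hs-estimation-variance-generic} directly, with $\hat{\rho}=\ket{\psi}\bra{\psi}$ the raw (biased) output of the uniform POVM. The first step is to compute the first moment. By \Cref{lem:uniform-povm-first-moment}, $\mbb{E}[\hat{\rho}]=\frac{\rho+I}{d+1}$ and $\mbb{E}[\hat{\sigma}]=\frac{\sigma+I}{d+1}$, so $\mbb{E}[\hat{\Delta}]=\frac{\Delta}{d+1}$. Consequently $\mbb{E}[\bar X]=\frac{\tr(\Delta^2)}{(d+1)^2}$.

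The second step is to compute the second moment $\mbb{E}[\hat{\Delta}^{\otimes 2}]=\mbb{E}[\hat{\rho}^{\otimes2}]+\mbb{E}[\hat{\sigma}^{\otimes2}]-\mbb{E}[\hat\rho]\otimes\mbb{E}[\hat\sigma]-\mbb{E}[\hat\sigma]\otimes\mbb{E}[\hat\rho]$, using \Cref{lem:uniform-povm-second-moment}. Write $c=\frac{1}{(d+1)(d+2)}$ and $c'=\frac{1}{(d+1)^2}$. Then
\begin{equation}
\mbb{E}[\hat{\Delta}^{\otimes 2}] = c\bigl(2I^{\otimes2}+2\swap+(S\otimes I+I\otimes S)(I^{\otimes 2}+\swap)\bigr) - c'\bigl((\rho+I)\otimes(\sigma+I)+(\sigma+I)\otimes(\rho+I)\bigr),
\end{equation}
where $S=\rho+\sigma$.

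\textbf{Case $n^{-1}$ term.} We contract $\mbb{E}[\hat\Delta^{\otimes2}]$ against $\mbb{E}[\hat\Delta]^{\otimes 2}=\frac{\Delta\otimes\Delta}{(d+1)^2}$. Since $\tr\Delta=0$, every term containing an identity factor on a tensor slot paired with $\Delta$ vanishes. The surviving contributions are as follows.
\begin{itemize}
\item The swap term gives $2c\,\tr(\Delta^2)$.
\item The $(S\otimes I)\swap$ terms give $2c\,\tr(S\Delta^2)\le 2c\,\|S\|_\infty\tr(\Delta^2)\le 4c\,\tr(\Delta^2)$.
\item The product terms give $-2c'\,\tr(\rho\Delta)\tr(\sigma\Delta)$.
\end{itemize}
The last of these is bounded in absolute value by $2c'\tr(\Delta^2)$ via Cauchy--Schwarz and $\|\rho\|_2,\|\sigma\|_2\le1$. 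Multiplying by the prefactor $(d+1)^{-2}$, the whole first bracket is $\bigo(\tr(\Delta^2)/d^4)$. The subtracted $\mbb{E}[\bar X]^2$ is nonnegative, so we can simply drop it, giving the first term of the claimed bound.

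\textbf{Case $n^{-2}$ term.} We need $\tr(\mbb{E}[\hat\Delta^{\otimes2}]^2)=\|\mbb{E}[\hat\Delta^{\otimes2}]\|_2^2$. Here the identity terms no longer cancel for free, and this is the main obstacle: individually, $\|c\,I^{\otimes2}\|_2^2\sim d^2/d^4$, which is already of the right order $1/d^2$, but the cross terms must not blow up. The cleanest route is to use $\|A+B\|_2^2\le 2\|A\|_2^2+2\|B\|_2^2$ term by term.
\begin{itemize}
\item $\|I^{\otimes 2}\|_2=\|\swap\|_2=d$.
\item $\|(S\otimes I)\|_2\le 2\sqrt d$, and the same holds with $\swap$ multiplied on, since it is unitary.
\item $\|(\rho+I)\otimes(\sigma+I)\|_2=\|\rho+I\|_2\|\sigma+I\|_2\le(1+\sqrt d)^2=\bigo(d)$.
\end{itemize}
Each norm is $\bigo(d)$ and each coefficient is $\bigo(d^{-2})$, so $\|\mbb{E}[\hat\Delta^{\otimes2}]\|_2^2=\bigo(d^2\cdot d^{-4})=\bigo(d^{-2})$. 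Note that the identity parts of the two groups, $2cI^{\otimes2}-2c'I^{\otimes 2}$, nearly cancel anyway, but we do not need this cancellation. Dropping $-\mbb{E}[\bar X]^2$ again and inserting into \Cref{eq:closeness-variance-generic-1} yields $\mbb{V}[\bar X]=\bigo\bigl(\frac{\tr(\Delta^2)}{nd^4}+\frac{1}{n^2d^2}\bigr)$.
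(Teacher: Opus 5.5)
Your proof is correct. It starts from the same ingredients as the paper, namely \Cref{lem:hs-estimation-variance-generic} and the same explicit formula for $\mbb{E}[\hat\Delta^{\otimes 2}]$, but it bounds the two brackets more crudely.

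The paper rewrites $-(\rho\otimes\sigma+\sigma\otimes\rho)$ as $\Delta^{\otimes 2}-\rho^{\otimes 2}-\sigma^{\otimes 2}$. It then uses positivity twice: $\mbb{E}[\hat\Delta^{\otimes 2}]$ has nonnegative inner product with $\varrho^{\otimes 2}$ and $(I+\varrho)^{\otimes 2}$, so the negative pieces can be discarded. This yields bounds of the form $\mbb{E}[\bar X]^2+\bigo(\cdot)$, whose leading term cancels the subtracted $\mbb{E}[\bar X]^2$.

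You handle the first bracket by bounding the cross term $-2c'\tr(\rho\Delta)\tr(\sigma\Delta)$ in absolute value via Cauchy--Schwarz and dropping $-\mbb{E}[\bar X]^2$. For the second bracket you bound $\|\mbb{E}[\hat\Delta^{\otimes 2}]\|_2$ by the triangle inequality over eight terms, each of Frobenius norm $\bigo(1/d)$. No cancellation is needed because the uniform POVM shrinks the signal by a factor $1/(d+1)$, so even the $\tr(\Delta^2)^2/(d+1)^4$ contribution is $\bigo(d^{-4})$.

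Your route is shorter and avoids verifying any positivity claims. The paper's route is the template it reuses in \Cref{lem:tcopy-closeness-testing-variance}. There, the second moment contains the unspecified term $\mathrm{Lower}_\rho\in\mathrm{SoS}(d)$, which can only be controlled through its sign, so a direct Frobenius-norm bound on $\mbb{E}[\hat\Delta^{\otimes 2}]$ like yours is not available.
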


Let us first prove \Cref{thm:closeness-testing-uniform-povm} using \Cref{lem:closeness-variance-uniform-povm}.

\begin{proof}[Proof of \Cref{thm:closeness-testing-uniform-povm}]
     Let $\bar{X}$ be the output of $\mathrm{HilbertSchmidt}(\mc{A}_{\mrm{unif}}, \sigma,\rho)$. By \Cref{lem:uniform-povm-first-moment}, $\mbb{E}[\hat{\rho}] = \frac{I}{d+1} + \frac{\rho}{d+1}$. Then, using \Cref{lem:hs-estimation-variance-generic},
    \begin{equation}
         \mbb{E}[\bar{X}] = \tr\left(\left(\frac{I}{d+1} + \frac{\rho}{d+1} - \frac{I}{d+1} - \frac{\sigma}{d+1}\right)^2\right) = \frac{\|\rho-\sigma\|_2^2}{(d+1)^2}.
    \end{equation}
    Thus, when $\rho = \sigma$, $\mbb{E}[\bar{X}] = 0$, and when $\|\rho - \sigma\|_2 \geq \eps$, $\mbb{E}[\bar{X}] \geq \eps^2/(d+1)^2$.
     
    Our tester will thus compare $\bar{X}$ to the threshold $\frac{\eps^2}{2(d+1)^2}$. First, consider the case when $\rho = \sigma$. We bound the probability of an incorrect output by Chebyshev's inequality:
    \begin{equation}
       \mathrm{Pr}\left[\bar{X} \geq \frac{\eps^2}{2(d+1)^2}\right] \leq  \bigo\left(\frac{1}{n^2d^2}\right) \cdot \frac{4(d+1)^4}{\eps^4} \leq 1/3,
    \end{equation}
    where we used \Cref{lem:closeness-variance-uniform-povm} with $\Delta = 0$ in the first inequality, and the last inequality holds for $n \geq Cd/\eps^2$, with $C$ a sufficiently large constant.

    On the other hand, when $\|\rho - \sigma\|_2 \geq \eps$, let $\Delta \triangleq \rho - \sigma$. Then, we have

    \begin{align}
        \mathrm{Pr}\left[\bar{X} \leq \frac{\eps^2}{2(d+1)^2}\right] &\leq \mathrm{Pr}\left[\bar{X} \leq \frac{\tr(\Delta^2)}{2(d+1)^2}\right]  \\&\leq \bigo\left(\frac{\tr(\Delta^2)}{nd^4} + \frac{1}{n^2d^2}\right) \cdot \frac{4(d+1)^4}{\tr(\Delta^2)^2} 
        \\&\leq \bigo\left(\frac{1}{n\cdot\tr(\Delta^2)} + \frac{d^2}{n^2\cdot\tr(\Delta^2)^2}\right)
        \\&\leq \bigo\left(\frac{1}{n\eps^2} + \frac{d^2}{n^2\eps^4}\right),
    \end{align}
    where the second inequality used \Cref{lem:closeness-variance-uniform-povm}. The above probability is at most $1/3$ when
    \begin{equation}
        n \geq C \cdot \max \left\{\frac{1}{\eps^2},\frac{d}{\eps^2}\right\} = \frac{Cd}{\eps^2},
    \end{equation}
    for a sufficiently large constant $C$. 

    Thus, it suffices to take $n \leq \bigo(d/\eps^2)$ copies to succeed with probability at least $2/3$ in either case.
\end{proof}

\begin{proof}[Proof of \Cref{lem:closeness-variance-uniform-povm}]
    We will prove this variance bound using \Cref{lem:hs-estimation-variance-generic}. We start by noting that $\mbb{E}[\hat{\Delta}] = \mbb{E}[\hat{\rho}] - \mbb{E}[\hat{\sigma}] = \frac{\Delta}{d+1}$, where we used \Cref{lem:uniform-povm-first-moment}. Now, using \Cref{lem:uniform-povm-first-moment,lem:uniform-povm-second-moment}, let us explicitly state the second moment of $\hat{\Delta}$.

    \begin{align}
        \mathbb{E}[\hat{\Delta}^{\otimes 2}] &= \mathbb{E}[\hat{\rho}^{\otimes 2}] + \mathbb{E}[\hat{\sigma}^{\otimes 2}] - \mathbb{E}[\hat{\rho} \otimes \hat{\sigma}] - \mathbb{E}[\hat{\sigma} \otimes \hat{\rho}]
        \\&= \frac{2\cdot I^{\otimes 2} + 2 \cdot \swap + ((\rho+\sigma) \otimes I + I \otimes (\rho + \sigma) ) \cdot (I^{\otimes 2} + \swap)}{(d+1)(d+2)}\nonumber
        \\&\hspace{1em}- \frac{(I + \rho)\otimes(I + \sigma) + (I + \sigma)\otimes(I + \rho)}{(d+1)^2}
        \\&= \frac{2\cdot \swap + ((\rho+\sigma) \otimes I + I \otimes (\rho + \sigma) ) \cdot \swap}{(d+1)(d+2)} - \frac{\rho \otimes \sigma + \sigma \otimes \rho}{(d+1)^2}\nonumber
        \\&\hspace{1em} - \frac{2 \cdot I^{\otimes 2} + (\rho+\sigma) \otimes I + I \otimes (\rho + \sigma) }{(d+1)^2(d+2)}
        \\&= \frac{2\cdot \swap + ((\rho+\sigma) \otimes I + I \otimes (\rho + \sigma) ) \cdot \swap}{(d+1)(d+2)} + \frac{\Delta^{\otimes 2} - \rho^{\otimes 2} - \sigma^{\otimes 2}}{(d+1)^2} \nonumber
        \\&\hspace{1em}  - \frac{(I + \rho)^{\otimes 2} + (I+\sigma)^{\otimes 2} - \rho^{\otimes 2} - \sigma^{\otimes 2}}{(d+1)^2(d+2)}
        \\&= \frac{2\cdot \swap + ((\rho+\sigma) \otimes I + I \otimes (\rho + \sigma) ) \cdot \swap - \rho^{\otimes 2} - \sigma^{\otimes 2}}{(d+1)(d+2)}\\\nonumber&\hspace{1em} + \frac{\Delta^{\otimes 2}}{(d+1)^2} - \frac{(I+ \rho)^{\otimes 2} + (I + \sigma)^{\otimes 2}}{(d+1)^2(d+2)}.
        \label{eq:closeness-uniform-povm-variance-1}
    \end{align}
    Now, keeping \Cref{lem:hs-estimation-variance-generic} in mind, we will first upper bound $\tr(\mbb{E}[\hat{\Delta}^{\otimes 2}] \cdot \mbb{E}[\hat{\Delta}]^{\otimes 2})$.
    \begin{align}
        \tr(\mbb{E}[\hat{\Delta}^{\otimes 2}] \cdot \mbb{E}[\hat{\Delta}]^{\otimes 2}) &= \frac{2\tr(\Delta^2) + 2\tr(\Delta^2(\rho + \sigma))}{(d+1)^3(d+2)} + \frac{\tr(\Delta^2)^2}{(d+1)^4} - \frac{\tr(\Delta\rho)^2 + \tr(\Delta\sigma)^2}{(d+1)^2}
        \\&\leq \frac{\tr(\Delta^2)^2}{(d+1)^4} + \frac{6\tr(\Delta^2)}{(d+1)^4}
        \\&= \mbb{E}[\bar{X}]^2 + \bigo\left(\frac{\tr(\Delta^2)}{d^4}\right),\label{eq:closeness-uniform-povm-variance-2}
    \end{align}
    where we used the fact that $\Delta$ is traceless in the first equality and that $\rho + \sigma \preceq 2\cdot I$ in the inequality.

    Next, we will aim to upper bound $\tr(\mbb{E}[\hat{\Delta}^{\otimes 2}]^2)$. First, note that for any state $\varrho$, $\tr(\mbb{E}[\hat{\Delta}^{\otimes 2}] \cdot (I + \varrho)^{\otimes 2})$ and $\tr(\mbb{E}[\hat{\Delta}^{\otimes 2}] \varrho^{\otimes 2})$ are both non-negative, and so, using \Cref{eq:closeness-uniform-povm-variance-1}, we have
    \begin{equation}
        \tr(\mbb{E}[\hat{\Delta}^{\otimes 2}]^2) \leq \left\langle \mbb{E}[\hat{\Delta}^{\otimes 2}], \frac{2\cdot \swap + ((\rho + \sigma) \otimes I + I \otimes (\rho + \sigma))\cdot \swap}{(d+1)(d+2)} + \frac{\Delta^{\otimes 2}}{(d+1)^2}\right\rangle.
    \end{equation}
    Now, note that for any state $\varrho$, each term in the RHS of the inner product above has a non-negative inner product with both $(I+\varrho)^{\otimes 2}$ and $\varrho^{\otimes 2}$. Thus, using~\eqref{eq:closeness-uniform-povm-variance-1} again, we have
    \begin{align}
        \tr(\mbb{E}[\hat{\Delta}^{\otimes 2}]^2) &\leq \tr\left(\left(\frac{2\cdot \swap + ((\rho + \sigma) \otimes I + I \otimes (\rho + \sigma))\cdot \swap}{(d+1)(d+2)} + \frac{\Delta^{\otimes 2}}{(d+1)^2}\right)^2\right)
        \\&= \frac{\tr(\Delta^2)^2}{(d+1)^4} + \frac{4\tr(\Delta^2) + 4\tr(\Delta^2(\rho + \sigma))}{(d+1)^3(d+2)} + \frac{\tr((2\cdot \swap + ((\rho + \sigma) \otimes I + I \otimes (\rho + \sigma))\cdot \swap)^2)}{(d+1)^2(d+2)^2}
        \\&\leq \frac{\tr(\Delta^2)^2}{(d+1)^4} + \frac{12\tr(\Delta^2)}{(d+1)^3(d+2)} + \frac{4d^2 + 4d \cdot \tr(\rho + \sigma) + \tr(((\rho+\sigma) \otimes I + I \otimes (\rho + \sigma))^2)}{(d+1)^2(d+2)^2}
        \\&\leq \mathbb{E}[\bar{X}]^2 + \bigo\left(\frac{\tr(\Delta^2)}{d^4} + \frac{1}{d^2}\right), \label{eq:closeness-uniform-povm-variance-3}
    \end{align}
    where we used $\rho + \sigma \preceq 2\cdot I$ throughout. Now, using \Cref{lem:hs-estimation-variance-generic,eq:closeness-uniform-povm-variance-2,eq:closeness-uniform-povm-variance-3}, we get
    \begin{equation}
        \mbb{V}[\bar{X}] = \bigo\left(\frac{\tr(\Delta^2)}{nd^4} + \frac{\tr(\Delta^2)}{n^2d^4} + \frac{1}{n^2d^2}\right) = \bigo\left(\frac{\tr(\Delta^2)}{nd^4} + \frac{1}{n^2d^2}\right),
    \end{equation}
    as claimed.  
\end{proof}

\subsubsection{Proof of \Cref{thm:closeness-l2-estimation}}
\label{sec:closeness-HS-upper-proof}

Now, we prove \Cref{thm:closeness-l2-estimation} by instantiating \Cref{alg:closeness-l2-estimation} with the unbiased estimator in \Cref{alg:quasi-purification-estimation}. First, we state our upper bound on the variance of this algorithm.

\begin{lemma}
\label{lem:tcopy-closeness-testing-variance}
    Let $\bar{X}$ be the output of \Cref{alg:closeness-l2-estimation} instantiated with \Cref{alg:quasi-purification-estimation} when applied to unknown states $\rho, \sigma$. Let $\Delta = \rho - \sigma$. Then,
    \begin{equation}
        \mbb{V}[\bar{X}] = \bigo\left(
            \frac{\tr(\Delta^2)}{nt} + \frac{1}{n^2t^2} + \frac{d^2 \cdot \min\{d^2,t\}}{n^2t^4} + \frac{d \cdot \tr(\rho\sigma)}{n^2t^2} + \frac{d \cdot \tr(\Delta^2)}{n^2t^2}
        \right).
    \end{equation}
\end{lemma}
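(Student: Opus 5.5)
We will apply \Cref{lem:hs-estimation-variance-generic} in the unbiased case, since \Cref{alg:quasi-purification-estimation} satisfies $\mbb{E}[\hat{\rho}] = \rho$. Write $M \triangleq \mbb{E}[\hat{\Delta}^{\otimes 2}] = \mbb{E}[\hat{\rho}^{\otimes 2}] + \mbb{E}[\hat{\sigma}^{\otimes 2}] - \rho\otimes\sigma - \sigma\otimes\rho$. Substituting \Cref{thm:quasi-purification-truncated-second-moment} for both $\rho$ and $\sigma$ gives
\begin{equation}
    M = \Delta^{\otimes 2} - \frac{1}{t}(\rho^{\otimes 2} + \sigma^{\otimes 2}) + \frac{1}{t}(S\otimes I + I\otimes S)\cdot\swap + \frac{L}{t^2}\swap - \Lower_\rho - \Lower_\sigma,
\end{equation}
where $S \triangleq \rho+\sigma$ and $L \triangleq \mbb{E}_\rho[\ell(\lambda)] + \mbb{E}_\sigma[\ell(\lambda)] \le 2\min\{2\sqrt{t},d\}$ by \Cref{lem:partition-length-upper-bound}.

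\textbf{The $n^{-1}$ term.} We need $\tr(M\Delta^{\otimes 2}) - \tr(\Delta^2)^2$. Every $\mrm{SoS}(d)$ term and every $-\frac1t\varrho^{\otimes 2}$ term has nonnegative inner product with $\Delta^{\otimes 2}$, because $\tr(X\otimes X\cdot\Delta^{\otimes 2}) = \tr(X\Delta)^2 \ge 0$. These terms enter $M$ with a minus sign, so we may drop them for an upper bound. The remaining terms are $\frac{2}{t}\tr(S\Delta^2) \le \frac{4}{t}\tr(\Delta^2)$ and $\frac{L}{t^2}\tr(\Delta^2)$. Since $L \lesssim \sqrt t$, the latter is at most $O(\tr(\Delta^2)/t)$, which gives the first term of the bound.

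\textbf{The $n^{-2}$ term.} We need $\tr(M^2)$. The difficulty is controlling the unspecified $\Lower$ terms. The plan mirrors the uniform-POVM warmup. Split $M = P - N$, where
\begin{equation}
    P = \Delta^{\otimes 2} + \frac1t(S\otimes I+I\otimes S)\swap + \frac{L}{t^2}\swap, \qquad N = \frac1t(\rho^{\otimes 2}+\sigma^{\otimes 2}) + \Lower_\rho + \Lower_\sigma \in \mrm{SoS}(d).
\end{equation}
First, argue $\tr(MN) \ge 0$. Using $M = \mbb{E}[\hat\Delta^{\otimes 2}]$, we have $\tr(M\, X\otimes X) = \mbb{E}[\tr(\hat\Delta X)^2] \ge 0$ for Hermitian $X$, so this holds. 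Hence $\tr(M^2) \le \tr(MP)$.

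Second, show each of the three pieces of $P$ has nonnegative inner product with every $X\otimes X$:
\begin{itemize}
    \item $\tr(\swap\, X\otimes X) = \tr(X^2) \ge 0$;
    \item $\tr((S\otimes I)\swap\, X\otimes X) = \tr(SX^2) \ge 0$;
    \item $\Delta^{\otimes 2}$ gives $\tr(\Delta X)^2 \ge 0$.
\end{itemize}
So $\tr(NP) \ge 0$, and therefore $\tr(M^2) \le \tr(P^2)$.

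Then expand $\tr(P^2)$ with the swap identities $\tr(\swap) = d$, $\tr(\swap^2) = d^2$, $\tr((A\otimes I)\swap) = \tr(A)$, $\tr((A\otimes I)\swap(B\otimes I)\swap) = \tr(A)\tr(B)$, and $\tr((A\otimes I)\swap(I\otimes B)\swap) = d\,\tr(AB)$. This yields:
\begin{itemize}
    \item $\tr(\Delta^2)^2$, which cancels against the subtracted $\tr(\Delta^2)^2$;
    \item cross terms $\frac{2}{t}\cdot 2\tr(S\Delta^2)$ and $\frac{2L}{t^2}\tr(\Delta^2)$, both $O(\tr(\Delta^2)/t)$ and absorbed;
    \item the term $\frac{1}{t^2}\bigl(2\tr(S)^2 + 2d\,\tr(S^2)\bigr) = O\bigl(\frac{1}{t^2} + \frac{d(\tr\rho^2+\tr\sigma^2+2\tr\rho\sigma)}{t^2}\bigr)$;
    \item the cross term $\frac{2L}{t^3}\cdot 2d\,\tr(S) = O(dL/t^3)$;
    \item the term $\frac{L^2 d^2}{t^4}$.
\end{itemize}

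Since $L^2 \lesssim \min\{t,d^2\}$, the last term is exactly $\frac{d^2\min\{d^2,t\}}{t^4}$. The $dL/t^3$ term is dominated: by AM--GM it is at most $\frac{1}{t^2} + \frac{d^2L^2}{t^4}$. For purities, use $\tr\rho^2 + \tr\sigma^2 = \tr(\Delta^2) + 2\tr(\rho\sigma)$, which converts the $d(\cdots)/t^2$ term into $\frac{d\,\tr(\Delta^2) + d\,\tr(\rho\sigma)}{t^2}$. Multiplying by $n^{-2}$ gives all the remaining terms of the lemma.

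The main obstacle is exactly the sign argument for the $\Lower$ terms in $\tr(M^2)$, namely establishing both $\tr(MN) \ge 0$ and $\tr(PN) \ge 0$ for $\mrm{SoS}$ matrices. I expect the expectation representation of $M$ to make both steps clean; the rest is bookkeeping of swap traces.
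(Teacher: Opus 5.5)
Your proposal is correct and follows the paper's proof essentially step for step. It uses the same expansion of $\mbb{E}[\hat{\Delta}^{\otimes 2}]$ and drops the $\mathrm{SoS}$ and $-\frac{1}{t}\varrho^{\otimes 2}$ terms for the $n^{-1}$ contribution. It then reduces $\tr(M^2)$ to $\tr(P^2)$ by the same two-stage sign argument: $\tr(MN)\ge 0$ via the expectation representation, and $\tr(NP)\ge 0$ via the explicit form of $P$. Finally it expands with the swap identities, $\tr\rho^2+\tr\sigma^2=\tr(\Delta^2)+2\tr(\rho\sigma)$, and $L\le \min\{2d,4\sqrt{t}\}$.
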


Before proving the above lemma, we will first use it to prove \Cref{thm:closeness-l2-estimation}.

\begin{proof}[Proof of \Cref{thm:closeness-l2-estimation}]
    We will use \Cref{alg:closeness-l2-estimation} instantiated with the unbiased estimator in \Cref{alg:quasi-purification-estimation} and receive output $\bar{X}$. By \Cref{lem:hs-estimation-variance-generic,}, we have $\mbb{E}[\bar{X}] = \|\rho - \sigma\|_2^2$ and so we will compare $\bar{X}$ to the threshold $\eps^2/2$. By \Cref{lem:tcopy-closeness-testing-variance},  
    \begin{equation}
        \mbb{V}[\bar{X}] = \bigo\left(
            \frac{\tr(\Delta^2)}{nt} + \frac{1}{n^2t^2} + \frac{d^2 \cdot \min\{d^2,t\}}{n^2t^4} + \frac{d \cdot \tr(\rho\sigma)}{n^2t^2} + \frac{d \cdot \tr(\Delta^2)}{n^2t^2}
        \right).        
    \end{equation}
    We will now divide our analysis into two cases, depending on the parameter $t$.

    \paragraph{Case 1 ($t \leq d^2$):} Here, the variance bound simplifies to
    \begin{equation}
        \mbb{V}[\bar{X}] = \bigo\left(
            \frac{\tr(\Delta^2)}{nt} + \frac{1}{n^2t^2} + \frac{d^2}{n^2t^3} + \frac{d \cdot \tr(\rho\sigma)}{n^2t^2} + \frac{d \cdot \tr(\Delta^2)}{n^2t^2}
        \right).
    \end{equation}
    When $\rho = \sigma$, $\tr(\Delta^2) = 0$. Then, using Chebyshev's inequality, we have
    \begin{equation}
        \rho = \sigma \implies \mrm{Pr}[\bar{X} \geq \eps^2/2] \leq \bigo\left(
            \frac{1}{n^2t^2\eps^4} + \frac{d^2}{n^2t^3\eps^4} + \frac{d\cdot \tr(\rho\sigma)}{n^2t^2\eps^4}
        \right).
    \end{equation}
    On the other hand, we have
    \begin{align}
        \|\rho - \sigma\|_2 \geq \eps \implies& \mrm{Pr}[\bar{X} \leq \eps^2/2] \leq \mrm{Pr}[\bar{X} \leq \tr(\Delta^2)/2]
        \\&\leq \bigo\left(
            \frac{1}{nt\cdot\tr(\Delta^2)} + \frac{1}{n^2t^2\cdot\tr(\Delta^2)^2} + \frac{d^2}{n^2t^3\cdot\tr(\Delta^2)^2} + \frac{d \cdot \tr(\rho\sigma)}{n^2t^2\cdot\tr(\Delta^2)^2} + \frac{d}{n^2t^2\cdot\tr(\Delta^2)}
        \right)
        \\&\leq \bigo\left(
            \frac{1}{nt\eps^2} + \frac{1}{n^2t^2\eps^4} + \frac{d^2}{n^2t^3\eps^4} + \frac{d \cdot \tr(\rho\sigma)}{n^2t^2\eps^4} + \frac{d}{n^2t^2\eps^2}
        \right).
    \end{align}
    Thus, in either case, the probability of error is at most
    \begin{equation}
        \bigo\left(
            \frac{1}{nt\eps^2} + \frac{1}{n^2t^2\eps^4} + \frac{d^2}{n^2t^3\eps^4} + \frac{d \cdot \tr(\rho\sigma)}{n^2t^2\eps^4} + \frac{d}{n^2t^2\eps^2}
        \right).
    \end{equation}
    Consequently, to attain success probability at least $2/3$, it suffices to take
    \begin{equation}
        nt \geq C \left(\frac{1}{\eps^2} + \frac{d}{\sqrt{t}\eps^2} + \frac{\sqrt{d \cdot \tr(\rho\sigma)}}{\eps^2} + \frac{\sqrt{d}}{\eps}\right) \label{eq:hs-closeness-thm-1}
    \end{equation}
    for some absolute constant $C > 0$.

    \paragraph{Case 2 ($t \geq d^2$):} Using a similar argument to the previous case, the overall error probability is at most
    \begin{equation}
        \bigo\left(\frac{1}{nt\eps^2} + \frac{1}{n^2t^2\eps^4} + \frac{d^4}{n^2t^4\eps^4} + \frac{d \cdot \tr(\rho\sigma)}{n^2t^2\eps^4} + \frac{d}{n^2t^2\eps^2}\right).
    \end{equation}
    Thus, for an absolute constant $C > 0$, it suffices to take
    \begin{equation}
        nt \geq C \left(
            \frac{1}{\eps^2} + \frac{d^2}{t\eps^2} + \frac{\sqrt{d \cdot \tr(\rho\sigma)}}{\eps^2} + \frac{\sqrt{d}}{\eps}
        \right).
    \end{equation}
    However, here $d^2 \leq t$, so the copy complexity is
    \begin{equation}
        \bigo\left(\frac{1}{\eps^2} + \frac{\sqrt{d \cdot \tr(\rho\sigma)}}{\eps^2} + \frac{\sqrt{d}}{\eps}\right). \label{eq:hs-closeness-thm-2}
    \end{equation}

    Combining both cases, i.e. \Cref{eq:hs-closeness-thm-1,eq:hs-closeness-thm-2}, the overall copy complexity is at most
    \begin{equation}
        nt = \bigo\left(\frac{1}{\eps^2} + \frac{d}{\sqrt{t}\eps^2} + \frac{\sqrt{d \cdot \tr(\rho\sigma)}}{\eps^2} + \frac{\sqrt{d}}{\eps}\right),
    \end{equation}
    as desired. To obtain success probability $1-\delta$, it suffices to perform $\bigo(\log(1/\delta))$ repetitions.
\end{proof}

Finally, we prove the variance bound used in the above proof.

\begin{proof}[Proof of \Cref{lem:tcopy-closeness-testing-variance}]

    First, note that $\hat{\Delta}^{\otimes 2} = (\hat{\rho} - \hat{\sigma})^{\otimes 2}$, so by \Cref{thm:quasi-purification-truncated-second-moment}, we have
    \begin{align}
        \mbb{E}[\hat{\Delta}^{\otimes 2}] &= \mbb{E}[\hat{\rho}^{\otimes 2}] + \mbb{E}[\hat{\sigma}^{\otimes 2}] - \mbb{E}[\hat{\rho} \otimes \hat{\sigma}] - \mbb{E}[\hat{\sigma} \otimes \hat{\rho}]
        \\&= \frac{t-1}{t}(\rho^{\otimes 2} + \sigma^{\otimes 2}) + \frac1t(\rho \otimes I + I \otimes \rho + \sigma \otimes I + I \otimes \sigma) \swap + \frac{\mbb{E}_{\rho}[\ell(\lambda)] + \mbb{E}_{\sigma}[\ell(\lambda)]}{t^2} \swap \nonumber\\&\hspace{1em} - \Lower_\rho - \Lower_\sigma - \rho \otimes \sigma - \sigma \otimes \rho
        \\&= \Delta^{\otimes 2} + \frac1t (\rho \otimes I + I \otimes \rho + \sigma \otimes I + I \otimes \sigma) \swap + \frac{L}{t^2}\swap - \frac{\rho^{\otimes 2} + \sigma^{\otimes 2}}{t} - \Lower_\rho - \Lower_\sigma, \label{eq:hs-closeness-testing-1}
    \end{align}
    where we define $L \triangleq \mbb{E}_{\rho}[\ell(\lambda)] + \mbb{E}_{\sigma}[\ell(\lambda)]$.

    Now, keeping in mind \Cref{lem:hs-estimation-variance-generic}, we first wish to upper bound $\tr(\mbb{E}[\hat{\Delta}^{\otimes 2}] \cdot \Delta^{\otimes 2})$. For any state $\varrho$, $\tr(\Delta^{\otimes 2}\varrho^{\otimes 2})$ and $\tr(\Delta^{\otimes 2}\Lower_\varrho)$ are non-negative (as $\Lower_\varrho \in \mathrm{SoS}(d)$), and so we can write
    \begin{align}
        \tr(\mbb{E}[\hat{\Delta}^{\otimes 2}] \cdot \Delta^{\otimes 2}) &\leq \left\langle \Delta^{\otimes 2} + \frac1t (\rho \otimes I + I \otimes \rho + \sigma \otimes I + I \otimes \sigma) \swap + \frac{L}{t^2}\swap , \Delta^{\otimes 2}\right\rangle
        \\&= \tr(\Delta^2)^2 + \frac{2}{t}\tr(\Delta^2(\rho + \sigma)) + \frac{L}{t^2}\tr(\Delta^2)
        \\&\leq \tr(\Delta^2)^2 + \frac{4}{t}\tr(\Delta^2) + \frac{4}{t^{3/2}}\tr(\Delta^2),
    \end{align}
    where we used $\rho + \sigma \preceq 2I$ and $\mbb{E}_\rho[\ell(\lambda)] \leq 2\sqrt{t}$ for any state $\rho$ (see \Cref{lem:partition-length-upper-bound}). Thus, we have 
    \begin{equation}
        \tr(\mbb{E}[\hat{\Delta}^{\otimes 2}] \Delta^{\otimes 2}) - \tr(\Delta^2)^2 \leq \bigo\left(\frac{\tr(\Delta^2)}{t}\right).
        \label{eq:hs-closeness-testing-2}
    \end{equation}
    Next, to use \Cref{lem:hs-estimation-variance-generic}, we wish to upper bound $\tr(\mbb{E}[\hat{\Delta}^{\otimes 2}]^2)$. First, note that
    \begin{align}
        \langle \mbb{E}[\hat{\Delta}^{\otimes 2}], \mbb{E}[\hat{\Delta}^{\otimes 2}] \rangle &\leq \left \langle \mbb{E}[\hat{\Delta}^{\otimes 2}], \mbb{E}[\hat{\Delta}^{\otimes 2}] + \frac{\rho^{\otimes 2} + \sigma^{\otimes 2}}{t} +\Lower_\rho + \Lower_\sigma \right\rangle
        \\&= \left\langle \mbb{E}[\hat{\Delta}^{\otimes 2}], \Delta^{\otimes 2} + \frac1t (\rho \otimes I + I \otimes \rho + \sigma \otimes I + I \otimes \sigma) \swap + \frac{L}{t^2}\swap \right\rangle.
    \end{align}
    In the inequality, we use the fact that for all states $\varrho,$ the quantities $\tr(\hat{\Delta}^{\otimes 2} \varrho^{\otimes 2})$ and $\tr(\hat{\Delta}^{\otimes 2} \Lower_\varrho)$ are non-negative, and this hold for $\mbb{E}[\hat{\Delta}^{\otimes 2}]$ by linearity of expectation. Further, note that the negative terms can also be neglected in the expansion of the other $\mbb{E}[\hat{\Delta}^{\otimes 2}]$ term as their inner products with $\Delta^{\otimes 2}, (\rho \otimes I + I \otimes \rho)\swap$ and $\swap$ are all non-positive. Thus, we have
    \begin{align}
        \langle \mbb{E}[\hat{\Delta}^{\otimes 2}], \mbb{E}[\hat{\Delta}^{\otimes 2}] \rangle &\leq \tr\left(\left(\Delta^{\otimes 2} + \frac1t (\rho \otimes I + I \otimes \rho + \sigma \otimes I + I \otimes \sigma) \swap + \frac{L}{t^2}\swap\right)^2\right)
        \\&= \tr(\Delta^2)^2 + \frac{1}{t^2} \tr((\rho \otimes I + I \otimes \rho + \sigma \otimes I + I \otimes \sigma)^2) + \frac{d^2L^2}{t^4} \nonumber\\&\hspace{1em} + \frac{4}{t}(\tr(\Delta^2 \rho) + \tr(\Delta^2 \sigma)) + \frac{2L}{t^2}\tr(\Delta^2) + \frac{8dL}{t^3}
        \\&= \tr(\Delta^2)^2 + \frac{2d}{t^2}(\tr(\rho^2) + \tr(\sigma^2) + 2\tr(\rho\sigma)) + \frac{8}{t^2} + \frac{d^2L^2}{t^4} \nonumber\\&\hspace{1em} + \frac{4}{t}\tr(\Delta^2(\rho + \sigma)) + \frac{2L}{t^2}\tr(\Delta^2) + \frac{8dL}{t^3}
        \\&\leq \tr(\Delta^2)^2 + \frac{2d}{t^2}\tr(\Delta^2) + \frac{8d}{t^2}\tr(\rho\sigma) + \frac{8}{t^2} + \frac{d^2L^2}{t^4} + \frac{8}{t}\tr(\Delta^2) + \frac{2L}{t^2} \tr(\Delta^2) + \frac{8dL}{t^3}
        \\&= \tr(\Delta^2)^2 + \tr(\Delta^2)\left(\frac{2d}{t^2} + \frac{8}{t} + \frac{2L}{t^2}\right) + \left(\frac{8}{t^2} + \frac{8dL}{t^3} + \frac{d^2L^2}{t^4}\right) + \frac{8d \cdot \tr(\rho\sigma)}{t^2}
        \\&= \tr(\Delta^2)^2 + \tr(\Delta^2) \cdot \bigo\left(\frac{d}{t^2} + \frac{1}{t} + \frac{1}{t^{3/2}}\right) + \bigo\left(\frac{1}{t^2} + \frac{d^2L^2}{t^4}\right) + \bigo\left(\frac{d \cdot \tr(\rho\sigma)}{t^2}\right)
        \\&= \tr(\Delta^2)^2 + \bigo\left(
        \frac{d \cdot \tr(\Delta^2)}{t^2} + \frac{\tr(\Delta^2)}{t} + \frac{1}{t^2} + \frac{d^2 \cdot \min\{d^2, t\}}{t^4} + \frac{d \cdot \tr(\rho\sigma)}{t^2}
        \right), \label{eq:hs-closeness-testing-3}
    \end{align}
    where we used $\tr(\Delta^2) = \tr(\rho^2) + \tr(\sigma^2) - 2\tr(\rho\sigma)$ and $\rho + \sigma \preceq 2I$ in the third step, $L \leq 4\sqrt{t}$ in the penultimate step, and the stronger bound $L \leq \min\{2d,4\sqrt{t}\}$ in the last step (again, via \Cref{lem:partition-length-upper-bound}). Now, \Cref{lem:hs-estimation-variance-generic} and \Cref{eq:hs-closeness-testing-2,eq:hs-closeness-testing-3} yield the desired bound.
\end{proof}

\subsection{Instance-dependent bound via bucketing}
\label{sec:upper-instance-dependent}
In this section, we prove the instance-dependent upper bound of \Cref{thm:instance-dependent-tcopy-certification}. As done before by \cite{chen2022toward,o2025instance} in the settings of unentangled and fully entangled measurements, we will use bucketing to reduce this problem to worst-case Hilbert--Schmidt certification on lower-dimensional states, and then apply our \Cref{thm:closeness-l2-estimation}. First, we state some necessary notation for bucketing.

\begin{definition}[Bucketing, verbatim from \cite{chen2022toward,o2025instance}]
    Assume $\sigma = \diag(\lambda_1, \dots, \lambda_d) \in \mbb{C}^{d \times d}$ is a quantum state\footnote{The assumption that $\sigma$ is diagonal is without loss of generality, as the tester knows the description of $\sigma$ and can change basis without consuming any additional copies.}, and assume $\lambda_1 \geq \dots \geq \lambda_d \geq 0$.
    \begin{itemize}
        \item Let $\stail = \{d^\prime + 1, \dots, d\}$, where $d^{\prime}$ is the smallest index such that $\sum_{j \in \stail} \lambda_j \leq \eps^2/20.$ Let $\sigma^* = \diag(\lambda_1, \dots, \lambda_{d^\prime}, 0, \dots, 0)$, i.e., the operator obtained by zeroing out the eigenvalues of $\sigma$ in $\stail$.
        \item Divide $\{1, \dots, d^{\prime}\}$ into buckets $\{S_j\}_j$, where $S_j = \{i \in [d^\prime]: 2^{-i-1} \leq \lambda_i  \leq 2^{-i}\}$. Let the set of non-empty buckets be $\mc{J}$; by \cite[Fact 6.6]{chen2022toward}, the number of such buckets is $m \triangleq |\mc{J}| \leq \bigo(\log(d/\eps))$.
        \item For each $j \in \mc{J}$, let $\Pi_j$ be the associated projector onto the indices in $S_j$. Let $\sigma_j = \Pi_j\sigma\Pi_j$, let $\sigma_{j,j^\prime} = (\Pi_j + \Pi_{j^\prime}) \sigma (\Pi_j + \Pi_{j^\prime})$, and similarly define $\rho_j, \rho_{j,j^\prime}$. For any matrix $M$, let $\hat{M} = M/\tr(M)$.
    \end{itemize}
\end{definition}

Having stated the necessary notation, we can now recall the key reduction to low-dimensional Hilbert--Schmidt testing:

\begin{lemma}[Worst-case certification to local Hilbert--Schmidt testing, adapted from {\cite[Lemmas 6.6-6.8]{o2025instance}}]
\label{lem:bucketing-reduction}
    Given the description of $\sigma$ and copies of $\rho$, to test whether $\rho = \sigma$ or $\|\rho - \sigma\|_1 \geq \eps$ with confidence $1-\delta$, it suffices to perform the following tests in order:
    \begin{enumerate}
        \item A pre-processing step that performs fixed, unentangled measurements on $\bigo(\log(1/\delta)/\eps^2)$ copies of $\rho$ and behaves as desired with probability at least $\delta/3$. \label{item:tail-mass}
        \item For each bucket $j \in \mc{J}$: \label{item:diagonal-tests}
        \begin{enumerate}
            \item Perform a pre-processing step that applies fixed, unentangled measurements to $\tilde{\bigo}(\log(1/\delta)/\eps^2)$ copies of $\rho$ and behaves as desired with probability at least $ \delta/6m$. If the test passes, we are confident that $\tr(\rho_j)$ and $\tr(\sigma_j)$ are within constant factors of each other. \label{item:diagonal-preprocessing}
            \item Test whether $\hat{\rho}_j = \hat{\sigma}_j$ or $\|\hat{\rho}_j - \hat{\sigma}_j\|_2 \geq \tilde{\Omega}\left(\frac{\eps}{d_j^{3/2}2^{-j}}\right)$ with confidence $1-\delta/6m$. \label{item:diagonal-main}
        \end{enumerate}
        \item For each pair of distinct buckets $j,j^\prime$, with $d_j \geq d_{j^\prime}$: \label{item:off-diagonal-tests}
        \begin{enumerate}
            \item \label{item:off-diagonal-preprocessing} Perform a pre-processing step that applies fixed, unentangled measurements to $\tilde{\bigo}(\log(1/\delta)/\eps^2)$ copies of $\rho$ and behaves as desired with probability at least $ \delta/6m^2$. If the test passes, we are confident that $\tr(\rho_{j,j^\prime})$ and $\tr(\sigma_{j,j^\prime})$ are within constant factors of each other.
            \item \label{item:off-diagonal-main} Test whether $\hat{\rho}_{j,j^\prime} = \hat{\sigma}_{j,j^\prime}$ or $\|\hat{\rho}_{j,j^\prime} - \hat{\sigma}_{j,j^\prime}\|_2 \geq \tilde{\Omega}\left(\frac{\eps}{\sqrt{d_{j^\prime}}(d_j2^{-j} + d_{j^\prime}2^{-j^\prime})}\right)$ with confidence $1-\delta/6m^2$,
        \end{enumerate}
    \end{enumerate}
    and finally, accept if all tests pass and reject otherwise.
\end{lemma}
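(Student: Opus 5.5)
The plan is to follow the proof of \cite[Lemmas 6.6--6.8]{o2025instance} and change only what the $t$-copy setting requires. We reduce trace distance to a small number of Hilbert--Schmidt quantities on the projected blocks $\Pi_j$ and $\Pi_j+\Pi_{j'}$. The order of steps is as follows.

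\textbf{Step 1 (tail).} Write $\rho$ in $\sigma$'s eigenbasis and measure $\{\Pi_{\stail}, I-\Pi_{\stail}\}$ on $\bigo(\log(1/\delta)/\eps^2)$ copies, which is a fixed unentangled measurement. This estimates $\tr(\Pi_{\stail}\rho)$ to additive precision $\bigo(\eps^2)$. If $\tr(\Pi_{\stail}\rho)\ge \eps^2/10$, say, we reject. Otherwise, a gentle-measurement / pinching argument shows that $\rho$ and its compression to $[d']$ differ by at most $\bigo(\eps)$ in trace norm. So it suffices to detect $\|\Pi\rho\Pi-\sigma^*\|_1\ge \eps/2$, where $\Pi=I-\Pi_{\stail}$.

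\textbf{Step 2 (decomposition).} Write $\Pi\rho\Pi-\sigma^*=\sum_{j}(\rho_j-\sigma_j)+\sum_{j\ne j'}\Pi_j\rho\Pi_{j'}$ and use the triangle inequality. If the total is at least $\eps/2$, then, since there are $m$ diagonal and at most $m^2$ off-diagonal pieces, some diagonal piece has trace norm at least $\eps/(4m)$ or some off-diagonal piece has trace norm at least $\eps/(4m^2)$. These losses are polylogarithmic and are absorbed into $\tilde\Omega$.

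\textbf{Step 3 (diagonal blocks).} Use $\|\rho_j-\sigma_j\|_1\le \sqrt{d_j}\,\|\rho_j-\sigma_j\|_2$ together with
\[
\rho_j-\sigma_j=\tr(\rho_j)(\hat\rho_j-\hat\sigma_j)+(\tr\rho_j-\tr\sigma_j)\hat\sigma_j .
\]
The preprocessing step measures $\{\Pi_j,I-\Pi_j\}$. Via a Chernoff bound at precision relative to $\tr(\sigma_j)\approx d_j2^{-j}$, plus an additive $\eps^2$ slack, it certifies that $\tr\rho_j\approx\tr\sigma_j$ up to constant factors. It also makes the second term small or causes a rejection. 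The first term then requires
\[
\|\hat\rho_j-\hat\sigma_j\|_2\gtrsim \eps/(m\sqrt{d_j}\,\tr\sigma_j)=\tilde\Omega\bigl(\eps/(d_j^{3/2}2^{-j})\bigr),
\]
which matches Item~\ref{item:diagonal-main}.

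\textbf{Step 4 (off-diagonal pairs).} For the pair $j,j'$, note that $\Pi_j\rho\Pi_{j'}+\Pi_{j'}\rho\Pi_j$ is the off-diagonal part of $\rho_{j,j'}-\sigma_{j,j'}$, since $\sigma$ is diagonal. Its rank is at most $2d_{j'}$, so its trace norm is at most $\sqrt{2d_{j'}}$ times its Hilbert--Schmidt norm, which is at most $\sqrt{2d_{j'}}\,\|\rho_{j,j'}-\sigma_{j,j'}\|_2$. Normalizing by $\tr\sigma_{j,j'}\approx d_j2^{-j}+d_{j'}2^{-j'}$ exactly as in Step 3 gives the threshold in Item~\ref{item:off-diagonal-main}.

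\textbf{Step 5 (sampling and errors).} Copies of $\hat\rho_j$ (respectively $\hat\rho_{j,j'}$) are obtained by projectively measuring $\Pi_j$ on each single copy and keeping the successes. Since $\tr\rho_j$ is known up to a constant factor, a batch of copies yields the required number of post-selected copies with high probability. To preserve the fixed $t$-copy structure, we post-select individual copies and then group the survivors into $t$-tuples. A union bound over the stated failure probabilities, $\delta/3+m\cdot\delta/3m+m^2\cdot\delta/3m^2$, gives overall confidence $1-\delta$. Completeness holds because when $\rho=\sigma$ every subtest passes with the stated probability.

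The main obstacle is Step 3's handling of the trace mismatch $(\tr\rho_j-\tr\sigma_j)\hat\sigma_j$, together with the preprocessing precision. When $\tr\sigma_j$ is tiny, relative-error estimation of $\tr\rho_j$ would need many copies. I would follow \cite{o2025instance} here: only require additive accuracy about $\eps/m$ in $\tr\rho_j$ when $\tr\sigma_j\lesssim\eps/m$, since such buckets then contribute negligibly in trace norm. This is where $\tilde\bigo(1/\eps^2)$ copies per bucket comes from.
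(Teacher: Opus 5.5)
Your skeleton is the reduction the paper imports from \cite{o2025instance}: cut the tail with gentle measurement, split $\Pi\rho\Pi-\sigma^*$ into diagonal and off-diagonal bucket blocks, pass from trace norm to Hilbert--Schmidt norm via the ranks $d_j$ and $\min\{d_j,d_{j'}\}=d_{j'}$, normalize by certified bucket masses, post-select single copies, and union-bound. Steps 1--5 are fine up to constants. You are also right that the constant-factor guarantee of items 2(a) and 3(a) cannot be certified with $\tilde{\bigo}(1/\eps^2)$ copies for every bucket, since a bucket of $\sigma^*$ can have mass of order $\eps^2/d$.

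The gap is in your repair. Additive control of $\tr\rho_j$ at scale $\eps/m$ does \emph{not} make a bucket negligible in trace distance, because a bucket of mass $p$ can contribute order $\sqrt{p}$ through its coherences with other buckets. For example, take $\sigma=\mathrm{diag}(1-a,a)$ and $\rho=|\psi\rangle\langle\psi|$ with $|\psi\rangle=\sqrt{1-a}\,|0\rangle+\sqrt{a}\,|1\rangle$. Both bucket masses agree exactly, yet $\|\rho-\sigma\|_1=2\sqrt{a(1-a)}\gg\eps$ as soon as $a\gg\eps^2$, in particular for $a$ of order $\eps/m$. So light--heavy pair tests can never be dropped. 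Your Step 2 bookkeeping also fails for two light buckets $j,j'$. There $\|\Pi_j\rho\Pi_{j'}\|_1$ can be as large as $\sqrt{\tr\rho_j\,\tr\rho_{j'}}\approx\eps/m$, which exceeds the per-pair budget $\eps/(4m^2)$. Additive $\eps/m$ control also does not certify $\tr\rho_{j,j'}\lesssim\tr\sigma_{j,j'}$, which the pair test's threshold needs.

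The correct scale for ``negligible'' is $\eps^2$, for the same reason the tail is cut at $\eps^2/20$. One clean repair:
\begin{enumerate}
\item Merge every bucket with $\tr\sigma_j\le\eps^2/(Cm)$ into the tail. Check with $\bigo(\log(1/\delta)/\eps^2)$ single-copy measurements that $\rho$ puts at most a small constant times $\eps^2$ on the enlarged tail. One application of gentle measurement then disposes of these buckets' diagonal blocks and all their coherences at once.
\item Every surviving bucket and pair now has mass at least $\eps^2/(Cm)$. Multiplicative Chernoff with $\tilde{\bigo}(\log(1/\delta)/\eps^2)$ copies therefore genuinely certifies the constant-factor relations.
\item Separately, estimate all bucket masses to additive error $\eps/(Cm)$ at the same cost. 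This additive accuracy, not constant factors alone, is what makes the mismatch term $(\tr\rho_j-\tr\sigma_j)\hat\sigma_j$ in your Step 3 small.
\end{enumerate}

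Alternatively, you can keep an $\eps/(Cm)$-scale threshold, calling buckets with $\tr\sigma_j>4\eps/(Cm)$ heavy and the rest light. Let $\Pi_L$ be the projector onto the light buckets. This works provided you do two things:
\begin{enumerate}
\item Still run the pair test for every light--heavy pair. Its mass is dominated by the heavy bucket, so it is certified up to constants.
\item Bound the whole light--light block $\Pi_L(\rho-\sigma)\Pi_L$ at once by $\tr(\Pi_L\rho)+\tr(\Pi_L\sigma)=\bigo(\eps/C)$, rather than pair by pair.
\end{enumerate}
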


In the above lemma, we leave the pre-processing steps in \Cref{item:tail-mass,item:diagonal-preprocessing,item:off-diagonal-preprocessing} unspecified, as these only require fixed single-copy measurements and can be carried out in the same manner as in \cite{chen2022toward,o2025instance}. The only difference in our proof of \Cref{thm:instance-dependent-tcopy-certification} and the upper bounds of \cite{chen2022toward,o2025instance} is the way in which the tests in \Cref{item:diagonal-main,item:off-diagonal-main} are carried out, which we present in the proof below. 

\begin{proof}[Proof of \Cref{thm:instance-dependent-tcopy-certification}]
    To show correctness, we use a union bound and sum over the failure probabilities in each step of \Cref{lem:bucketing-reduction}, which yields a total failure probability at most $\delta$. Now, it suffices to compute the overall copy complexity of the tests in \Cref{lem:bucketing-reduction}. First, recall that the complexity of the tests in \Cref{item:tail-mass,item:diagonal-preprocessing,item:off-diagonal-preprocessing} is at most $\tilde{O}(\log(1/\delta)/\eps^2)$.\footnote{This holds even after summing over all buckets/pairs of buckets, as $m = \bigo(\log(d/\eps))$ and gets suppressed by the $\tilde{\bigo}$.}

    \paragraph{Copy complexity of \Cref{item:diagonal-main}:} As a direct consequence of bucketing, for each $j$, we have $\|\hat{\sigma_j}\|_\infty \leq \frac{2}{d_j}$. Thus, we can use the algorithm of \Cref{cor:balanced-l2-closeness-testing} for each test. Note that we know the $d_j$-dimensional subspace that $\hat\rho_j,\hat\sigma_j$ are supported in, and can thus treat them as $d_j$-dimensional states. So, by \Cref{cor:balanced-l2-closeness-testing}, for the $j$th bucket, it suffices to take
    \begin{equation}
        N_j = \tilde{O}\left(\max\left\{\frac{d_j^3 2^{-2j}}{\eps^2}, \frac{d_j^4 2^{-2j}}{\sqrt{t}\eps^2}, \frac{d_j^2 2^{-j}}{\eps}\right\}\right)
    \end{equation}
    copies of $\hat{\rho}_j$ to succeed with $1-O(1)$ probability. To obtain these copies, we can measure copies of $\rho$ with the two-outcome measurement $\{\Pi_j, I - \Pi_j\}$, and only retain states that yield the former outcome. This can be done with $1-O(1)$ probability using
    \begin{equation}
        \bigo\left(\frac{N_j}{\tr(\rho_j)}\right) = \bigo\left(\frac{N_j}{d_j 2^{-j}}\right) = \tilde{O}\left(\max\left\{\frac{d_j^2 2^{-j}}{\eps^2}, \frac{d_j^3 2^{-j}}{\sqrt{t}\eps^2}, \frac{d_j}{\eps}\right\}\right)
    \end{equation}
    copies of $\rho$. The first equality above used the fact that $\tr(\sigma_j)$ and $\tr(\rho_j)$ are within constant factors of each other. Now, the overall success probability of the test associated with bucket $j$ can be boosted to $1-\delta/6m$ using $\bigo(\log(m/\delta))$ repetitions. Next, we will sum over the complexities associated with each bucket. For the first term, 
    \begin{equation}
        \sum_{j \in \mc{J}} d_j^2 2^{-j} \leq (\sum_{j \in \mc{J}} d_j 2^{-j/2})^2 \leq 2\|\sigma^*\|_{1/2}.
    \end{equation}
    Next, for the second term,
    \begin{equation}
        \sum_{j \in \mc{J}} d_j^3 2^{-j} \leq (\sum_{j \in \mc{J}} d_j 2^{-j/3})^3 \leq 2\|\sigma^*\|_{1/3}.
    \end{equation}
    Finally, $\sum_j d_j = \rank(\sigma^*)$. Thus, the overall complexity of the tests in \Cref{item:diagonal-main} is at most
    \begin{equation}
        \tilde{\bigo}\left(\max\left\{
            \frac{\|\sigma^*\|_{1/2}}{\eps^2}, \frac{\|\sigma^*\|_{1/3}}{\sqrt{t}\eps^2}, \frac{\rank(\sigma^*)}{\eps}
        \right\}\right) \cdot \log(1/\delta). \label{eq:instance-dependent-upper-diagonal}
    \end{equation}

    \paragraph{Copy complexity of \Cref{item:off-diagonal-main}:} To apply \Cref{cor:balanced-l2-closeness-testing} in this case, we need to show that $\|\hat{\sigma}_{j,j^\prime}\|_{\infty} \leq C \cdot (d_j + d_{j^\prime})^{-1}$; however, this need not be true in general. We will instead show a weaker upper bound on $\tr(\hat{\sigma}_{j,j^\prime} \hat{\rho}_{j,j^\prime})$ and then apply the more general \Cref{thm:closeness-l2-estimation}. We have the following operator norm bound:
    \begin{equation}
        \|\hat{\sigma}_{j,j^\prime}\|_{\infty} \leq \frac{\max\{2^{-j},2^{-j^\prime}\}}{\tr(\sigma_{j,j^\prime})} \leq \frac{2 \cdot \max\{2^{-j},2^{-j^\prime}\}}{d_j2^{-j} + d_j2^{-j^\prime}} \leq \max\left\{\frac{2}{d_j},\frac{2}{d_{j^\prime}}\right\} = \frac{2}{d_{j^\prime}}.
    \end{equation}
    Thus, we can use \Cref{thm:closeness-l2-estimation} with $\tr(\hat{\sigma}_{j,j^\prime} \hat{\rho}_{j,j^\prime}) \leq \frac{2}{d_{j^\prime}}$, which yields the following upper bound on the number of copies of $\hat{\rho}_{j,j^\prime}$ for the test in \Cref{item:off-diagonal-main}:
    \begin{equation}
        N_{j,j^\prime} \leq \Tilde{\bigo}\left(\max\left\{
            \frac{d_{j^\prime}p_{j,j^\prime}^2}{\eps^2}, \frac{d_jd_{j^\prime}p_{j,j^\prime}^2}{\sqrt{t}\eps^2}, \frac{\sqrt{d_j d_{j^\prime}} p_{j,j^\prime}^2}{\eps^2}, \frac{\sqrt{d_j d_{j^\prime}} p_{j,j^\prime}}{\eps}
        \right\}\right),
    \end{equation}
    where we bound the dimension by $d_j + d_{j^\prime} \leq 2d_j$ and define $p_{j,j^\prime} = d_j2^{-j} + d_{j^\prime}2^{-j^\prime}$. We can obtain these copies by performing measurements $\{\Pi_j + \Pi_{j^\prime}, I - \Pi_j - \Pi_{j^\prime}\}$ on $\bigo(N_{j,j^\prime}/\tr(\rho_{j,j^\prime})) = \bigo(N_{j,j^\prime}/p_{j,j^\prime})$ copies of $\rho$. Thus, the copy complexity for the pair $j,j^\prime$ is at most 
    \begin{equation}
        \Tilde{\bigo}\left(\max\left\{
            \frac{d_{j^\prime}p_{j,j^\prime}}{\eps^2}, \frac{d_jd_{j^\prime}p_{j,j^\prime}}{\sqrt{t}\eps^2}, \frac{\sqrt{d_j d_{j^\prime}} p_{j,j^\prime}}{\eps^2}, \frac{\sqrt{d_j d_{j^\prime}}}{\eps}
        \right\}\right).
    \end{equation}
    We can obtain the desired confidence by performing $\tilde{\bigo}(\log(m/\delta))$ repetitions as before. We will now sum over all pairs $j,j^\prime$. For the first term,
    \begin{equation}
        \sum_{j,j^\prime \in \mc{J}} d_{j^\prime}p_{j,j^\prime} = \sum_{j,j^\prime} (d_{j^\prime}d_j 2^{-j} + d_{j^\prime}^2 2^{-j^\prime}) \leq 2m \sum_j d_j^2 2^{-j} \leq 4m\|\sigma^*\|_{1/2}. 
    \end{equation}
    Next, for the second term,
    \begin{equation}
        \sum_{j,j^\prime \in \mc{J}} d_j d_{j^\prime}p_{j,j^\prime} = \sum_{j,j^\prime} (d_{j^\prime}d_j^2 2^{-j} + d_jd_{j^\prime}^2 2^{-j^\prime}) \leq m\sum_j d_j^3 2^{-j} + \sum_j d_j \sum_{j^\prime} d_{j^\prime}^2 2^{-j^\prime} \leq 2m\|\sigma^*\|_{1/3} + \rank(\sigma^*) \cdot \|\sigma^*\|_{1/2}.
    \end{equation}
    For the third term, we have 
    \begin{equation}
        \sum_{j,j^\prime} \sqrt{d_j d_{j^\prime}} p_{j,j^\prime} \leq \sum_{j,j^\prime} d_j p_{j,j^\prime} \leq \sum_{j,j^\prime} d_j^2 2^{-j} + \sum_j d_j \sum_{j^\prime} d_{j^\prime} 2^{-j^\prime} \leq 2m\|\sigma^*\|_{1/2} + \rank(\sigma^*).
    \end{equation}
    Finally, $\sum_{j,j^\prime} \sqrt{d_j d_{j^\prime}} \leq \sum_{j,j^\prime} d_j \leq m \cdot \rank(\sigma^*)$. Recalling that $m = \bigo(\log(d/\eps))$ and summing over all terms, the overall copy complexity of the tests in \Cref{item:off-diagonal-main} is thus
    \begin{align}
        \tilde{\bigo}\left(\max\left\{
            \frac{\|\sigma^*\|_{1/2}}{\eps^2}, \frac{\|\sigma^*\|_{1/3} + \rank(\sigma^*) \|\sigma^*\|_{1/2}}{\sqrt{t}\eps^2}, \frac{\|\sigma^*\|_{1/2} + \rank(\sigma^*)}{\eps^2}, \frac{\rank(\sigma^*)}{\eps}
        \right\}\cdot \log(1/\delta)\right)
        \\=
        \tilde{\bigo}\left(\max\left\{
            \frac{\|\sigma^*\|_{1/2}}{\eps^2}, \frac{\|\sigma^*\|_{1/3}}{\sqrt{t}\eps^2}, \frac{\rank(\sigma^*) \|\sigma^*\|_{1/2}}{\sqrt{t}\eps^2}, \frac{\rank(\sigma^*)}{\eps^2}
        \right\} \cdot \log(1/\delta)\right)
        \label{eq:instance-dependent-upper-offdiagonal}
    \end{align}
    \paragraph{Putting things together:} Summing over the copy complexities of the pre-processing steps and those in \Cref{eq:instance-dependent-upper-diagonal,eq:instance-dependent-upper-offdiagonal}, we obtain the stated upper bound.
\end{proof}

\section{Mixedness testing and purity estimation upper bounds}
\label{sec:mixedness-and-purity-upper}
In this section, we prove our upper bound for multiplicative-error purity estimation (\Cref{thm:intro-purity-upper}) and provide an alternate proof of our trace-norm mixedness testing upper bound (\Cref{thm:intro-mixedness-upper}). We first present a general algorithm for purity estimation that repeatedly applies a quantum state estimator and then performs some classical post-processing. This algorithm can be used for mixedness testing by comparing the estimated purity to an appropriate threshold. We present the overall algorithm and analyze its first two moments below. 

\begin{algorithm}[H]
    \begin{algorithmic}[1]
    \caption{Purity($\mc{A}$), given an algorithm $\mathcal{A}$ that takes as input $\rho^{\otimes t}$ and returns a state $\hat{\rho} \in \mbb{C}^{d \times d}$.}
    \label{alg:collision-estimation}
    \State \textbf{Input:} $n \cdot t$ copies of an unknown state $\rho$
    \For{$i \in [n]$}
        \State $\hat{\rho}_i \gets \mc{A}(\rho^{\otimes t})$
    \EndFor
    \For{$i,j \in [n]$ with $i < j$}
        \State $X_{i,j} \gets \tr(\hat{\rho}_i\hat{\rho}_j)$
    \EndFor
    \Return $\bar{X} \gets \frac{1}{\binom{n}{2}} \sum_{i < j} X_{i,j}$
    \end{algorithmic}
\end{algorithm}

The following lemma characterizes the first and second moments of \Cref{alg:collision-estimation} in terms of the first and second moments of the underlying state estimation algorithm.
\begin{lemma}
 \label{lem:collision-variance-generic}
    Let $\bar{X}$ be the output of $\mathrm{Collision}(\mc{A})$ (\Cref{alg:collision-estimation}). Then, $\mbb{E}[\bar{X}] = \tr(\mbb{E}[\hat{\rho}]^2)$ and
    \begin{equation}
    \label{eq:collision-variance-generic-1}
          \mbb{V}[\bar{X}] = \bigo(n^{-1}) \cdot (\tr(\mbb{E}[\hat{\rho}^{\otimes 2}] \cdot \mbb{E}[\hat{\rho}]^{\otimes 2}) - \mbb{E}[\bar{X}]^2) + \bigo(n^{-2}) \cdot (\tr(\mbb{E}[\hat{\rho}^{\otimes 2}]^2) - \mbb{E}[\bar{X}]^2).
    \end{equation}
    In particular, when $\mc{A}$ is unbiased, i.e., $\mbb{E}[\hat{\rho}] = \rho$, we have $\mbb{E}[\bar{X}] = \tr(\rho^2)$ and
    \begin{equation}
    \label{eq:collision-variance-generic-2}
        \mbb{V}[\bar{X}] = \bigo(n^{-1}) \cdot (\tr(\mbb{E}[\hat{\rho}^{\otimes 2}] \cdot \rho^{\otimes 2}) - \tr(\rho^2)^2) + \bigo(n^{-2}) \cdot (\tr(\mbb{E}[\hat{\rho}^{\otimes 2}]^2) - \tr(\rho^2)^2).
    \end{equation}
\end{lemma}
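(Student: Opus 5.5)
This is the same U-statistic computation as in \Cref{lem:hs-estimation-variance-generic}, with $\hat{\Delta}_i$ replaced by $\hat{\rho}_i$; only independence across $i$ is used there. Since $\hat{\rho}_1,\dots,\hat{\rho}_n$ are i.i.d.\ copies of $\hat{\rho}\gets\mc{A}(\rho^{\otimes t})$, independence and linearity give
\[
\mbb{E}[X_{i,j}] = \tr(\mbb{E}[\hat{\rho}_i]\mbb{E}[\hat{\rho}_j]) = \tr(\mbb{E}[\hat{\rho}]^2)
\]
for every pair $i<j$. Averaging over pairs then yields $\mbb{E}[\bar{X}] = \tr(\mbb{E}[\hat{\rho}]^2)$.

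For the variance, I expand $\binom{n}{2}^2\bar{X}^2 = \sum_{i<j,\,k<l} X_{i,j}X_{k,l}$ and split the pairs of index pairs into three groups.
\begin{itemize}
\item Disjoint index pairs: there are $6\binom{n}{4}$ of these, and for each $\mbb{E}[X_{i,j}X_{k,l}] = \mbb{E}[\bar{X}]^2$ by independence.
\item Exactly one shared index: there are $6\binom{n}{3}$ of these, e.g.\ $(i,j),(i,k)$. Using $\tr(A)\tr(B) = \tr(A\otimes B)$,
\[
\mbb{E}[X_{i,j}X_{i,k}] = \mbb{E}\bigl[\tr(\hat{\rho}_i^{\otimes 2}\cdot \hat{\rho}_j\otimes\hat{\rho}_k)\bigr] = \tr\bigl(\mbb{E}[\hat{\rho}^{\otimes 2}]\,\mbb{E}[\hat{\rho}]^{\otimes 2}\bigr).
\]
\item Identical index pairs: there are $\binom{n}{2}$ of these, and $\mbb{E}[X_{i,j}^2] = \tr(\mbb{E}[\hat{\rho}^{\otimes 2}]^2)$.
\end{itemize}

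Next I use the counting identity $\binom{n}{2}^2 = 6\binom{n}{4}+6\binom{n}{3}+\binom{n}{2}$ to subtract $\mbb{E}[\bar{X}]^2$ off each group. This gives
\[
\mbb{V}[\bar{X}] = \frac{6\binom{n}{3}}{\binom{n}{2}^2}\bigl(\tr(\mbb{E}[\hat{\rho}^{\otimes 2}]\,\mbb{E}[\hat{\rho}]^{\otimes 2}) - \mbb{E}[\bar{X}]^2\bigr) + \frac{1}{\binom{n}{2}}\bigl(\tr(\mbb{E}[\hat{\rho}^{\otimes 2}]^2) - \mbb{E}[\bar{X}]^2\bigr).
\]
The coefficients are $\bigo(n^{-1})$ and $\bigo(n^{-2})$ respectively, which is \Cref{eq:collision-variance-generic-1}.

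In the unbiased case I substitute $\mbb{E}[\hat{\rho}] = \rho$, so that $\mbb{E}[\bar{X}] = \tr(\rho^2)$; this gives \Cref{eq:collision-variance-generic-2}. None of these steps is a real obstacle: the argument is purely combinatorial, and the only care needed is the bookkeeping of the counts and the tensor-trace identity.
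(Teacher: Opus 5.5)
Your proposal is correct and follows the paper's approach: the paper omits this proof as essentially identical to that of \Cref{lem:hs-estimation-variance-generic}, which uses the same three-way split (disjoint, one shared index, identical index pairs, with counts $6\binom{n}{4}$, $6\binom{n}{3}$, $\binom{n}{2}$) and the same identity $\binom{n}{2}^2 = 6\binom{n}{4}+6\binom{n}{3}+\binom{n}{2}$. Your coefficients $\frac{6\binom{n}{3}}{\binom{n}{2}^2} = \frac{4(n-2)}{n(n-1)}$ and $\binom{n}{2}^{-1}$ give the claimed $\bigo(n^{-1})$ and $\bigo(n^{-2})$ scalings.
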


\begin{proof} 
    We omit the proof as it is essentially identical to that of \Cref{lem:hs-estimation-variance-generic}.
\end{proof}

We will now apply the above algorithm to mixedness testing. In \Cref{sec:mixedness-upper-t-copy}, we instantiate the algorithm with \Cref{alg:quasi-purification-estimation} and prove \Cref{thm:intro-mixedness-upper} by explicitly working out the variance. Then, we use this variance bound to prove \Cref{thm:intro-purity-upper} in \Cref{sec:purity-estimation-t-copy}. 

\subsection{Mixedness testing with $t$-copy measurements}
\label{sec:mixedness-upper-t-copy}

In this section, we will prove the following weaker version of \Cref{thm:intro-mixedness-upper}:
\begin{theorem}
 \label{thm:dtr-mixedness-upper-t-copy}
     Let $d \geq 2, \eps > 0$. There exists an algorithm to test whether $\rho = \mmstate$ or $\|\rho - \mmstate\|_1 \geq \eps$ using fixed, $t$-copy measurements on $\bigo\left(\max\left\{\frac{d^2}{\sqrt{t}\eps^2}, \frac{\sqrt{t}}{\eps^2}\right\}\right)$ copies. 
\end{theorem}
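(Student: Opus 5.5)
We will run $\mathrm{Purity}(\mc{A})$ (\Cref{alg:collision-estimation}) with $\mc{A}$ the unbiased estimator of \Cref{alg:quasi-purification-estimation}, obtaining $\bar{X}$ with $\mbb{E}[\bar{X}] = \tr(\rho^2) = \frac1d + \|\rho - \mmstate\|_2^2$. By \Cref{fact:hs-to-trace-norm-testing} it suffices to test $\rho = \mmstate$ versus $\|\rho-\mmstate\|_2 \ge \eps' \triangleq \eps/\sqrt{d}$. We will threshold $\bar{X}$ at $\frac1d + \eps'^2/2$ and apply Chebyshev in both cases, so the whole task reduces to bounding the two terms of \Cref{lem:collision-variance-generic}. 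Write $\Delta = \rho - \mmstate$ and $P = \tr(\rho^2)$.

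\textbf{Second term.} We first bound $\tr(\mbb{E}[\hat\rho^{\otimes 2}]^2) - P^2$ using only \Cref{thm:quasi-purification-truncated-second-moment}. We drop $\Lower_\rho$ by the same SoS-positivity trick as in the proof of \Cref{lem:tcopy-closeness-testing-variance}, and also drop the $-\rho^{\otimes 2}/t$ part. We then expand the square of $\rho^{\otimes 2} + \frac1t(\rho\otimes I + I\otimes\rho)\swap + \frac{\mbb{E}[\ell]}{t^2}\swap$. This should give
\[
P^2 + O\Bigl(\tfrac{\tr(\rho^3)}{t} + \tfrac{dP}{t^2} + \tfrac{1}{t^2} + \tfrac{d^2\min\{d^2,t\}}{t^4}\Bigr),
\]
using $\mbb{E}[\ell(\lambda)] \le \min\{2\sqrt t, d\}$ from \Cref{lem:partition-length-upper-bound}. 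We then write $P = \frac1d + \tr(\Delta^2)$ and $\tr(\rho^3) \le \frac{1}{d^2} + O(\tr(\Delta^2)/d) + \tr(\Delta^2)^{3/2}$. After multiplying by $n^{-2}$ and dividing by $\eps'^4 = \eps^4/d^2$, the dominant contributions should be $d^2\cdot\frac{d^2\min\{d^2,t\}}{n^2t^4\eps^4}$ and $\frac{d^2}{n^2t^2\eps^4}$. With $N = nt$ these become the $\frac{d^2}{\sqrt t\eps^2}$ and $\frac{d}{\eps^2}$-type requirements. The $\sqrt t/\eps^2$ term in the statement presumably absorbs cruder leftover pieces, for instance from bounding $\mbb{E}[\ell]$ by $2\sqrt t$ rather than by $d$.

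\textbf{First term (main obstacle).} The hard part is $\tr(\mbb{E}[\hat\rho^{\otimes 2}]\rho^{\otimes 2}) - P^2$ at order $n^{-1}$. With the truncated moment, the term $-\frac1t P^2 + \frac2t\tr(\rho^3)$ leaves a residual of order $\frac1t(\tr(\rho^3) - P^2)$. Even at $\rho = \mmstate$ this includes the piece $\frac{\mbb{E}[\ell]}{t^2}P$, roughly $\frac{1}{t^{3/2}d}$, and pieces of size $1/(td^2)$. Divided by $n\eps'^4$, these yield the unwanted $1/\eps^4$ dependence. To remove it, we will instead use \Cref{lem:quasi-purification-conditioned-second-moment}. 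We pair each conditional term with $\rho^{\otimes 2}$, average over $\lambda$ via \Cref{lem:partial-trace-tau-lambda} (so that $\mbb{E}_\lambda[(\tau_\lambda)_{\msf{A}_1,\msf{A}_2}] = \rho^{\otimes 2}$), and track the $\lambda$-dependent coefficients $\frac{D+t}{D+t+1}$, $\frac{\ell}{t(D+t+1)}$, $\frac{\ell^2}{t^2(D+t+1)}$ with $D = d\ell(\lambda)$. The goal is to exhibit exact cancellations: the $-\frac{\ell}{t(D+t+1)}\cdot 2P$ and $-\frac{\ell^2}{t^2(D+t+1)}$ terms should cancel the $+\frac{\ell}{t^2}P$ swap term and the $(\rho\otimes I)\swap$ term up to $O(\tr(\Delta^2)/t)$ and $O(\tr(\Delta^2)^2)$ corrections. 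This would establish
\[
\tr(\mbb{E}[\hat\rho^{\otimes 2}]\rho^{\otimes 2}) - P^2 \le O\Bigl(\tfrac{\tr(\Delta^2)}{t} + \tfrac{1}{t^{3/2}d}\cdot\text{(small)}\Bigr),
\]
that is, it vanishes at $\rho = \mmstate$. Where exact averaging over $\lambda$ is awkward, the fallback is to bound the coefficients using $\ell(\lambda) \le \min\{t, d\}$ and $D + t + 1 \ge d\ell(\lambda)$. Any leftover term not proportional to $\tr(\Delta^2)$ must be shown to be at most $O\bigl(\frac{1}{d\,t^{3/2}}\bigr)$, which should produce the $\sqrt t/\eps^2$ term after Chebyshev. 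Finally, combining both terms and solving for $N = nt$ gives $N = O\bigl(\max\{\frac{d^2}{\sqrt t\eps^2}, \frac{\sqrt t}{\eps^2}\}\bigr)$.
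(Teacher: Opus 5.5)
Your architecture matches the paper's. You run $\mathrm{Purity}(\mc{A})$ instantiated with \Cref{alg:quasi-purification-estimation}. You bound the $n^{-2}$ term from the truncated moment after discarding $-\rho^{\otimes 2}/t$ and $\Lower_\rho$ by SoS-positivity; that bound is correct and plays the role of \Cref{lem:quasi-purification-variance-case-3}. You then attack the $n^{-1}$ term through \Cref{lem:quasi-purification-conditioned-second-moment}. (Minor: the $\sqrt t/\eps^2$ term comes from the $\tr(\rho^3)/t\approx 1/(d^2t)$ piece of your own $n^{-2}$ bound. At $\rho=\mmstate$ that piece forces $n\gtrsim 1/(\sqrt t\eps^2)$; it has nothing to do with using $\mbb{E}[\ell(\lambda)]\le 2\sqrt t$.)

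The genuine gap is the $n^{-1}$ term, which is the entire difficulty, and your plan for it does not go through. The coefficients $\frac{D+t}{D+t+1}$, $\frac{\ell}{t(D+t+1)}$, $\frac{\ell^2}{t^2(D+t+1)}$ depend on $\lambda$, and so do $(\tau_\lambda)_{\msf{A}_1}$ and $(\tau_\lambda)_{\msf{A}_1,\msf{A}_2}$. \Cref{lem:partial-trace-tau-lambda} only averages the unweighted marginals, so a term such as $\frac{\ell}{t(D+t+1)}\tr(\rho(\tau_\lambda)_{\msf{A}_1})$ does not average to a multiple of $P$, and you cannot cancel term by term.

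Your fallback does not help. The inequalities $\ell\le\min\{t,d\}$ and $D+t+1\ge d\ell$ only let you upper-bound the positive terms, so the negative terms are lost. What remains is exactly $\frac{t-1}{t}P^2+\frac{2}{t}\tr(\rho^3)+\frac{\mbb{E}[\ell(\lambda)]}{t^2}P$, i.e., the residual $\frac{1}{d^2t}+\frac{\mbb{E}[\ell(\lambda)]}{dt^2}$ at $\rho=\mmstate$ that you set out to remove. Moreover, sign-indefinite quantities like $\tr(\Delta(\tau_\lambda)_{\msf{A}_1})$ cannot be controlled by bounding their coefficients at all.

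Your claim that a leftover $O(1/(dt^{3/2}))$ is tolerable is also false, and contradicts your earlier observation that the $\frac{\mbb{E}[\ell(\lambda)]}{t^2}P\approx\frac{1}{dt^{3/2}}$ residual causes the $1/\eps^4$ dependence. Any $\Delta$-independent $R$ in the $n^{-1}$ part of the variance, compared via Chebyshev against $\eps'^4=\eps^4/d^2$, forces $nt\gtrsim Rd^2t/\eps^4$. The bound must be a multiple of $\tr(\Delta^2)$ with nothing left over.

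What is missing is a regrouping under which the cancellation holds for every $\lambda$ separately. The paper substitutes $\rho=\mmstate+\Delta$ into the conditional expression. The coefficients of $\frac{1}{d^2}$ and of $\frac{2}{d}\tr(\Delta(\tau_\lambda)_{\msf{A}_1})$ then both equal $\frac{(t-1)(D+t)+D+2t}{t(D+t+1)}=1$ identically, so these pieces average exactly to $\frac{1}{d^2}+\frac{2}{d}\tr(\Delta^2)$. The remaining sign-indefinite term is $\frac{t-1}{t}\bigl(1-\frac{1}{D+t+1}\bigr)\tr((\tau_\lambda)_{\msf{A}_1,\msf{A}_2}\Delta^{\otimes 2})$. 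It gets coefficient exactly $\frac{t-1}{t}$ after adding the non-negative quantity $\frac{t-1}{2t(D+t+1)}\tr\bigl((\tau_\lambda)_{\msf{A}_1,\msf{A}_2}(\Delta\otimes I+I\otimes\Delta)^2\bigr)$. After that, only the non-negative $\tr(\Delta^2(\tau_\lambda)_{\msf{A}_1})$ and $\tr(\Delta^2)$ carry $\lambda$-dependent coefficients, which can be bounded by $\frac{3}{t}$ and $\frac{\ell}{t^2}$ before averaging.

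A shorter route also closes the gap. The output $\hat\rho$ has unit trace deterministically: the Grier--Pashayan--Schaeffer output has trace $1$, and the partial trace preserves it. Hence $\tr(\hat\rho\rho)=\frac{1}{d}+\tr(\hat\rho\Delta)$, and the $n^{-1}$ term equals $\tr(\mbb{E}[\hat\rho^{\otimes 2}]\Delta^{\otimes 2})-\tr(\Delta^2)^2$. Since $\langle\Lower_\rho,\Delta^{\otimes 2}\rangle\ge 0$, \Cref{thm:quasi-purification-truncated-second-moment} then directly gives at most $\frac{2}{t}\tr(\rho\Delta^2)+\frac{\mbb{E}[\ell(\lambda)]}{t^2}\tr(\Delta^2)$.
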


\begin{remark}
\label{rmk:fewer-copies-suffice}
     When $t \leq d^2$, \Cref{thm:dtr-mixedness-upper-t-copy} yields an $\bigo(d^2/\sqrt{t}\eps^2)$ bound. However, for $t \geq d^2$, the bound is $O(\sqrt{t}/\eps^2)$, which grows with $t$. We believe this is an artifact of our analysis, and the same algorithm should achieve the $\bigo(\max\{d^2/\sqrt{t}\eps^2, d/\eps^2\})$ rate stated in \Cref{thm:intro-mixedness-upper}. However, a simple modification immediately corrects this issue. In the regime of $t \geq d^2$, instead of performing $t$-copy measurements, the tester can perform these measurements on only $d^2$ copies at a time, obtaining an $\bigo(d/\eps^2)$ copy complexity. This precisely recovers the tradeoff of \Cref{thm:intro-mixedness-upper} as desired. Thus, it suffices to prove \Cref{thm:dtr-mixedness-upper-t-copy}. 
\end{remark}

Our proof of \Cref{thm:dtr-mixedness-upper-t-copy} will make use of the following bounds on quantities appearing in the variance of our estimator:

\begin{lemma}
\label{lem:quasi-purification-variance-case-2}
    Let $\hat{\rho}_\lambda$ be as in in \Cref{alg:quasi-purification-estimation}. Let $\Delta = \rho - \mmstate$. Then,
    \begin{equation}
        \mbb{E}[\tr(\hat{\rho}_\lambda^{\otimes 2} \rho^{\otimes 2})] \leq \tr(\rho^2)^2 + \bigo\left(\frac{\tr(\Delta^2)}{dt} + \frac{ \mbb{E}[\ell(\lambda)]}{t^2}\tr(\Delta^2) + \frac{\tr(\Delta^2)^{3/2}}{t}\right).
    \end{equation}
\end{lemma}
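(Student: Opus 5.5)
The plan is to follow the strategy sketched in the technical overview. We write $\mbb{E}[\tr(\hat{\rho}_\lambda^{\otimes 2}\rho^{\otimes 2})] = \mbb{E}_\lambda\bigl[\tr(\mbb{E}[\hat{\rho}_\lambda^{\otimes 2}\mid\lambda]\,\rho^{\otimes 2})\bigr]$ and substitute the exact conditional second moment from \Cref{lem:quasi-purification-conditioned-second-moment}. Pairing each of its five terms with $\rho^{\otimes 2}$ and writing $c_\lambda = \frac{D+t}{D+t+1}$ gives
\begin{align}
\tr(\mbb{E}[\hat{\rho}_\lambda^{\otimes 2}\mid\lambda]\,\rho^{\otimes 2}) &= \tfrac{t-1}{t}c_\lambda \tr((\tau_\lambda)_{\msf{A}_1,\msf{A}_2}\rho^{\otimes 2}) + \tfrac{2}{t}c_\lambda \tr((\tau_\lambda)_{\msf{A}_1}\rho^2) - \tfrac{2\ell(\lambda)}{t(D+t+1)}\tr((\tau_\lambda)_{\msf{A}_1}\rho) \nonumber\\
&\quad + \tfrac{\ell(\lambda)}{t^2}c_\lambda\tr(\rho^2) - \tfrac{\ell(\lambda)^2}{t^2(D+t+1)}.
\end{align}
We then substitute $\rho = I/d + \Delta$ everywhere (so $\tr(\Delta)=0$ and $\tr(\rho^2) = 1/d + \tr(\Delta^2)$). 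Each term becomes a $\Delta$-free part, a part linear in $\Delta$ (through $\tr((\tau_\lambda)_{\msf{A}_1}\Delta)$), and a part quadratic or higher (through $\tr((\tau_\lambda)_{\msf{A}_1}\Delta^2)$ and $\tr((\tau_\lambda)_{\msf{A}_1,\msf{A}_2}\Delta^{\otimes 2})$).

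The key observation is that the $\Delta$-free parts nearly cancel pointwise in $\lambda$, before any averaging. Pairing the second and third terms gives the $\Delta$-free part
\[
\frac{2}{t}\cdot\frac{(D+t) - d\ell(\lambda)}{d^2(D+t+1)} = \frac{2}{d^2(D+t+1)}.
\]
Pairing the fourth and fifth terms similarly gives a $\Delta$-free part of $\frac{\ell(\lambda)}{d\,t\,(D+t+1)}$. These residuals are $\bigo(1/(d^2 t))$. They should be cancelled by the deficit of the first term: its coefficient is $\frac{t-1}{t}c_\lambda = 1 - \frac1t - \frac{1}{D+t+1} + \dots$, and its $\Delta$-free part is $1/d^2$. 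Summing these three $\Delta$-free contributions gives
\[
\frac{2t + D - (t-1)}{d^2 t (D+t+1)} \le \frac{1}{d^2 t},
\]
which must then be absorbed by $\tr(\rho^2)^2 - \frac1{d^2} \ge \frac{2\tr(\Delta^2)}{d}$ and the $-\frac1t$ deficit. I will do this bookkeeping carefully, exactly in $\lambda$, since any leftover constant of order $1/(d^2t)$ would ruin the bound; the target contains no $\Delta$-free error term.

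For the remaining terms, the plan is to split every $\lambda$-dependent coefficient into a $\lambda$-independent constant plus a correction of size $\bigo(\ell(\lambda)/t)$ or $\bigo(1/(D+t))$. The constant parts are averaged exactly via \Cref{lem:partial-trace-tau-lambda}:
\[
\mbb{E}_\lambda[\tr((\tau_\lambda)_{\msf{A}_1}\Delta)] = \tr(\rho\Delta) = \tr(\Delta^2), \qquad \mbb{E}_\lambda[\tr((\tau_\lambda)_{\msf{A}_1}\Delta^2)] = \tr(\Delta^2)/d + \tr(\Delta^3) \le \tr(\Delta^2)/d + \tr(\Delta^2)^{3/2},
\]
and $\mbb{E}_\lambda[\tr((\tau_\lambda)_{\msf{A}_1,\msf{A}_2}\Delta^{\otimes 2})] = \tr(\Delta^2)^2$. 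Together these reconstruct $\tr(\rho^2)^2$ from the first term. They also produce the error terms $\frac{\tr(\Delta^2)}{dt}$ and $\frac{\tr(\Delta^2)^{3/2}}{t}$ from the $\frac2t\tr((\tau_\lambda)_{\msf{A}_1}\rho^2)$ term, and the term $\frac{\mbb{E}[\ell(\lambda)]}{t^2}\tr(\Delta^2)$ from the fourth term. Whenever a term is PSD-paired, for example $\tr((\tau_\lambda)_{\msf{A}_1,\msf{A}_2}\Delta^{\otimes 2}) \ge 0$ or $\tr((\tau_\lambda)_{\msf{A}_1}\Delta^2) \ge 0$, I will simply upper bound its coefficient by its maximum over $\lambda$ instead of averaging.

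The main obstacle is the signed linear terms $\tr((\tau_\lambda)_{\msf{A}_1}\Delta)$ multiplied by the $\lambda$-dependent corrections. These arise from the first term's $(2/d)\tr((\tau_\lambda)_{\msf{A}_1}\Delta)$ piece and from the second/third pair, whose linear coefficient is $\frac{2(D+t)/d - \ell(\lambda)}{D+t+1}$. The crude bound $|\tr((\tau_\lambda)_{\msf{A}_1}\Delta)| \le \|\Delta\|_\infty$ yields errors like $\ell\|\Delta\|_2/(dt^2)$, which are not of the claimed form. My first attempt will be to show that these corrections combine to the exact form $\kappa\,\ell(\lambda)/(D+t+1)$ times $\tr((\tau_\lambda)_{\msf{A}_1}\Delta)$, with $\kappa$ of order $1/(dt)$. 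I then hope to either show their total coefficient has a definite sign that can be dropped, or write $\tr((\tau_\lambda)_{\msf{A}_1}\Delta) = \tr((\tau_\lambda)_{\msf{A}_1}\rho) - 1/d$ and use $\tr((\tau_\lambda)_{\msf{A}_1}\rho) \ge 0$ together with a Cauchy--Schwarz bound against $\tr((\tau_\lambda)_{\msf{A}_1}\Delta^2)$, whose average is controlled as above. Such a bound would give terms of order $\frac{\tr(\Delta^2)}{dt} + \frac{\mbb{E}[\ell(\lambda)]}{t^2}\tr(\Delta^2)$, matching the statement.
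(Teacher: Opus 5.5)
Your skeleton is the paper's: condition on $\lambda$, insert \Cref{lem:quasi-purification-conditioned-second-moment}, substitute $\rho=I/d+\Delta$, and average with \Cref{lem:partial-trace-tau-lambda}. However, there is a genuine gap. The step that makes the lemma true is missing, and what you put in its place would fail.

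You assert a positive $\Delta$-free leftover, and you anticipate a nonzero correction $\kappa\,\ell(\lambda)/(D+t+1)$ multiplying $\tr((\tau_\lambda)_{\msf{A}_1}\Delta)$. A leftover constant cannot be absorbed by anything, because at $\rho=\mmstate$ the right-hand side is exactly $1/d^2$. Your fallbacks for the signed linear term also fail. A definite sign on its coefficient is useless, since $\tr((\tau_\lambda)_{\msf{A}_1}\Delta)$ takes both signs. Cauchy--Schwarz, $|\tr((\tau_\lambda)_{\msf{A}_1}\Delta)|\le\tr((\tau_\lambda)_{\msf{A}_1}\Delta^2)^{1/2}$, gives an error of first order in $\|\Delta\|_2$ (or, after AM--GM, a $\Delta$-free one), and the target permits neither.

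The missing fact is that both pieces cancel exactly, pointwise in $\lambda$. Let $D=d\,\ell(\lambda)$. The deficit of the first term is $\frac{1}{d^2}\bigl(1-\frac{(t-1)(D+t)}{t(D+t+1)}\bigr)=\frac{D+2t}{d^2t(D+t+1)}$. This equals the sum $\frac{2t+D}{d^2t(D+t+1)}$ of your two (correctly computed) residuals, so the constant part is exactly $1/d^2$; your ``$-(t-1)$'' is an arithmetic slip. The total coefficient of $\tr((\tau_\lambda)_{\msf{A}_1}\Delta)$ is
\[
\frac{2}{dt(D+t+1)}\bigl[(t-1)(D+t)+2(D+t)-D\bigr]=\frac{2}{d},
\]
since $(t+1)(D+t)-D=t(D+t+1)$. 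So $\kappa=0$. The conditional quantity is $\frac{1}{d^2}+\frac{2}{d}\tr((\tau_\lambda)_{\msf{A}_1}\Delta)$ plus terms quadratic in $\Delta$, which is the paper's displayed identity, and averaging gives $\frac1{d^2}+\frac2d\tr(\Delta^2)$ exactly. This is forced structurally: $\tr(\hat\rho_\lambda)=1$ deterministically and $\mbb{E}[\hat\rho_\lambda\mid\lambda]=(\tau_\lambda)_{\msf{A}_1}$. Hence $\tr(\hat\rho_\lambda^{\otimes 2}\rho^{\otimes 2})=\tr(\hat\rho_\lambda\rho)^2=\frac{1}{d^2}+\frac{2}{d}\tr(\hat\rho_\lambda\Delta)+\tr(\hat\rho_\lambda\Delta)^2$.

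Your treatment of the quadratic part also relies on a false claim. $\tr((\tau_\lambda)_{\msf{A}_1,\msf{A}_2}\Delta^{\otimes 2})$ need not be nonnegative: $\Delta^{\otimes 2}$ is indefinite, and $(\tau_\lambda)_{\msf{A}_1,\msf{A}_2}$ is merely PSD, not a mixture of product states $\omega\otimes\omega$. So you cannot replace its $\lambda$-dependent coefficient $\frac{t-1}{t}\cdot\frac{D+t}{D+t+1}$ by $\frac{t-1}{t}$ for free. The paper pays for this with
\[
0\le\tr\bigl((\tau_\lambda)_{\msf{A}_1,\msf{A}_2}(\Delta\otimes I+I\otimes\Delta)^2\bigr)=2\tr((\tau_\lambda)_{\msf{A}_1,\msf{A}_2}\Delta^{\otimes 2})+2\tr((\tau_\lambda)_{\msf{A}_1}\Delta^2).
\]
This costs an extra $\frac{t-1}{t(D+t+1)}\tr((\tau_\lambda)_{\msf{A}_1}\Delta^2)$. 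That term merges with the existing $\tr((\tau_\lambda)_{\msf{A}_1}\Delta^2)$ term, giving total coefficient at most $3/t$, and averages to $\frac{3}{t}\tr(\rho\Delta^2)$.

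Finally, the identity above gives a shorter route that avoids the refined conditional moment entirely. Write $\mbb{E}[\tr(\hat\rho_\lambda\Delta)^2]=\tr(\mbb{E}[\hat\rho_\lambda^{\otimes 2}]\Delta^{\otimes 2})$ and pair \Cref{thm:quasi-purification-truncated-second-moment} with $\Delta^{\otimes 2}$. You may drop $\Lower_\rho$, because $\tr((X\otimes X)\Delta^{\otimes 2})=\tr(X\Delta)^2\ge 0$. What remains is $\frac{t-1}{t}\tr(\Delta^2)^2+\frac{2}{t}\tr(\rho\Delta^2)+\frac{\mbb{E}[\ell(\lambda)]}{t^2}\tr(\Delta^2)$, which already yields the claimed bound.
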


\begin{lemma}
\label{lem:quasi-purification-variance-case-3}
    Let $\hat{\rho}_\lambda$ be as in in \Cref{alg:quasi-purification-estimation}. Let $\Delta = \rho - \mmstate$. Then,
    \begin{align}
        \tr(\mbb{E}[\hat{\rho}_\lambda^{\otimes 2}]^2) &\leq \tr(\rho^2)^2 + \bigo\left(\frac{1}{d^2t} + \frac{1}{t^2} + \frac{d \cdot \mbb{E}[\ell(\lambda)]}{t^3} + \frac{\mbb{E}[\ell(\lambda)]}{dt^2} + \frac{\mbb{E}[\ell(\lambda)]^2 \cdot d^2}{t^4}\right) \nonumber\\& \hspace{1em}+ \bigo\left(\frac{1}{dt} + \frac{d}{t^2} + \frac{\mbb{E}[\ell(\lambda)]}{t^2}\right)\tr(\Delta^2) + \bigo\left(\frac{\tr(\Delta^2)^{3/2}}{t}\right).
    \end{align}
\end{lemma}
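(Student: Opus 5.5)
The plan is to reduce everything to quantities that are either explicitly computable or already controlled by \Cref{lem:quasi-purification-variance-case-2}. To do this, I will use the \emph{truncated} second-moment expression of \Cref{thm:quasi-purification-truncated-second-moment} together with an SoS sign argument, rather than the exact conditional moment. Write $M \triangleq \mbb{E}[\hat{\rho}_\lambda^{\otimes 2}]$ and split it as
\begin{equation}
    M = \frac{t-1}{t}\rho^{\otimes 2} + B - \Lower_\rho, \qquad B \triangleq \frac1t(\rho \otimes I + I \otimes \rho)\cdot\swap + \frac{\mbb{E}[\ell(\lambda)]}{t^2}\swap .
\end{equation}

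First, I expand $\tr(M^2) = \langle M, M\rangle$ in the second slot and drop $-\langle M, \Lower_\rho\rangle$. This is valid because $\langle \hat{\rho}\otimes\hat{\rho}, X\otimes X\rangle = \tr(X\hat\rho)^2 \geq 0$ for every realization and every Hermitian $X$, so $\langle M,\Lower_\rho\rangle \ge 0$. This gives
\begin{equation}
    \tr(M^2) \le \frac{t-1}{t}\,\mbb{E}[\tr(\hat{\rho}_\lambda^{\otimes 2}\rho^{\otimes 2})] + \langle M, B\rangle .
\end{equation}
The first term is handled directly by \Cref{lem:quasi-purification-variance-case-2}. It contributes $\tr(\rho^2)^2 + \bigo\bigl(\frac{\tr(\Delta^2)}{dt} + \frac{\mbb{E}[\ell(\lambda)]}{t^2}\tr(\Delta^2) + \frac{\tr(\Delta^2)^{3/2}}{t}\bigr)$, which is exactly the shape of several terms in the target. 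This is where the refined conditional analysis enters; it is used only as a black box here.

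For $\langle M, B\rangle$, I substitute the expansion of $M$ once more in the first slot. Next I check that $\Lower_\rho$ again has nonnegative inner product with $B$, so it can be dropped. Indeed, $\tr((X\otimes X)(\rho\otimes I)\swap) = \tr(X\rho X) \ge 0$ and $\tr((X\otimes X)\swap) = \tr(X^2)\ge 0$. Hence $\langle M,B\rangle \le \frac{t-1}{t}\langle \rho^{\otimes 2}, B\rangle + \tr(B^2)$. Both pieces are explicit swap-trick computations:
\begin{itemize}
    \item $\langle\rho^{\otimes 2},(\rho\otimes I)\swap\rangle = \tr(\rho^3)$ and $\langle \rho^{\otimes 2},\swap\rangle = \tr(\rho^2)$;
    \item $\tr\bigl(((\rho\otimes I + I\otimes\rho)\swap)^2\bigr) = 2d\tr(\rho^2) + 2$;
    \item the cross term between $(\rho\otimes I+I\otimes\rho)\swap$ and $\swap$ gives $2d$;
    \item $\tr(\swap^2) = d^2$.
\end{itemize}

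Finally, I substitute $\rho = I/d + \Delta$ with $\Delta$ traceless. This gives $\tr(\rho^2) = \frac1d + \tr(\Delta^2)$ and $\tr(\rho^3) = \frac1{d^2} + \frac{3\tr(\Delta^2)}{d} + \tr(\Delta^3)$, and I use $\tr(\Delta^3)\le \tr(\Delta^2)^{3/2}$. Collecting terms, each piece lands in the stated bound:
\begin{itemize}
    \item $\frac{2\tr(\rho^3)}{t}$ yields $\frac{1}{d^2t} + \frac{\tr(\Delta^2)}{dt} + \frac{\tr(\Delta^2)^{3/2}}{t}$;
    \item $\frac{2d\tr(\rho^2)+2}{t^2}$ yields $\frac1{t^2} + \frac{d\tr(\Delta^2)}{t^2}$;
    \item $\frac{\mbb{E}[\ell(\lambda)]}{t^2}\tr(\rho^2)$ yields $\frac{\mbb{E}[\ell(\lambda)]}{dt^2} + \frac{\mbb{E}[\ell(\lambda)]}{t^2}\tr(\Delta^2)$;
    \item the cross term yields $\frac{d\,\mbb{E}[\ell(\lambda)]}{t^3}$;
    \item $\frac{\mbb{E}[\ell(\lambda)]^2 d^2}{t^4}$ appears as is.
\end{itemize}

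The main obstacle is the first term. Without \Cref{lem:quasi-purification-variance-case-2}, plugging the truncated moment into $\langle M,\rho^{\otimes 2}\rangle$ would produce a $\tr(\rho^2)^2/t$-type cross contribution that is not small relative to $\tr(\Delta^2)$. Since that lemma is available, the remaining risk is only the sign bookkeeping. Specifically, I must make sure every discarded term, namely $-\Lower_\rho$ paired against $M$ and against $B$, and the $-\frac1t\rho^{\otimes 2}$ hidden in the $\frac{t-1}{t}$ factor, has the correct sign. I will verify each of these pairings explicitly via the SoS representation.
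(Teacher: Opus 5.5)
Your proposal is correct and follows the paper's proof essentially step for step. You expand $\tr(\mbb{E}[\hat\rho_\lambda^{\otimes 2}]^2)$ against the truncated moment of \Cref{thm:quasi-purification-truncated-second-moment}, drop $\Lower_\rho$ via the SoS sign argument, invoke \Cref{lem:quasi-purification-variance-case-2} for the $\rho^{\otimes 2}$ pairing, and re-substitute the truncated moment (again dropping $\Lower_\rho$) for the $(\rho\otimes I + I\otimes\rho)\swap$ and $\swap$ pairings. Bundling those last two pieces into your $B$ and computing $\tr(B^2)$ directly is only a regrouping of the paper's three separate inner-product computations, and your term-by-term bookkeeping lands inside the stated bound.
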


We will prove the above bounds on the variance quantities in \Cref{sec:appendix-variance-bounds}. Now, using these bounds, we prove \Cref{thm:dtr-mixedness-upper-t-copy}.

\begin{proof}[Proof of \Cref{thm:dtr-mixedness-upper-t-copy}]
    Our tester uses \Cref{alg:collision-estimation} and instantiates it with the unbiased state estimator of \Cref{alg:quasi-purification-estimation}. We will first bound the complexity of Hilbert--Schmidt testing with this algorithm, and then convert it to trace distance. Now, our purity estimator is unbiased, i.e., $\mbb{E}[\bar{X}] = \tr(\rho^2)$. Let $\hat{\rho}_\lambda$ denote the output of \Cref{alg:quasi-purification-estimation}. Using \Cref{lem:collision-variance-generic}, we have
    \begin{equation}
        \mbb{V}[\bar{X}] = \bigo\left(\frac{\tr(\mbb{E}[\hat{\rho}_\lambda^{\otimes 2}] \rho^{\otimes 2}) - \tr(\rho^2)^2}{n} + \frac{\tr(\mbb{E}[\hat{\rho}_\lambda^{\otimes 2}]^2) - \tr(\rho^2)^2}{n^2}\right).
    \end{equation}
    Our bounds on these quantities in \Cref{lem:quasi-purification-variance-case-2,lem:quasi-purification-variance-case-3} are in terms of $\mbb{E}[\ell(\lambda)]$. As we have $\mbb{E}[\ell(\lambda)] \leq \min\{2\sqrt{t},d\}$ (recall \Cref{lem:partition-length-upper-bound}), we split into two cases based on the value of $t$:
    \paragraph{Case 1:} When $t \leq \bigo(d^2)$, we obtain the following bound on the variance from \Cref{lem:quasi-purification-variance-case-2,lem:quasi-purification-variance-case-3}:
    \begin{equation}
        \mbb{V}[\bar{X}] \leq \bigo\left(\frac{d^2}{n^2t^3} + \frac{\tr(\Delta^2)^{3/2}}{nt} + \frac{\tr(\Delta^2)}{nt^{3/2}} + \frac{d \cdot \tr(\Delta^2)}{n^2t^2}\right).
    \end{equation}
    For Hilbert--Schmidt testing, we want to distinguish between the cases $\rho = \mmstate$ and $\|\rho - \mmstate\|_2 = \|\Delta\|_2 \geq \eps$. In the first case, we have
    \begin{equation}
        \mbb{V}[\bar{X}] \leq \bigo\left(\frac{d^2}{n^2t^3}\right) \implies \mrm{Pr}[\bar{X} \geq \eps^2/2] \leq \bigo\left(\frac{d^2}{n^2t^3\eps^4}\right).
    \end{equation}
    On the other hand, when $\|\Delta\|_2 \geq \eps$, we have
    \begin{align}
        \mrm{Pr}[\bar{X} \leq \eps^2/2] &\leq \mrm{Pr}[\bar{X} \leq \tr(\Delta^2)/2] \leq  \bigo\left(\frac{d^2}{n^2t^3\tr(\Delta^2)^2} + \frac{1}{nt\cdot\tr(\Delta^2)^{1/2}} + \frac{1}{nt^{3/2}\tr(\Delta^2)} + \frac{d}{n^2t^2\tr(\Delta^2)}\right)
        \\&\leq \bigo\left(\frac{d^2}{n^2t^3\eps^4} + \frac{1}{nt\eps} + \frac{1}{nt^{3/2}\eps^2} + \frac{d}{n^2t^2\eps^2}\right),
    \end{align}
    as $\tr(\Delta^2) \geq \eps^2$. Thus, in either case, the probability of error is at most
    \begin{equation}
        \bigo\left(\frac{d^2}{n^2t^3\eps^4} + \frac{1}{nt\eps} + \frac{1}{nt^{3/2}\eps^2} + \frac{d}{n^2t^2\eps^2}\right).
    \end{equation}
    Consequently, it suffices to take
    \begin{equation}
        nt = \bigo\left(\frac{d}{\sqrt{t}\eps^2} + \frac{1}{\eps} + \frac{1}{\sqrt{t}\eps^2} + \frac{\sqrt{d}}{\eps}\right) = \bigo\left(\frac{d}{\sqrt{t}\eps^2} + \frac{\sqrt{d}}{\eps}\right)
    \end{equation}
    to test whether $\rho = \mmstate$ or $\|\rho - \mmstate\|_2 \geq \eps$ with probability at least $2/3$. By \Cref{fact:hs-to-trace-norm-testing}, the copy complexity for trace-distance testing is thus
    \begin{equation}
        \bigo\left(\frac{d^2}{\sqrt{t}\eps^2} + \frac{d}{\eps}\right) = \bigo\left(\frac{d^2}{\sqrt{t}\eps^2}\right),
    \end{equation}
    as $t \leq \bigo(d^2)$.

    \paragraph{Case 2:} When $t \geq \Omega(d^2)$, \Cref{lem:quasi-purification-variance-case-2,lem:quasi-purification-variance-case-3} now imply the following bound on the variance:
    \begin{equation}
        \mbb{V}[\bar{X}] \leq \bigo\left(\frac{1}{n^2d^2t} + \frac{\tr(\Delta^2)^{3/2}}{nt} + \frac{\tr(\Delta^2)}{ndt}\right).
    \end{equation}
    Using a similar argument as in the previous case, the probability of error for Hilbert--Schmidt testing is at most
    \begin{equation}
        \bigo\left(\frac{1}{n^2d^2t\eps^4} + \frac{1}{nt\eps} + \frac{1}{ndt\eps^2}\right).
    \end{equation}
    Consequently, it suffices to take
    \begin{equation}
        nt = \bigo\left(\frac{\sqrt{t}}{d\eps^2} + \frac{1}{\eps} + \frac{1}{d\eps^2}\right) = \bigo\left(\frac{\sqrt{t}}{d\eps^2}\right),
    \end{equation}
    when $t = \Omega(d^2)$. By \Cref{fact:hs-to-trace-norm-testing}, the copy complexity for trace-distance testing is thus $\bigo(\frac{\sqrt{t}}{\eps^2})$.

    Our bounds in the two cases trade off precisely at the threshold $t = d^2$, implying the theorem statement.
\end{proof}

\subsection{Purity estimation with $t$-copy measurements}
\label{sec:purity-estimation-t-copy}

Using our variance bounds from \Cref{sec:mixedness-upper-t-copy}, we can also prove our purity estimation upper bound (\Cref{thm:intro-purity-upper}).

\begin{proof}[Proof of \Cref{thm:intro-purity-upper}]
    We will again use \Cref{alg:collision-estimation} instantiated with \Cref{alg:quasi-purification-estimation}. This is already an unbiased estimator for the purity, so it suffices to bound its variance. We will use \Cref{lem:collision-variance-generic,lem:quasi-purification-variance-case-2,lem:quasi-purification-variance-case-3}, and again split into two cases based on the value of $t$.

    \paragraph{Case 1 $(t \leq d^2)$:} 
    In this case, the variance is upper bounded by
    \begin{equation}
        \mbb{V}[\bar{X}] \leq \bigo\left(\frac{d^2}{n^2t^3} + \frac{\tr(\Delta^2)^{3/2}}{nt} + \frac{\tr(\Delta^2)}{nt^{3/2}} + \frac{d \cdot\tr(\Delta^2)}{n^2t^2} \right)
        \leq \bigo\left(\frac{d^2}{n^2t^3} + \frac{\tr(\rho^2)^{3/2}}{nt} + \frac{\tr(\rho^2)}{nt^{3/2}} + \frac{d \cdot\tr(\rho^2)}{n^2t^2} \right),
    \end{equation}
    where we use the fact that $\tr(\Delta^2) \leq \tr(\rho^2)$. Thus, by Chebyshev's inequality,
    \begin{align}
        \mathrm{Pr}[|\bar{X} - \mbb{E}[\bar{X}]| > \eps \cdot \tr(\rho^2)] &\leq \frac{\mbb{V}[\bar{X}]}{\eps^2 \cdot \tr(\rho^2)^2}
        \\& \leq \bigo\left(\frac{d^2}{n^2t^3\eps^2 \tr(\rho^2)^2} + \frac{1}{nt\eps^2 \tr(\rho^2)^{1/2}} + \frac{1}{nt^{3/2}\eps^2 \tr(\rho^2)} + \frac{d} {n^2t^2\eps^2 \tr(\rho^2)} \right)
        \\&\leq \bigo\left(\frac{d^4}{n^2t^3\eps^2} + \frac{\sqrt{d}}{nt\eps^2} + \frac{d}{nt^{3/2}\eps^2} + \frac{d^2}{n^2t^2\eps^2}\right),
    \end{align}
    where in the last step we use $\tr(\rho^2) \geq 1/d$ for all states $\rho$.
    Thus, for the error probability to be at most $1/3$, it suffices to take
    \begin{equation}
        nt \geq C \cdot \max\left\{\frac{d^2}{\sqrt{t}\eps},\frac{\sqrt{d}}{\eps^2},\frac{d}{\sqrt{t}\eps^2},\frac{d}{\eps}\right\},
    \end{equation}
    where $C > 0$ is a sufficiently large constant. This proves the theorem in the regime $t \leq d^2$.

    \paragraph{Case 2 $(t \geq d^2)$:}
    Note that for $t \geq d^2$, we have $\max\left\{\frac{d^2}{\sqrt{t}\eps},\frac{\sqrt{d}}{\eps^2},\frac{d}{\sqrt{t}\eps^2},\frac{d}{\eps}\right\} = \max\left\{\frac{\sqrt{d}}{\eps^2},\frac{d}{\eps}\right\}$, and so it suffices to show that the latter number of copies is sufficient. Now in this case, the variance is at most
    \begin{equation}
        \mbb{V}[\bar{X}] \leq \bigo\left(\frac{1}{n^2d^2t} + \frac{\tr(\rho^2)^{3/2}}{nt} + \frac{\tr(\rho^2)}{ndt}\right),
    \end{equation}
    where we again use $\tr(\Delta^2) \leq \tr(\rho^2)$. Again, by Chebyshev's inequality,
    \begin{align}
        \mathrm{Pr}[|\bar{X} - \mbb{E}[\bar{X}]| > \eps \cdot \tr(\rho^2)] &\leq \frac{\mbb{V}[\bar{X}]}{\eps \cdot \tr(\rho^2)^2}
        \leq \bigo\left(\frac{1}{n^2d^2t\eps^2\tr(\rho^2)^2} + \frac{1}{nt\eps^2\tr(\rho^2)^{1/2}} + \frac{1}{ndt\eps^2\tr(\rho^2)}\right)
        \\&\leq \bigo\left(\frac{1}{n^2t\eps^2} + \frac{\sqrt{d}}{nt\eps^2} + \frac{1}{nt\eps^2}\right)
        \\&= \bigo\left(\frac{1}{n^2t\eps^2} + \frac{\sqrt{d}}{nt\eps^2}\right),
    \end{align}
    where the penultimate step again used $\tr(\rho^2) \geq 1/d$. Thus, it suffices to take 
    \begin{equation}
        nt \geq C \cdot \max\left\{\frac{\sqrt{t}}{\eps}, \frac{\sqrt{d}}{\eps^2}\right\},
    \end{equation}
    where $C > 0$ is again a sufficiently large constant. The same argument as in \Cref{rmk:fewer-copies-suffice} also applies here, and we can assume the tester never measures more than $d^2$ copies at once. Thus, we can indeed succeed by taking
    \begin{equation}
        nt \geq C \cdot \max\left\{\frac{d}{\eps},\frac{\sqrt{d}}{\eps^2}\right\}.
    \end{equation}
\end{proof}

\section{Lower bounds}
\label{sec:lower}
We will now prove our lower bounds, \Cref{thm:intro-lower-private,thm:intro-lower-shared}. Without loss of generality, we will assume that testers have access to the unknown state $\rho$ in batches of $t$ copies at once, and sequentially perform POVMS $\mc{M}_1, \dots, \mc{M}_n$ on $\rho^{\otimes t}$. The total number of copies consumed is thus $nt$. We will only consider non-adaptive POVM schedules, where the choice of POVM $\mc{M}_i$ cannot depend on outcomes of POVMs $\mc{M}_1, \dots \mc{M}_{i-1}$. Note that non-adaptive POVM schedules can include POVMs that are chosen at random. In general, this randomness may be shared across the POVM choices. However, we will also consider POVM schedules where measurements are drawn using only \emph{private} randomness. First, we state our lower bound in the more general setting, when the measurement schedules can have shared randomness.

\begin{theorem}[Non-adaptive lower bound] 
\label{thm:lower-high-pres-shared}
    Let $t,d, d^* \in \mbb{N}, d > d^*, 0 < \eps \leq \bigo(1/t)$, where $d^*$ is an absolute constant. The copy complexity of $\eps$-certifying $d$-dimensional states with respect to the trace norm using non-adaptive $t$-copy measurements is at least
    \begin{equation}
        nt \ge \Omega\left(\min\left\{\frac{d^{3/2}}{t\eps^2}, \frac{1}{t^2\eps^3}\right\}\right).
    \end{equation}
\end{theorem}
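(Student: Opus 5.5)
We prove the statement by reducing to a point-versus-mixture problem. Specifically, we show that no non-adaptive $t$-copy schedule can distinguish $\mmstate$ from $\rho_{\bfz} = \mmstate + \Delta_{\bfz}$, with $\bfz \sim \{\pm1\}^\ell$ uniform and $\Delta_{\bfz} = \frac{c\eps}{\sqrt{d\ell}}\sum_{i\le \ell} \bfz_i V_i$. Here $\{V_i\}$ is a fixed orthonormal family of traceless Hermitian matrices (say a Gell-Mann-type basis) with $\ell \approx d^2$. By standard concentration for matrix Rademacher sums, with high probability $\|\Delta_{\bfz}\|_\infty \lesssim \eps/d$ and $\|\Delta_{\bfz}\|_1 \ge \eps$, so $\rho_{\bfz}$ is a valid state that is $\eps$-far from $\mmstate$. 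Conditioning on this event changes the $\chi^2$ computation only negligibly. Shared randomness only induces a mixture over fixed schedules, and by convexity it suffices to bound, for every fixed schedule $\mc{M}_1,\dots,\mc{M}_n$, the divergence $\divchi(\mbb{E}_{\bfz} p_{\rho_{\bfz}} \,\|\, p_{\mmstate})$ by a small constant. We then apply the Ingster--Suslina method (\Cref{lem:ingster-suslina}) to write
\[
1+\divchi = \mbb{E}_{\bfz,\bfz'}\prod_{i=1}^n \bigl(1 + \mbb{E}_{x\sim p^{(i)}_{\mmstate}}[\delta_x(\bfz)\delta_x(\bfz')]\bigr).
\]

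The key step is linearization. Expand $(\mmstate+\Delta_{\bfz})^{\otimes t} - \mmstate^{\otimes t} = A_{\bfz} + R_{\bfz}$, where $A_{\bfz} = \sum_{k} \Delta_{\bfz}^{(k)}\otimes \mmstate^{\otimes [t]\setminus\{k\}}$ is the linear part and $R_{\bfz}$ collects the terms of order at least two. This gives $\delta_x = a_x(\bfz) + r_x(\bfz)$, where $a_x(\bfz) = \sum_i \bfz_i \alpha_{x,i}$ with $\alpha_{x,i} = \frac{c\eps d^t}{\sqrt{d\ell}}\,\tr(M_x \sum_k V_i^{(k)}\otimes\mmstate^{\otimes t-1})/\tr(M_x)$. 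For the remainder, since $\|d\Delta_{\bfz}\|_\infty \lesssim \eps$, each term of order $j$ is bounded in operator norm relative to $\mmstate^{\otimes t}$ by $\binom{t}{j}\eps^j$. Hence $|r_x| \lesssim (1+O(\eps))^t - 1 - O(t\eps) = O(t^2\eps^2)$, uniformly in $x$, and this uses $\eps \le O(1/t)$.

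Next we bound the per-measurement cross term $\mbb{E}_x[\delta_x\delta'_x]$. The linear-linear part is $\bfz^\top G^{(i)} \bfz'$, where $G^{(i)}_{ab} = \mbb{E}_x[\alpha_{x,a}\alpha_{x,b}]$ is PSD. The mixed and remainder parts are bounded using Cauchy--Schwarz by $O(t^2\eps^2)\cdot(\mbb{E}_x a_x^2)^{1/2} + O(t^4\eps^4)$. Applying $1+u\le e^u$ yields $1+\divchi \le \mbb{E}\exp(\bfz^\top G\bfz' + \text{error})$ with $G = \sum_i G^{(i)}$. For the main term we need a bound on $G$. The operators $Y_a = \sum_k V_a^{(k)}\otimes I^{\otimes t-1}$ are orthogonal with norm$^2$ equal to $t d^{t-1}$. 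Using $\sum_x M_x = I$ and the positivity bound $\tr(M_x Y)^2 \le \tr(M_x)\tr(M_x Y^2)$, we get $\tr G^{(i)} \lesssim \frac{\eps^2 t}{d\ell}\cdot d^t\sum_a \tr(Y_a^2)/d^t \cdot d^{-(t-1)}\cdot d \approx \frac{\eps^2 t d^2}{\ell}$, together with $\|G^{(i)}\|_\infty \lesssim \eps^2 t/(d\ell)\cdot d = \eps^2 t/\ell$. The Frobenius norm therefore satisfies $\|G\|_F^2 \le \|G\|_\infty \tr G \lesssim n^2 \eps^4 t^2 d^2/\ell^2$, and the standard Rademacher bilinear mgf bound (Hanson--Wright-style decoupling, valid while $\|G\|_\infty \le 1/2$) gives $\mbb{E}e^{\bfz^\top G\bfz'} \le \exp(O(\|G\|_F^2)) = \exp(O(n^2t^2\eps^4/d^2))$. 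Forcing this to be $O(1)$ only yields $nt \gtrsim d/\eps^2$. To reach $d^{3/2}$, we instead sharpen using the rank/trace structure, namely $\|G\|_F^2 \le \sum_{i,i'}\tr(G^{(i)}G^{(i')})$, where each $G^{(i)}$ has trace at most $\eps^2 t d^2/\ell$ spread over $\ell \approx d^2$ directions. This replicates the \cite{bubeck2020entanglement} computation with $\eps^2$ replaced by $t\eps^2$, giving a $\chi^2$ bound of $\exp(O(n^2t^2\eps^4/d^3))$. That is small whenever $nt \lesssim d^{3/2}/\eps^2$, i.e.\ $n \lesssim d^{3/2}/(t\eps^2)$ measurements.

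Finally, we add up the error terms. The remainders contribute $\exp(O(n t^2\eps^2 \cdot t^{1/2}\eps/d^{1/2} + n t^4\eps^4))$, roughly. To get the stated $\frac{1}{t^2\eps^3}$ term, the accounting must be arranged so that the total nonlinear contribution is $O(n t^3\eps^3)$, i.e.\ $n\lesssim 1/(t^3\eps^3)$, so $nt \lesssim 1/(t^2\eps^3)$. We aim for exactly this by bounding the mixed term by $O(t^2\eps^2)\cdot O(t\eps)$, using $|a_x| \lesssim t\eps$ pointwise. The minimum of the two constraints then gives the theorem. We expect the main obstacle to be the linear-part bound on $\|G\|_F$ in the $t$-copy setting, i.e.\ showing that the worst-case POVM cannot concentrate the $t$-fold symmetrized perturbation into fewer than about $d$ effective directions. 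Our first attempt there will be a Cauchy--Schwarz/operator-convexity argument on $\sum_x \tr(M_xY)\tr(M_xY')/\tr(M_x)$ as a contraction of $\mrm{span}\{Y_a\}$.
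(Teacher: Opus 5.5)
Outside the linear term, your plan matches the paper:

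\begin{itemize}
\item the same Liu--Acharya instance (your conditioning on $\|\Delta_{\bfz}\|_\infty\lesssim\eps/d$ stands in for the paper's clipping lemma);
\item reduction to a fixed schedule, then Ingster--Suslina;
\item a uniform $O(nt^3\eps^3)$ bound on the nonlinear part via $|a_x|\lesssim t\eps$ and $|r_x|\lesssim t^2\eps^2$.
\end{itemize}

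The gap is the step you flag yourself, and it is where the whole $d^{3/2}$ comes from.

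\textbf{Why your linear-term argument fails.} Your bound $\tr G^{(i)}\lesssim t\eps^2 d^2/\ell\approx t\eps^2$ is no better than $\ell\cdot\|G^{(i)}\|_\infty$. It therefore carries nothing beyond the operator norm, which is why it only gives $nt\gtrsim d/\eps^2$. The proposed sharpening adds no information. A trace of $\approx t\eps^2$ ``spread over $\ell\approx d^2$ directions'' is exactly the situation that produces $\|G\|_2^2\approx n^2t^2\eps^4/d^2$. The Bubeck--Chen--Li gain comes from the opposite fact: a rank-one measurement's L\"uders channel has trace only $d$, so it sees only about $d$ of the $d^2$ directions. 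Viewing the form as a contraction on $\mathrm{span}\{Y_a\}$ controls its top eigenvalue, not its trace, so your planned attack cannot produce this.

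\textbf{The asserted rate is false.} The rate $\exp(O(n^2t^2\eps^4/d^3))$ is the product-measurement heuristic, treating a $t$-copy measurement as $t$ single copies, and it fails for entangled measurements. It would give $nt\gtrsim d^{3/2}/\eps^2$ with no loss in $t$. At $t=d^2$ and $\eps\le d^{-6}$, where the $1/(t^2\eps^3)$ term is not binding, this contradicts the $O(d/\eps^2)$ fixed-measurement tester of \Cref{thm:intro-mixedness-upper}. Separately, you identify $n\lesssim d^{3/2}/(t\eps^2)$ with the theorem, but the theorem's first term is $nt\gtrsim d^{3/2}/(t\eps^2)$.

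\textbf{The missing ingredient.} It is the paper's trace bound for the averaged single-register induced channel, $\tr(\tilde{\mc H})\le d$ (\Cref{lem:reduced-channel-properties}). Since $G^{(i)}=\frac{c^2t^2\eps^2}{\ell}\mc V^\dag S_{\tilde{\mc H}_i}\mc V$, this says $\tr G^{(i)}\le c^2t^2\eps^2 d/\ell\approx t^2\eps^2/d$, a factor $d/t$ better than your bound.

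In your notation it is one line of Parseval over the full basis $\{V_a\}_{a=1}^{d^2}$. Set $N_x=\sum_k\tr_{[t]\setminus\{k\}}(M_x)\succeq 0$. Then $\tr(M_xY_a)=\tr(N_xV_a)$, so
\begin{itemize}
\item $\sum_{a=1}^{d^2}\tr(M_xY_a)^2=\|N_x\|_2^2\le(\tr N_x)^2=t^2\tr(M_x)^2$;
\item summing over $x$, $\sum_a\sum_x\tr(M_xY_a)^2/\tr(M_x)\le t^2d^t$.
\end{itemize}
The paper's computation via $\sum_{j,k}\tr(\rho_x^{(j)}\rho_x^{(k)})$ is the same thing.

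\textbf{Finishing the bound.} Use $\|G\|_2^2\le\|G\|_\infty\tr G$. With the paper's $\|G\|_\infty\le c^2nt^2\eps^2/\ell$ this gives $\|G\|_2^2\lesssim n^2t^4\eps^4/d^3$, i.e.\ exactly $nt\gtrsim d^{3/2}/(t\eps^2)$. Your own sharper, correct estimate $\|G^{(i)}\|_\infty\lesssim t\eps^2/\ell$ gives $\|G\|_2^2\lesssim n^2t^3\eps^4/d^3$, which also suffices.
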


For $\epsilon \leq \bigo\left(\frac{1}{d^{3/2}t}\right)$, the lower bound is $\Omega\left(\frac{d^{3/2}}{t\eps^2}\right)$; together with the general $\Omega(d/\eps^2)$ lower bound of \cite{o2015quantum}, this recovers the statement of \Cref{thm:intro-lower-shared}. Next, we state our lower bound against testers with private randomness.

\begin{theorem}[Private randomness lower bound] 
\label{thm:lower-high-pres-private}
    Let $t,d, d^* \in \mbb{N}, d > 1, 0 < \eps \leq \bigo(1/t)$, where $d^*$ is an absolute constant. The copy complexity of $\eps$-certifying $d$-dimensional states with respect to the trace norm using $t$-copy measurements that are drawn with private randomness is at least
    \begin{equation}
        nt \ge \Omega\left(\min\left\{\frac{d^{2}}{t\eps^2},\frac{1}{t^2\eps^3}\right\}\right).
    \end{equation}
\end{theorem}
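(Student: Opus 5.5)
Our plan is to reduce to a point-versus-mixture problem. We use the \cite{liu2024role}-style family $\rho_{\bfz} = \mmstate + \Delta_{\bfz}$ with $\Delta_{\bfz} = \frac{c\eps}{\sqrt{d\ell}}\sum_{i=1}^{\ell} \bfz_i V_i$. Here $\ell \approx d^2/2$, the $V_i$ are traceless Hermitian and Hilbert--Schmidt orthonormal, and the basis $V_1,\dots,V_\ell$ is chosen adversarially \emph{after} the (privately randomized, hence w.l.o.g.\ fixed conditioned on the randomness) POVMs $\mc{M}_1,\dots,\mc{M}_n$. Standard matrix concentration gives $\|\Delta_{\bfz}\|_\infty \le \bigo(\eps/d)$, so $\rho_{\bfz}$ is a valid state, and $\|\Delta_{\bfz}\|_1 \ge \eps$ with high probability. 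It then suffices to show that $\divchi\bigl(\mbb{E}_{\bfz}[p_{\rho_{\bfz}}] \,\|\, p_{\mmstate}\bigr) = o(1)$ unless $nt$ exceeds the claimed bound. By the Ingster--Suslina method (\Cref{lem:ingster-suslina}), since outcome distributions are products over the $n$ measurements, this divergence equals $\mbb{E}_{\bfz,\bfz'}\prod_{k=1}^n (1+\mbb{E}_{x\sim p^{(k)}_{\mmstate}}[\delta^{(k)}_x(\bfz)\delta^{(k)}_x(\bfz')]) - 1$.

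Next we linearize. Expand $(\mmstate+\Delta_{\bfz})^{\otimes t} - \mmstate^{\otimes t} = \sum_i \Delta^{(i)}_{\bfz}\otimes \mmstate^{\otimes [t]\setminus\{i\}} + R_{\bfz}$, where $R_{\bfz}$ collects the terms of degree $\ge 2$ in $\Delta_{\bfz}$. Write $\delta_x = \delta^{\rm lin}_x + \delta^{\rm hi}_x$, and use $\delta_x\delta'_x \le 2\delta^{\rm lin}_x\delta'^{\rm lin}_x$-type splitting via $(a+b)(a'+b')\le$ Cauchy--Schwarz, or more simply $1+u\le e^u$ together with Hölder over $\bfz,\bfz'$. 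This separates the $\chi^2$ bound into a linear contribution and a higher-order one. For the higher-order part we use $|\tr(M_x R_{\bfz})| \le \tr(M_x)d^{-t}\bigl((1+\bigo(\eps))^t - 1 - \bigo(t\eps)\bigr)$, which follows from the operator bound $\|d\Delta_{\bfz}\|_\infty\le\bigo(\eps)$. This gives $|\delta^{\rm hi}_x|\le\bigo(t^2\eps^2)$ uniformly when $\eps\le\bigo(1/t)$. Its contribution to the exponent is therefore $\bigo(n t^2\eps^2\cdot t\eps)$ at worst: cross terms with the linear part, whose typical size is $\bigo(t\eps)$. This is $o(1)$ precisely when $nt \ll 1/(t^2\eps^3)$, which is the second term in the minimum.

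For the linear part, $\delta^{\rm lin}_x(\bfz) = \frac{c\eps}{\sqrt{d\ell}}\sum_i \bfz_i a^{(k)}_{x,i}$ with $a^{(k)}_{x,i} = d\,\tr\bigl(M_x \sum_j (V_i)_j\otimes I\bigr)/\tr(M_x)$, i.e.\ $d$ times the normalized inner product of $V_i$ with the one-body marginal operator $\widetilde{M}_x \triangleq \sum_j \tr_{[t]\setminus j}(M_x)/d^{t-1}$. The exponent becomes the Rademacher quadratic form $\frac{c^2\eps^2}{d\ell}\,\bfz^\top A\,\bfz'$ with $A = \sum_k \sum_x \frac{\tr(M_x)}{d^t} a^{(k)}_x (a^{(k)}_x)^\top$, a PSD matrix. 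Standard bilinear Rademacher mgf bounds (Hanson--Wright/decoupling) give $\mbb{E}\exp(\gamma\,\bfz^\top A\bfz')\le \exp(\bigo(\gamma^2\|A\|_F^2))$ whenever $\gamma\|A\|_{\op}\le 1/2$, with $\gamma = c^2\eps^2/(d\ell)$.

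The main obstacle is bounding $\|A\|_F$ adversarially over the basis $V$. The key quantities are $\tr(A)$ and the spectrum of each per-measurement operator. First, each $\widetilde{M}_x$ has trace-normalized Hilbert--Schmidt mass at most $t^2$, because $\|\sum_j \tr_{[t]\setminus j} M_x\|_2 \le t\cdot d^{t-1}\|\cdot\|$-style bounds hold, giving $\tr A^{(k)} \le \bigo(t^2 d^2)$ per measurement and $\|A^{(k)}\|_{\op}\le\bigo(t^2 d^2)$ in the worst case. Then, following \cite{liu2024role}, we pick $V_1,\dots,V_\ell$ as the \emph{bottom} half of the eigenbasis of $\sum_k A^{(k)}$ acting on traceless Hermitian matrices (dimension $d^2-1$). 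This forces every eigenvalue of the restricted $A$ to be at most $\tr(\sum_k A^{(k)})/(d^2-1-\ell) = \bigo(n t^2)$, and hence $\|A\|_F^2 \le \ell\cdot \bigo(n^2t^4)$. Consequently $\gamma^2\|A\|_F^2 \lesssim \frac{\eps^4}{d^2\ell^2}\cdot \ell n^2 t^4 = \bigo\bigl(n^2t^4\eps^4/d^4\bigr)$, which is $o(1)$ for $nt\ll d^2/(t\eps^2)$, as desired. The condition $\gamma\|A\|_{\op}\le 1/2$ also holds in this range. The step I would scrutinize most is the per-outcome bound on the trace of $A^{(k)}$, and specifically that it scales as $t^2$ rather than something worse once the $t$ marginal contributions are summed. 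If it fails I would instead bound it via $\sum_x \tr(M_x)\|\widetilde{M}_x/\tr M_x\|_2^2 \le t\,\tr(\text{SWAP-type operator})$.
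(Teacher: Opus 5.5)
Your proposal is correct and is essentially the paper's proof. It uses the same Liu--Acharya instance, Ingster--Suslina, and linearization, with the higher-order cross terms bounded uniformly by $\bigo(t^3\eps^3)$ per measurement (this gives the $1/(t^2\eps^3)$ term). Your matrix $A$ equals $nt^2d\,\mc{V}^\dag S_{\tilde{\mc{H}}}\mc{V}$ for the paper's reduced L\"uders channel $\tilde{\mc{H}}$. So your trace bound, via $\|\sum_j \rho_x^{(j)}\|_2^2\le t^2$ for the normalized marginals of $M_x$, is exactly the paper's computation behind $\tr(\tilde{\mc{H}})\le d$, and your bottom-half eigenbasis is the paper's adversarial choice.

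Two small repairs are needed. First, $\|\Delta_{\bfz}\|_\infty=\bigo(\eps/d)$ holds only with high probability, so you need a truncation like the paper's $a_{\bfz}$ together with \Cref{lem:hard-instance-ignoring-normalization}. If you keep your per-outcome operator-norm bound on $\delta^{\mathrm{hi}}_x$, the truncation must be at level $\bigo(\eps/d)$; the paper's level $1/d$ suffices only because it bounds the cross terms by $\|L_{\bfz}\|_2\|H_{\bfz'}\|_2$. Second, private randomness should be handled by absorbing each measurement's seed into a single fixed POVM, not by conditioning on the realized seeds.
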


Here, for $\epsilon \leq \bigo\left(\frac{1}{d^{2}t}\right)$, the lower bound above is $\Omega\left(\frac{d^{2}}{t\eps^2}\right)$; again, with the general $\Omega(d/\eps^2)$ lower bound of \cite{o2015quantum}, we recover the statement of \Cref{thm:intro-lower-private}. Before moving to the proofs of the above bounds, we provide the necessary technical background in \Cref{sec:lower-bound-prelims}. The main analysis is carried out in \Cref{sec:lower-bound-analysis}, and we conclude the proofs in \Cref{sec:lower-bound-proofs}.

\subsection{Lower bound framework and useful lemmas}
\label{sec:lower-bound-prelims}

For $i \in [n]$, let the $i$th POVM $\mc{M}_i$ consist of operators $\{M^{(i)}_{x}\}_x$, where each operator $M^{(i)}_{x} \in \mbb{C}^{d^t \times d^t}$. Without loss of generality, we can use the standard assumption that each operator is rank-1 (see e.g., \cite[Lemma 4.8]{chen2022exponential}). Thus, let $M_x^{(i)} \triangleq \ket{\psi_x^{(i)}}\bra{\psi_x^{(i)}}$. Let the distribution over outcomes of the POVM $\mc{M}_i$ acting on $\rho^{\otimes t}$ be $p^{(i)}_\rho$, with $p^{(i)}_{\rho}(x) = \tr(M_x^{(i)} \rho^{\otimes t}) = \bra{\psi_x^{(i)}}\rho^{\otimes t}\ket{\psi_x^{(i)}}$. Let the overall distribution be $P_\rho^{(n)} \triangleq \bigotimes_{i \in [n]} p^{(i)}_\rho$.

To prove our lower bounds, we will consider the specific case of mixedness testing. As is standard, we will prove the lower bound for the task of distinguishing between the maximally mixed state and a state drawn from an almost-$\eps$ perturbation of the maximally mixed state. We define such ensembles next.

\begin{definition}[Almost-$\eps$ perturbations]
    An ensemble of quantum states $D$ is an almost-$\eps$ perturbation of the maximally mixed state if 
    \begin{equation}
        \mathrm{Pr}_{\rho \sim D}[\| \rho - \mmstate\|_1 \geq \eps] \geq \frac12.
    \end{equation}
    Let $\mc{D}_\eps$ be the set of all almost-$\eps$ perturbations of $\mmstate$. 
\end{definition}

Our hard instance will be given by the following almost-$\eps$ perturbation, which first appeared in \cite{liu2024role}.

\begin{definition}
\label{def:rademacher-hard-instance}
    Let $d^2/2 \leq \ell \leq d^2-1$. For $\bfz \sim \{-1,+1\}^\ell$ and an orthornormal basis $V_1, \dots, V_{d^2}$ of $\mathbb{C}^{d \times d}$ with $V_{d^2} = \mathbbm{1}/\sqrt{d}$, let $\Delta_{\bfz} \triangleq \frac{c\eps}{\sqrt{d\ell}} \sum_{i = 1}^\ell z_i V_i$, for an appropriate constant $c > 0$. Let $a_{\bfz} \triangleq \min\{1, \frac{1}{d\|\Delta_{\bfz}\|_{\op}}\}$. Let $\bar{\Delta}_{\bfz} = a_{\bfz} \Delta_{\bfz}$. Then, we define
    \begin{equation}
        \rho_{\bfz} = \mmstate + \bar{\Delta}_{\bfz}.
    \end{equation}
    It will also be useful to define $\mc{V} = [\mathrm{vec}(V_1), \dots, \mathrm{vec}(V_\ell)]$; note that this matrix only includes the first $\ell$ perturbation operators.
\end{definition}

It was shown in \cite[Corollary 4.4]{liu2024role} that for any valid choice of $V_1, \dots V_{d^2}$, the ensemble from \Cref{def:rademacher-hard-instance} is indeed an almost-$\eps$ perturbation of the maximally mixed state.\footnote{Note that \cite{liu2024role} showed this only for $\eps$ smaller than an absolute constant. As our lower bounds only hold in the further restricted high-precision regime, we will not explicitly account for this constraint.} They also showed the following useful lemma, allowing us to replace bounds on mgfs of functions of the normalized perturbations with those of the unnormalized ones:

\begin{lemma}[Adapted from {\cite[Claim A.9]{liu2024role}}]
\label{lem:hard-instance-ignoring-normalization}
    Consider any function $f: \{-1,+1\}^{\ell} \times \{-1,+1\}^{\ell} \mapsto \mbb{R}$. For $z \in \{-1,+1\}^\ell$, let $a_z$ be as defined in \Cref{def:rademacher-hard-instance}. Then,
    \begin{equation}
        \mbb{E}_{\bfz,\bfz^\prime} \exp(a_{\bfz},a_{\bfz^\prime}f(\bfz,\bfz^\prime)) \leq \mbb{E}_{\bfz,\bfz^\prime} \exp(f(\bfz,\bfz^\prime)) + 4\exp(-d).
    \end{equation}
\end{lemma}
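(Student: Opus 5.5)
The plan is to compare the two integrands pointwise and pay only on a small-probability bad event. (I read the left-hand side as $\exp(a_{\bfz}a_{\bfz^\prime}f(\bfz,\bfz^\prime))$, i.e.\ the comma is a product.) Let $G$ be the event that $a_{\bfz}=a_{\bfz^\prime}=1$, i.e.\ $\|\Delta_{\bfz}\|_{\op}\le 1/d$ and $\|\Delta_{\bfz^\prime}\|_{\op}\le 1/d$. On $G$ the two integrands coincide.

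On $G^c$ I only use $a_{\bfz}a_{\bfz^\prime}\in(0,1]$. If $f\ge 0$ then $a_{\bfz}a_{\bfz^\prime}f\le f$, and if $f<0$ then $a_{\bfz}a_{\bfz^\prime}f\le 0$. Either way, pointwise,
\begin{equation}
\exp(a_{\bfz}a_{\bfz^\prime}f(\bfz,\bfz^\prime)) \le \exp(f(\bfz,\bfz^\prime)) + \mathbbm{1}[G^c].
\end{equation}
Taking expectations gives $\mbb{E}\exp(a_{\bfz}a_{\bfz^\prime}f)\le \mbb{E}\exp(f)+\Pr[G^c]$. Since $\bfz,\bfz^\prime$ are identically distributed, a union bound gives $\Pr[G^c]\le 2\Pr_{\bfz}[\|\Delta_{\bfz}\|_{\op}>1/d]$. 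So it remains to show $\Pr_{\bfz}[\|\Delta_{\bfz}\|_{\op}>1/d]\le 2\exp(-d)$, for a suitably small choice of the constant $c$ in \Cref{def:rademacher-hard-instance}.

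For this tail bound I assume, as is implicit in the hard instance producing valid states, that the $V_i$ are Hermitian. I would then use concentration of the convex function $g(\bfz)=\|\Delta_{\bfz}\|_{\op}$. First, $g$ is Lipschitz in $\ell_2$ with constant $L=c\eps/\sqrt{d\ell}$, because
\begin{equation}
\Bigl\|\sum_i z_iV_i\Bigr\|_{\op}\le \Bigl\|\sum_i z_iV_i\Bigr\|_2=\|z\|_2
\end{equation}
by orthonormality of the basis. Talagrand's convex-Lipschitz inequality for Rademacher vectors then gives
\begin{equation}
\Pr[g\ge \mbb{E}g+u]\le 2\exp\bigl(-u^2/(CL^2)\bigr).
\end{equation}
Set $u=1/(2d)$ and use $\ell\ge d^2/2$. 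The exponent is $u^2/(CL^2)=\ell/(4Cdc^2\eps^2)\ge d/(8Cc^2\eps^2)$, which is at least $d$ once $c^2\eps^2\le 1/(8C)$. This holds for $c$ small, since $\eps\le 1$ and in our regime in fact $\eps\le\bigo(1/t)$.

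The main obstacle is showing $\mbb{E}g\le 1/(2d)$. Matrix Khintchine with variance proxy
\begin{equation}
\Bigl\|\frac{c^2\eps^2}{d\ell}\sum_{i\le\ell}V_i^2\Bigr\|_{\op}\le\frac{c^2\eps^2}{d\ell}\Bigl\|\sum_{i\le d^2}V_i^2\Bigr\|_{\op}=\frac{c^2\eps^2}{\ell}\le\frac{2c^2\eps^2}{d^2}
\end{equation}
gives $\mbb{E}g\lesssim c\eps\sqrt{\log d}/d$. Here I used $\sum_{i\le d^2}V_i^2=dI$ for an orthonormal Hermitian basis. This bound suffices whenever $\eps\lesssim 1/\sqrt{\log d}$, which covers the high-precision regime we use. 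To remove the logarithm for constant $\eps$, I would first try the sharp Bandeira--van Handel-type bound, $\mbb{E}\|\sum_i z_iV_i\|\lesssim\sigma+(\log d)^{3/4}\sigma_*^{1/2}\sigma^{1/2}$, noting that the weak variance $\sigma_*$ is tiny here. Alternatively, I would defer to the original argument of \cite[Claim A.9]{liu2024role}. Combining the pieces gives $\Pr[G^c]\le 4\exp(-d)$, which proves the lemma.
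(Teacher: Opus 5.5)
Your reduction is sound. The paper gives no proof of its own and simply cites \cite[Claim A.9]{liu2024role}; your argument is the natural one behind it, and the $4e^{-d}$ is exactly what your union bound produces from a $2e^{-d}$ tail per copy. On $G$ the integrands agree. Off $G$ you have $e^{a_{\bfz}a_{\bfz'}f}\le\max\{1,e^{f}\}\le e^{f}+1$. So everything reduces to showing $\Pr_{\bfz}[\|\Delta_{\bfz}\|_{\op}>1/d]\le 2e^{-d}$ uniformly over Hermitian orthonormal bases.

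The gap is in that tail bound. Matrix Khintchine only gives $\mbb{E}\|\Delta_{\bfz}\|_{\op}\lesssim c\eps\sqrt{\log d}/d$. So your argument proves the lemma only when $c\eps\lesssim 1/\sqrt{\log d}$, and the remaining case is left as ``I would first try\ldots'' or ``defer to the original''. That case is not idle. The formal lower bounds (\Cref{thm:lower-high-pres-shared,thm:lower-high-pres-private}) are stated for all $\eps\le\bigo(1/t)$, so for $t=\bigo(1)$ they invoke this lemma with $\eps$ of constant order. You also cannot rescue the estimate by taking $c$ small. Since $\|\bar\Delta_{\bfz}\|_1\le\sqrt d\,\|\Delta_{\bfz}\|_2=c\eps$, the ensemble is an almost-$\eps$ perturbation only if $c\ge 1$. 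The actual requirement is that $c\eps$ lie below a small absolute constant, i.e.\ that $\eps$ be small, which is the restriction the paper's footnote alludes to.

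The missing ingredient is a log-free bound: $\|\sum_{i\le\ell}z_iV_i\|_{\op}=\bigo(\sqrt d)$ with probability $1-2e^{-d}$. A net argument gives this directly, with no Talagrand step needed.
\begin{itemize}
\item For a unit vector $u\in\mathbb{C}^d$, $u^\dagger(\sum_i z_iV_i)u=\sum_i z_i\langle V_i,uu^\dagger\rangle$ is a real Rademacher sum. Its coefficients satisfy $\sum_i\langle V_i,uu^\dagger\rangle^2\le\|uu^\dagger\|_2^2=1$ by Bessel's inequality, so Hoeffding gives tail $2e^{-s^2/2}$.
\item A $\tfrac14$-net $N$ of the unit sphere of $\mathbb{C}^d\cong\mathbb{R}^{2d}$ has $|N|\le 81^d$, and $\|A\|_{\op}\le 2\max_{u\in N}|u^\dagger Au|$ for Hermitian $A$.
\item A union bound with $s^2=2(1+\ln 81)d$ gives $\Pr[\|\sum_iz_iV_i\|_{\op}>C_0\sqrt d]\le 2e^{-d}$ for an absolute constant $C_0$.
\item Hence $\|\Delta_{\bfz}\|_{\op}\le cC_0\eps/\sqrt{\ell}\le\sqrt2\,cC_0\eps/d\le 1/d$ once $c\eps\le 1/(\sqrt2\, C_0)$.
\end{itemize}
This holds for every Hermitian orthonormal basis, which matters because the private-randomness proof chooses the basis adversarially.

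If you prefer to keep your concentration-around-the-mean route, you can instead repair the mean. By the Gaussian comparison principle and Jensen, $\mbb{E}\|\sum_{i\le\ell}z_iV_i\|_{\op}\le\sqrt{\pi/2}\,\mbb{E}\|\sum_{i\le d^2}g_iV_i\|_{\op}$. The latter matrix is a GUE matrix, whose norm has expectation $\bigo(\sqrt d)$.
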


To show the hardness of distinguishing the maximally mixed state from the almost-$\eps$ perturbation defined above, we will show that the outcome distributions are statistically indistinguishable unless a large number of $t$-copy measurements are made. Specifically, using standard arguments (see e.g., \cite{acharya2020inference,liu2024role}), we have the following fact relating the copy complexity to the minmax and maxmin $\chi^2$-divergences between distributions over POVM outcomes.

\begin{fact}
\label{fact:copies-divergence}
    Let $n$ be large enough to succeed at mixedness testing with probability at least $2/3$ using measurements with shared randomness. Then, 
    \begin{equation}
        \min_{D \in \mc{D}_\eps} \max_{\mc{M}_1, \dots, \mc{M}_n} \divchi(\mbb{E}_{\rho \sim D}[P^{(n)}_\rho] \| P^{(n)}_{\mmstate}) \geq \frac12.
    \end{equation}
    Further, if $n$ is large enough to succeed even with privately random measurements, then
    \begin{equation}
        \max_{\mc{M}_1, \dots, \mc{M}_n} \min_{D \in \mc{D}_\eps} \divchi(\mbb{E}_{\rho \sim D}[P^{(n)}_\rho] \| P^{(n)}_{\mmstate}) \geq \frac12.
    \end{equation}
\end{fact}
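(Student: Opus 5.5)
The plan is to derive both inequalities from the standard relation between tester success and distinguishability of outcome distributions, combined with $\mathrm{TV}(p,q)\le \tfrac12\sqrt{\divchi(p\|q)}$. The difference between the two statements comes only from where the tester's randomness is placed relative to the choice of $D$.

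First, I will fix the constants. A tester that succeeds with probability $2/3$ can be boosted to success probability $1-\delta$ with $\delta=0.01$ by $\bigo(1)$ repetitions and a majority vote. This changes $n$ only by a constant factor, which is harmless for the lower bounds. The boosting keeps the measurement structure: non-adaptive stays non-adaptive, and private stays private. Now fix any $D\in\mc{D}_\eps$ and any deterministic schedule $\mc{M}_1,\dots,\mc{M}_n$ with a (possibly randomized) decision rule. Let $\mathrm{rej}(\cdot)$ denote the rejection probability. Correctness on $\mmstate$ gives $\mathrm{rej}(P^{(n)}_{\mmstate})\le\delta$. Correctness on every $\rho$ with $\|\rho-\mmstate\|_1\ge\eps$ gives
\begin{equation}
\mathrm{rej}\bigl(\mbb{E}_{\rho\sim D}P^{(n)}_\rho\bigr)\ge \Pr_{\rho\sim D}\bigl[\|\rho-\mmstate\|_1\ge\eps\bigr](1-\delta)\ge \tfrac12(1-\delta).
\end{equation}
Hence the total variation distance is at least $\tfrac12-\tfrac32\delta\ge 0.48$, so $\divchi\ge 4(0.48)^2>\tfrac12$. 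If a cleaner route is wanted, one can replace $D$ by $D$ conditioned on the far event. However, that changes the ensemble, so I prefer boosting.

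\textbf{Shared randomness.} The tester draws a random seed $r$, which selects a schedule $\mc{M}^{r}_1,\dots,\mc{M}^{r}_n$ and a decision rule. Fix any $D\in\mc{D}_\eps$. Averaging the two error guarantees over $r$ shows that $\mbb{E}_r\bigl[\mathrm{rej}_r(\mbb{E}_D P^{(n),r}_\rho)-\mathrm{rej}_r(P^{(n),r}_{\mmstate})\bigr]\ge 0.48$. So some seed $r^*$ attains a gap of at least $0.48$. For that fixed schedule, the previous paragraph gives $\divchi\ge\tfrac12$. Therefore $\max_{\mc{M}}\divchi\ge\tfrac12$ for every $D$, and in particular for the minimizing $D$. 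This is the min--max statement.

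\textbf{Private randomness.} If each $\mc{M}_i$ is drawn independently from its own distribution, the random choice can be absorbed into the POVM. Concretely, $\mc{M}_i$ is replaced by the single POVM $\{\mu_i(r_i)M^{(i),r_i}_x\}_{(r_i,x)}$, whose outcome records $(r_i,x)$. This POVM is fixed and still acts on $t$ copies. The overall outcome distribution is still the product $\bigotimes_i p^{(i)}_\rho$, and the decision rule may use the recorded $r_i$. So a privately randomized tester is a single fixed schedule that works for \emph{every} $D$ simultaneously. Applying the first paragraph to this fixed schedule and an arbitrary $D$ gives $\min_D\divchi\ge\tfrac12$ for that schedule. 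This yields the max--min inequality.

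\textbf{Main obstacle.} The only delicate point is the constant. Because $D$ places only mass $\tfrac12$ on far states, success probability $2/3$ by itself gives no gap at all. The boosting step (or conditioning) is therefore essential, and I would state explicitly that $n$ is understood up to a constant factor. The absorption argument must also preserve the product structure, which it does because the seeds $r_i$ are independent across $i$.
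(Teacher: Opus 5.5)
Your proposal is correct and is the standard argument the paper defers to. The paper gives no proof, only citing \cite{acharya2020inference,liu2024role}, and your steps match that argument: a rejection-probability gap forces $\mathrm{TV}\ge 0.48$ and hence, via $\mathrm{TV}\le\tfrac12\sqrt{\divchi}$, $\divchi>\tfrac12$; averaging over a shared seed gives the min--max form; absorbing independent private seeds into a single fixed $t$-copy POVM gives the max--min form. You are also right that the stated constants (success $2/3$, mass $\tfrac12$ on far states) give no gap on their own, so boosting is needed and the conclusion holds for $\bigo(n)$ rather than $n$ measurements; this is a genuine imprecision in the statement, and your fix is harmless for how the Fact is used in the lower bounds.
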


To bound the above divergences, we will use the Ingster--Suslina method, which provides a convenient upper bound on such quantities.
\begin{lemma}
\label{lem:ingster-suslina}
    Let $\{p_{\bftheta}^{(i)}\}_{i \in [n]}$ be a set of distribution over $[d]$ parameterized by a random parameter $\bftheta$ and $\{q^{(i)}\}_{i \in [n]}$ be a set of fixed distributions. For $x \in [d]$, define the likelihoood ratio deviation
    \begin{equation}
        \delta_{\bftheta}^{(i)}(x) = \frac{p^{(i)}_{\bftheta}(x)}{q^{(i)}(x)} - 1.
    \end{equation}
    Then,
    \begin{equation}
        \divchi\left(\mbb{E}_{\bftheta} \bigotimes_{i \in [n]} p_{\bftheta}^{(i)} \bigg\| \bigotimes_{i \in [n]} q^{(i)}\right) \leq \mbb{E}_{\bftheta,\bftheta^\prime}\exp\left(\sum_{i \in [n]} \phi_i(\bftheta,\bftheta^\prime)\right) - 1,
    \end{equation}
    where $\bftheta^\prime$ is an independent copy of $\bftheta$, and
    \begin{equation}
        \phi_i(\bftheta,\bftheta^\prime) = E_{x \sim q^{(i)}(x)}[\delta_{\bftheta}^{(i)}(x) \delta_{\bftheta^\prime}^{(i)}(x)].
    \end{equation}
\end{lemma}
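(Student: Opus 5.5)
The statement is the Ingster--Suslina bound (\Cref{lem:ingster-suslina}). I will expand the $\chi^2$-divergence of the mixture against the product reference directly. Recall that for distributions $P,Q$ with $P\ll Q$ one has $\divchi(P\|Q) = \mbb{E}_{x\sim Q}[(P(x)/Q(x))^2] - 1$. Write $Q = \bigotimes_i q^{(i)}$ and $P = \mbb{E}_{\bftheta}\bigotimes_i p^{(i)}_{\bftheta}$.

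\textbf{Likelihood ratio.} For $x = (x_1,\dots,x_n)$, the ratio is
\begin{equation}
    \frac{P(x)}{Q(x)} = \mbb{E}_{\bftheta}\prod_{i\in[n]}\bigl(1+\delta^{(i)}_{\bftheta}(x_i)\bigr).
\end{equation}
If some $q^{(i)}(x_i)=0$ while $p^{(i)}_{\bftheta}(x_i)>0$, the divergence is infinite. I will therefore assume supports are nested, as is implicit in the definition of $\delta$ and holds in our application because $q$ comes from $\mmstate$, which has full support.

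\textbf{Squaring and swapping expectations.} Square the ratio by introducing an independent copy $\bftheta'$, so that $(P/Q)^2 = \mbb{E}_{\bftheta,\bftheta'}\prod_i(1+\delta^{(i)}_{\bftheta}(x_i))(1+\delta^{(i)}_{\bftheta'}(x_i))$. Then take $\mbb{E}_{x\sim Q}$ and exchange it with $\mbb{E}_{\bftheta,\bftheta'}$ by Tonelli, which applies since every factor $1+\delta = p/q$ is nonnegative. Because $Q$ is a product, the expectation over $x$ factorizes across $i$.

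\textbf{Evaluating each factor.} Each factor is
\begin{equation}
    \mbb{E}_{x_i\sim q^{(i)}}\bigl[(1+\delta^{(i)}_{\bftheta}(x_i))(1+\delta^{(i)}_{\bftheta'}(x_i))\bigr] = 1 + \mbb{E}[\delta_{\bftheta}] + \mbb{E}[\delta_{\bftheta'}] + \phi_i(\bftheta,\bftheta').
\end{equation}
The linear terms vanish because $\mbb{E}_{x\sim q^{(i)}}[\delta^{(i)}_{\bftheta}(x)] = \sum_x p^{(i)}_{\bftheta}(x) - 1 = 0$. This gives the identity
\begin{equation}
    \divchi(P\|Q) = \mbb{E}_{\bftheta,\bftheta'}\prod_i\bigl(1+\phi_i(\bftheta,\bftheta')\bigr) - 1.
\end{equation}

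\textbf{Conclusion.} Apply $1+u\le e^u$ factorwise. This requires each factor to be nonnegative, which holds since $1+\phi_i = \mbb{E}_{x\sim q^{(i)}}[(p_{\bftheta}/q)(p_{\bftheta'}/q)] \ge 0$. A product of nonnegative numbers each bounded by $e^{\phi_i}$ is at most $\exp(\sum_i\phi_i)$, which yields the claim.

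The argument is essentially routine. The only care points are justifying the interchange of expectations and ensuring that each factor is nonnegative before applying the exponential bound, and both follow from the nonnegativity of likelihood ratios. If the distributions are continuous rather than over $[d]$, the same argument works with densities.
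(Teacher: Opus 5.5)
Your proof is correct, and it is the standard derivation of the Ingster--Suslina bound; the paper itself states this lemma without proof as a known tool. You derive the exact identity $\divchi(P\|Q) = \mbb{E}_{\bftheta,\bftheta'}\prod_{i}\bigl(1+\phi_i(\bftheta,\bftheta')\bigr) - 1$, relax it via $1+u\le e^u$, and correctly handle the two delicate points: absolute continuity, which makes $\mbb{E}_{x\sim q^{(i)}}[\delta^{(i)}_{\bftheta}(x)]=0$, and nonnegativity of each factor $1+\phi_i$ before you multiply the termwise bounds.
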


To compute the inner product of likelihood deviations above, it will be helpful to define the L\"uders channel associated with a POVM.

\begin{definition}[L\"uders channel]
Let $\mc{M}$ be a POVM with measurement operators $\{M_x\}_x$, where each operator is a rank-$1$ operator with dimension $d \times d$. Then, the L\"uders channel associated with $\mc{M}$ acts on quantum states $\rho \in \mbb{C}^{d \times d}$ as follows:
    \begin{equation}
        \mc{H}_{\mc{M}}(\rho) = \sum_{x} \tr(M_x \rho) \hat{M}_x.
    \end{equation}
\end{definition}

We will also state some useful properties of such channels:

\begin{lemma}[Properties of L\"uders channel; {\cite[Lemma 3.4]{liu2024role}}]
\label{lem:luders-properties}
    Let $\mc{H}_{\mc{M}}$ be the L\"uders associated with some POVM $\mc{M}$ over $d$-dimensional states. Then, 
    \begin{enumerate}
        \item $\mc{H}_{\mc{M}}$ has Hermitian eigenvectors, say,  $V_1, \dots, V_{d^2}$ with associated eigenvalues $0 \leq \lambda_1, \dots, \lambda_{d^2} \leq 1$. 
        \item The eigenvectors of $\mc{H}_{\mc{M}}$ form an orthonormal basis of $\mathbb{C}^{d \times d}$.
        \item $\mc{H}_{\mc{M}}$ is unital, i.e., $\mc{H}_{\mc{M}}(\mathbbm{1}) = \mathbbm{1}$. Thus, one of the eigenvectors is $\mathbbm{1}/\sqrt{d}$, with eigenvalue 1. WLOG, assume $V_{d^2} = \mathbbm{1}/\sqrt{d}, \lambda_{d^2} = 1$. Consequently, all other eiegenvectors are traceless.
        \item $\tr(\mc{H}_{\mc{M}}) = \sum_{i = 1}^{d^2} \lambda_i = d$.
    \end{enumerate}
\end{lemma}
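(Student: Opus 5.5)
The plan is to write the Lüders channel in terms of the rank-one operators $M_x = \ket{\psi_x}\bra{\psi_x}$ and read off each property from an explicit matrix representation. Write $w_x \triangleq \tr(M_x)$ and $\hat{M}_x = M_x/w_x$. Then
\begin{equation}
\mc{H}_{\mc{M}}(X) = \sum_x \frac{1}{w_x} M_x \, \tr(M_x X) = \sum_x w_x \hat{M}_x \langle \hat{M}_x, X\rangle .
\end{equation}
This exhibits $\mc{H}_{\mc{M}}$, as a linear map on $\mbb{C}^{d\times d}$ with the Hilbert--Schmidt inner product, as the operator $\sum_x w_x \,|\hat{M}_x\rangle\!\rangle\langle\!\langle \hat{M}_x|$. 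Since $w_x \geq 0$ and each $\hat{M}_x$ is Hermitian, this is a positive semidefinite superoperator. Equivalently, its Liouville matrix is $S = \sum_x w_x \mathrm{vec}(\hat{M}_x)\mathrm{vec}(\hat{M}_x)^\dagger \succeq 0$.

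Properties 1 and 2 then follow from the spectral theorem. $S$ is Hermitian PSD, so it has an orthonormal eigenbasis with nonnegative real eigenvalues. To ensure the eigenvectors can be chosen Hermitian, I would note that $\mc{H}_{\mc{M}}$ is Hermiticity-preserving: $\mc{H}(X^\dagger) = \mc{H}(X)^\dagger$, since $\tr(M_x X^\dagger) = \overline{\tr(M_x X)}$. Hence $\mc{H}$ commutes with the antiunitary map $X\mapsto X^\dagger$, and so restricts to a real-symmetric PSD operator on the real space of Hermitian matrices. That space has real dimension $d^2$, and any real-orthonormal basis of it is also a complex orthonormal basis of $\mbb{C}^{d\times d}$. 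Diagonalizing there gives Hermitian orthonormal eigenvectors spanning everything.

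Unitality uses $\sum_x M_x = I$: $\mc{H}(I) = \sum_x \tr(M_x)\hat{M}_x = \sum_x M_x = I$. So $I/\sqrt{d}$ is a unit-norm eigenvector with eigenvalue $1$, and we may choose the orthonormal eigenbasis to include it, since it lies in the Hermitian space and we can extend it orthonormally within each eigenspace. The remaining eigenvectors are orthogonal to $I$, i.e.\ traceless. For the trace, I compute
\begin{equation}
\tr(\mc{H}) = \sum_x w_x \|\hat{M}_x\|_2^2 = \sum_x w_x \cdot \frac{\tr(M_x^2)}{w_x^2} = \sum_x \tr(M_x) = \tr(I) = d,
\end{equation}
where rank one gives $\tr(M_x^2) = \tr(M_x)^2$.

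The upper bound $\lambda_i\le 1$ is the only step needing care. I would show $\|\mc{H}\|_{\op}\le 1$ on the Hilbert--Schmidt space. For Hermitian $X$,
\begin{equation}
\langle X, \mc{H}(X)\rangle = \sum_x \frac{\tr(M_x X)^2}{\tr(M_x)} .
\end{equation}
By Cauchy--Schwarz with rank-one $M_x$, $\tr(M_xX)^2 = \bra{\psi_x}X\ket{\psi_x}^2 \le \braket{\psi_x|\psi_x}\bra{\psi_x}X^2\ket{\psi_x}$. Substituting gives the bound $\sum_x \tr(M_x X^2) = \tr(X^2)$. Hence every eigenvalue lies in $[0,1]$. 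The main obstacle is just this Cauchy--Schwarz step together with the Hermitian-basis bookkeeping; the rest is direct computation.
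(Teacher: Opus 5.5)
Your proof is correct and complete. The frame-operator form $\mc{H}_{\mc{M}} = \sum_x w_x |\hat{M}_x\rangle\!\rangle\langle\!\langle \hat{M}_x|$, the restriction to the real space of Hermitian matrices, unitality from $\sum_x M_x = I$, the trace identity via $\tr(\hat{M}_x^2)=1$, and the Cauchy--Schwarz bound $\langle X, \mc{H}_{\mc{M}}(X)\rangle \le \tr(X^2)$ all check out.

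The paper does not prove this lemma; it imports it from Liu and Acharya. The closest in-paper argument is \Cref{lem:reduced-channel-properties}, which proves the analogous facts for the induced channel $\tilde{\mc{H}}$ by a structural route:
\begin{itemize}
\item self-adjointness comes from $\tilde{\mc{H}}_{j,k}^\dagger = \tilde{\mc{H}}_{k,j}$;
\item the bound $\lambda \le 1$ comes from a positivity argument: if a traceless Hermitian eigenvector $M$ had eigenvalue $\lambda > 1$, the image of $\mmstate + cM$ with $c$ maximal would leave the PSD cone;
\item the trace bound is computed via the single-register marginals $\rho_x^{(k)}$.
\end{itemize}

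Your route gives two things that argument does not. First, it yields nonnegativity of the spectrum, which self-adjointness, unitality and positivity alone cannot give: conjugation by a Pauli matrix is a self-adjoint unital channel with eigenvalue $-1$. Second, it shows exactly where rank one enters. Without it you would only get $\tr(\mc{H}_{\mc{M}}) = \sum_x \tr(M_x^2)/\tr(M_x) \le d$.

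Your route is also no less general in the paper's setting. Unwinding the definitions gives $\tilde{\mc{H}} = d^{-(t-1)}\sum_x w_x |\bar{\rho}_x\rangle\!\rangle\langle\!\langle \bar{\rho}_x|$, where $\bar{\rho}_x = \frac{1}{t}\sum_k \rho_x^{(k)}$. Replace your Cauchy--Schwarz step by $\tr(\bar{\rho}_x X)^2 \le \tr(\bar{\rho}_x X^2)$ and use $\sum_x w_x \bar{\rho}_x = d^{t-1} I$. This reproves \Cref{lem:reduced-channel-properties}, including the eigenvalue nonnegativity that the lower-bound proofs rely on (e.g.\ when using $\|S_{\tilde{\mc{H}}}\|_1 = \tr(\tilde{\mc{H}})$). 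The paper's structural argument does not establish that nonnegativity.
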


Let us now define a class of induced channels that will arise naturally in our analysis:

\begin{definition}[Single-register induced channel]
\label{def:reduced-channels}
Given $t \in \mbb{N}$, as well as a quantum channel $\mc{H} : \mbb{C}^{d^t \times d^t} \mapsto \mbb{C}^{d^t \times d^t}$, we define for all $j,k \in [t]$ the channel $\Tilde{\mc{H}}_{j,k} : \mbb{C}^{d \times d} \mapsto \mbb{C}^{d \times d}$ as follows:
\begin{equation}
    \Tilde{\mc{H}}_{j,k} (M) = \tr_{[t] \setminus \{j\}} (M_k \otimes \mmstate^{\otimes [t] \setminus \{k\}}).
\end{equation}
We also define the average of these channels, $\Tilde{\mc{H}} \triangleq \frac{1}{t^2} \sum_{j,k \in [t]} \Tilde{\mc{H}}_{j,k}$.
\end{definition}

Next, we state and prove some useful properties of such channels when they are induced by a L\"uders channel on the $d^t$-dimensional space.

\begin{lemma}
\label{lem:reduced-channel-properties}
    Let $\mc{H}$ be a L\"uders channel associated with a measurement over $d^t$-dimensional states. Let $\tilde{\mc{H}}$ be the induced channel defined in \Cref{def:reduced-channels}. Then, $\tilde{\mc{H}}$ satisfies items 1-3 of \Cref{lem:luders-properties} and has $\tr(\tilde{\mc{H}}) \leq d$. 
\end{lemma}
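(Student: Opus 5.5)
The plan is to verify each property directly from the structure of the Lüders channel $\mc{H}$. I read \Cref{def:reduced-channels} as $\Tilde{\mc{H}}_{j,k}(M) = \tr_{[t]\setminus\{j\}}\bigl(\mc{H}(M_k \otimes \mmstate^{\otimes [t]\setminus\{k\}})\bigr)$. The basic tool is the Hilbert--Schmidt bilinear form of a Lüders channel with rank-one elements $M_x$:
\begin{equation}
    \langle A, \mc{H}(B)\rangle = \sum_x \frac{\overline{\tr(M_x A)}\,\tr(M_x B)}{\tr(M_x)}.
\end{equation}
From the definition one gets $\langle A, \Tilde{\mc{H}}_{j,k}(B)\rangle = d^{-(t-1)}\langle A_j\otimes I^{\otimes [t]\setminus\{j\}}, \mc{H}(B_k \otimes I^{\otimes [t]\setminus\{k\}})\rangle$. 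Since $\mc{H}$ is self-adjoint, the adjoint of $\Tilde{\mc{H}}_{j,k}$ is $\Tilde{\mc{H}}_{k,j}$, so the average $\Tilde{\mc{H}}$ is self-adjoint.

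\textbf{Positivity and Hermitian eigenbasis.} Set $S = \sum_j A_j \otimes I^{\otimes [t]\setminus\{j\}}$. Then
\begin{equation}
    \langle A,\Tilde{\mc{H}}(A)\rangle = \frac{1}{t^2 d^{t-1}}\langle S,\mc{H}(S)\rangle = \frac{1}{t^2 d^{t-1}}\sum_x \frac{|\tr(M_xS)|^2}{\tr(M_x)} \ge 0,
\end{equation}
so $\Tilde{\mc{H}}$ is positive semidefinite. The map $\Tilde{\mc{H}}$ is a composition of Hermiticity-preserving maps: tensoring with $\mmstate$, then $\mc{H}$, then a partial trace. It therefore restricts to a real self-adjoint PSD operator on the real space of Hermitian matrices. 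The spectral theorem there gives an orthonormal eigenbasis of Hermitian matrices with nonnegative eigenvalues. This basis is also orthonormal for $\mathbb{C}^{d\times d}$, which gives items 1 and 2 apart from the upper bound on the eigenvalues.

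\textbf{Unitality and eigenvalues at most 1.} Because $\mc{H}(\mathbbm{1})=\mathbbm{1}$, we get $\Tilde{\mc{H}}_{j,k}(\mathbbm{1}) = \tr_{[t]\setminus\{j\}}(\mathbbm{1}^{\otimes t}/d^{t-1}) = \mathbbm{1}$. Hence $\Tilde{\mc{H}}$ is unital, $\mathbbm{1}/\sqrt{d}$ is an eigenvector with eigenvalue $1$, and by orthogonality all other eigenvectors are traceless (item 3). Each $\Tilde{\mc{H}}_{j,k}$ is CPTP, being a composition of CPTP maps, and unital. Unital channels are contractions in Hilbert--Schmidt norm, so their average is too. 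Therefore every eigenvalue of the self-adjoint map $\Tilde{\mc{H}}$ lies in $[0,1]$.

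\textbf{Trace bound.} This is the only step involving any computation. Using the matrix units $E_{ab}$, I would compute
\begin{equation}
    \tr(\Tilde{\mc{H}}_{j,k}) = \sum_{a,b}\langle E_{ab},\Tilde{\mc{H}}_{j,k}(E_{ab})\rangle = d^{-(t-1)}\sum_x \frac{\tr(P^{(j)}_x P^{(k)}_x)}{\tr(M_x)},
\end{equation}
where $P^{(j)}_x = \tr_{[t]\setminus\{j\}}(M_x)$ and the identity $\sum_{a,b}\overline{\tr(PE_{ab})}\tr(QE_{ab}) = \tr(PQ)$ holds for Hermitian $P,Q$. Both $P^{(j)}_x$ and $P^{(k)}_x$ are PSD with trace $\tr(M_x)$, so $\tr(P^{(j)}_xP^{(k)}_x)\le \tr(M_x)^2$. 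This gives
\begin{equation}
    \tr(\Tilde{\mc{H}}_{j,k}) \le d^{-(t-1)}\sum_x \tr(M_x) = d^{-(t-1)}\cdot d^t = d.
\end{equation}
Averaging over $j,k$ yields $\tr(\Tilde{\mc{H}})\le d$. The main thing to get right is the bookkeeping of the $d^{-(t-1)}$ normalization and the adjoint relation between $\Tilde{\mc{H}}_{j,k}$ and $\Tilde{\mc{H}}_{k,j}$; the individual $\Tilde{\mc{H}}_{j,k}$ with $j\neq k$ are not self-adjoint, so the spectral claims must be made only for the average $\Tilde{\mc{H}}$.
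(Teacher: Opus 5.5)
Your proof is correct. For self-adjointness (via $\tilde{\mc{H}}_{j,k}^\dagger = \tilde{\mc{H}}_{k,j}$), Hermiticity preservation, unitality and the trace bound, it matches the paper's argument almost line for line. The paper normalizes the marginals $\tr_{[t]\setminus\{k\}}(M_x)$ to unit trace, but it is the same computation.

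The genuine difference is in the spectral bounds. The paper rules out an eigenvalue $\lambda > 1$ by contradiction: it takes a traceless eigenvector $M$, chooses the largest $c$ with $\mmstate + cM \succeq 0$, and notes that $\tilde{\mc{H}}$ would map this to $\mmstate + c\lambda M \not\succeq 0$. That uses only positivity and unitality. You instead invoke Hilbert--Schmidt contractivity of unital channels (a Kadison--Schwarz consequence), which gives $|\lambda| \le 1$.

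More importantly, you prove that $\tilde{\mc{H}}$ is positive semidefinite, via $\langle A, \tilde{\mc{H}}(A)\rangle = \frac{1}{t^2 d^{t-1}}\sum_x |\tr(M_x S)|^2/\tr(M_x) \ge 0$. The paper's proof never shows $\lambda_i \ge 0$, even though item 1 asserts it. It is also needed downstream: to write $\|S_{\tilde{\mc{H}}}\|_1 = \tr(\tilde{\mc{H}})$ in the shared-randomness lower bound, and to bound the $\ell$ smallest eigenvalues by the average of the rest in the private-randomness bound. On this point your version is strictly more complete.

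Your quadratic form also makes the contractivity citation unnecessary. By Cauchy--Schwarz, $|\tr(M_x S)|^2 \le \tr(M_x)\tr(S^\dagger M_x S)$, so $\sum_x |\tr(M_x S)|^2/\tr(M_x) \le \|S\|_2^2$. Moreover $\|S\|_2^2 = t d^{t-1}\|A\|_2^2 + t(t-1)d^{t-2}|\tr A|^2 \le t^2 d^{t-1}\|A\|_2^2$. Together these give $0 \le \langle A, \tilde{\mc{H}}(A)\rangle \le \|A\|_2^2$ directly, i.e. both eigenvalue bounds from one identity.
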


\begin{proof}
We split this proof up into its components.
\paragraph{Hermitian spectral decomposition:} It is not hard to see that $\tilde{\mc{H}}_{j,k}^\dag = \tilde{\mc{H}}_{k,j}$. As $\tilde{\mc{H}}$ sums over all pairs $j,k$, we have $\tilde{\mc{H}} = \tilde{\mc{H}}^\dag$, i.e., the reduced channel is Hermitian. As the L\"uders channel $\mc{H}$ is Hermiticity preserving, and so are the other operations in $\tilde{\mc{H}}_{j,k}$, we immediately see that $\tilde{\mc{H}}_{j,k}$ is Hermiticity preserving; the same property is obtained for $\tilde{\mc{H}}$ by linearity. As a consequence of these two properties, the eigenvectors of $\tilde{\mc{H}}$ are Hermitian, and form an orthonormal basis of the space of $d \times d$ Hermitian operators. However, as this space has the same dimension as $\mbb{C}^{d \times d}$, they also form an orthonormal basis of $\mbb{C}^{d \times d}$.

\paragraph{Unitality:} Recall that $\mc{H}$ itself is unital, and so by definition, the maps $\tilde{\mc{H}}_{j,k}$ are also unital. $\tilde{\mc{H}}$ is thus unital by linearity. 

\paragraph{Eigenvalue upper bound:} We wish to show that all eigenvalues of $\tilde{\mc{H}}$ are at most $1$. More generally, we will show that this upper bound holds for all unital CPTP maps. For the sake of contradiction, assume the channel has some eigenvalue $\lambda > 1$, and let $M \in \mbb{C}^{d \times d}$ be the corresponding non-zero traceless eigenvector. Let $c > 0$ be the largest possible real such that $\sigma = \mmstate + cM$ is psd; such a $c$ must exist as $M$ being traceless and non-zero must have a negative entry along its diagonal. Then, $\tilde{\mc{H}}(\sigma) = \mmstate + c\lambda M$. As $c\lambda > c$, $\tilde{\mc{H}}(\sigma)$ is not psd, implying that $\tilde{\mc{H}}$ is not completely positive, giving us our desired contradiction.

\paragraph{Trace upper bound:} \chirag{This last step here might be loose, as the bound is saturated by separable measurements.}
We will show that for $j,k \in [t]$, the channels $\tilde{\mc{H}}_{j,k}$ have trace at most $d$, and the desired bound on $\tr(\tilde{\mc{H}})$ will follow by linearity. Let the underlying L\"uders channel correspond to a POVM with rank-1 measurement operators $\{\ket{\psi_x}\bra{\psi_x}\}_x$.
Note that we can write
\begin{align}
    \tr(\tilde{\mc{H}}_{j,k}) &= \sum_{a,b = 1}^d \bra{a} \tilde{\mc{H}}_{j,k}(\ket{a}\bra{b}) \ket{b}
    \\&= \sum_{a,b = 1}^d \tr(\ket{b}\bra{a}_{j} \otimes \mathbbm{1}^{\otimes [t] \setminus \{j\}} \cdot \mc{H}(\ket{a}\bra{b}_k \otimes \mmstate^{\otimes [t]\setminus\{k\}} ) )
    \\&= \sum_{a,b = 1}^d \frac{1}{d^{t-1}} \tr(\ket{b}\bra{a}_{j} \otimes \mathbbm{1}^{\otimes [t] \setminus \{j\}} \cdot \mc{H}(\ket{a}\bra{b}_k \otimes \mathbbm{1}^{\otimes [t]\setminus\{k\}} ) ). \label{eq:reduced-channel-trace-1}
\end{align}
We will first compute the inner $\mc{H}$ term. For outcome $x$, it will be convenient to define $\rho_x^{(k)} \triangleq \frac{\tr_{[t] \setminus \{k\}}(\ket{\psi_x}\bra{\psi_x})}{\bra{\psi_x}\ket{\psi_x}}$. Note that for all $k \in [t]$ and for all outcomes $x$, $\rho_x^{(k)}$ is psd and has trace $1$. Now,
\begin{align}
    \mc{H}(\ket{a}\bra{b}_k \otimes \mathbbm{1}^{\otimes [t]\setminus\{k\}}) &= \sum_x \frac{\tr(\ket{\psi_x}\bra{\psi_x} \cdot \ket{a}\bra{b}_k \otimes \mathbbm{1}^{\otimes [t]\setminus\{k\} })}{\bra{\psi_x}\ket{\psi_x}} \ket{\psi_x}\bra{\psi_x}
    \\&= \sum_x \tr(\rho_x^{(k)} \ket{a}\bra{b}) \ket{\psi_x}\bra{\psi_x}.
\end{align}
Substituting this back into \Cref{eq:reduced-channel-trace-1}, we get
\begin{align}
    \tr(\tilde{\mc{H}}_{j,k}) &= \frac{1}{d^{t-1}} \sum_{a,b = 1}^d \sum_x  \tr(\rho_x^{(k)} \ket{a}\bra{b}) \tr(\ket{b}\bra{a}_{j} \otimes \mathbbm{1}^{\otimes [t] \setminus \{j\}} \cdot \ket{\psi_x}\bra{\psi_x})
    \\&= \frac{1}{d^{t-1}} \sum_{a,b = 1}^d \sum_x  \tr(\rho_x^{(k)} \ket{a}\bra{b}) \tr(\rho_x^{(j)} \ket{b}\bra{a}) \tr(\ket{\psi_x}\bra{\psi_x})
    \\&= \frac{1}{d^{t-1}} \sum_x \tr\left(\rho_x^{(k)} \otimes \rho_x^{(j)} \cdot \sum_{a,b = 1}^d \ket{a,b}\bra{b,a}\right) \tr(\ket{\psi_x}\bra{\psi_x})
    \\&= \frac{1}{d^{t-1}} \sum_x \tr\left(\rho_x^{(k)} \otimes \rho_x^{(j)} \cdot \swap\right) \tr(\ket{\psi_x}\bra{\psi_x})
    \\&= \frac{1}{d^{t-1}} \sum_x \tr(\rho_x^{(k)}\rho_x^{(j)}) \tr(\ket{\psi_x}\bra{\psi_x})
    \\&\leq \frac{1}{d^{t-1}} \sum_x \tr(\ket{\psi_x}\bra{\psi_x})
    \\&= d.
\end{align}
In the only inequality, we used the fact that $\rho_x^{(k)},\rho_x^{(j)}$ are psd with unit trace, and so their inner product is at most $1$. In the last step, we used the fact that $\ket{\psi_x}\bra{\psi_x} \in \mbb{C}^{d^t \times d^t}$ are POVM operators and thus sum to the identity.
\end{proof}

Lastly, we will use the following standard fact about Rademacher random variables to bound the moment generating functions arising from our use of \Cref{lem:ingster-suslina}.

\begin{lemma}
\label{lem:rademacher-quadratic-mgf}
    Let $\bfz,\bfz^\prime \sim \{-1,+1\}^\ell$. Then, for $M \in \mbb{C}^{\ell \times \ell}$,
    \begin{equation}
        \mbb{E}_{\bfz,\bfz^\prime}[
        \exp(\lambda \bfz^\top M \bfz^\prime)
        ] \leq \exp(\bigo(\lambda^2\|M\|_2^2)),
    \end{equation}
    whenever $\lambda = \bigo(1/\|M\|_{\infty})$.
\end{lemma}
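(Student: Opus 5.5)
We need to prove the Rademacher bilinear-form mgf bound: for independent Rademacher vectors $\bfz,\bfz'$ and matrix $M$, $\mbb{E}\exp(\lambda \bfz^\top M\bfz')\le \exp(\bigo(\lambda^2\|M\|_2^2))$ when $\lambda=\bigo(1/\|M\|_\infty)$. Here $\|M\|_2$ is the Frobenius (Schatten-2) norm and $\|M\|_\infty$ the operator norm. Since the exponent must be real, we take $M$ real; in the application the entries come from inner products of Hermitian operators. For complex $M$ one would read the statement with $\mathrm{Re}$ in the exponent, and the same argument applies to $\mathrm{Re}\,M$.

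\textbf{Step 1 (condition on $\bfz'$).} Condition on $\bfz'$ and set $v=M\bfz'$. The exponent $\lambda\bfz^\top v$ is then a weighted sum of independent Rademacher variables. Hoeffding's lemma gives $\mbb{E}_{\bfz}\exp(\lambda\bfz^\top v)=\prod_i\cosh(\lambda v_i)\le\exp(\lambda^2\|v\|_2^2/2)$. It therefore suffices to bound
\begin{equation}
\mbb{E}_{\bfz'}\exp\bigl(\tfrac{\lambda^2}{2}\,\bfz'^\top A\bfz'\bigr),\qquad A\triangleq M^\top M\succeq 0.
\end{equation}

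\textbf{Step 2 (a quadratic form in one Rademacher vector).} Split $A$ into its diagonal and off-diagonal parts. The diagonal part contributes deterministically, since $\bfz_i'^2=1$, giving exactly $\tr(A)=\|M\|_2^2$. That yields the factor $\exp(\lambda^2\|M\|_2^2/2)$. For the off-diagonal part $\sum_{i\ne j}A_{ij}\bfz_i'\bfz_j'$, we cannot directly use Gaussian moment generating functions. Two routes are available.
\begin{enumerate}
\item Decouple (de la Pe\~na--Montgomery-Smith) and then reapply Step 1. This requires the off-diagonal form to have small Frobenius and operator norms.
\item Cleaner: compare with Gaussians by majorization. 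Writing $\exp(\lambda\bfz^\top M\bfz')=\mbb{E}$-expansions, all odd moments vanish. The even moments $\mbb{E}(\bfz^\top M\bfz')^{2k}$ are dominated by those of $\mathbf{g}^\top M\mathbf{g}'$ for independent standard Gaussian vectors, because $\mbb{E}z^{2k}\le\mbb{E}g^{2k}$ and every monomial has a nonnegative coefficient after taking expectations. Hence $\mbb{E}\exp(\lambda\bfz^\top M\bfz')\le\mbb{E}\exp(\lambda\mathbf{g}^\top M\mathbf{g}')$.
\end{enumerate}
I would take route 2. The Gaussian calculation is then exact: integrating out $\mathbf g$ gives $\mbb{E}_{\mathbf g'}\exp(\lambda^2\|M\mathbf g'\|^2/2)=\prod_i(1-\lambda^2 s_i^2)^{-1/2}$, where $s_i$ are the singular values of $M$.

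\textbf{Step 3 (conclude).} Under the condition $\lambda\le 1/(2\|M\|_\infty)$ we have $\lambda^2 s_i^2\le 1/4$. Using $-\tfrac12\log(1-x)\le x$ for $x\le 1/2$, the product is at most $\exp(\lambda^2\sum_i s_i^2)=\exp(\lambda^2\|M\|_2^2)$, which is the claimed bound.

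\textbf{Main obstacle.} The delicate point is the moment comparison in route 2. One must check that after expanding $(\sum_{i,j}M_{ij}z_iz'_j)^{2k}$, every surviving monomial has a nonnegative expectation coefficient; this holds because only even powers survive and their coefficients are sums of products $M_{i_1j_1}\cdots$ that pair up into squares? That is not automatic, since cross terms can be negative. The correct version is monomial-wise: each expectation $\mbb{E}\prod z_i^{a_i}z_j'^{b_j}$ is either $0$ or $1$, whereas the Gaussian counterpart is at least $1$ whenever it is nonzero. Since both expansions share the same coefficients, this only yields domination if those coefficients are nonnegative. To avoid that issue, I would instead follow the decoupled route: after Step 1, apply the Hanson--Wright mgf bound for $\bfz'^\top A\bfz'-\tr A$, which requires $\lambda^2\|A\|_\infty\lesssim 1$, i.e. $\lambda\lesssim1/\|M\|_\infty$. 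This contributes $\exp(\bigo(\lambda^4\|A\|_2^2))\le\exp(\bigo(\lambda^2\|M\|_2^2))$, using $\|A\|_2\le\|M\|_\infty\|M\|_2$ together with $\lambda\|M\|_\infty\lesssim 1$.
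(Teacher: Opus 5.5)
Your proof is correct. Note that the paper gives no proof of this lemma; it invokes it as a standard fact, so your argument supplies something new.

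Your final route works. You apply Hoeffding conditionally on $\bfz'$, take the exact diagonal contribution $\tr(M^\top M)=\|M\|_2^2$, and use the Hanson--Wright mgf bound for the zero-diagonal part of $A=M^\top M$. The key check is $\|A\|_2\le\|M\|_\infty\|M\|_2$. It turns the quartic term $\lambda^4\|A\|_2^2$ into $\bigo(\lambda^2\|M\|_2^2)$ once $\lambda\|M\|_\infty\lesssim 1$. The Hanson--Wright side condition is also fine: the zero-diagonal part of the PSD matrix $A$ has operator norm at most $2\|A\|_\infty$ and Frobenius norm at most $\|A\|_2$. Restricting to real $M$ is harmless for the paper's use. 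There $M_{ij}=\tr(V_i\tilde{\mc{H}}(V_j))$, which is real because each $V_i$ is Hermitian and $\tilde{\mc{H}}$ preserves Hermiticity.

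You abandoned route 2 unnecessarily. Your objection to term-by-term moment comparison is right, since surviving coefficients such as $M_{11}M_{12}M_{21}M_{22}$ need not be nonnegative. But Gaussian domination holds directly at the level of conditional mgfs, because the form is linear in each vector separately and $\cosh x\le e^{x^2/2}=\mbb{E}e^{xg}$:
\begin{itemize}
\item Conditioning on $\bfz'$ gives $\mbb{E}_{\bfz}e^{\lambda\bfz^\top M\bfz'}\le\mbb{E}_{\mathbf g}e^{\lambda\mathbf g^\top M\bfz'}$.
\item By Tonelli, for fixed $\mathbf g$ the exponent is linear in $\bfz'$ with real coefficient vector $\lambda M^\top\mathbf g$, so the same inequality replaces $\bfz'$ by $\mathbf g'$.
\end{itemize}
Hence $\mbb{E}e^{\lambda\bfz^\top M\bfz'}\le\mbb{E}e^{\lambda\mathbf g^\top M\mathbf g'}=\prod_i(1-\lambda^2s_i^2)^{-1/2}$. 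Your Step 3 then gives the explicit bound $\exp(\lambda^2\|M\|_2^2)$ for $\lambda\le 1/(2\|M\|_\infty)$.

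This version is self-contained and has explicit constants. The Hanson--Wright citation is somewhat circular in spirit, since its off-diagonal mgf bound is usually proved by decoupling followed by exactly this bilinear estimate. In a write-up, pick one route and delete the moment-comparison paragraph: as written, Steps 2--3 rest on a justification you yourself retract.
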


\subsection{Bounding the $\chi^2$-divergence}
\label{sec:lower-bound-analysis}

Given \Cref{fact:copies-divergence}, we will bound the $\chi^2$-divergence between the distributions over measurement outcomes. First, we use \Cref{lem:ingster-suslina} to relate such divergences to inner products involving the L\"uders channel associated with the measurements.

\begin{lemma}
\label{lem:tcopy-rademacher-chi2}
    Given POVM schedule $\mc{M}_1, \dots, \mc{M}_n$, let the associated L\"uders channels be $\mc{H}_1, \dots, \mc{H}_n$. Let the average L\"uders channel be $\mc{H} = \frac1n \sum_{i} \mc{H}_i$. Let $\Delta_{\bfz}^{(t)} = \rho_{\bfz}^{\otimes t} - \mmstate^{\otimes t}$. Then,
    \begin{equation}
         \divchi(\mbb{E}_{\bfz} P^{(n)}_{\rho_{\bfz}} \| P_{\mmstate}^{(n)}) \leq \mbb{E}_{\bfz,\bfz^\prime}\exp\left(nd^t \tr(\Delta_{\bfz}^{(t)} \mc{H}(\Delta_{\bfz^\prime}^{(t)}))\right) - 1.
    \end{equation}
\end{lemma}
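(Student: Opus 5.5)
We apply \Cref{lem:ingster-suslina} with parameter $\bftheta = \bfz$, alternative distributions $p^{(i)}_{\bfz} = p^{(i)}_{\rho_{\bfz}}$ and null distributions $q^{(i)} = p^{(i)}_{\mmstate}$. The only work is to compute each $\phi_i(\bfz,\bfz')$ in closed form as an inner product involving the L\"uders channel $\mc{H}_i$, and then to sum over $i$.

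\textbf{Step 1: likelihood-ratio deviation.} Fix $i$ and write $M_x = M^{(i)}_x = \ket{\psi_x}\bra{\psi_x}$. The null probabilities are $q^{(i)}(x) = \tr(M_x \mmstate^{\otimes t}) = \tr(M_x)/d^t$. Since $\rho_{\bfz}^{\otimes t} = \mmstate^{\otimes t} + \Delta^{(t)}_{\bfz}$, we get
\begin{equation}
\delta^{(i)}_{\bfz}(x) = \frac{\tr(M_x \rho_{\bfz}^{\otimes t})}{\tr(M_x)/d^t} - 1 = \frac{d^t\,\tr(M_x \Delta^{(t)}_{\bfz})}{\tr(M_x)}.
\end{equation}

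\textbf{Step 2: the pairwise term.} Averaging over $x \sim q^{(i)}$ gives
\begin{equation}
\phi_i(\bfz,\bfz') = \sum_x \frac{\tr(M_x)}{d^t}\cdot \frac{d^{2t}\,\tr(M_x\Delta^{(t)}_{\bfz})\,\tr(M_x\Delta^{(t)}_{\bfz'})}{\tr(M_x)^2} = d^t \sum_x \frac{\tr(M_x\Delta^{(t)}_{\bfz})\,\tr(M_x\Delta^{(t)}_{\bfz'})}{\tr(M_x)}.
\end{equation}
The L\"uders channel uses the normalized operators $\hat M_x = M_x/\tr(M_x)$, so $\mc{H}_i(\Delta^{(t)}_{\bfz'}) = \sum_x \tr(M_x\Delta^{(t)}_{\bfz'})\, M_x/\tr(M_x)$. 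Taking the trace against $\Delta^{(t)}_{\bfz}$ gives exactly the sum above, hence
\begin{equation}
\phi_i(\bfz,\bfz') = d^t\,\tr\bigl(\Delta^{(t)}_{\bfz}\,\mc{H}_i(\Delta^{(t)}_{\bfz'})\bigr).
\end{equation}
All $\Delta^{(t)}$ operators are Hermitian, so no complex conjugates appear.

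\textbf{Step 3: sum and conclude.} By linearity of $\mc{H}_i$ in the channel, $\sum_i \phi_i = n d^t \tr(\Delta^{(t)}_{\bfz}\mc{H}(\Delta^{(t)}_{\bfz'}))$ with $\mc{H} = \frac1n\sum_i\mc{H}_i$. Substituting this into \Cref{lem:ingster-suslina} gives the claimed bound.

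The outcome spaces here may be continuous, whereas \Cref{lem:ingster-suslina} is stated for distributions on $[d]$. The same Ingster--Suslina argument applies with sums replaced by integrals, since the outcome distributions are still products. This is the only mild technical point; everything else is direct substitution.
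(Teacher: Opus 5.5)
Your proposal is correct and follows essentially the same route as the paper: apply \Cref{lem:ingster-suslina} with $q^{(i)} = p^{(i)}_{\mmstate}$, compute $\phi_i(\bfz,\bfz') = d^t\tr(\Delta^{(t)}_{\bfz}\mc{H}_i(\Delta^{(t)}_{\bfz'}))$ via the normalized operators $\hat M_x$, and sum over $i$ using linearity to obtain the averaged channel $\mc{H}$. Your remark about continuous outcome spaces is a harmless extra point that the paper leaves implicit.
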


\begin{proof}
    Note that for the $i$th measurement, the likelihood ratio deviation is given by
    \begin{equation}
        \delta_z^{(i)}(x) = \frac{p_{\rho_z}(x) - p_{\mmstate}(x)}{p_{\mmstate}(x)} = \frac{\tr(M_x^{(i)} (\rho_z^{\otimes t} - \mmstate^{\otimes t})}{\tr(\mmstate^{\otimes t} M_x^{(i)})} = \frac{d^t \bra{\psi_x^{(i)}}\Delta_z^{(t)}\ket{\psi_x^{(i)}}}{\braket{\psi_x^{(i)}|\psi_x^{(i)}}}.
    \end{equation}
    Then,
    \begin{align}
        \phi_i(\bfz,\bfz^\prime) &= \sum_x p_{\mmstate}(x) \delta_{\bfz}^{(i)}(x) \delta_{\bfz^\prime}^{(i)}(x) 
        \\&=  \sum_x d^t\frac{\bra{\psi_x^{(i)}}\Delta_{\bfz}^{(t)}\ket{\psi_x^{(i)}} \bra{\psi_x^{(i)}}\Delta_{\bfz^\prime}^{(t)}\ket{\psi_x^{(i)}}}{\braket{\psi_x^{(i)}|\psi_x^{(i)}}}
        \\&= \sum_x d^t\frac{\tr(\Delta_{\bfz}^{(t)}\ket{\psi_x^{(i)}} \bra{\psi_x^{(i)}}\Delta_{\bfz^\prime}^{(t)}\ket{\psi_x^{(i)}}\bra{\psi_x^{(i)}})}{\braket{\psi_x^{(i)}|\psi_x^{(i)}}}
        \\&=  d^t \sum_x\tr(\Delta_{\bfz}^{(t)} \hat{M}_x) \tr(\Delta_{\bfz^\prime}^{(t)} M_x)
        \\&= d^t \tr(\Delta_{\bfz}^{(t)} \mc{H}_i(\Delta_{\bfz^\prime}^{(t)})).
    \end{align}
    Now, by \Cref{lem:ingster-suslina},
    \begin{align}
        \divchi(\mbb{E}_{\bfz} P^{(n)}_{\rho_{\bfz}} \| P_{\mmstate}^{(n)}) &\leq \mbb{E}_{\bfz,\bfz^\prime}\exp\left(d^t \sum_{i \in [n]} \tr(\Delta_{\bfz}^{(t)} \mc{H}_i(\Delta_{\bfz^\prime}^{(t)}))\right) - 1
        \\&= \mbb{E}_{\bfz,\bfz^\prime}\exp\left(nd^t \tr(\Delta_{\bfz}^{(t)} \mc{H}(\Delta_{\bfz^\prime}^{(t)}))\right) - 1,
    \end{align}
    as desired.
\end{proof}

We note that the inner product above is a polynomial of degree $2t$ in the entries of the vectors $z,z^\prime$. It is unclear how to bound the MGF of such polynomials in general, and we instead simplify this expression by splitting it up into its linear and higher-order components. First, note that we can write 

\begin{align}
    \Delta_{z}^{(t)} &= (\mmstate + \bar{\Delta}_z)^{\otimes t} - \mmstate^{\otimes t}
    \\&= \sum_{k = 1}^t \symsum \bar{\Delta}_z^{\otimes k} \otimes \mmstate^{\otimes t - k}.
    \\&= \underbrace{\symsum \bar{\Delta}_z \otimes \mmstate^{\otimes t - 1}}_{\triangleq L_z} + \underbrace{\sum_{k = 2}^t \symsum \bar{\Delta}_z^{\otimes k} \otimes \mmstate^{\otimes t - k}}_{\triangleq H_z},
\end{align}
where $L_z$ consists of the linear perturbations and $H_z$ consists of the higher-order perturbations. Given the above, we can rewrite the innermost trace in the statement of \Cref{lem:tcopy-rademacher-chi2} as follows:
\begin{align}
    \tr(\Delta_{\bfz}^{(t)} \mc{H}(\Delta_{\bfz^\prime}^{(t)})) &= \tr((L_{\bfz} + H_{\bfz}) \mc{H}(L_{\bfz^\prime} + H_{\bfz^\prime})) 
    \\&= \tr(L_{\bfz} \mc{H}(L_{\bfz^\prime})) + \tr(L_{\bfz} \mc{H}(H_{\bfz^\prime})) + \tr(H_{\bfz} \mc{H}(L_{\bfz^\prime})) + \tr(H_{\bfz} \mc{H}(H_{\bfz^\prime})). 
\end{align}

To upper bound the MGF appearing in \Cref{lem:tcopy-rademacher-chi2}, our strategy will be to prove uniform upper bounds on the terms involving $H_z$, and then bound the MGF of the linear perturbations. We achieve the former in the following lemma:

\begin{lemma}
    \label{lem:rademacher-higher-order-terms}
    When $t\eps \leq \bigo(1)$, we have
    \begin{equation}
        \tr(L_{\bfz} \mc{H}(H_{\bfz^\prime})) +  \tr(H_{\bfz} \mc{H}(L_{\bfz^\prime})) + \tr(H_{\bfz} \mc{H}(H_{\bfz^\prime}))\leq \bigo\left(\frac{t^3\eps^3}{d^t}\right).
    \end{equation}
\end{lemma}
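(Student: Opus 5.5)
Each of the three terms has the form $\tr(A\,\mc{H}(B))$, where $\mc{H}$ is an average of L\"uders channels. I would bound all of them uniformly by a Cauchy--Schwarz/operator-norm argument rather than analyzing the Rademacher structure. $\mc{H}$ is a self-adjoint, unital, positive map with eigenvalues in $[0,1]$, by \Cref{lem:luders-properties} applied to each $\mc{H}_i$ and averaged. Hence $|\tr(A\mc{H}(B))| \le \|A\|_2\|B\|_2$. This suggests bounding the Hilbert--Schmidt norms of $L_{\bfz}$ and $H_{\bfz}$. A Hilbert--Schmidt bound may, however, lose dimension factors relative to the claimed $d^{-t}$ scaling, so I would instead use the trace/operator norm duality. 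Since each $\mc{H}_i$ is a channel that is contractive in operator norm (it is unital and positive), we have $|\tr(A\mc{H}(B))| \le \|A\|_1\|\mc{H}(B)\|_\infty \le \|A\|_1\|B\|_\infty$.

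\emph{Operator-norm bounds.} By the normalization in \Cref{def:rademacher-hard-instance}, $\|\bar\Delta_{\bfz}\|_\infty \le a_{\bfz}\|\Delta_{\bfz}\|_{\op}$, which is at most $\min\{1/d,\|\Delta_{\bfz}\|_{\op}\}$. Using the trace-norm scale we also have $\|\bar\Delta_{\bfz}\|_1 \le \sqrt d\,\|\Delta_{\bfz}\|_2 = c\eps$, since $\|\Delta_{\bfz}\|_2^2 = c^2\eps^2/d$ by orthonormality of the $V_i$. The $k$-th order piece $H^{(k)}_{\bfz} = \symsum \bar\Delta_{\bfz}^{\otimes k}\otimes\mmstate^{\otimes t-k}$ then satisfies
\[
\|H^{(k)}_{\bfz}\|_1 \le \binom tk (c\eps)^k, \qquad \|H^{(k)}_{\bfz}\|_\infty \le \binom tk \|\bar\Delta_{\bfz}\|_\infty^k d^{-(t-k)}.
\]
For the sup norm I would use $\|\bar\Delta_{\bfz}\|_\infty \le \|\bar\Delta_{\bfz}\|_1 \le c\eps$ together with the factor $d^{-(t-k)}$. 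This gives $\|H^{(k)}\|_\infty \le \binom tk (c\eps)^k d^{-(t-k)}$, which loses a factor $d^k$ against the target. The better estimate is $\|\bar\Delta_{\bfz}\|_\infty \le c\eps/d$. The plain bound $\|\Delta_{\bfz}\|_{\op}\le\|\Delta_{\bfz}\|_2 = c\eps/\sqrt d$ is too weak for this, but it holds with high probability by matrix concentration. On the complement, which has probability $e^{-d}$, the event is handled by the truncation of \Cref{lem:hard-instance-ignoring-normalization}, or we simply use $a_{\bfz}$ to ensure $\|\bar\Delta\|_\infty \le 1/d$ together with $\|\bar\Delta\|_1 \le c\eps$. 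Concretely, I would pair the trace norm of one side with the operator norm of the other, estimating $\|\bar\Delta\|_\infty^{k}$ by $(c\eps/d)^{k}$ in the typical case, so that $\|H^{(k)}\|_\infty \le \binom tk (c\eps)^k d^{-t}$.

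\emph{Summation.} Combining these, each cross term with linear order $1$ and higher order $k\ge2$ is bounded by $t c\eps\cdot \binom tk (c\eps)^k d^{-t}$, and the $H$--$H$ terms by $\binom tk\binom tj (c\eps)^{k+j}d^{-t}$. Summing over $k,j\ge2$ with $\binom tk(c\eps)^k \le (ct\eps)^k/k!$ and $t\eps = \bigo(1)$ gives geometric series whose leading contribution is $(t\eps)\cdot(t\eps)^2 = t^3\eps^3$. Hence the total is $\bigo(t^3\eps^3/d^t)$. The $H$--$H$ terms are $\bigo(t^4\eps^4/d^t)$, which is dominated by the cross terms.

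\emph{Main obstacle.} The hard step is getting the full $d^{-t}$ factor, which requires $\|\bar\Delta_{\bfz}\|_\infty \lesssim \eps/d$. The normalization $a_{\bfz}$ only guarantees $\|\bar\Delta_{\bfz}\|_\infty \le 1/d$, and the orthonormal basis can be adversarial. My first attempt would be to exploit the mixed pairing: put the trace norm on the factor carrying the $\eps$ and the operator norm, bounded by $1/d$, on the remaining $\bar\Delta$ factors. Alternatively, I would redistribute $\|\cdot\|_2$ norms across the tensor factors, using $\|\bar\Delta\|_2\le c\eps/\sqrt d$ and $\|\mmstate\|_2 = d^{-1/2}$, which yields exactly $d^{-t}$ per pair of factors via $\|A\|_2\|B\|_2$. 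This Hilbert--Schmidt route actually seems cleanest. $\|H^{(k)}\|_2 \le \binom tk (c\eps)^k d^{-t/2}$ and $\|L\|_2 \le t c\eps\, d^{-t/2}$, so the products give $d^{-t}$ directly, and I would go with it.
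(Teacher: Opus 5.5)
Your final Hilbert--Schmidt route is exactly the paper's proof. Both bound each term by $\|A\|_2\|B\|_2$, using that $\mc{H}$ is self-adjoint with spectrum in $[0,1]$. Both then use $\|\bar\Delta_{\bfz}\|_2\le c\eps/\sqrt d$ and $\|\mmstate\|_2=d^{-1/2}$ to get $\|L_{\bfz}\|_2=\bigo(t\eps\, d^{-t/2})$ and $\|H_{\bfz}\|_2=\bigo(t^2\eps^2 d^{-t/2})$ when $t\eps=\bigo(1)$. Your earlier worry that the Hilbert--Schmidt bound loses dimension factors is unfounded, so the trace/operator-norm and matrix-concentration detour is unnecessary and can be dropped.
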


\begin{proof}
    From \Cref{lem:luders-properties} , $\|H\|_{\infty} \leq 1,$ and so 
    \begin{align}
        \tr(L_{\bfz} \mc{H}(H_{\bfz^\prime})) & \leq \|L_{\bfz}\|_2 \|H_{\bfz^\prime}\|_2
    \end{align}
    We first bound the norm for the linear perturbations:
    \begin{align}
        \|L_{\bfz}\|_2 &= \|\symsum \bar{\Delta}_{\bfz} \otimes \mmstate^{\otimes t - 1}\|_2
        \\&\leq \frac{t}{d^{(t-1)/2}} \cdot \|\bar{\Delta}_{\bfz}\|_2
        \\&= \frac{t}{d^{(t-1)/2}} \cdot \frac{a_{\bfz} c\eps}{\sqrt{d}} \leq \bigo\left(\frac{t\eps}{d^{t/2}}\right),
    \end{align}
    where we used the triangle inequality and the fact that $\|\mmstate\|_2 = 1/\sqrt{d}$ in the second step, and that $a_{\bfz} \leq 1$ in the final step. Similarly, for the higher-order term, we have
    \begin{align}
        \|H_{\bfz}\|_2 &= \|\sum_{k = 2}^{t} \symsum \bar{\Delta}_{\bfz}^{\otimes k} \otimes \mmstate^{\otimes t-k} \|_2
        \\&\leq \sum_{k = 2}^t \binom{t}{k} \|\bar{\Delta}_z\|_2^k \cdot \frac{1}{d^{(t-k)/2}}
        \\&\leq \sum_{k = 2}^t \frac{t^k}{d^{(t-k)/2}} \cdot \frac{a_{\bfz}^k c^k \eps^k}{d^{k/2}}
        \\&\leq \sum_{k = 2}^t \frac{t^k\epsilon^k}{d^{t/2}}
        \\&\leq \bigo\left(\frac{t^2\eps^2}{d^{t/2}}\right),
    \end{align}
    where the last inequality holds under the assumption $t\eps \leq \bigo(1)$. Now,
    \begin{equation}
        \tr(L_{\bfz} \mc{H}(H_{\bfz^\prime})) \leq \|L_{\bfz}\|_2 \|H_{\bfz^\prime}\|_2 \leq \bigo\left(\frac{t^3\eps^3}{d^t}\right).
    \end{equation}
    Similarly, we have $\tr(H_{\bfz} \mc{H}(L_{\bfz^\prime})) \leq \bigo(t^3\eps^3/d^t)$. Finally, 
    \begin{equation}
        \tr(H_{\bfz} \mc{H}(H_{\bfz^\prime})) \leq \|H_{\bfz}\|_2 \|H_{\bfz^\prime}\|_2 \leq \bigo\left(\frac{t^4\eps^4}{d^t}\right).
    \end{equation}
    As $t\eps \leq \bigo(1)$, we have 
    \begin{equation}
        tr(L_{\bfz} \mc{H}(H_{\bfz^\prime})) +  \tr(H_{\bfz} \mc{H}(L_{\bfz^\prime})) + \tr(H_{\bfz} \mc{H}(H_{\bfz^\prime}))\leq \bigo\left(\frac{t^3\eps^3}{d^t}\right),
    \end{equation}
    as desired.
\end{proof}

Next, we bound the MGF associated with the linear perturbations:
\begin{lemma}
    \label{lem:rademacher-linear-terms}
    For $1 \leq j,k \leq t$, let $\Tilde{\mc{H}}_{j,k}(M) = \tr_{[t] \setminus \{j\}}(\mc{H}(M_k \otimes \mmstate^{\otimes [t] \setminus \{k\}}))$. Further, let $\Tilde{\mc{H}} = \frac{1}{t^2} \sum_{j,k} \Tilde{\mc{H}}_{j,k}$. Let $S_{\Tilde{\mc{H}}}$ be the Liouville representation of $\Tilde{\mc{H}}$.Then, 
    \begin{equation}
        \mbb{E}_{\bfz, \bfz^\prime}[nd^t \tr(L_{\bfz} \mc{H}(L_{\bfz^\prime}))] \leq \exp\left(\bigo\left(
            \frac{n^2t^4\eps^4}{\ell^2} \cdot \|\mc{V}^\dag S_{\Tilde{H}} \mc{V}\|_2^2
        \right)\right) + 4\exp(-d),
    \end{equation}
    whenever $n = \bigo\left(\frac{\ell}{t^2 \eps^2\|\mc{V}^\dag S_{\Tilde{H}} \mc{V}\|_{\infty}}\right)$.
\end{lemma}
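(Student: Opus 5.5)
The plan is to express $nd^t\tr(L_{\bfz}\mc{H}(L_{\bfz'}))$ as a bilinear form $\bfz^\top M\bfz'$ in the Rademacher vectors, with the normalizations $a_{\bfz},a_{\bfz'}$ pulled out as scalar prefactors. Then \Cref{lem:hard-instance-ignoring-normalization} removes these prefactors at an additive cost of $4\exp(-d)$, and \Cref{lem:rademacher-quadratic-mgf} controls the resulting mgf. (The statement should be read with $\exp$ inside the expectation, as it is used later in \Cref{lem:tcopy-rademacher-chi2}.)

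\textbf{Expanding the linear term.} Since $L_{\bfz}=\sum_{k\in[t]}(\bar\Delta_{\bfz})_k\otimes\mmstate^{\otimes[t]\setminus\{k\}}$ and $\bar\Delta_{\bfz}=a_{\bfz}\Delta_{\bfz}$, we get
\begin{equation}
\tr(L_{\bfz}\mc{H}(L_{\bfz'}))=a_{\bfz}a_{\bfz'}\sum_{j,k\in[t]}\frac{1}{d^{t-1}}\tr\bigl((\Delta_{\bfz})_j\otimes I^{\otimes[t]\setminus\{j\}}\cdot\mc{H}((\Delta_{\bfz'})_k\otimes\mmstate^{\otimes[t]\setminus\{k\}})\bigr).
\end{equation}
Tracing out the registers other than $j$ turns each summand into $d^{-(t-1)}\tr(\Delta_{\bfz}\tilde{\mc{H}}_{j,k}(\Delta_{\bfz'}))$. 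Summing over $j,k$ and using $\tilde{\mc{H}}=\frac1{t^2}\sum_{j,k}\tilde{\mc{H}}_{j,k}$ gives
\begin{equation}
nd^t\tr(L_{\bfz}\mc{H}(L_{\bfz'}))=a_{\bfz}a_{\bfz'}\,n d t^2\,\tr(\Delta_{\bfz}\tilde{\mc{H}}(\Delta_{\bfz'})).
\end{equation}
The definition of $\tilde{\mc{H}}_{j,k}$ in \Cref{def:reduced-channels} omits $\mc{H}$, but the version in this lemma includes it, and I use that one.

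\textbf{Vectorizing.} With $\Delta_{\bfz}=\frac{c\eps}{\sqrt{d\ell}}\sum_i z_iV_i$, we have $\mathrm{vec}(\Delta_{\bfz})=\frac{c\eps}{\sqrt{d\ell}}\mc{V}\bfz$. The $V_i$ are Hermitian, so $\tr(\Delta_{\bfz}X)=\mathrm{vec}(\Delta_{\bfz})^\dag\mathrm{vec}(X)$. Hence
\begin{equation}
nd t^2\tr(\Delta_{\bfz}\tilde{\mc{H}}(\Delta_{\bfz'}))=\frac{c^2nt^2\eps^2}{\ell}\,\bfz^\top(\mc{V}^\dag S_{\tilde{\mc{H}}}\mc{V})\bfz'.
\end{equation}
Because $\tilde{\mc{H}}$ is Hermitian and Hermiticity preserving (\Cref{lem:reduced-channel-properties}) and the $V_i$ are Hermitian, the matrix $M=\mc{V}^\dag S_{\tilde{\mc{H}}}\mc{V}$ is real. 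The quantity is therefore a real bilinear form.

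\textbf{Bounding the mgf.} Applying \Cref{lem:hard-instance-ignoring-normalization} with $f(\bfz,\bfz')=\lambda\bfz^\top M\bfz'$ and $\lambda=c^2nt^2\eps^2/\ell$ bounds the expectation by $\mbb{E}\exp(\lambda\bfz^\top M\bfz')+4\exp(-d)$. \Cref{lem:rademacher-quadratic-mgf} then gives $\exp(\bigo(\lambda^2\|M\|_2^2))=\exp(\bigo(n^2t^4\eps^4\|M\|_2^2/\ell^2))$. This requires $\lambda=\bigo(1/\|M\|_\infty)$, which is exactly the stated condition $n=\bigo(\ell/(t^2\eps^2\|M\|_\infty))$.

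\textbf{Main obstacle.} The delicate step is the first one: confirming that the partial trace over $[t]\setminus\{j\}$ paired with $I$ really produces $\tilde{\mc{H}}_{j,k}$ with exactly the $d^{-(t-1)}$ factor, so that the powers of $d$ cancel to the stated $\ell^{-1}$ scaling. A second point to check is that \Cref{lem:hard-instance-ignoring-normalization} applies with the product $a_{\bfz}a_{\bfz'}$ multiplying $f$; this is what the lemma intends, reading its comma as a product.
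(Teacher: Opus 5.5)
Your proposal is correct and follows the paper's proof essentially step for step. You rewrite $nd^t\tr(L_{\bfz}\mc{H}(L_{\bfz'}))$ as $a_{\bfz}a_{\bfz'}\,ndt^2\tr(\Delta_{\bfz}\tilde{\mc{H}}(\Delta_{\bfz'}))$, strip the normalizations with \Cref{lem:hard-instance-ignoring-normalization}, vectorize to $\frac{c^2nt^2\eps^2}{\ell}\bfz^\top\mc{V}^\dag S_{\tilde{\mc{H}}}\mc{V}\bfz'$, and apply \Cref{lem:rademacher-quadratic-mgf}. Your readings of the typos ($\exp$ inside the expectation, $\mc{H}$ inside $\tilde{\mc{H}}_{j,k}$, and $a_{\bfz}a_{\bfz'}$ as a product) match how the proof uses these objects, and your observation that $\mc{V}^\dag S_{\tilde{\mc{H}}}\mc{V}$ is real fills in a detail the paper leaves implicit.
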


\begin{proof}
    We will first rewrite $\tr(L_{\bfz} \mc{H}(L_{\bfz^\prime}))$ in terms of $\Tilde{\mc{H}}$.
    \begin{align}
        \langle L_{\bfz}, \mc{H}(L_{\bfz^\prime}) \rangle &= \sum_{j,k = 1}^t \left\langle \bar{\Delta}_{\bfz}^{(j)} \otimes \frac{\mathbbm{1}^{\otimes [t] \setminus \{j\}}}{d^{t-1}}, \mc{H}(\bar{\Delta}_{\bfz^\prime}^{(k)} \otimes \mmstate^{[t] \setminus \{k\}}) \right\rangle
        \\&= \frac{a_{\bfz} a_{\bfz^\prime}}{d^{t-1}} \sum_{j,k = 1}^t \tr(\Delta_{\bfz} \cdot \tr_{[t] \setminus \{j\}}(\Delta_{\bfz^\prime} \otimes \mmstate^{[t] \setminus \{k\}}))
        \\&= \frac{a_{\bfz}a_{\bfz^\prime}t^2}{d^{t-1}} \tr(\Delta_{\bfz} \Tilde{H}(\Delta_{\bfz^\prime})).
    \end{align}
    To remove the normalization factors, we use \Cref{lem:hard-instance-ignoring-normalization}, giving us
    \begin{align}
        \mbb{E}_{\bfz, \bfz^\prime}[\exp(nd^t \tr(L_{\bfz} \mc{H}(L_{\bfz^\prime})))] &= \mbb{E}_{\bfz, \bfz^\prime}[\exp(a_{\bfz} a_{\bfz^\prime} \cdot ndt^2 \cdot \tr(\Delta_{\bfz} \Tilde{H}(\Delta_{\bfz^\prime})))] \\&\leq \mbb{E}_{\bfz, \bfz^\prime}[\exp(ndt^2 \cdot \tr(\Delta_{\bfz} \Tilde{\mc{H}}(\Delta_{\bfz^\prime})))] + \frac{4}{e^d}.
    \end{align}
    To complete the proof, we wish to rewrite $\tr(\Delta_{\bfz} \Tilde{\mc{H}}(\Delta_{\bfz^\prime}))$ in the form of \Cref{lem:rademacher-quadratic-mgf}. Using the Liouville representation, we can write
    \begin{equation}
        \tr(\Delta_{\bfz} \Tilde{\mc{H}}(\Delta_{\bfz^\prime})) = \mathrm{vec}(\Delta_{\bfz})^\dag S_{\Tilde{\mc{H}}} \mathrm{vec}(\Delta_{\bfz^\prime}).
    \end{equation}
    Now, recall that
    \begin{equation}
        \Delta_z = \frac{c\eps}{\sqrt{d\ell}} \cdot \sum_{i = 1}^\ell V_i z_i.
    \end{equation}
    We can thus write
    \begin{equation}
        \mathrm{vec}(\Delta_z) = \frac{c\eps}{\sqrt{d\ell}} \sum_{i = 1}^\ell \mathrm{vec}(V_i) z_i = \frac{c\eps}{\sqrt{d\ell}} \mc{V}z.
    \end{equation}
    Finally, we rewrite 
    \begin{equation}
        ndt^2 \cdot \tr(\Delta_{\bfz} \Tilde{\mc{H}}(\Delta_{\bfz^\prime})) = \frac{c^2 nt^2\eps^2}{\ell} \bfz^\top \mc{V}^\dag S_{\Tilde{\mc{H}}} \mc{V} \bfz^\prime.
    \end{equation}
    We can now invoke \Cref{lem:rademacher-quadratic-mgf} to upper bound $\mbb{E}_{\bfz, \bfz^\prime}[\exp(ndt^2 \cdot \tr(\Delta_{\bfz} \Tilde{\mc{H}}(\Delta_{\bfz^\prime})))]$ and get the desired statement.
\end{proof}

\subsection{Putting things together}
\label{sec:lower-bound-proofs}
Given \Cref{lem:rademacher-linear-terms,lem:rademacher-higher-order-terms}, we can now prove \Cref{thm:lower-high-pres-private,thm:lower-high-pres-shared}.
\begin{proof}[Proof of \Cref{thm:lower-high-pres-shared}]
    By \Cref{fact:copies-divergence}, it suffices to show that $\divchi$ is small for any appropriate ensemble. We will consider the ensemble in \Cref{def:rademacher-hard-instance} for some fixed orthonormal basis $V_1, \dots, V_{d^2}$.
    Then, by \Cref{lem:tcopy-rademacher-chi2}, we have
    \begin{align}
         \divchi(\mbb{E}_{\bfz} P^{(n)}_{\rho_{\bfz}} \| P_{\mmstate}^{(n)}) &\leq \mbb{E}_{\bfz,\bfz^\prime}\exp\left(nd^t \tr(\Delta_{\bfz}^{(t)} \mc{H}(\Delta_{\bfz^\prime}^{(t)}))\right) - 1.
         \\&\leq \mbb{E}_{\bfz,\bfz^\prime}\exp\left(nd^t \tr(L_{\bfz} \mc{H}(L_{\bfz^\prime})) + \bigo(nt^3\eps^3)\right) - 1
         \\&\leq \exp\left(\bigo\left(
            \frac{n^2t^4\eps^4}{\ell^2} \cdot \|\mc{V}^\dag S_{\Tilde{H}} \mc{V}\|_2^2 + nt^3\eps^3
        \right)\right) - 1 + 4\exp( \bigo(nt^3\eps^3) -d) \label{eq:shared-proof-1},
    \end{align}
    where we used \Cref{lem:rademacher-higher-order-terms} in the second step, which holds for $\eps \leq \bigo(1/t)$; this is satisfied under the hypothesis of the theorem. We used \Cref{lem:rademacher-linear-terms} in the last step, assuming
    \begin{equation}
        n \leq \bigo\left(\frac{\ell}{t^2 \eps^2\|\mc{V}^\dag S_{\Tilde{H}} \mc{V}\|_{\infty}}\right). \label{eq:shared-proof-2}
    \end{equation}
    Now, the $\chi^2$-divergence is $o(1)$ unless $n$ is large enough for \Cref{eq:shared-proof-1} to be $\Omega(1)$ or for the assumed \Cref{eq:shared-proof-2} to be false. The former condition implies 
    \begin{equation}
        n = \Omega\left(\min\left\{\frac{\ell}{t^2\eps^2 \|\mc{V}^\dag S_{\Tilde{\mc{H}}} \mc{V}\|_2}, \frac{1}{t^3 \eps^3}, \frac{d}{t^3\eps^3} \right\}\right) = \Omega\left(\min\left\{\frac{\ell}{t^2\eps^2 \|\mc{V}^\dag S_{\Tilde{\mc{H}}} \mc{V}\|_2}, \frac{1}{t^3 \eps^3} \right\}\right),
    \end{equation}
    where we use the fact that for $d$ sufficiently large, i.e., at least some unspecified constant $d^*$, $4\exp(\bigo(nt^3\eps^3) - d) \leq .01$ unless $n \geq \Omega(\frac{d}{t^3\eps^3})$. On the other hand, the latter condition implies
    \begin{equation}
        n = \Omega\left(\frac{\ell}{t^2\eps^2 \|\mc{V}^\dag S_{\Tilde{\mc{H}}} \mc{V}\|_\infty}\right) \geq \Omega\left(\frac{\ell}{t^2\eps^2 \|\mc{V}^\dag S_{\Tilde{\mc{H}}} \mc{V}\|_2}\right).
    \end{equation}
    Together, both conditions yield the lower bound 
    \begin{equation}
        n = \Omega\left(\min\left\{\frac{\ell}{t^2\eps^2 \|\mc{V}^\dag S_{\Tilde{\mc{H}}} \mc{V}\|_2}, \frac{1}{t^3 \eps^3} \right\}\right). \label{eq:shared-proof-3}
    \end{equation}
    To proceed, we will upper bound the $\|\cdot\|_2$-norm above. 
    \begin{align}
        \|\mc{V}^\dag S_{\Tilde{\mc{H}}} \mc{V}\|_2 \leq \|S_{\Tilde{\mc{H}}}\|_2 \leq \sqrt{\|S_{\Tilde{\mc{H}}}\|_\infty \cdot \|S_{\Tilde{\mc{H}}}\|_1} \leq \sqrt{d},
    \end{align}
    where we use the fact that $\mc{V}$ is an isometry in the first step, H\"older's inequality in the second step, and \Cref{lem:reduced-channel-properties} in the last step. Now, with \Cref{eq:shared-proof-3}, and setting $\ell = d^2 - 1$ (see \Cref{def:rademacher-hard-instance}), we have
    \begin{equation}
        nt = \Omega\left(\min\left\{\frac{d^{3/2}}{t\eps^2}, \frac{1}{t^2\eps^3}\right\}\right),
    \end{equation}
    as desired.
\end{proof}

Finally, we prove \Cref{thm:lower-high-pres-private}.
\begin{proof}[Proof of \Cref{thm:lower-high-pres-private}]
    Using \Cref{eq:shared-proof-3} from the proof of \Cref{thm:lower-high-pres-shared}, we again have
    \begin{equation}
        nt \geq \Omega\left(\min\left\{\frac{\ell}{t\eps^2 \|\mc{V}^\dag S_{\Tilde{\mc{H}}} \mc{V}\|_2}, \frac{1}{t^2 \eps^3} \right\}\right). \label{eq:private-proof-1}
    \end{equation}
    Note that due to \Cref{fact:copies-divergence}, in the case of private measurements, we can bound the max-min divergence, i.e., we can adversarially pick the basis $V_1, \dots, V_{d^2}$ depending on the individual measurements performed by the tester. To obtain the strongest lower bound possible, we wish to minimize $\|\mc{V}^\dag S_{\Tilde{\mc{H}}} \mc{V}\|_2$, and so we will pick $\mc{V}$ to consist of the $\ell$ smallest eigenvectors of $S_{\Tilde{\mc{H}}}$. Let the corresponding eigenvalues of $S_{\Tilde{\mc{H}}}$ be $\lambda_1, \dots, \lambda_\ell$. Recall that the Liouville representation of a channel has the same spectrum as the channel itself. Further, note that for each $1 \leq i \leq \ell$,
    \begin{equation}
        \lambda_i \leq \frac{\sum_{j = \ell+1}^{d^2} \lambda_j}{d^2 - \ell} \leq \frac{\tr(\tilde{\mc{H}})}{d^2- \ell} \leq \frac{d}{d^2-\ell},
    \end{equation}
    where the last step used \Cref{lem:reduced-channel-properties}. Now,
    \begin{equation}
        \|\mc{V}^\dag S_{\Tilde{\mc{H}}} \mc{V}\|_2 = \sqrt{\sum_{i = 1}^\ell \lambda_i^2} \leq \sqrt{\frac{\ell d^2}{(d^2-\ell)^2}} = \sqrt{2},
    \end{equation}
    where we set $\ell = d^2/2$ in the last step (see \Cref{def:rademacher-hard-instance}). Now, setting $\ell = d^2/2$ and $\|\mc{V}^\dag S_{\Tilde{\mc{H}}} \mc{V}\|_2 \leq \sqrt{2}$ in \Cref{eq:private-proof-1}, we have
    \begin{equation}
        nt \geq \Omega\left(\min\left\{\frac{d^2}{t\eps^2}, \frac{1}{t^2 \eps^3} \right\}\right),
    \end{equation}
    as desired.
\end{proof}

\appendix
\section{Batching the tester from \cite{buadescu2019quantum}}
\label{sec:batching-BOW-tester}

In this section, we present the folklore $t$-copy tester obtained by batching the tester of \cite{buadescu2019quantum}.

\begin{proposition}
\label{prop:tcopy-hs-closeness}
    Let $d \geq 2, t \geq 1, \eps > 0$. There is an algorithm using fixed $t$-copy measurements which, given $m$ copies each of states $\rho,\sigma \in \mbb{C}^{d \times d}$, can distinguish between $\|\rho - \sigma\|_2 \leq \eps/\sqrt{2}$ and $\|\rho - \sigma\|_2 \geq \eps$ with high probability when $m = \bigo(\frac{1}{\eps^2} + \frac{1}{t\eps^4})$.
\end{proposition}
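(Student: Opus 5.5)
We batch the symmetrized-swap estimator of \cite{buadescu2019quantum}. Split the $m$ copies of each of $\rho,\sigma$ into $n = m/t$ batches of $t$ copies. On the $i$th batch we measure, on $\rho^{\otimes t}\otimes\sigma^{\otimes t}$, the \cite{buadescu2019quantum} unbiased estimator $Y_i$ of $\|\rho-\sigma\|_2^2$. Concretely,
\begin{equation}
  Y_i = \swap^{(t)}_{\rho} + \swap^{(t)}_{\sigma} - 2\,\tfrac{1}{t^2}\sum_{a,b}\swap_{a,b'},
\end{equation}
where $\swap^{(t)}_\rho$ is the symmetrized swap on the $\rho$-copies, $\swap^{(t)}_\sigma$ is the same on the $\sigma$-copies, and the last term is the average cross-swap between a $\rho$-copy $a$ and a $\sigma$-copy $b'$. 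All terms commute with the diagonal $\mathcal{S}_t\times\mathcal{S}_t$ action and are jointly measurable, since \cite{buadescu2019quantum} implement this as a single observable. Each $Y_i$ therefore comes from a fixed $2t$-copy measurement. If one insists on $t$ copies, use $t/2$ copies of each state; this only changes constants. We output $\bar Y=\frac1n\sum_i Y_i$ and threshold at $\frac34\eps^2$, which lies between $\eps^2/2$ and $\eps^2$.

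The key input is the variance bound for a single batch. From \cite{buadescu2019quantum} (their analysis of the closeness estimator with $t$ copies), $\mathbb{E}[Y_i]=\|\rho-\sigma\|_2^2=:D$ and
\begin{equation}
  \mathbb{V}[Y_i] = \bigo\Bigl(\frac{\|\rho-\sigma\|_2^2\,\|\rho+\sigma\|_\infty}{t} + \frac{\tr(\rho^2)+\tr(\sigma^2)}{t^2}\Bigr)\le \bigo\Bigl(\frac{D}{t}+\frac{1}{t^2}\Bigr).
\end{equation}
Here we use $\|\rho+\sigma\|_\infty\le 2$ and purities at most $1$. I would either cite this directly or rederive it. The rederivation mirrors \Cref{lem:hs-estimation-variance-generic}: expand $\mathbb{E}[Y_i^2]$ over pairs of transpositions, classify them by shared indices, and collect the $O(t^3)$ one-overlap terms (which give $D/t$) and the $O(t^2)$ full-overlap terms (which give $1/t^2$).

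Averaging over the $n$ independent batches gives $\mathbb{V}[\bar Y]=\bigo\bigl(\frac{D}{nt}+\frac{1}{nt^2}\bigr)$. We then apply Chebyshev with separation $\Theta(\eps^2)$, which is valid since in both hypotheses the threshold is $\Omega(\eps^2)$ away from $D$ (when $D\ge\eps^2$, $D-\frac34\eps^2\ge D/4$). The error probability is at most
\begin{equation}
  \bigo\Bigl(\frac{D}{nt\,\max(D,\eps^2)^2}+\frac{1}{nt^2\eps^4}\Bigr)\le \bigo\Bigl(\frac{1}{nt\eps^2}+\frac{1}{nt^2\eps^4}\Bigr),
\end{equation}
which is a small constant once $m=nt\ge C\bigl(\frac1{\eps^2}+\frac1{t\eps^4}\bigr)$. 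This proves the claim.

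The main obstacle is the single-batch variance bound, specifically getting the $1/t^2$ constant term (rather than something larger) with the cross terms between $\rho$- and $\sigma$-copies accounted for correctly. I would handle it with the swap-counting casework above, using $\tr(\swap_{a,b}\swap_{c,e}\cdot\text{state})\le 1$ for the overlapping pairs. The $D/t$ term arises from single-overlap contributions of the form $\tr(\Delta^2(\rho+\sigma))$.
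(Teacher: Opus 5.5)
Your proposal is correct and follows essentially the same route as the paper. Like the paper, you batch the unbiased \cite{buadescu2019quantum} estimator of $\|\rho-\sigma\|_2^2$, whose single-batch variance is $\bigo(1/t^2 + \|\rho-\sigma\|_2^2/t)$, average over $n = m/t$ batches, threshold at $3\eps^2/4$, and apply Chebyshev in both cases.

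One small slip: in your intermediate display you write the full-overlap part of a single batch's variance as $(\tr(\rho^2)+\tr(\sigma^2))/t^2$. In fact it is $\Theta(1/t^2)$ regardless of purity, because $\swap_{a,b}^2 = I$. Since you immediately weaken it to $\bigo(1/t^2)$, the argument is unaffected.
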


By \Cref{fact:hs-to-trace-norm-testing}, we have the following corollary:

\begin{corollary}
    Let $d \geq 2, t \geq 1, \eps > 0$. The copy complexity of $\eps$-state certification with respect to the trace norm using fixed $t$-copy measurements is at most $\bigo(\max\{\frac{d}{\eps^2}, \frac{d^2}{t\eps^4}\})$.
\end{corollary}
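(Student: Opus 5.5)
The corollary follows by combining \Cref{prop:tcopy-hs-closeness} with the norm conversion of \Cref{fact:hs-to-trace-norm-testing}. The only extra work is to check that the tolerant Hilbert--Schmidt tester specializes to the exact-versus-far certification problem, and to track the substitution $\eps \mapsto \eps/\sqrt{d}$.

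First, I specialize \Cref{prop:tcopy-hs-closeness} to Hilbert--Schmidt certification. Let $\sigma$ be the known hypothesis. The proposition is stated for two unknown states, so the tester would normally need copies of $\sigma$. Since $\sigma$ is known, it can prepare those copies itself, or equivalently simulate the $t$-copy measurement outcomes on $\sigma^{\otimes t}$ classically. Either way it consumes only copies of $\rho$, and the measurements applied to $\rho$ stay fixed $t$-copy measurements. If $\rho = \sigma$, then $\|\rho-\sigma\|_2 = 0 \le \eps/\sqrt{2}$, so the tolerant guarantee covers the completeness case. The far case $\|\rho-\sigma\|_2 \ge \eps$ matches directly. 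Hence $\eps$-Hilbert--Schmidt certification can be done with $f(d,\eps) = \bigo(1/\eps^2 + 1/(t\eps^4))$ copies, at constant success probability, which is the standard the paper uses.

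Second, I apply \Cref{fact:hs-to-trace-norm-testing}. It gives trace-norm certification with $f(d,\eps/\sqrt{d})$ copies, which is
\begin{equation}
    \bigo\Bigl(\frac{d}{\eps^2} + \frac{d^2}{t\eps^4}\Bigr) = \bigo\Bigl(\max\Bigl\{\frac{d}{\eps^2}, \frac{d^2}{t\eps^4}\Bigr\}\Bigr).
\end{equation}
The fact assumes $0<\eps<1$. This costs no generality, because $\|\rho-\sigma\|_1 \le 2$, so for $\eps > 2$ the far case is empty and for $\eps \in [1,2]$ the bound at $\eps$ slightly below $1$ already suffices.

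I expect no real obstacle here. The one point needing care is that the simulation of $\sigma$'s copies keeps the measurements on $\rho$ fixed and $t$-copy. That holds because the tester of \Cref{prop:tcopy-hs-closeness} treats its $\rho$ and $\sigma$ batches separately and combines them only through classical post-processing of the outcomes.
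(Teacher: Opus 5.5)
Your proposal is correct and is essentially the paper's argument: the paper derives the corollary in one line by plugging \Cref{prop:tcopy-hs-closeness} into \Cref{fact:hs-to-trace-norm-testing} with $\eps \mapsto \eps/\sqrt{d}$. One justification in your last paragraph should be fixed: a swap-based estimator like that of \cite{buadescu2019quantum} estimates the cross term $\tr(\rho\sigma)$ with measurements acting jointly on $\rho$- and $\sigma$-copies, so the batches are not combined only by classical post-processing; the conclusion still holds, because for known $\sigma$ any POVM $\{M_x\}$ on $\rho^{\otimes t}\otimes\sigma^{\otimes t}$ induces the fixed $t$-copy POVM $\{\tr_{\sigma}[M_x(I\otimes\sigma^{\otimes t})]\}_x$ on $\rho^{\otimes t}$, which is exactly your ``prepare the copies yourself'' option.
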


\begin{proof}[Proof of Proposition \ref{prop:tcopy-hs-closeness}]
    \cite[Proposition 5.6]{buadescu2019quantum} construct an unbiased estimator $\hat{z}$ for $\|\rho - \sigma\|_2^2$, that given $t$ copies of each state, satisfies
    \begin{equation}
        \mbb{V}[\hat{z}] = \bigo \left(\frac{1}{t^2} + \frac{\|\rho - \sigma\|_2^2}{t}\right).
    \end{equation}
    Take $m$ copies of each state, divide them into $n = m/t$ batches, and construct independent estimators $\hat{z}_1, \dots, \hat{z}_n$ using \cite[Proposition 5.6]{buadescu2019quantum}. Let $\hat{z} = \frac1n \sum_{i = 1}^n \hat{z}_i$. Then, $\mbb{E}[\hat{z}] = \|\rho - \sigma\|_2^2$ and 
    \begin{equation}
        \mbb{V}[\hat{z}] = \bigo \left(\frac{1}{nt^2} + \frac{\|\rho - \sigma\|_2^2}{nt}\right).
    \end{equation}
    We test for the two cases by checking whether $\hat{z} \leq 3\eps^2/4$ or not. Let $n = \frac{c_1}{t\eps^2} + \frac{c_2}{t^2\eps^4}$ for absolute constants $c_1,c_2 > 0$. In the first case, $\mbb{E}[\hat{z}] = \|\rho-\sigma\|_2^2 \leq \eps^2/2$, and thus
    \begin{equation}
        \mbb{V}[\hat{z}] \leq \bigo \left(
            \frac{\eps^4}{c_2} + \frac{\eps^4}{c_1}
        \right) \implies \mathrm{Pr}[\hat{z} \geq 3\eps^2/4] \leq \frac{16}{\eps^4} \cdot \bigo \left(
            \frac{\eps^4}{c_2} + \frac{\eps^4}{c_1}
        \right) \leq 1/3,
    \end{equation}
    where the last inequality holds for sufficiently large $c_1,c_2$. In the second case, let $\mu \triangleq \mbb{E}[\hat{z}] = \|\rho-\sigma\|_2^2 \geq \eps^2$. Then, $n \geq \frac{c_1}{t\mu}$ and $n \geq \frac{c_2}{t^2\mu^2}$. In this case,
    \begin{equation}
        \mbb{V}[\hat{z}] \leq \bigo \left(
            \frac{\mu^2}{c_2} + \frac{\mu^2}{c_1}
        \right) \implies \mathrm{Pr}[\hat{z} \leq 3\eps^2/4] \leq \mathrm{Pr}[\hat{z} \leq 3\mu/4] \leq \frac{16}{\mu^2} \cdot \bigo \left(
            \frac{\mu^2}{c_2} + \frac{\mu^2}{c_1}
        \right) \leq 1/3,
    \end{equation}
    where the last inequality again holds for sufficiently large $c_1,c_2$. Thus, in either case, it suffices to take $m = nt \leq \bigo\left(\frac{1}{\eps^2} + \frac{1}{t\eps^4}\right)$ copies, as claimed.
\end{proof}

\section{Variance bounds for mixedness testing}
\label{sec:appendix-variance-bounds}

We now prove \Cref{lem:quasi-purification-variance-case-2,lem:quasi-purification-variance-case-3}.

\begin{proof}[Proof of \Cref{lem:quasi-purification-variance-case-2}]
    As the exact second moment of $\hat{\rho}_\lambda$ does not have a nice closed-form expression, we will first bound this quantity conditioned on measuring a particular partition $\lambda$ in \Cref{alg:quasi-purification-estimation}. For ease of notation, let $\ell \triangleq \ell(\lambda)$. Using \Cref{lem:quasi-purification-conditioned-second-moment}, we get
    \begin{align}
        &\mbb{E}[\tr(\hat{\rho}_\lambda^{\otimes 2} \rho^{\otimes 2}) | \lambda] = \tr(\mbb{E}[\hat{\rho}_\lambda^{\otimes 2} | \lambda] \cdot \rho^{\otimes 2})
        \\&= \frac{(t-1)(D+t)}{t(D+t+1)} \tr((\tau_\lambda)_{\msf{A}_1,\msf{A}_2} \rho^{\otimes 2}) + \frac{D+t}{t(D+t+1)}\cdot 2\tr(\rho^2(\tau_\lambda)_{\msf{A}_1})\nonumber\\&\hspace{1em}-\frac{\ell}{t(D+t+1)}\cdot 2\tr(\rho(\tau_\lambda)_{\msf{A}_1}) + \frac{\ell(D+t)}{t^2(D+t+1)}\cdot \tr(\rho^2) - \frac{\ell^2}{t^2(D+t+1)}
        \\&= \frac{(t-1)(D+t)}{t(D+t+1)} \tr((\tau_\lambda)_{\msf{A}_1,\msf{A}_2} \rho^{\otimes 2}) + \frac{D+t}{t(D+t+1)}\cdot 2\tr\left(\left(\frac{I}{d^2} + \frac{2\Delta}{d} + \Delta^2\right)(\tau_\lambda)_{\msf{A}_1}\right)\nonumber\\&\hspace{1em}-\frac{\ell}{t(D+t+1)}\cdot 2\tr\left(\left(\frac{I}{d} + \Delta\right)(\tau_\lambda)_{\msf{A}_1}\right) + \frac{\ell(D+t)}{t^2(D+t+1)}\cdot \left(\frac{1}{d} + \tr(\Delta^2)\right) - \frac{\ell^2}{t^2(D+t+1)}
        \\&= \frac{(t-1)(D+t)}{t(D+t+1)} \tr((\tau_\lambda)_{\msf{A}_1,\msf{A}_2} \rho^{\otimes 2}) + \frac{D+t}{t(D+t+1)}\cdot 2\tr\left(\left(\frac{I}{d^2} + \frac{2\Delta}{d} + \Delta^2\right)(\tau_\lambda)_{\msf{A}_1}\right)\nonumber\\&\hspace{1em}+
        \frac{\ell}{t^2(D+t+1)}\cdot \left(
         -\frac{2t}{d} - 2t\cdot\tr(\Delta(\tau_\lambda)_{\msf{A}_1}) + \ell + \frac{t}{d} + (D+t)\tr(\Delta^2) - \ell
        \right)
        \\&= \frac{(t-1)(D+t)}{t(D+t+1)} \tr((\tau_\lambda)_{\msf{A}_1,\msf{A}_2} \rho^{\otimes 2}) +\frac{2t}{t(D+t+1)}\left(\frac{1}{d^2} + \frac{2\tr(\Delta(\tau_\lambda)_{\msf{A}_1})}{d} + \tr(\Delta^2(\tau_\lambda)_{\msf{A}_1})\right) \nonumber\\&\hspace{1em} +\frac{\ell}{t^2(D+t+1)}\cdot \left((D+t)\tr(\Delta^2) - \frac{t}{d} - 2t\cdot\tr(\Delta(\tau_\lambda)_{\msf{A}_1}) + \frac{2t}{d} + 4t\cdot \tr(\Delta(\tau_\lambda)_{\msf{A}_1}) +  2dt\cdot\tr(\Delta^2(\tau_\lambda)_{\msf{A}_1})\right)
        \\&= \frac{(t-1)(D+t)}{t(D+t+1)} \tr((\tau_\lambda)_{\msf{A}_1,\msf{A}_2} \rho^{\otimes 2}) +\frac{2t}{t(D+t+1)}\left(\frac{1}{d^2} + \frac{2\tr(\Delta(\tau_\lambda)_{\msf{A}_1})}{d} + \tr(\Delta^2(\tau_\lambda)_{\msf{A}_1})\right) \nonumber\\&\hspace{1em}+ \frac{\ell}{t^2(D+t+1)}\cdot \left((D+t)\tr(\Delta^2) + \frac{t}{d} + 2t\cdot\tr(\Delta(\tau_\lambda)_{\msf{A}_1}) +  2dt\cdot\tr(\Delta^2(\tau_\lambda)_{\msf{A}_1})\right)
        \\&= \frac{(t-1)(D+t)}{t(D+t+1)} \tr((\tau_\lambda)_{\msf{A}_1,\msf{A}_2} \rho^{\otimes 2}) + \frac{1}{d^2}\cdot \frac{D+2t}{t(D+t+1)} + \frac{2\tr(\Delta(\tau_\lambda)_{\msf{A}_1})}{d} \cdot \frac{D+2t}{t(D+t+1)}\nonumber\\&\hspace{1em}+ \tr(\Delta^2(\tau_\lambda)_{\msf{A}_1})\cdot \frac{2(D+t)}{t(D+t+1)} + \frac{\ell(D+t)}{t^2(D+t+1)}\tr(\Delta^2)
        \\&= \frac{(t-1)(D+t)}{t(D+t+1)} \left(\frac{1}{d^2} + \tr((\tau_\lambda)_{\msf{A}_1,\msf{A}_2}(\Delta \otimes d^{-1}I + d^{-1}I\otimes \Delta + \Delta^{\otimes 2}))\right) + \frac{1}{d^2}\cdot \frac{D+2t}{t(D+t+1)}\nonumber\\&\hspace{1em}+ \frac{2\tr(\Delta(\tau_\lambda)_{\msf{A}_1})}{d} \cdot \frac{D+2t}{t(D+t+1)} + \tr(\Delta^2(\tau_\lambda)_{\msf{A}_1})\cdot \frac{2(D+t)}{t(D+t+1)} + \frac{\ell(D+t)}{t^2(D+t+1)}\tr(\Delta^2)
        \\&= \frac{1}{d^2}\cdot \frac{(t-1)(D+t) + D+2t}{t(D+t+1)} + \frac{2\tr(\Delta(\tau_\lambda)_{\msf{A}_1})}{d} \cdot \frac{(t-1)(D+t) + D+2t}{t(D+t+1)} \nonumber\\&\hspace{1em}+\frac{(t-1)(D+t)}{t(D+t+1)}\tr((\tau_\lambda)_{\msf{A}_1,\msf{A}_2}\Delta^{\otimes 2}) + \tr(\Delta^2(\tau_\lambda)_{\msf{A}_1})\cdot \frac{2(D+t)}{t(D+t+1)} + \frac{\ell(D+t)}{t^2(D+t+1)}\tr(\Delta^2)
        \\&= \frac{1}{d^2} + \frac{2\tr(\Delta(\tau_\lambda)_{\msf{A}_1})}{d} + \frac{t-1}{t}\left(1 - \frac{1}{D+t+1}\right)\tr((\tau_\lambda)_{\msf{A}_1,\msf{A}_2}\Delta^{\otimes 2}) \nonumber\\&\hspace{1em}+\tr(\Delta^2(\tau_\lambda)_{\msf{A}_1})\cdot \frac{2(D+t)}{t(D+t+1)} + \frac{\ell(D+t)}{t^2(D+t+1)}\tr(\Delta^2). \label{eq:quasi-purification-collision-case2-1}
    \end{align}
    We used \Cref{lem:quasi-purification-conditioned-second-moment} in the second equality, and in the third equality we substituted $\rho = I/d+\Delta$. The next four steps involved carefully rearranging the terms therein, followed by again substituting in $\rho = I/d+\Delta$. In the third-to-last step, we used that $\tr(M_{1,2}(I \otimes N)) = \tr((M)_2 N)$ and that $\tau_\lambda$ is in the symmetric subspace. 

    Note that $D = d\cdot \ell(\lambda)$ is dependent on $\lambda$. So, to take the expectation of the above terms with respect to $\lambda$, we will first attempt to obtain an upper bound that does not have this $D$-dependence. First, consider the matrix $M \triangleq (\Delta \otimes I + I \otimes \Delta)^2$. As $\Delta,I$ are Hermitian, $M$ is positive semidefinite. Thus, we have
    \begin{align}
        0 \leq \tr( (\tau_\lambda)_{\msf{A}_1,\msf{A}_2} M) &= 2\tr((\tau_\lambda)_{\msf{A}_1,\msf{A}_2} \Delta^{\otimes 2}) + \tr((\tau_\lambda)_{\msf{A}_1,\msf{A}_2} \cdot \Delta^2 \otimes I) + \tr((\tau_\lambda)_{\msf{A}_1,\msf{A}_2} \cdot I \otimes \Delta^2) 
        \\&= 2\tr((\tau_\lambda)_{\msf{A}_1,\msf{A}_2} \Delta^{\otimes 2}) + 2 \tr((\tau_\lambda)_{\msf{A}_1}\Delta^2).
    \end{align}

    Now, adding $\frac{t-1}{2t(D+t+1)}\tr( (\tau_\lambda)_{\msf{A}_1,\msf{A}_2} M)$ to the RHS of \Cref{eq:quasi-purification-collision-case2-1}, we get
    \begin{align}
        \mbb{E}[\tr(\hat{\rho}_\lambda^{\otimes 2} \rho^{\otimes 2}) | \lambda] &\leq \frac{1}{d^2} + \frac{2\tr(\Delta(\tau_\lambda)_{\msf{A}_1})}{d} + \frac{t-1}{t}\tr((\tau_\lambda)_{\msf{A}_1,\msf{A}_2}\Delta^{\otimes 2}) + \frac{2D+3t-1}{t(D+t+1)} \tr(\Delta^2(\tau_\lambda)_{\msf{A}_1}) + \frac{\ell}{t^2}\tr(\Delta^2)
        \\&\leq \frac{1}{d^2} + \frac{2\tr(\Delta(\tau_\lambda)_{\msf{A}_1})}{d} + \frac{t-1}{t}\tr((\tau_\lambda)_{\msf{A}_1,\msf{A}_2}\Delta^{\otimes 2}) +\frac{3}{t}\tr(\Delta^2(\tau_\lambda)_{\msf{A}_1}) + \frac{\ell}{t^2}\tr(\Delta^2),
    \end{align} 
    where we also used that $\tr(\Delta^2(\tau_\lambda)_{A_1})$ and $\tr(\Delta^2)$ are non-negative, as $\Delta^2$ and $\tau_\lambda$ are both psd. Finally, averaging over $\lambda$, we have
    \begin{align}
        \mbb{E}[\tr(\hat{\rho}_\lambda^{\otimes 2} \rho^{\otimes 2})] &\leq \frac{1}{d^2} + \frac{2\tr(\Delta\mbb{E}_\lambda[(\tau_\lambda)_{\msf{A}_1}])}{d} + \frac{t-1}{t}\tr(\mbb{E}_\lambda[(\tau_\lambda)_{\msf{A}_1,\msf{A}_2}]\Delta^{\otimes 2}) + \frac3t\tr(\Delta^2\mbb{E}_\lambda[(\tau_\lambda)_{\msf{A}_1}]) + \frac{\mbb{E}_\lambda[\ell(\lambda)]}{t^2}\tr(\Delta^2)
        \\&= \frac{1}{d^2} + \frac{2\tr(\Delta \rho)}{d} + \frac{t-1}{t}\tr(\rho^{\otimes 2}\Delta^{\otimes 2}) + \frac3t\tr(\Delta^2\rho) + \frac{\mbb{E}_\lambda[\ell(\lambda)]}{t^2}\tr(\Delta^2)
        \\&\leq \tr(\rho^2)^2 + \frac{3}{dt}\tr(\Delta^2) + \frac3t\tr(\Delta^3) + \frac{\mbb{E}_\lambda[\ell(\lambda)]}{t^2}\tr(\Delta^2)
        \\&\leq \tr(\rho^2)^2 + \frac{3}{dt}\tr(\Delta^2) + \frac3t\tr(\Delta^2)^{3/2} + \frac{\mbb{E}_\lambda[\ell(\lambda)]}{t^2}\tr(\Delta^2),
    \end{align}
    where the second line used \Cref{lem:partial-trace-tau-lambda} and that $\tr(\rho^{\otimes 2} \Delta^{\otimes 2}) = \tr(\Delta^2)^2$, and the last step used $\tr(\Delta^3) = \|\Delta\|_3^3 \leq \|\Delta\|_2^3 = \tr(\Delta^2)^{3/2}$.
\end{proof}

\begin{proof}[Proof of \Cref{lem:quasi-purification-variance-case-3}]
    By \Cref{thm:quasi-purification-truncated-second-moment}, we have
    \begin{align}
        \tr(\mbb{E}[\hat{\rho}_\lambda^{\otimes 2}]^2) &= \left\langle \mbb{E}[\hat{\rho}_\lambda^{\otimes 2}], \frac{t-1}{t}\rho^{\otimes 2} + \frac1t(\rho \otimes I + I \otimes \rho)\cdot \swap + \frac{\mbb{E}[\ell(\lambda)]}{t^2}\swap - \mathrm{Lower}_{\rho}\right\rangle.
        \\&\leq \left\langle \mbb{E}[\hat{\rho}_\lambda^{\otimes 2}], \frac{t-1}{t}\rho^{\otimes 2} + \frac1t(\rho \otimes I + I \otimes \rho)\cdot \swap + \frac{\mbb{E}[\ell(\lambda)]}{t^2}\swap \right\rangle,
        \label{eq:quasi-purification-mixedness-variance-case3-1}
    \end{align}
    where we used that $\tr(\hat{\rho}_\lambda^{\otimes 2} \mrm{Lower}_\rho) \geq 0$ as $\mrm{Lower}_\rho \in \mrm{SoS}(d)$.
    We will individually compute each term in the inner product above. By \Cref{lem:quasi-purification-variance-case-2}, we have
    \begin{align}
        \left\langle \mbb{E}[\hat{\rho}_\lambda^{\otimes 2}], \rho^{\otimes 2}\right\rangle \leq \tr(\rho^2)^2 + \bigo\left(\frac{\tr(\Delta^2)}{dt} + \frac{ \mbb{E}[\ell(\lambda)]}{t^2}\tr(\Delta^2) + \frac{\tr(\Delta^2)^{3/2}}{t}\right).
        \label{eq:quasi-purification-mixedness-variance-case3-2}
    \end{align}
    For ease of notation, let $M \triangleq (\rho \otimes I + I \otimes \rho)\cdot \swap$. Now, by \Cref{thm:quasi-purification-truncated-second-moment} and by noting that $\tr(\mathrm{Lower}_\rho \cdot M) \geq 0$,
    \begin{align}
    \left\langle \mbb{E}[\hat{\rho}_\lambda^{\otimes 2}], M \right\rangle &\leq \left\langle \frac{t-1}{t}\rho^{\otimes 2} + \frac1t \cdot M + \frac{\mbb{E}[\ell(\lambda)]}{t^2}\swap, M \right\rangle
    \\&= \frac{t-1}{t} \cdot 2\tr(\rho^3) + \frac{1}{t}(2d\tr(\rho^2) + 2) + \frac{\mbb{E}[\ell(\lambda)]}{t^2} \cdot 2d
    \\&= \frac{2t-2}{t}\left(\frac{1}{d^2} + \frac{3\tr(\Delta^2)}{d} + \tr(\Delta^3)\right) + \frac{4}{t} + \frac{2d \cdot \tr(\Delta^2)}{t} + \frac{2d \cdot \mbb{E}[\ell(\lambda)]}{t^2}
    \\&\leq \left(\frac{1}{d^2} + \frac{2\tr(\Delta^2)}{d} + \tr(\Delta^2)^2\right) + \left(\frac{1}{d^2} + \frac{4}{t} + \frac{2d \cdot \mbb{E}[\ell(\lambda)]}{t^2}\right) + \tr(\Delta^2)\left(\frac{4}{d} + \frac{2d}{t}\right) + 2\tr(\Delta^2)^{3/2}
    \\&= \tr(\rho^2)^2 + \bigo\left(\frac{1}{d^2} + \frac{1}{t} + \frac{d \cdot \mbb{E}[\ell(\lambda)]}{t^2} + \frac{\tr(\Delta^2)}{d} + \frac{d\cdot\tr(\Delta^2)}{t} + \tr(\Delta^2)^{3/2}\right).
        \label{eq:quasi-purification-mixedness-variance-case3-3}
    \end{align}
    For the third term, we have
    \begin{align}
    \left\langle \mbb{E}[\hat{\rho}_\lambda^{\otimes 2}], \swap \right\rangle &\leq \left\langle \frac{t-1}{t}\rho^{\otimes 2} + \frac1t \cdot M + \frac{\mbb{E}[\ell(\lambda)]}{t^2}\swap, \swap \right\rangle
    \\&= \frac{t-1}{t} \tr(\rho^2) + \frac{2d}{t} + \frac{d^2\mbb{E}[\ell(\lambda)]}{t^2}
    \\&\leq \frac{1}{d} + \tr(\Delta^2) + \frac{2d}{t} + \frac{d^2\mbb{E}[\ell(\lambda)]}{t^2}.
        \label{eq:quasi-purification-mixedness-variance-case3-4}
    \end{align}
    The statement of the lemma is finally obtained by substituting \Cref{eq:quasi-purification-mixedness-variance-case3-2,eq:quasi-purification-mixedness-variance-case3-3,eq:quasi-purification-mixedness-variance-case3-4} into \Cref{eq:quasi-purification-mixedness-variance-case3-1}.
\end{proof}

\bibliographystyle{alpha}
\bibliography{biblio}

@article{chen2024optimalHighPrecisionShadow,
  title={Optimal high-precision shadow estimation},
  author={Chen, Sitan and Li, Jerry and Liu, Allen},
  journal={arXiv preprint arXiv:2407.13874},
  year={2024}
}

@article{anshu2024survey,
  title={A survey on the complexity of learning quantum states},
  author={Anshu, Anurag and Arunachalam, Srinivasan},
  journal={Nature Reviews Physics},
  volume={6},
  number={1},
  pages={59--69},
  year={2024},
  publisher={Nature Publishing Group UK London}
}

@article{noller2025infinite,
  title={An infinite hierarchy of multi-copy quantum learning tasks},
  author={N{\"o}ller, Jan and Tran, Viet T and Gachechiladze, Mariami and Kueng, Richard},
  journal={arXiv preprint arXiv:2510.08070},
  year={2025}
}

@article{ye2025exponential,
  title={Exponential Advantage from One More Replica in Estimating Nonlinear Properties of Quantum States},
  author={Ye, Qi and Liu, Zhenhuan and Deng, Dong-Ling},
  journal={arXiv preprint arXiv:2509.24000},
  year={2025}
}

@inproceedings{arunachalam2025generalized,
  title={Generalized Inner Product Estimation with Limited Quantum Communication},
  author={Arunachalam, Srinivasan and Schatzki, Louis},
  booktitle={42nd International Symposium on Theoretical Aspects of Computer Science (STACS 2025)},
  pages={11--1},
  year={2025},
  organization={Schloss Dagstuhl--Leibniz-Zentrum f{\"u}r Informatik}
}

@inproceedings{king2025triply,
  title={Triply efficient shadow tomography},
  author={King, Robbie and Gosset, David and Kothari, Robin and Babbush, Ryan},
  booktitle={Proceedings of the 2025 Annual ACM-SIAM Symposium on Discrete Algorithms (SODA)},
  pages={914--946},
  year={2025},
  organization={SIAM}
}

@inproceedings{aaronson2018shadow,
  title={Shadow tomography of quantum states},
  author={Aaronson, Scott},
  booktitle={Proceedings of the 50th annual ACM SIGACT symposium on theory of computing},
  pages={325--338},
  year={2018}
}

@article{coladangelo2026power,
  title={The Power of Two Bases: Robust and copy-optimal certification of nearly all quantum states with few-qubit measurements},
  author={Coladangelo, Andrea and Li, Jerry and Slote, Joseph and Wu, Ellen},
  journal={arXiv preprint arXiv:2602.11616},
  year={2026}
}

@article{mele2025optimal,
  title={Optimal learning of quantum channels in diamond distance},
  author={Mele, Antonio Anna and Bittel, Lennart},
  journal={arXiv preprint arXiv:2512.10214},
  year={2025}
}

@article{girardi2025randomPurificationSimplified,
  title={Random purification channel made simple},
  author={Girardi, Filippo and Mele, Francesco Anna and Lami, Ludovico},
  journal={arXiv preprint arXiv:2511.23451},
  year={2025}
}

@article{mele2025randomPurificationGaussianBosons,
  title={Random purification channel for passive Gaussian bosons},
  author={Mele, Francesco Anna and Girardi, Filippo and Chen, Senrui and Fanizza, Marco and Lami, Ludovico},
  journal={arXiv preprint arXiv:2512.16878},
  year={2025}
}

@article{walter2025randomPurificationGeneric,
  title={A random purification channel for arbitrary symmetries with applications to fermions and bosons},
  author={Walter, Michael and Witteveen, Freek},
  journal={arXiv preprint arXiv:2512.15690},
  year={2025}
}

@article{girardi2025randomStinespring,
  title={Random Stinespring superchannel: converting channel queries into dilation isometry queries},
  author={Girardi, Filippo and Mele, Francesco Anna and Zhao, Haimeng and Fanizza, Marco and Lami, Ludovico},
  journal={arXiv preprint arXiv:2512.20599},
  year={2025}
}

@article{yoshida2025randomDilation,
  title={Random dilation superchannel},
  author={Yoshida, Satoshi and Niwa, Ryotaro and Murao, Mio},
  journal={arXiv preprint arXiv:2512.21260},
  year={2025}
}

@article{acharya2025pauliSingleQubit,
  title={Pauli Measurements Are Near-Optimal for Single-Qubit Tomography},
  author={Acharya, Jayadev and Dharmavarapu, Abhilash and Liu, Yuhan and Yu, Nengkun},
  journal={arXiv preprint arXiv:2507.22001},
  year={2025}
}

@inproceedings{acharya2025pauliSingleCopy,
  title={Pauli measurements are not optimal for single-copy tomography},
  author={Acharya, Jayadev and Dharmavarapu, Abhilash and Liu, Yuhan and Yu, Nengkun},
  booktitle={Proceedings of the 57th Annual ACM Symposium on Theory of Computing},
  pages={718--729},
  year={2025}
}

@inproceedings{o2017efficient,
  title={Efficient quantum tomography II},
  author={O'Donnell, Ryan and Wright, John},
  booktitle={Proceedings of the 49th Annual ACM SIGACT Symposium on Theory of Computing},
  pages={962--974},
  year={2017}
}

@article{de2025non,
  title={Non-iid hypothesis testing: from classical to quantum},
  author={De Palma, Giacomo and Fanizza, Marco and Mowry, Connor and O'Donnell, Ryan},
  journal={arXiv preprint arXiv:2510.06147},
  year={2025}
}

@inproceedings{aliakbarpour2025adversarially,
  author={Aliakbarpour, Maryam and Braverman, Vladimir and Chia, Nai-Hui and Liu, Yuhan},
  booktitle={2025 IEEE 66th Annual Symposium on Foundations of Computer Science (FOCS)}, 
  title={Adversarially Robust Quantum State Learning and Testing}, 
  year={2025},
  volume={},
  number={},
  pages={186-204},
  doi={10.1109/FOCS63196.2025.00013}
}

@article{grewal2026pauli,
  title={Pauli Measurements Are Near-Optimal for Pure State Tomography},
  author={Grewal, Sabee and Gupta, Meghal and He, William and Sen, Aniruddha and Singhal, Mihir},
  journal={arXiv preprint arXiv:2601.04444},
  year={2026}
}

@article{acharya2020estimating,
  title={Estimating quantum entropy},
  author={Acharya, Jayadev and Issa, Ibrahim and Shende, Nirmal V and Wagner, Aaron B},
  journal={IEEE Journal on Selected Areas in Information Theory},
  volume={1},
  number={2},
  pages={454--468},
  year={2020},
  publisher={IEEE}
}

@article{harrow2013church,
  title={The church of the symmetric subspace},
  author={Harrow, Aram W},
  journal={arXiv preprint arXiv:1308.6595},
  year={2013}
}

@article{liu2024quantum,
  title={Quantum state testing with restricted measurements},
  author={Liu, Yuhan and Acharya, Jayadev},
  journal={arXiv preprint arXiv:2408.17439},
  year={2024}
}

@article{acharya2020inference,
  title={Inference under information constraints I: Lower bounds from chi-square contraction},
  author={Acharya, Jayadev and Canonne, Cl{\'e}ment L and Tyagi, Himanshu},
  journal={IEEE Transactions on Information Theory},
  volume={66},
  number={12},
  pages={7835--7855},
  year={2020},
  publisher={IEEE}
}

@article{grier2024sample,
  title={Sample-optimal classical shadows for pure states},
  author={Grier, Daniel and Pashayan, Hakop and Schaeffer, Luke},
  journal={Quantum},
  volume={8},
  pages={1373},
  year={2024},
  publisher={Verein zur F{\"o}rderung des Open Access Publizierens in den Quantenwissenschaften}
}

@article{hayashi1998asymptotic,
  title={Asymptotic estimation theory for a finite-dimensional pure state model},
  author={Hayashi, Masahito},
  journal={Journal of Physics A: Mathematical and General},
  volume={31},
  number={20},
  pages={4633},
  year={1998},
  publisher={IOP Publishing}
}

@article{tang2025conjugate,
  title={Conjugate queries can help},
  author={Tang, Ewin and Wright, John and Zhandry, Mark},
  journal={arXiv preprint arXiv:2510.07622},
  year={2025}
}

@article{pelecanos2025mixed,
  title={Mixed state tomography reduces to pure state tomography},
  author={Pelecanos, Angelos and Spilecki, Jack and Tang, Ewin and Wright, John},
  journal={arXiv preprint arXiv:2511.15806},
  year={2025}
}

@article{pelecanos2025debiased,
  title={The debiased Keyl's algorithm: a new unbiased estimator for full state tomography},
  author={Pelecanos, Angelos and Spilecki, Jack and Wright, John},
  journal={arXiv preprint arXiv:2510.07788},
  year={2025}
}

@inproceedings{liu2024role,
  title={The role of randomness in quantum state certification with unentangled measurements},
  author={Liu, Yuhan and Acharya, Jayadev},
  booktitle={The Thirty Seventh Annual Conference on Learning Theory},
  pages={3523--3555},
  year={2024},
  organization={PMLR}
}

@phdthesis{wright2016learn,
  title={How to learn a quantum state},
  author={Wright, John},
  year={2016},
  school={Carnegie Mellon University}
}

@article{o2025instance,
  title={Instance-optimal quantum state certification with entangled measurements},
  author={O'Donnell, Ryan and Wadhwa, Chirag},
  journal={arXiv preprint arXiv:2507.06010},
  year={2025}
}

@article{canonne2022topics,
  title={Topics and techniques in distribution testing: A biased but representative sample},
  author={Canonne, Cl{\'e}ment},
  journal={Foundations and Trends{\textregistered} in Communications and Information Theory},
  volume={19},
  number={6},
  pages={1032--1198},
  year={2022},
  publisher={Now Publishers, Inc.}
}

@article{flammia2024quantum,
  title={Quantum chi-squared tomography and mutual information testing},
  author={Flammia, Steven T and O'Donnell, Ryan},
  journal={Quantum},
  volume={8},
  pages={1381},
  year={2024},
  publisher={Verein zur F{\"o}rderung des Open Access Publizierens in den Quantenwissenschaften}
}

@inproceedings{chen2022toward,
  title={Toward instance-optimal state certification with incoherent measurements},
  author={Chen, Sitan and Li, Jerry and O’Donnell, Ryan},
  booktitle={Conference on Learning Theory},
  pages={2541--2596},
  year={2022},
  organization={PMLR}
}

@inproceedings{chen2022tightStateCertification,
  title={Tight bounds for quantum state certification with incoherent measurements},
  author={Chen, Sitan and Li, Jerry and Huang, Brice and Liu, Allen},
  booktitle={2022 IEEE 63rd Annual Symposium on Foundations of Computer Science (FOCS)},
  pages={1205--1213},
  year={2022},
  organization={IEEE}
}

@inproceedings{anshu2022distributed,
  title={Distributed quantum inner product estimation},
  author={Anshu, Anurag and Landau, Zeph and Liu, Yunchao},
  booktitle={Proceedings of the 54th Annual ACM SIGACT Symposium on Theory of Computing},
  pages={44--51},
  year={2022}
}

@article{gong2024sample,
  title={On the sample complexity of purity and inner product estimation},
  author={Gong, Weiyuan and Haferkamp, Jonas and Ye, Qi and Zhang, Zhihan},
  journal={arXiv preprint arXiv:2410.12712},
  year={2024}
}

@article{kueng2017low,
  title={Low rank matrix recovery from rank one measurements},
  author={Kueng, Richard and Rauhut, Holger and Terstiege, Ulrich},
  journal={Applied and Computational Harmonic Analysis},
  volume={42},
  number={1},
  pages={88--116},
  year={2017},
  publisher={Elsevier}
}

@inproceedings{haah2016sample,
  title={Sample-optimal tomography of quantum states},
  author={Haah, Jeongwan and Harrow, Aram W and Ji, Zhengfeng and Wu, Xiaodi and Yu, Nengkun},
  booktitle={Proceedings of the forty-eighth annual ACM symposium on Theory of Computing},
  pages={913--925},
  year={2016}
}

@inproceedings{o2016efficient,
  title={Efficient quantum tomography},
  author={O'Donnell, Ryan and Wright, John},
  booktitle={Proceedings of the forty-eighth annual ACM symposium on Theory of Computing},
  pages={899--912},
  year={2016}
}

@article{montanaro2016survey,
  title={A Survey of Quantum Property Testing},
  author={Montanaro, Ashley and de Wolf, Ronald},
  journal={Theory of Computing},
  pages={1--81},
  year={2016},
  publisher={Theory of Computing Exchange}
}

@inproceedings{buadescu2019quantum,
  title={Quantum state certification},
  author={B{\u{a}}descu, Costin and O'Donnell, Ryan and Wright, John},
  booktitle={Proceedings of the 51st Annual ACM SIGACT Symposium on Theory of Computing},
  pages={503--514},
  year={2019}
}

@inproceedings{o2015quantum,
  title={Quantum spectrum testing},
  author={O'Donnell, Ryan and Wright, John},
  booktitle={Proceedings of the forty-seventh annual ACM symposium on Theory of computing},
  pages={529--538},
  year={2015}
}

@inproceedings{chen2022exponential,
  title={Exponential separations between learning with and without quantum memory},
  author={Chen, Sitan and Cotler, Jordan and Huang, Hsin-Yuan and Li, Jerry},
  booktitle={2021 IEEE 62nd Annual Symposium on Foundations of Computer Science (FOCS)},
  pages={574--585},
  year={2022},
  organization={IEEE}
}

@article{huang2021information,
  title={Information-theoretic bounds on quantum advantage in machine learning},
  author={Huang, Hsin-Yuan and Kueng, Richard and Preskill, John},
  journal={Physical Review Letters},
  volume={126},
  number={19},
  pages={190505},
  year={2021},
  publisher={APS}
}

@article{chen2021hierarchy,
  title={A hierarchy for replica quantum advantage},
  author={Chen, Sitan and Cotler, Jordan and Huang, Hsin-Yuan and Li, Jerry},
  journal={arXiv preprint arXiv:2111.05874},
  year={2021}
}

@inproceedings{bubeck2020entanglement,
  title={Entanglement is necessary for optimal quantum property testing},
  author={Bubeck, Sebastien and Chen, Sitan and Li, Jerry},
  booktitle={2020 IEEE 61st Annual Symposium on Foundations of Computer Science (FOCS)},
  pages={692--703},
  year={2020},
  organization={IEEE}
}

@inproceedings{chen2023does,
  title={When does adaptivity help for quantum state learning?},
  author={Chen, Sitan and Huang, Brice and Li, Jerry and Liu, Allen and Sellke, Mark},
  booktitle={2023 IEEE 64th Annual Symposium on Foundations of Computer Science (FOCS)},
  pages={391--404},
  year={2023},
  organization={IEEE}
}

@inproceedings{chen2024optimalTradeoffsShadowTomography,
  title={Optimal tradeoffs for estimating pauli observables},
  author={Chen, Sitan and Gong, Weiyuan and Ye, Qi},
  booktitle={2024 IEEE 65th Annual Symposium on Foundations of Computer Science (FOCS)},
  pages={1086--1105},
  year={2024},
  organization={IEEE}
}

@inproceedings{chen2024optimalTradeoffsTomography,
  title={An optimal tradeoff between entanglement and copy complexity for state tomography},
  author={Chen, Sitan and Li, Jerry and Liu, Allen},
  booktitle={Proceedings of the 56th Annual ACM Symposium on Theory of Computing},
  pages={1331--1342},
  year={2024}
}

@inproceedings{diakonikolas2016new,
  title={A new approach for testing properties of discrete distributions},
  author={Diakonikolas, Ilias and Kane, Daniel M},
  booktitle={2016 IEEE 57th Annual Symposium on Foundations of Computer Science (FOCS)},
  pages={685--694},
  year={2016},
  organization={IEEE}
}

@inproceedings{huang2024certifying,
  title={Certifying almost all quantum states with few single-qubit measurements},
  author={Huang, Hsin-Yuan and Preskill, John and Soleimanifar, Mehdi},
  booktitle={Symposium on Foundations of Computer Science (FOCS)},
  pages={1202--1206},
  year={2024},
  organization={IEEE}
}

@article{gupta2025few,
  title={Few Single-Qubit Measurements Suffice
to Certify Any Quantum State},
  author={Gupta, Meghal and He, William and O'Donnell, Ryan},
  journal={arXiv preprint arXiv:2506.11355},
  year={2025}
}

@inproceedings{pelecanos2025beating,
  title={Beating full state tomography for unentangled spectrum estimation},
  author={Pelecanos, Angelos and Tan, Xinyu and Tang, Ewin and Wright, John},
  booktitle={Proceedings of the 2026 Annual ACM-SIAM Symposium on Discrete Algorithms (SODA)},
  pages={3313--3363},
  year={2026},
  organization={SIAM}
}

@article{o2023learning,
  title={Learning and testing quantum states via probabilistic combinatorics and representation theory},
  author={O’Donnell, Ryan and Wright, John},
  journal={Current Developments in Mathematics},
  volume={2021},
  number={1},
  pages={43--94},
  year={2023},
  publisher={International Press of Boston}
}

@article{doosti2026distributed,
  title={Distributed Quantum Property Testing with Communication Constraints},
  author={Doosti, Mina and Sweke, Ryan and Wadhwa, Chirag},
  journal={arXiv preprint arXiv:2604.05962},
  year={2026}
}

\end{document}